\newtheorem{theorem}{Theorem}
\newtheorem{lemma}[theorem]{Lemma}
\newtheorem{claim}[theorem]{Claim}
\theoremstyle{definition}
\newtheorem{definition}{Definition}
\newtheorem{remark}{Remark}
\newtheorem{observation}[theorem]{Observation}
\newtheorem{problem}{Problem}
\newcommand{\set}[1]{\left \{ #1 \right \}}
\newcommand{\size}[1]{\ensuremath{\left|#1\right|}}
\newcommand{\alg}{\mathcal{A}}
\newcommand{\Q}[1]{\{0,1\}^{#1}}
\newcommand{\f}{\mathbf{F}}
\newcommand{\Om}{\Omega_{\f}}
\newcommand{\calf}{\mathcal{F}}
\newcommand{\minones}{\textsc{Min-Ones}\xspace}
\newcommand{\LS}{\textsf{LS}}
\newcommand{\phimax}{\Phi\textsc{-Max}}
\newcommand{\phimin}{\Phi\textsc{-Min}}
\newcommand{\divmin}{\textsc{Diverse-}\Phi\textsc{-Min}}
\newcommand{\divmax}{\textsc{Diverse-}\Phi\textsc{-Max}}
\newcommand{\OPT}{\textsf{OPT}}
\newcommand{\ALS}{\textsf{Anchored-}\LS}
\newcommand{\SW}{\textsf{SW}}
\newcommand{\PD}{\textsc{minPD}}
\newcommand{\w}{\textsc{minw}}
\newcommand{\SPD}{\textsc{sumPD}}
\newcommand{\sumd}{\textsc{sum-}d_H}
\newcommand{\mind}{\textsc{min-}d_H}
\newcommand{\opts}{\textsc{Opt-sum}}
\newcommand{\optm}{\textsc{Opt-min}}
\newcommand{\Sopt}{S_{\text{OPT}}}
\newcommand{\D}{\textsc{Diam}}
\newcommand{\p}{\mathcal{P}}
\newcommand{\np}{\mathcal{NP}}
\newcommand{\PPZMod}{\textsf{PPZ-Modify}}
\newcommand{\adarsh}[1]{}
\newcommand{\nutan}[1]{}
\newcommand{\ioana}[1]{}
\newcommand{\per}[1]{}
\newcommand{\review}[1]{}
\newcommand{\sch}{Sch\"{o}ning\xspace}
\newcommand{\Sch}{Sch\"{o}ning\xspace}
\newcommand{\todoin}[1]{}
\newcommand{\IR}{\ensuremath{\mathbb{R}}}
\newcommand{\IN}{\ensuremath{\mathbb{N}}}
\newcommand{\ceil}[1]{{\left\lceil{#1}\right\rceil}}
\newcommand{\floor}[1]{{\left\lfloor{#1}\right\rfloor}}
\newcommand{\poly}{\mbox{\rm poly}}
\newenvironment{tbox}{\begin{tcolorbox}[
		enlarge top by=5pt,
		enlarge bottom by=5pt,
		 breakable,
		 boxsep=0pt,
                  left=4pt,
                  right=4pt,
                  top=10pt,
                  arc=0pt,
                  boxrule=1pt,toprule=1pt,
                  colback=white
                  ]
	}
{\end{tcolorbox}}
\DeclarePairedDelimiter\inner{\langle}{\rangle}
\title{\vspace{-1cm} Algorithms for the Diverse-$k$-SAT problem: the geometry of
satisfying assignments}
\author[1]{Per Austrin}
\author[1]{Ioana O. Bercea\footnote{Received funding from Basic Algorithms Research Copenhagen(BARC), supported by VILLUM Foundation Grants 16582 and 54451}}
\author[2]{Mayank Goswami\footnote{Supported by NSF grant CCF-2503086}}
\author[3]{\\Nutan Limaye\footnote{Received funding from the Independent Research Fund Denmark (grant agreement No. 10.46540/3103-00116B) and is also supported by the Basic Algorithms Research Copenhagen (BARC), funded by VILLUM Foundation Grants 16582 and 54451}}
\author[4]{Adarsh Srinivasan \footnote{Supported by the National Science Foundation under grants CCF-2313372 and CCF-2443697 and a grant from the Simons Foundation, Grant Number 825876, Awardee Thu D. Nguyen. Part of this work was done during a visit to ITU Copenhagen and BARC funded by Basic Algorithms Research Copenhagen(BARC), supported by VILLUM Foundation Grants 16582 and 54451}}
\affil[1]{ KTH Royal Institute of Technology \\ \tt{\{austrin,bercea\}@kth.se}}
\affil[2]{Queens College, City University of New York\\
\tt{mayank.goswami@qc.cuny.edu}}
\affil[3]{IT University of Copenhagen\\
\tt{nuli@itu.dk}}
\affil[4]{Rutgers University\\
\tt{adarsh.srinivasan@rutgers.edu}}
\date{\vspace{-8ex}}
\begin{document}
\maketitle 
\thispagestyle{empty}
\begin{abstract}

Given a $k$-CNF formula and an integer $s \geq 2$, we study algorithms that obtain $s$ solutions to the formula that are as dispersed as possible. For $s=2$, this problem of computing the \emph{diameter} of a $k$-CNF formula was initiated by Creszenzi and Rossi, who showed strong hardness results even for $k=2$. The current best upper bound [Angelsmark and Thapper '04] goes to $4^n$ as $k \rightarrow \infty$. As our first result, we show that this quadratic blow up is not necessary by utilizing the Fast-Fourier transform (FFT) to give a $O^*(2^n)$ time exact algorithm for computing the diameter of any $k$-CNF formula. 

For $s>2$, the problem was raised in the SAT community (Nadel '11) and several heuristics have been proposed for it, but no algorithms with theoretical guarantees are known. We give exact algorithms using FFT and clique-finding that run in $O^*(2^{(s-1)n})$ and $O^*(s^2 |\Om|^{\omega \lceil s/3 \rceil})$ respectively, where $|\Om|$ is the size of the solutions space of the formula $\f$ and $\omega$ is the matrix multiplication exponent.

However, current SAT algorithms for \emph{finding one solution} run in time $O^*(2^{\varepsilon_{k}n})$ for $\varepsilon_{k} \approx 1-\Theta(1/k)$, which is much faster than all above run times. \emph{As our main result}, we analyze two popular SAT algorithms - PPZ (Paturi, Pudl\'ak, Zane '97) and Sch\"{o}ning's ('02) algorithms, and show that in time $\text{poly}(s)O^*(2^{\varepsilon_{k}n})$, they can be used to approximate diameter as well as the dispersion ($s>2$) problem. While we need to modify Sch\"{o}ning's original algorithm for technical reasons, we show that the PPZ algorithm, without any modification, samples solutions in a geometric sense. We believe this geometric sampling property of PPZ may be of independent interest.

Finally, we focus on diverse solutions to NP-complete optimization problems, and give bi-approximations running in time $\text{poly}(s)O^*(2^{\varepsilon n})$ with $\varepsilon<1$ for several problems such as  \textsc{Maximum Independent Set}, \textsc{Minimum Vertex Cover}, \textsc{Minimum Hitting Set}, \textsc{Feedback Vertex Set}, \textsc{Multicut on Trees} and \textsc{Interval Vertex Deletion}. For all of these problems, all existing exact methods for finding optimal diverse solutions have a runtime with at least an exponential dependence on the number of solutions $s$. Our methods show that by relaxing to bi-approximations, this dependence on $s$ can be made polynomial.

\end{abstract}

\pagenumbering{arabic} 

\section{Introduction}
\label{sec:intro}
In this work, we start by asking a simple question:  what is the complexity of computing the diameter of a $k$-SAT solution space? That is, given a satisfiable $k$-CNF formula, we want to output two satisfying assignments with maximum Hamming distance between them. More generally, what if we want \emph{multiple} satisfying assignments that are maximally far apart? One can also think of this as finding a binary code with optimal rate/distance tradeoff, where each codeword must satisfy the given $k$-CNF formula. We give exact and approximate exponential time algorithms for these problems and show that existing well-known algorithms for finding one solution can be leveraged to output multiple, reasonably far apart, solutions. 

Crescenzi and Rossi~\cite{crescenzi2002hamming} formulated the diameter computation problem for general Constraint Satisfaction Problems (CSPs), under the name \textsc{Maximum Hamming Distance}. They studied the approximability of the problem and gave a complete classification based on Schaefer’s criteria for the satisfiability of CSPs~\cite{schaefer1978complexity}. In particular, they also showed that the diameter problem is NP-hard even for $2$-SAT.\footnote{They in fact show that it is PolyAPX-hard. Moreover, while not explicitly stated, their reduction immediately gives an optimal inapproximability of $O(n^{1-\epsilon})$ for the diameter of a $2$-CNF formula.} On the constructive side, Angelsmark and Thapper~\cite{angelsmark2004algorithms} gave an algorithm that outputs a diameter pair in polynomial space and $(2a_k)^n$ time, whenever there exists an  $(a_k)^n$ time algorithm for finding one satisfying assignment. Under standard complexity assumptions (SETH), $a_{k} \rightarrow 2$ as $k \rightarrow \infty$, so the above approach is unlikely to run in time better than\footnote{We use the $O^*$ notation to hide polynomial factors in $n$.} $O^*(4^n)$.

This already raises the interesting question of the optimal running time needed for finding a diameter pair (i.e., its exponential complexity~\cite{calabro2009exponential}). In the case of graphs, it is known that quadratic blow-up in time is unavoidable, assuming the Orthogonal Vectors Hypothesis~\cite{VWill2018, alman2015probabilistic}. Should we also expect a quadratic blow-up in time for diameter of $k$-SAT? We first show that this is not the case: using a Fourier analytical approach, we show how to compute a diameter pair deterministically in $O^*(2^n)$ time (Theorem~\ref{thm:exactdiam}). 

\noindent\textbf{Dispersion.} The problem of computing $s>2$ diverse satisfying assignments to a $k$-CNF formula was explicitly raised by Nadel~\cite{nadel2011generating}. Generating diverse solutions has many applications~\cite{baste2019fpt,abboud2022improved,bansal2010approximation}, and several other works have focused on finding multiple solutions to either SAT or constraint programming~ \cite{agbaria2010sat, petit2019enriching, hebrard2005finding, plazar2019uniform, kitchen2007stimulus, gomes2006near, arcuri2011formal}. However, all of the above works are heuristic in nature, and we could not find any algorithm for dispersed solutions to $k$-SAT with provable guarantees. Our work provides the first exact and approximate algorithms for computing diverse solutions to a $k$-CNF formula.

There are many different ways to define the \emph{dispersion} for a set of points (see Table 1 in~\cite{indyk2014composable}). We consider two most popular measures of dispersion: minimum pairwise distance and sum of pairwise distances (the latter is equivalent to average pairwise distance). We will use $d_{H}$ to denote the Hamming distance. By the dispersion problem, we mean given a $k$-CNF formula $\f$ and an integer $s \geq 2$, return a set $S$ of $s$ satisfying assignments to $\f$ that maximize $\PD(S):=\min_{z_1,z_2 \in S} d_{H}(z_1,z_2)$ or $\SPD(S):=\frac{1}{2}\sum_{z_1,z_2 \in S} d_{H}(z_1,z_2)$. If the $k$-CNF formula does not have $s$ distinct satisfying assignments, we allow the algorithm to return a multiset. Unless stated otherwise, our results will be for the minimum version of dispersion.

\vspace{1mm}\noindent\textbf{Exact algorithms.} We show that we can extend our Fourier analytical approach for diameter to dispersion, obtaining an exact algorithm in time $O^*(2^{(s-1)n})$ (Theorem~\ref{thm:exactdisp}). Furthermore, for $s \geq 6$ we also get a faster algorithm based on clique finding (Theorem~\ref{thm:exactdisp faster}), that runs in time $O^*(s^2 |\Om|^{\omega \lceil s/3 \rceil})$, where $\Om$ is the set of satisfying assignments of the formula $\f$ and $\omega\leq 2.38$ is the matrix multiplication exponent \cite{williams2024new}.

\vspace{2mm}\noindent\textbf{Faster approximations.} Even with our improvements, the above exact algorithms still run in $O^*(2^{csn})$ time for $c<1$. What if we allow approximations? Two questions arise:
\begin{itemize}
    \item Can one obtain a bound of the form $f(s)O^*(2^n)$? If so, must $f$ have exponential dependence on $s$, or can $f$ be made polynomial in $s$?
    \item The current fastest $k$-SAT algorithms for finding \emph{one solution} run in time $O^*(2^{\varepsilon_{k}n})$ for $\varepsilon_{k} = 1-\Theta(1/k)$. Can one get a bound of the form $f(s)O^*(2^{\varepsilon_{k}n})$? Thus, the best runtime for finding $s$ dispersed solutions that one could hope for is $\text{poly}(s)O^*(2^{\varepsilon_{k}n})$, as this is roughly the time taken to find any set of $s$ solutions. Can we achieve this?
\end{itemize}

\begin{center}
    \textbf{Main result, informal}
\end{center}
\noindent\fbox{
\parbox{\textwidth}{There exist randomized algorithms with a run time of $\text{poly}(s)O^*(2^{\varepsilon_{k}n})$ that, given a $k$-CNF formula $\f$ on $n$ variables and a parameter $s$, return a set $S$ of $s$ many satisfying assignments that approximately maximize $\PD(S)$ and $\SPD(S)$. Moreover, for several \emph{optimization} problems, there exist algorithms with a similar runtime that are bi-approximations, i.e., return approximately-optimal solutions that are also approximately-maximally-diverse.}}

\vspace{2mm} In addition to these results being a step towards bridging the gap between the theory and practice of finding diverse solutions, what is surprising is that the way we arrive at them reveals novel interesting aspects of two extremely well-studied algorithms for finding one solution to a given formula: PPZ and Sch\"{o}ning's algorithm. 

\medskip\noindent
\textbf{PPZ and Sch\"{o}ning's algorithms.} The complexity of the $k$-SAT problem has a long and rich history~\cite{impagliazzo2001complexity,impagliazzo2001problems,calabro2009exponential,fomin2013exact}. In a foundational work, Paturi, Pudl\'ak, and Zane~\cite{PPZ} presented a remarkably simple and elegant randomized algorithm for $k$-SAT. Their algorithm runs in time $O^*\!\left(2^{ (1-1/k)n}\right)$ and outputs a satisfying solution with probability $1-o(1)$ if one exists. A few years after that, \sch~\cite{schoning2002probabilistic} developed another surprisingly simple random walk-based algorithm running in time $O^*\!\left(2^{ (1-1/(k \ln 2))n}\right)$,\footnote{The run-time of \sch's algorithm is normally presented as $O^*\left((2(1-1/k))^n\right)$, which we have rewritten for ease of comparison with PPZ.} which runs faster than the PPZ algorithm for all $k$. 
With time, these approaches have been reanalyzed and sometimes improved in a variety of technically subtle and involved ways~\cite{hofmeister2002probabilistic, baumer2004improving, paturi2005,hertli2010improving,hertli2014breaking,liu2018chain,SchederSteinberger,hansen2019faster, PPZmoreisbetter,scheder2022ppsz}, including the PPSZ algorithm by Paturi, Pudl\'ak, Saks and Zane~\cite{paturi2005}, which is the current fastest algorithm for $k$-SAT. 

In our work, we ask whether PPZ and \sch's can exploit the global geometry of the solution space and go beyond finding just one satisfying assignment. Namely, can they be used to \emph{approximate} the diameter and the dispersion for $k$-SAT? We remark that the main result above is not a black-box result that uses \emph{any SAT solver} - we only know how to use PPZ and Sch\"{o}ning's algorithms for this purpose. To familiarize the reader with these two algorithms, we provide their pseudocodes next. 

\vspace{2mm}\begin{minipage}{0.46\textwidth}
  \begin{algorithm}[H]
  \SetKwFor{RepTimes}{repeat}{times}{end}
    \SetAlFnt{\small}
    \DontPrintSemicolon
    \KwIn{A $k$-CNF formula $\f$ over $n$ variables} 
      \RepTimes{$n^{O(1)} \cdot 2^{(1-1/k)n}$} {
      Sample $\pi \sim S_n$, $y \sim \{0,1\}^n$ u.a.r.\\
      \For{$i \in [n]$}
      {\If{$\f$ contains the unit clause $(x_{\pi(i)})$}{$u_{\pi(i)} \gets 1$}
      \If{$\f$ contains the unit clause $(\Bar{x}_{\pi(i)})$}{$u_{\pi(i)} \gets 0$}
      \Else{$u_{\pi(i)} \gets y_{\pi(i)}$}
      
      $\f \gets \f|_{x_{\pi(i)}=u_{\pi(i)}}$}
      \If{$u$ satisfies $\f$}{\Return{$u=(u_1, u_2, \dots, u_n)$}}
      }
      \Return{``not satisfiable''}
    \caption{\scshape PPZ}
  \end{algorithm}
\end{minipage}
   \hfill
\begin{minipage}{0.46\textwidth}
  \begin{algorithm}[H]
  \SetAlFnt{\small}
  \DontPrintSemicolon
  \SetKwFor{RepTimes}{repeat}{times}{end}
  \KwIn{A $k$-CNF formula $\f$ over $n$ variables} 
    \RepTimes{$n^{O(1)} \cdot 2^{(1-\frac{1}{k \ln 2})n}$} 
    {Sample $y \sim \{0,1\}^n$\\
    \RepTimes{$3n$}{
    
    \uIf{$y$ satisfies $\f$}{\Return{$y$}}
    \Else
    {
    Let $C$ be the first clause in $\f$ not satisfied by $y$, pick one of the $k$ variables in $C$ at random and flip the value that $y$ assigns to that variable }
    }
    }
    \Return{``not satisfiable''}
  \caption{\scshape \sch}
  \end{algorithm}
\end{minipage}

\medskip\noindent\textbf{Farthest Point Oracles} Gonzalez~\cite{gonzalez1985clustering} proposed the farthest-insertion algorithm, and showed that it gives a  1/2 approximation to the minimum version of the dispersion problem: given a metric space of $n$ points, find a set $S$ of $s$ points in it that maximize $\PD(S)$. This was later extended to the sum version by \cite{borodin2012max}. The algorithm builds the set $S$ iteratively; in the $i$th iteration it adds the point $x_i$ that maximizes the minimum (resp. sum of) distance to all the points in the solution so far. Moreover, the factor 1/2 is tight assuming the Exponential Time Hypothesis (ETH), so in a sense, farthest insertion is the best possible (polynomial) algorithm for dispersion~\cite{gao2022obtaining}. 

In our setting, a farthest point oracle takes as input a $k$-CNF formula $\f$ (with a set $\Om$ of satisfying assignments) and a set (or multiset) $S \subseteq \Q{n}$, and outputs a satisfying assignment $z^* \in \Om$ that is ``far away'' from the assignments in $S$. Namely, for $x \in \Q{n}, S \subseteq \Q{n}$, we let $ \mind(S,x)= \min_{y \in S} d_H(x,y)$ and $\sumd(S,x)=\sum_{y \in S} d_H(x,y)$. Then for some $\delta \in [0,1)$, the assignment $z^*$ would either satisfy $$
\mind(z^*, S) \geq (1-\delta) \max_{z' \in \Om} \mind(z', S), \textbf{ or  } \sumd(z^*, S) \geq (1-\delta) \max_{z' \in \Om} \sumd(z', S),$$ for the $\PD(S)$ and the $\SPD(S)$ version, respectively. 

In \Cref{sec:technicallemma}, we describe our main technical lemmas on PPZ and \sch algorithms. This is followed by the algorithms for diameter and dispersion implied by these lemmas (\Cref{intro_results}). As mentioned in the informal result statement, our techniques extend to finding diverse solutions to optimization problems as well. These results are formally described in \Cref{intro_generalize}.

\subsection{Main Technical Lemmata}
\label{sec:technicallemma}

Recall that we are aiming for a runtime of  $\text{poly}(s)O^*(2^{\varepsilon_{k}n})$. The question therefore is: can we implement farthest point insertion in $O^*(2^{\varepsilon_{k}n})$ time? We now state the two main technical lemmas that form the core of our analysis. 

\vspace{2mm}
\noindent\fbox{
\parbox{\textwidth}{

\begin{lemma}[PPZ performs geometric sampling]\label{intro_ppz_brief}

     For any $z_0\in \{0,1\}^n$, with probability at least $\frac{1}{2n}\cdot 2^{-(1-1/k)n}$, each iteration of the PPZ algorithm outputs a satisfying assignment $z^*$, such that $d_H(z_0,z^*) \geq \left(1- \frac{1}{k}\right) \cdot \max_{z' \in \Om} d_H(z_0,z')$. The iteration of PPZ does not depend on $z_0$.

\end{lemma}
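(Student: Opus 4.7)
The plan is to apply the standard Paturi--Pudlak--Zane analysis to a satisfying assignment that achieves the maximum Hamming distance from $z_0$, after first reducing to the case $z_0=0^n$ by a symmetry argument.

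First I would exploit the fact that PPZ's behavior is equivariant under variable sign flips. Specifically, for any $w\in\{0,1\}^n$, relabeling the input formula by replacing each $x_i$ with $\neg x_i$ wherever $w_i=1$ yields a formula $\f'$ whose satisfying assignments are $\{z\oplus w:z\in\Om\}$, and the distribution of PPZ's output on $\f'$ is just the XOR by $w$ of its distribution on $\f$. Taking $w=z_0$, the lemma reduces to showing that PPZ on $\f'$ outputs a satisfying assignment $z^*$ with Hamming weight $|z^*|\geq (1-1/k)D$, where $D$ is the maximum Hamming weight of a satisfying assignment of $\f'$, with probability at least $\frac{1}{2n}\cdot 2^{-n+n/k}$.

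Next I would fix a maximum-weight satisfying assignment $z^{\max}$ of $\f'$ and apply the standard PPZ analysis aimed at $z^{\max}$. For a random permutation $\pi$, let $F_\pi(z^{\max})$ count the variables that become forced by a unit clause when PPZ processes them in order $\pi$ with unforced positions set according to $z^{\max}$. The standard calculation gives $\Pr[\text{PPZ outputs }z^{\max}\mid\pi]=2^{-(n-F_\pi(z^{\max}))}$, and Jensen's inequality combined with PPZ's critical-clause lemma yields $\Exp_\pi[F_\pi(z^{\max})]\geq n/k$ whenever every variable is critical for $z^{\max}$. In this isolated case PPZ outputs the good assignment $z^{\max}$ itself with probability at least $2^{-n+n/k}$, which already implies the lemma, with room to spare in the $\frac{1}{2n}$ factor.

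The main obstacle is the non-isolated case, where some variables can be flipped in $z^{\max}$ without violating any clause. For any such flip, the resulting assignment is still satisfying, and by the triangle inequality any satisfying assignment within Hamming distance $D/k$ of $z^{\max}$ has weight at least $(1-1/k)D$ and therefore qualifies as a good output. I would aggregate the PPZ output probabilities over this ball of near-maximum-weight satisfying assignments, arguing that the total probability remains $\Omega(2^{-n+n/k})$ up to a polynomial overhead. I expect the $\frac{1}{2n}$ factor to absorb a loss from fixing one carefully chosen variable's position in $\pi$ or its corresponding bit in the random string $y$, which effectively anchors the PPZ trajectory into the neighborhood of $z^{\max}$ rather than letting it drift to solutions of much smaller weight.
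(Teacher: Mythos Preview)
Your reduction to $z_0=0^n$ via XOR symmetry is fine, and the isolated case is indeed immediate from the Satisfiability Coding Lemma. The gap is in the non-isolated case, where your proposal stops exactly at the point where the real work begins.

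You write that you would ``aggregate the PPZ output probabilities over this ball of near-maximum-weight satisfying assignments, arguing that the total probability remains $\Omega(2^{-n+n/k})$ up to a polynomial overhead.'' But this is precisely the claim to be proved, and it does not follow from anything you have written. The Satisfiability Coding Lemma gives $\tau(\f,z)\ge 2^{-n+(n-\deg(z))/k}$ for each solution $z$, so a solution with many neighbors in $\Om$ contributes very little. If $z^{\max}$ and the nearby solutions all have high degree in the solution graph $G_\f$, each individual term is tiny; you need a structural reason why either there are \emph{many} such solutions or the solutions you care about are sufficiently isolated. Your suggestion to ``fix one carefully chosen variable's position in $\pi$ or its corresponding bit in $y$'' does not supply such a reason and does not correspond to any mechanism that would confine PPZ to the high-weight region.

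The paper's argument is genuinely different in structure. It works not with a ball around $z^{\max}$ but with the entire level set $U_{i^*}=\{z\in\Om:d_H(z,z_0)\ge i^*\}$, and proves a ``Separator Lemma'' bounding $\tau(\f,A)$ for any $A\subseteq\Om$ in terms of $|A|$ and the number of edges leaving $A$ in $G_\f$ (via AM--GM and the edge isoperimetric inequality on the hypercube). The key step is then a layer-by-layer dichotomy: for each $j>i^*$, either $\tau(\f,U_j)\ge 2^{-n+n/k}$ already, or the isoperimetric bound forces $|U_{j-1}|\log|U_{j-1}|\ge k\cdot|U_j|\log|U_j|$, which cascades down to make $|U_{i^*}|$ exponentially large in $r-i^*$. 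This growth is exactly what compensates for the edge-boundary term and yields the $\frac{1}{2n}\cdot 2^{-n+n/k}$ bound. None of this machinery is present or hinted at in your sketch.
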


\begin{lemma}[Modified Sch\"{o}ning's Algorithm is a farthest point oracle]\label{intro_schoning_brief}
    There exists an algorithm, running in time $O^*\!\left(2^{ (1-1/(k \ln 2))n}\right)$ that takes a $k$-CNF formula $\f$ and $z_0 \in \Q{n}$ as input and outputs a satisfying assignment $z^*$ such that $d_H(z_0, z^*) \geq \left(1-\frac{4(k-1)}{(k-2)^2}\right) \cdot \max_{z' \in \Om} \sumd(S,z')$. Here, $z_0$ is used explicitly inside the iteration.
\end{lemma}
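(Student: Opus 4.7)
The plan is to modify Schöning's algorithm by replacing the uniform random initial assignment with a distribution biased toward $\bar z_0$, while leaving the random walk step (pick an unsatisfied clause, flip a uniformly random variable in it) unchanged. Since the standard Schöning walk analysis shows that from any start $x$, the walk reaches a fixed $z^* \in \Om$ in $3n$ steps with probability $\geq (k-1)^{-d_H(x, z^*)}/\mathrm{poly}(n)$, biasing $x$ toward $\bar z_0$ exponentially boosts the hitting probability for satisfying assignments far from $z_0$---precisely what a farthest-point oracle needs. Concretely, for $p \in [1/2, 1]$ let $\mu_p$ denote the product distribution on $\Q{n}$ with $\Pr[x_i = (\bar z_0)_i] = p$ independently; the algorithm samples $x \sim \mu_p$, runs the walk for $3n$ steps, repeats $T = \Ot((2(k-1)/k)^n)$ iterations (varying $p$ across a polynomial-size grid), and outputs the satisfying assignment maximizing $d_H(z_0, \cdot)$. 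In this way, $z_0$ enters explicitly inside the iteration (via the sampling of $x$), in contrast to the PPZ version.

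For the per-iteration analysis, independence of the coordinates of $x$ gives, for any fixed $z^* \in \Om$ and $D_{z^*} := d_H(z_0, z^*)$,
\[
\E_{x \sim \mu_p}\!\left[(k-1)^{-d_H(x, z^*)}\right] = \alpha^{D_{z^*}}\, \beta^{n - D_{z^*}},
\]
where $\alpha = p + (1-p)/(k-1)$ and $\beta = (1-p) + p/(k-1)$ (so $\alpha + \beta = k/(k-1)$). Letting $D := \max_{z' \in \Om} d_H(z_0, z')$, this expression is maximized over $p$ at $p^\star(D) = (Dk/n - 1)/(k-2)$, clipped to $[1/2, 1]$, giving exponent $(k/(k-1))^n \cdot 2^{-n H(D/n)}$ where $H$ is binary entropy. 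Since $D$ is unknown, the algorithm sweeps $p$ through a $\mathrm{poly}(n,k)$-size grid; for the grid value closest to $p^\star(D)$, some satisfying $z^*$ with $d_H(z_0, z^*) \geq (1 - 4/k - \Theta_k(1/k^2))\, D$ is hit with per-iteration probability $\geq \Omgt((k/(2(k-1)))^n)$, so $T$ iterations suffice with high probability to produce such a $z^*$.

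The main obstacle is tightly quantifying the approximation factor $(1 - 4/k - \Theta_k(1/k^2))$. Two compounding error sources contribute: (i) grid discretization of step $\Theta(1/k)$ in $p$ translates multiplicatively through the exponential $\alpha^{D_{z^*}}\beta^{n-D_{z^*}}$ into a $\Theta(1/k)$ slack on the attainable distance, and (ii) the $(k-1)^{-j}$ walk bound loses further $\mathrm{poly}(n)$ and lower-order $k$-corrections relative to the finer Schöning bound $\binom{3j}{j}(1/k)^{2j}((k-1)/k)^j$, as well as boundary effects when $D$ is extreme (near $0$ or $n$). Carefully tuning the grid step and the threshold on $D_{z^*}$ so that both losses are absorbed by the lower-order $\Theta_k(1/k^2)$ term---while keeping the grid polynomially small so the total time remains $\Ot((2(k-1)/k)^n)$---is the delicate combinatorial balance at the heart of the proof.
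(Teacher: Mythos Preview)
There is a genuine gap. The Sch\"oning lower bound you invoke, $\Pr[\text{walk from }x\text{ succeeds}] \ge (k-1)^{-d_H(x,z^*)}$, is a bound on the probability that the walk finds \emph{some} satisfying assignment, using the fixed $z^*$ only as a witness in the analysis; it says nothing about \emph{which} assignment is output. With a $3n$-step walk the output can in principle be any element of $\Om$. So even though your biased start $\mu_p$ makes the per-iteration success probability (targeting the farthest $z^*$) at least $\alpha^{D}\beta^{n-D}\ge (k/(2(k-1)))^n$, this only certifies that \emph{something} in $\Om$ is found within the time budget, not that the found assignment is far from $z_0$. Your claimed sources for the $4/k$ loss (grid discretization in $p$, slack in the $(k-1)^{-j}$ bound) do not convert into a distance guarantee on the output; a $\mathrm{poly}(n)$-size grid in $p$ has step $o(1)$, not $\Theta(1/k)$, and the walk-probability slack is about success probability, not about output location.

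The paper closes exactly this gap by a different mechanism: rather than biasing the start and letting the walk run long, it samples the start $y$ uniformly from the \emph{annulus} $A_{r-t,r+t}(z_0)$ at a guessed radius $r$ and runs the Sch\"oning walk for only $t=\min\{\lfloor \delta r/(1+\alpha)\rfloor, R\}$ steps (using the local-search variants $\LS_{1,k}$ or $\LS_{1+2/(k-2),\,k-1}$). Because the walk is short, any solution it outputs lies within distance $\alpha t$ of $y$, hence within $(1+\alpha)t\le \delta r$ of the target and at distance at least $(1-\delta)r$ from $z_0$ by the triangle inequality; the approximation factor $1-\delta$ thus comes from the walk-length budget, and optimizing $R$ against the annulus volume gives the runtime $O^*((2(1-1/k))^n)$ at $\delta=\tfrac{4}{k-1}(1+\tfrac{1}{k-2})^2=\tfrac{4}{k}+\Theta_k(1/k^2)$. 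In short: the missing idea in your proposal is controlling the walk length (not the start bias) so that the \emph{output} is provably close to the start and hence far from $z_0$.
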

}}

\vspace{1mm}We sketch the proofs in Section~\ref{sec:techniques}. Three remarks are in order.

\begin{remark}
Lemma~\ref{intro_ppz_brief} requires several insights into the behavior of PPZ. PPZ is not a traditional local search algorithm and it falls in the random restriction paradigm~\cite{scheder2022ppsz}. The analysis of PPZ~\cite{PPZ} is local in nature: the authors bound the probability of arriving at a solution $z$ that is $j$-isolated, meaning that exactly $n-j$ neighbors of $z$ are also satisfying solution. This probability is then added over all satisfying assignments, resulting in the PPZ run time bound of $O^*(2^{(1-1/k)n})$. On the other hand, in Lemma~\ref{intro_ppz_brief} we are interested in bounding the probability that PPZ returns a solution that is far away from a given point $z_0$. The fact that PPZ, without any modifications based on $z_0$, returns such far-away solutions automatically was surprising to us.  We leave it as an open question whether the PPZ-based, more involved, state-of-the-art algorithm of Paturi, Pudl\'ak, Saks and Zane (PPSZ) \cite{paturi2005}, can also be shown to exhibit similar behavior.
\end{remark}

\begin{remark}
Unlike PPZ, we could not prove that Sch\"{o}ning's original algorithm works directly as an approximate farthest point oracle. Our modification of Sch\"{o}ning's algorithm controls both the region of starting assignments $x$ and the length of the Sch\"{o}ning walk from $x$. Instead of Sch\"{o}ning's analysis that bounds the probability of finding any solution starting at a random point, we bound the probability that we find a solution far from $z_0$ and close to $x$. As a plus, in addition to giving us a farthest point oracle, this also allows us to obtain a tradeoff between runtime and approximation factors. More details can be found in Section~\ref{sec:techniques}.
\end{remark}

\begin{remark} 
We investigate other promising candidate approaches for $k$-CNF dispersion that do not use PPZ or \Sch's algorithms.  
First, we show that the approach to solve dispersion problem via  \emph{uniform sampling algorithms}~\cite{schmitt2013exploiting} does not necessarily give a good approximation compared to our approach, even for the diameter (Appendix~\ref{app:uniformv2}).  Furthermore, we consider yet another promising approach via the \minones\ problem.  This problem asks for the minimum Hamming weight solution to a SAT formula~\cite{ConicSearch}. While we note that the an algorithm for the \minones\ problem can be used to give a $1/2$ approximation of the diameter(Appendix~\ref{app:minones}), we also observe that this approach is unlikely to be extended to finding more than two diverse solutions, as the reduction to diameter does not generalize.
\end{remark}

Lemma~\ref{intro_ppz_brief} and Lemma~\ref{intro_schoning_brief} give us algorithms for computing a set $S$ with maximum dispersion for both the $\PD(S)$ and the $\SPD(S)$ versions. These are stated formally in Section~\ref{intro_results}. Moreover, we get a variety of applications: diverse solutions to several optimization problems and CSPs, and reanalyzing SAT algorithms when the formula has many diverse assignments. These are presented in Section~\ref{intro_generalize}.

\subsection{Results on Diameter and Dispersion}\label{intro_results}

Throughout the paper, we let  $\f$ denote a $k$-CNF formula on $n$ variables (unless otherwise specified). Given such an $\f$, we let $\Omega_{\f} \subseteq \{0,1\}^n$ denote the set of satisfying assignments of $\f$. We start by formally defining the diameter problem. For a given formula $\f$, let $\D(\f)$ be defined as $\max_{z_1, z_2 \in \Om} \left\{d_H(z_1, z_2)\right\}$, where $\Omega_\f$ is non-empty. Note that when $\f$ has a unique satisfying assignment, then $\D(\f)$ is simply $0$. On the other hand, if $\f$ is not satisfiable, we define $\D(\f) = \perp$. For a set $S\subseteq \{0,1\}^n$, define  $\PD(S):=\min_{z_1,z_2 \in S} d_{H}(z_1,z_2)$ and $\SPD(S):=\frac{1}{2}\sum_{z_1,z_2 \in S} d_{H}(z_1,z_2)$. We then define \opts$(\f,s)$ as the maximum value of $\SPD(S)$ over all multisets $S$ with $s$ satisfying assignments (including multiplicities), and \\$\optm(\f,s) = \max_{S \subseteq \Om, |S|=s} \PD(S)$, i.e., the maximum such distance over all sets of $s$ satisfying assignments. Further, we define \opts$_\neq(\f,s)$ as the maximum value of $\SPD(S)$ over all \emph{sets} $S$ with $s$ distinct satisfying assignments. 
\subsubsection{Computing diameter exactly and approximately}
\label{sec:intro-dia}
\medskip\noindent
\textbf{Computing diameter exactly.} We first study the exponential complexity of computing $\D(\f)$. Specifically, we prove the following theorem. 
\begin{restatable}{theorem}{exactdiam}[Exact Diameter] \label{thm:exactdiam}

 Let $\f$ be a $k$-CNF formula on $n$ variables. 
 There exists a deterministic algorithm that uses $O^*(2^n)$ time and $O^*(2^n)$ space, and outputs a pair of satisfying assignments $z_1, z_2 \in \Omega_\f$ with $d_H(z_1,z_2)= \D(\f)$. 
\end{restatable}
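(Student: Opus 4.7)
The plan is to reduce diameter computation to an XOR-convolution and evaluate it via the Walsh--Hadamard transform (this is the ``Fourier analytical approach'' hinted at in the introduction). Let $f : \{0,1\}^n \to \{0,1\}$ be the indicator of $\Omega_\f$, i.e.\ $f(z)=1$ iff $z$ satisfies $\f$. The key observation is that for any $w \in \{0,1\}^n$, the XOR-convolution
\[
g(w) \;=\; (f \star f)(w) \;=\; \sum_{u \in \{0,1\}^n} f(u)\,f(u \oplus w)
\]
counts the number of ordered pairs $(z_1, z_2) \in \Omega_\f \times \Omega_\f$ with $z_1 \oplus z_2 = w$. Since $d_H(z_1,z_2) = |z_1 \oplus z_2|$ (Hamming weight), we have the identity
\[
\D(\f) \;=\; \max\bigl\{\, |w| \;:\; w \in \{0,1\}^n,\ g(w) > 0\,\bigr\},
\]
assuming $\Om \neq \emptyset$ (otherwise $g \equiv 0$ and we return $\perp$).

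First I would construct the table of $f$ by iterating over all $z \in \{0,1\}^n$ and checking the $\text{poly}(n)$-sized formula $\f$, in time $O^*(2^n)$ and space $O^*(2^n)$. Next I would compute the Walsh--Hadamard transform $\hat{f}$ using the standard butterfly procedure in $O(n\,2^n)$ time, square it pointwise to obtain $\hat{g}(y) = \hat{f}(y)^2$ for every $y$, and then apply the inverse Walsh--Hadamard transform to recover $g$ in another $O(n\,2^n)$ time. Since each $g(w)$ is an integer in $[0, 2^n]$, it fits in $O(n)$ bits, so all arithmetic costs are absorbed by the $O^*$ notation. By the convolution theorem, the resulting table is exactly the XOR-convolution $f \star f$.

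Having $g$ in memory, I would scan all $w \in \{0,1\}^n$ and select $w^* = \arg\max\{|w| : g(w) > 0\}$, which gives $\D(\f) = |w^*|$. To recover an actual diameter pair, I iterate once more through all $z \in \Om$ (i.e.\ all $z$ with $f(z)=1$) and test whether $f(z \oplus w^*) = 1$; the first such $z$ yields the pair $(z_1, z_2) := (z, z \oplus w^*)$ with $d_H(z_1,z_2) = |w^*| = \D(\f)$. This recovery step is $O(2^n)$ since $f$ is stored as a lookup table.

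Summing up, every step is deterministic and runs in $O^*(2^n)$ time with $O^*(2^n)$ space, proving the theorem. The conceptual content is entirely in the reduction to XOR-convolution plus the convolution theorem for the Walsh--Hadamard transform; no step is a real obstacle, but one small care-point worth mentioning in the write-up is the handling of the unsatisfiable and unique-solution cases, which both fall out automatically from inspecting $g$ (if $g \equiv 0$ return $\perp$; if $g(w) > 0$ only for $w = 0^n$ return any satisfying assignment twice with $\D(\f)=0$).
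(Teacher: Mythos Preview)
Your proposal is correct and essentially identical to the paper's proof: the paper also reduces to the XOR-convolution $f*f$, computes it via the fast Walsh--Hadamard transform (compute $\hat f$, square pointwise, inverse transform), reads off the maximum-weight $w$ with $(f*f)(w)>0$, and then linearly scans for a witnessing pair. Your handling of edge cases and bit-complexity is slightly more explicit than the paper's, but there is no substantive difference.
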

Prior to our work, the best exact algorithm known
was by Angelsmark and Thapper~\cite{AngelsmarkThapperCSP}. Their algorithm runs in time $O((2a_k)^n)$ and space $\poly(n)$, where $O(a_k^n)$ is the running time for solving the $k$-SAT problem. Our result significantly improves the running time of their algorithm (but uses substantially more space than their algorithm). 

Our technique is also different from other techniques in the literature. Namely, this algorithm does not depend on any SAT algorithm. Our main observation is that $\D(\f)$ can be reduced to computing the \emph{convolution} of the Boolean function represented by $\f$ with itself. We then use that such a convolution can be computed within the above stated time and space bounds using the Fast Fourier Transform. 

Our technique for exact diameter is fairly general and does not depend on the fact that the solution space corresponds to a $k$-CNF formula. For any Boolean function $f: \{0,1\}^n \rightarrow \{0,1\}$ such that for a given $x \in \{0,1\}^n$, there is a polynomial time oracle to compute $f(x)$, our algorithm can be used to exactly compute the diameter of $f$ with the above performance guarantees. 

\medskip\noindent
\textbf{Approximating the diameter.} Next, we give algorithms for approximating $\D(\f)$\footnote{All approximation algorithms we present here use $\poly(n)$ space.}. As a warm-up, here is a simple way to approximate $\D(\f)$. We can start by using the best known algorithm to find a single satisfying assignment for $\f$. Suppose that assignment is $\alpha$. We can then (in polynomial time) change $\f$ to $\f'_\alpha$ by negating some of the variables such that $1^n$ becomes the satisfying assignment of $\f'_\alpha$. 
One can then use the best known algorithm for the \minones problem to find a satisfying assignment for $\f'_\alpha$, which finds a satisfying assignment with minimum $1$s in it, say $\beta$. It is easy to see that the Hamming distance between $\alpha, \beta$ gives a $0.5$-approximation to the diameter of $\f$. For more details on this reduction, we refer the reader to~\Cref{app:minones}. By using the best known algorithms for $k$-SAT~by Paturi, Pudl\'ak, Saks, and Zane~\cite{paturi2005} and for \minones~by Fomin, Gaspers, Lokshtanov and Saurabh~\cite{ConicSearch}, it is easy to see that we can obtain $(\alpha, \beta)$ in time $O^*((2-\frac{1}{k})^{n})=O^*\left(2^{ (1-\frac{1}{(2 \ln 2) \cdot k})n}\right)$\footnote{Note that, $ O^*((2-\frac{1}{k})^{n}) = O^*(2(1-\frac{1}{2k}))^{n} \sim O^*(2^n \cdot e^{-\frac{n}{2k}}) = 2^{n (1-\frac{1}{(2 \ln 2) \cdot k})}$.}.

Here, we obtain better running time for $\D(\f)$ for $k\geq 3$ with a small loss in the approximation factor. From here on, we assume that $k \geq 3$ unless stated otherwise.

\begin{restatable}[PPZ approximating $\D(\f)$]{theorem}{ppzdiam}
    \label{thm:ppz-for-dia}
    Let $\f$ be a $k$-CNF formula on $n$ variables. There exists a randomized algorithm running in time $O^*\left(2^{(1-1/k)n}\right)$ that takes $\f$ as input and if $\f$ is satisfiable, outputs $z_1^*, z_2^* \in \Om$ with $d_H(z_1^*, z_2^*) \geq \frac{1}{2}\cdot\left(1-\frac{1}{k}\right) \D(\f)$ with probability $1-o(1)$.
\end{restatable}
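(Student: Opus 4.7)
The plan is to combine Lemma~\ref{intro_ppz_brief} with a simple triangle-inequality observation, using PPZ in a two-stage farthest-insertion fashion. In the first stage I run standard PPZ to obtain any satisfying assignment $z_1^* \in \Om$: amplifying with $T_1 = \Theta(n \cdot 2^{n - n/k})$ independent trials gives $z_1^*$ with probability $1-o(1)$. In the second stage I apply Lemma~\ref{intro_ppz_brief} with the anchor point $z_0 := z_1^*$. Since the PPZ iteration does not depend on $z_0$, I simply keep running PPZ another $T_2$ times and retain the satisfying assignment $z_2^*$ that maximizes $d_H(z_1^*, z_2^*)$ over the trials.

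For correctness, let $a, b \in \Om$ be any pair witnessing $\D(\f) = d_H(a,b)$. The triangle inequality gives $d_H(z_1^*, a) + d_H(z_1^*, b) \geq d_H(a,b) = \D(\f)$, hence $\max_{z' \in \Om} d_H(z_1^*, z') \geq \D(\f)/2$. By Lemma~\ref{intro_ppz_brief}, each iteration of PPZ produces an assignment $z^* \in \Om$ with
\[
d_H(z_1^*, z^*) \;\geq\; (1 - 1/k) \cdot \max_{z' \in \Om} d_H(z_1^*, z') \;\geq\; \tfrac{1}{2}(1 - 1/k)\, \D(\f)
\]
with probability at least $p := \tfrac{1}{2n}\, 2^{-n + n/k}$. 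Choosing $T_2 = \Theta(n^2 \cdot 2^{n - n/k}) = \omega(1/p)$, the probability that none of the $T_2$ independent iterations yields such a $z_2^*$ is at most $(1-p)^{T_2} \leq e^{-pT_2} = o(1)$, and the algorithm returns the retained pair $(z_1^*, z_2^*)$.

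For the runtime, each PPZ iteration runs in polynomial time, so the total cost is $\poly(n) \cdot (T_1 + T_2) = O^*(2^{n(1-1/k)})$, matching the theorem's bound. The space usage is polynomial since only the current best pair needs to be stored.

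There is no real obstacle once Lemma~\ref{intro_ppz_brief} is in hand; the only point requiring care is that the lemma's guarantee must hold for an anchor $z_0$ chosen \emph{after} the first PPZ run (namely $z_0 = z_1^*$), which is exactly why the lemma's clause ``the iteration of PPZ does not depend on $z_0$'' is important---it lets me first commit to $z_1^*$ and then invoke the geometric-sampling guarantee against it without any modification to PPZ. The loss of $1/2$ is unavoidable with this two-step strategy and is the source of the $\tfrac{1}{2}(1-1/k)$ approximation factor.
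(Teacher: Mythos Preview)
Your proposal is correct and matches the paper's approach essentially verbatim: the paper also runs PPZ once to obtain $z_1^*$, then invokes the geometric-sampling guarantee (formalized there as \Cref{lem:anchor:diam}/\Cref{lem:PPZFarthest}, the precise version of Lemma~\ref{intro_ppz_brief}) with anchor $z_0=z_1^*$ to get $z_2^*$, and uses the same triangle-inequality step $\max_{z'\in\Om} d_H(z_1^*,z') \ge \D(\f)/2$ to conclude the $\tfrac{1}{2}(1-1/k)$ factor. Your amplification parameters and runtime accounting are also in line with the paper's.
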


The running time of the algorithm here is exactly the same as the running time of the algorithm achieved in~\cite{PPZ}, which solves the $k$-SAT problem. Our result demonstrates that the diameter can be approximated in the same time used to compute a single satisfying assignment. In fact, the way we achieve this running time is by repeatedly invoking the PPZ algorithm. At the heart of the analysis of the PPZ algorithm lies the Satisfiability Coding Lemma from~\cite{PPZ}. Informally speaking, the Satisfiability Coding Lemma says that if the solutions of a $k$-CNF instance are \emph{well-separated} then they have a small description. In our proof, we generalise this lemma. We discuss our proof idea in detail in \Cref{sec:techniques}. 

Next, we show how to approximate the diameter within the running time guarantees of \sch's algorithm for $k$-SAT. Specifically, we prove the following theorem. 
\begin{theorem}[\sch approximating $\D(\f)$.]
    \label{thm:sch-for-dia-fixedapprox}
    Let $\f$ be a $k$-CNF formula on $n$ variables. There exists a randomized algorithm running in time $O^*\!\left(2^{ (1-\frac{1}{k \ln 2})n}\right)$ that takes $\f$ as input and if $\f$ is satisfiable, outputs $z_1^*, z_2^* \in \Om$ with $d_H(z_1^*, z_2^*) \geq \frac{1}{2}\left(1-\frac{4(k-1)}{(k-2)^2}\right) \cdot \D(\f)$ with probability $1-o(1)$.
\end{theorem}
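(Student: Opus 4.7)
The plan is to use one round of farthest-point insertion, taking the modified \sch\ farthest-point oracle of Lemma~\ref{intro_schoning_brief} as the insertion primitive. This is exactly the diameter specialization of Gonzalez's classical $\tfrac{1}{2}$-approximation argument: first produce any satisfying assignment $z_1^* \in \Om$ (which takes time $O^*((2(1-1/k))^n)$ via vanilla \sch\ with success probability $1-o(1)$) and then query the oracle with seed $z_0 := z_1^*$ to obtain $z_2^* \in \Om$ with
\[
d_H(z_1^*, z_2^*) \;\geq\; \left(1-\tfrac{4}{k}-\Theta_k\!\left(\tfrac{1}{k^2}\right)\right) \cdot \max_{z' \in \Om} d_H(z_1^*, z'),
\]
again in time $O^*((2(1-1/k))^n)$ and with success probability $1-o(1)$. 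The algorithm outputs $(z_1^*, z_2^*)$.

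The approximation analysis is then a one-line triangle inequality. Let $(w_1, w_2) \in \Om \times \Om$ be an optimal diameter pair, so $d_H(w_1, w_2) = \D(\f)$. The triangle inequality for Hamming distance at the pivot $z_1^*$ gives $\D(\f) \leq d_H(z_1^*, w_1) + d_H(z_1^*, w_2)$, so at least one of these summands is at least $\tfrac{1}{2}\D(\f)$; in particular $\max_{z' \in \Om} d_H(z_1^*, z') \geq \tfrac{1}{2}\D(\f)$. Plugging this into the oracle's guarantee yields
\[
d_H(z_1^*, z_2^*) \;\geq\; \tfrac{1}{2}\left(1-\tfrac{4}{k}-\Theta_k\!\left(\tfrac{1}{k^2}\right)\right)\D(\f),
\]
which matches the statement. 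The overall success probability is $1-o(1)$ by a union bound over the two stages, and the total runtime is $O^*((2(1-1/k))^n)$, as required.

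The main obstacle is fully encapsulated in Lemma~\ref{intro_schoning_brief}, whose proof is where I expect the real work to lie: one must argue that a suitable modification of \sch's random walk—controlling both the region of starting assignments and the walk length—still hits a satisfying assignment within the standard $O^*((2(1-1/k))^n)$ budget while biasing the output toward assignments that are far from the prescribed seed $z_0$. Once that oracle is in hand, the diameter approximation above is essentially a metric statement rather than a SAT-specific one; the only slack that propagates into the theorem is the $(1-4/k-\Theta_k(1/k^2))$ approximation factor of the oracle itself, combined with the inherent factor of $1/2$ from a single farthest-insertion step.
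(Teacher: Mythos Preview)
Your proposal is correct and follows essentially the same approach as the paper: find one satisfying assignment $z_1^*$ via \sch's algorithm, invoke the \sch-based $(1-\delta)$-approximate farthest-point oracle (the paper's formal version is \Cref{lem:anchor:sch}, of which Lemma~\ref{intro_schoning_brief} is the informal introduction statement) with seed $z_1^*$, and apply the triangle inequality to pick up the factor of $\tfrac{1}{2}$. The paper obtains the specific $(1-4/k-\Theta_k(1/k^2))$ factor by instantiating its general trade-off (\Cref{thm:sch-for-dia}, part~2) at $\delta=\tfrac{4}{k-1}(1+\tfrac{1}{k-2})^2$, which is exactly the point where the oracle's running time matches \sch's $O^*((2(1-1/k))^n)$ bound; your write-up treats this as already packaged into Lemma~\ref{intro_schoning_brief}, which is fine.
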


In fact, \Cref{thm:sch-for-dia-fixedapprox} is one instance of a smooth tradeoff between the approximation factor and the running time. We present the full tradeoff in \Cref{thm:sch-for-dia} \Cref{sec:sch}.
Notice that the running time obtained here is better than the running time obtained using \Cref{thm:ppz-for-dia}, which in turn is faster than the naive algorithm that uses \minones. We incur some loss in the approximation factors to obtain these speedups. As stated, the result gives non-trivial approximation factors when $k\geq 7$. \Cref{thm:sch-for-dia} generalizes \Cref{thm:sch-for-dia-fixedapprox} to get non-trivial approximation factors for any $k$. In \Cref{thm:sch-for-dia-two} \Cref{app:more-sch}, we present another \sch-type algorithm to approximate the diameter that outperforms the algorithm in \Cref{thm:sch-for-dia-two} for small values of $k$ and some regimes of the approximation factor. 

\subsubsection{Computing dispersion exactly and approximately}

We extend all the algorithms from Section~\ref{sec:intro-dia} and obtain bounds for the dispersion problem.

\medskip\noindent
\textbf{Exact algorithms for dispersion.}
We start with the problem of exactly computing $\opts(\f,s)$, $\optm(\f,s)$ and $\opts_{\neq}(\f,s)$. The obvious algorithm for computing all these quantities would be to do a brute force search over all $z_1, z_2, \dots, z_s \in \set{0,1}^n$, which would require $O^*(2^{sn})$ time. We observe that we can extend the Fourier analytical approach we used in \Cref{thm:exactdiam} to do this in $O^*(2^{(s-1)n})$ time and $O^*(2^n)$ space.  We state and prove the formal statement in \Cref{sec:exact}. We also provide an alternate  algorithm for dispersion in \Cref{thm:exactdisp faster} in~\Cref{sec:exactdisp faster}. The algorithm, based on clique-finding, runs in time $O(s^2 \cdot \size{\Om}^{\omega \lceil s/3 \rceil})$ and uses space  $O(\size{\Om}^{2 \lceil s/3 \rceil})$, where $\omega \le 2.38$ denotes the matrix multiplication exponent~\cite{williams2024new}. As such, it is faster than the Fourier analysis-based algorithm for any $s\geq 6$, and can be much faster when the size of the solution set is less than $2^{n}$.

\medskip\noindent
\paragraph{Approximating dispersion.} We now turn to approximation algorithms for dispersion. Our goal is to come up with approximation algorithms for all the versions of the dispersion problem as in the case of approximation algorithms for computing the diameter. We saw that \minones can be used to give a 0.5 approximation to $\D(\f)$. However, it is not clear how we can use it to approximate the dispersion. More about this in \Cref{sec:techniques}. 

\paragraph{Approximating $\opts(\f, s)$. } We show that PPZ as well as \sch's algorithms can be modified to compute $\opts(\f,s)$. Formally, 
\begin{theorem}[PPZ approximating $\opts(\f,s)$] \label{thm:ppz-for-sumdisp-easy}
    Let $\f$ be a $k$-CNF formula on $n$ variables. There exists a randomized algorithm running in time $O^*\left(s^4 \cdot 2^{(1-1/k)n}\right)$ that takes $\f$ and an integer $s \geq 1$ as input and if $\f$ is satisfiable, with probability at least $1-o(1)$, outputs a multiset $S \subseteq \Om$ of size $s$ such that $$\SPD(S) \geq \left(1-\frac{4}{k-3}\right)\left(1-\frac{2}{s+2}\right)\cdot \opts(\f,s) \; .$$
\end{theorem}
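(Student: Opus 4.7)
My approach is to implement a Gonzalez-style greedy farthest-insertion algorithm for max-sum dispersion, powered by the PPZ iteration (Lemma~\ref{intro_ppz_brief}) as an approximate farthest-from-point oracle. Initialize $S_1$ with an arbitrary satisfying assignment (obtained by one call to PPZ), and for $i=2,\ldots,s$, find a satisfying assignment $z_i \in \Om$ that approximately maximizes $\sumd(S_{i-1},z)$ over $\Om$, and set $S_i := S_{i-1} \cup \{z_i\}$.

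I would implement the approximate far-from-$S_{i-1}$ oracle as follows. For each anchor $y \in S_{i-1}$, run the PPZ iteration $\Theta(s^2 \log s \cdot n \cdot 2^{n-n/k})$ times. By Lemma~\ref{intro_ppz_brief} and a union bound over all $s \cdot s = s^2$ (iteration, anchor) pairs, with probability $1-o(1)$ each anchor $y$ produces some candidate $z_y \in \Om$ satisfying $d_H(y,z_y) \geq (1-1/k)\,\max_{z' \in \Om} d_H(y,z')$. Then compute $\sumd(S_{i-1}, z_y)$ explicitly for each $y$ and set $z_i$ to be the best such candidate. The total number of PPZ calls is $O(s \cdot s \cdot s^2 \log s \cdot n \cdot 2^{n-n/k}) = O^*(s^4 \cdot 2^{n-n/k})$, matching the claimed runtime.

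To analyze one iteration, let $z^* := \argmax_{z \in \Om} \sumd(S_{i-1},z)$. By the reverse triangle inequality, for every $y \in S_{i-1}$,
\begin{equation*}
\sumd(S_{i-1}, z_y) \;\geq\; |S_{i-1}|\cdot d_H(y, z_y) \;-\; \sumd(S_{i-1},y).
\end{equation*}
Averaging over $y$ uniform in $S_{i-1}$, using $\mathbb{E}_y d_H(y, z^*) = \sumd(S_{i-1},z^*)/|S_{i-1}|$ and $\mathbb{E}_y \sumd(S_{i-1},y) = 2\SPD(S_{i-1})/|S_{i-1}|$, the best candidate $z_i$ achieves
\begin{equation*}
\sumd(S_{i-1},z_i) \;\geq\; (1-1/k)\,\sumd(S_{i-1}, z^*) \;-\; \frac{2\,\SPD(S_{i-1})}{|S_{i-1}|}.
\end{equation*}

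Finally, I would combine the per-iteration guarantee with the standard analysis of greedy farthest-insertion for max-sum dispersion (in the spirit of Hassin--Rubinstein--Tamir and Borodin--Jain--Lee--Ye), which shows that with an \emph{exact} oracle, $\SPD(S_s) \geq (1-2/(s+1))\,\opts(\f,s)$. Specifically, the OPT set $T$ of size $s$ provides a reference: averaging $\sumd(S_{i-1},t)$ over $t \in T$ gives a lower bound on $\sumd(S_{i-1},z^*)$ in terms of $\opts(\f,s) - \SPD(S_{i-1})$, so summing the per-iteration increments telescopes to a clean multiplicative bound on $\SPD(S_s)/\opts(\f,s)$. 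Threading the $(1-1/k)$ PPZ loss and the additive $2\SPD(S_{i-1})/|S_{i-1}|$ term through this telescoping is the crux: after routine bookkeeping, the $(1-1/k)$ factor and the accumulated additive error combine into the factor $(1-4/(k-3))$ in front of $(1-2/(s+1))$.

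The main obstacle is precisely this last bookkeeping step: the per-iteration additive slack scales with the current $\SPD$, so the induction hypothesis must simultaneously track how $\SPD(S_i)$ grows relative to $\opts(\f,i)$ and how the cumulative PPZ loss compounds. Handling this cleanly --- and in particular obtaining the denominator $k-3$ rather than, say, $k$ or $k-1$ --- is where I expect most of the technical work to lie; the runtime and the PPZ-based oracle implementation itself are essentially routine given Lemma~\ref{intro_ppz_brief}.
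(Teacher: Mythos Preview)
Your plan has two genuine gaps, and both are essential to obtaining the stated factor.

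\textbf{Greedy alone does not give $(1-2/(s+1))$ for max-sum dispersion.} You assert that with an exact oracle, farthest-insertion achieves $\SPD(S_s)\geq (1-2/(s+1))\,\opts$. This is false: greedy farthest-insertion for max-sum dispersion is a $\tfrac12$-approximation in any metric (Ravi et al., Hassin--Rubinstein--Tamir), and that barrier is tight. The $(1-O(1/s))$ guarantee in Hamming space comes from \emph{local search} exploiting the negative-type inequality $d_H(S,\Sopt)\geq \SPD(S)+\SPD(\Sopt)$; this is exactly what the paper does (Algorithm~\ref{alg:sumdispersion}, \Cref{lem:sumdispersion}), running $s^2 n$ swap rounds after the greedy phase. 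Without that phase your method cannot beat $\tfrac12$ even with a perfect oracle, so it cannot reach a factor that tends to $1$ as $k,s\to\infty$.

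\textbf{Your $\sumd$-oracle is weaker than needed.} Building the oracle from the single-anchor \Cref{intro_ppz_brief} via the reverse triangle inequality leaves you with an additive slack $-\,2\SPD(S_{i-1})/|S_{i-1}|$ per iteration. That term is of the same order as the increment itself (it is the average row-sum of the current distance matrix), so the ``routine bookkeeping'' you allude to does not close: the recurrence $P_i\geq (1-\tfrac{2}{i-1})P_{i-1}+\text{(gain)}$ repeatedly erases accumulated dispersion. The paper sidesteps this entirely by proving a \emph{direct} sum-distance sampling property of PPZ (\Cref{lem:anchor:sum}): with probability $\geq \tfrac{1}{2n}2^{-n+n/k}$, one PPZ iteration outputs $z^*$ with $\sumd(T,z^*)\geq \tfrac{k-1}{k+1}\max_{z\in\Om}\sumd(T,z)$. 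This yields a clean multiplicative $(1-\delta)$-oracle with $\delta=\tfrac{2}{k+1}$ and no additive term, which then plugs straight into the Cevallos local-search analysis to give $\tfrac{(1-\delta)(s-1)}{(1+\delta)s+(1-\delta)}$; the stated bound in \Cref{thm:ppz-for-sumdisp-easy} is a clean relaxation of this.
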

\begin{remark}
    When $k \leq 6$, this algorithm achieves a better approximation ratios for smaller values of $s$ than stated above. Note that as $k$ and $s$ become large, the approximation factor tends to $1$. For more details, we refer to the reader to the full version of this theorem (\Cref{thm:ppz-for-sumdisp}) in \Cref{sec:PPZ}). 
\end{remark}

For $\opts_\neq(\f,s)$, we can obtain exactly the same approximation factors as in \Cref{thm:ppz-for-sumdisp} for certain parameter regimes of $s$ (see \Cref{app:distinctsum} for more details). 

\medskip\noindent
\textbf{Approximating $\optm(\f,s)$.} Next, we show that our techniques can be used to approximate \optm\ as well. Formally,

\begin{restatable}[PPZ approximating $\optm(\f,s)$]{theorem}{ppzmindisp} \label{thm:ppz-for-mindisp}
    Let $\f$ be a $k$-CNF formula on $n$ variables. There exists a randomized algorithm running in time $O^*\left(s^3 \cdot 2^{(1-1/k)n}\right)$ that takes $\f$ and an integer $s \geq 1$ as input and if $\f$ is satisfiable and $|\Om| \geq s$, with probability at least $1-o(1)$, outputs a set $S$ of size $s$ such that $\PD(S) \geq \frac{1}{2} \left(1-\frac{1}{kH^{-1}(1-1/k)}\right) \cdot \optm(\f,s)$ \footnote{The function $H^{-1}(\cdot)$ denotes the inverse of the binary entropy function $H(x)= -x \log(x)-(1-x) \log(1-x) $ restricted to the domain $[0,1/2]$. The domain of $H^{-1}$ is $[0,1]$ and its range is $[0,1/2]$.} 
    
\end{restatable}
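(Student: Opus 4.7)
My plan is to combine a Gonzalez-style farthest-point insertion greedy with PPZ playing the role of an approximate min-farthest-set oracle, mirroring the structure of \Cref{thm:ppz-for-sumdisp-easy}. The algorithm initializes $S_1 = \{z_1\}$ by running PPZ until success, then for each $i = 2, \dots, s$ runs PPZ $T = \Ot(s \cdot 2^{n - n/k})$ times and sets $z_i$ to be the output maximizing $\mind(S_{i-1}, z_i)$. The total running time is $O^*(s^3 \cdot 2^{n-n/k})$ after accounting for the per-candidate cost of evaluating $\mind$. A standard Gonzalez-style analysis (pigeonhole on an optimal $s$-set combined with the triangle inequality for $d_H$) gives $\max_{z' \in \Om} \mind(S_{i-1}, z') \geq \optm(\f, s)/2$ at every step $i \leq s$, so it suffices to show that each oracle call returns $z_i$ with $\mind(S_{i-1}, z_i) \geq \beta \cdot \max_{z' \in \Om} \mind(S_{i-1}, z')$ for $\beta := 1 - 1/(k H^{-1}(1-1/k))$; the theorem then follows from $\PD(S_s) = \min_i \mind(S_{i-1}, z_i)$ and a union bound over $s$ steps.

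The heart of the proof is implementing this approximate oracle via PPZ. Fix a step $i$, let $z^*$ achieve $d^*_{i-1} := \max_{z' \in \Om} \mind(S_{i-1}, z')$, and set $p := H^{-1}(1-1/k)$, $r := d^*_{i-1}/(kp)$. By the triangle inequality, every $z \in B(z^*, r) \cap \Om$ satisfies $\mind(S_{i-1}, z) \geq d^*_{i-1} - r = \beta \cdot d^*_{i-1}$, so landing anywhere in this ball produces a valid oracle output. The choice of $r$ is calibrated so that, using $d^*_{i-1} \leq n$, one has $r/n \leq p$, and hence $|B(z^*, r)| \leq 2^{n H(r/n)} \leq 2^{n(1-1/k)}$, matching the ``capacity'' of PPZ. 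It then suffices to show that a single PPZ iteration lands in $B(z^*, r) \cap \Om$ with probability $\Omega\!\left(2^{-(n - n/k)}/\poly(n, s)\right)$, so that $T$ iterations succeed with failure probability $o(1/s)$.

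The main obstacle is precisely this last probability bound: the baseline Satisfiability Coding Lemma of \cite{PPZ} controls PPZ's total mass on $\Om$, not on a prescribed subset. My plan is to adapt the localized analysis underlying Lemma~\ref{intro_ppz_brief} and \Cref{thm:ppz-for-sumdisp-easy} --- which already restricts PPZ's sampling to a geometric region (there, \emph{far} from an anchor) --- to the present setting of a Hamming ball of prescribed volume around an unknown $z^*$. The plan is a dichotomy: either $z^*$ is itself sufficiently isolated in $\Om$ and contributes the required $2^{-(n-n/k)}/\poly(n)$ mass directly via the coding lemma, or it has many satisfying neighbors, which all lie inside $B(z^*, r)$ by our choice of $r$ and together absorb the mass. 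Making this dichotomy quantitative, and handling the small-$k$ boundary where $\beta$ degrades and the theorem statement itself becomes vacuous, is where most of the work sits; the remaining pieces (amplification, union bounds, and the cost of evaluating $\mind$ on each candidate) are routine.
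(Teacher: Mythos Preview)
Your high-level plan (Gonzalez farthest-point insertion with PPZ as an approximate $\mind$-oracle, then a union bound over $s$ rounds) is exactly the paper's structure, and your reduction to ``PPZ lands in $B(z^*,r)\cap\Om$ with probability $\Omega(2^{-(n-n/k)}/\poly(n))$'' is also the right shape. The gap is in how you propose to establish that probability bound. First, the arithmetic: from $d^*_{i-1}\le n$ and $r=d^*_{i-1}/(kp)$ you get $r/n\le 1/(kp)$, not $r/n\le p$; the two agree only when $kp^2\ge 1$, which fails for all $k\lesssim 10$ (e.g.\ $k=5$ gives $p\approx 0.243$ and $1/(kp)\approx 0.82$). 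More importantly, bounding $|B(z^*,r)|$ does not by itself lower-bound PPZ's mass inside the ball, and your isolated/non-isolated dichotomy, as stated, does not close this: if $z^*$ has many satisfying neighbours you have pushed the problem one level out without a termination argument, and you are essentially rediscovering the Separator Lemma plus layering analysis (\Cref{lem:separator}, \Cref{lem:anchor:diam}) without saying so.

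The paper's oracle (\Cref{lem:PPZFarthestMin}) sidesteps both issues with two concrete ingredients you are missing. (i) It applies \Cref{lem:anchor:diam} not to $z^*$ but to its \emph{antipode} $\bar z^*$: since $\max_{z'}d_H(\bar z^*,z')\ge d_H(\bar z^*,z^*)=n$, the lemma guarantees PPZ outputs some $z$ with $d_H(\bar z^*,z)\ge (1-1/k)n$, i.e.\ $d_H(z^*,z)\le n/k$, with probability $\ge \tfrac{1}{2n}2^{-n+n/k}$. This gives a \emph{fixed} radius $n/k$, not a radius scaled by $d^*_{i-1}$. (ii) To turn the additive loss $n/k$ into the multiplicative factor $1-1/(kp)$, the oracle first does an \emph{exhaustive search} over the Hamming balls of radius $R$ around every $z\in S$, where $R$ is the largest integer with $\sum_{i\le R}\binom{n}{i}\le 2^{n-n/k}$ (so $R+1\ge nH^{-1}(1-1/k)=np$). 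This costs $O^*(s^2\cdot 2^{n-n/k})$ and exactly solves the instance whenever $d^*_{i-1}\le R$; otherwise $d^*_{i-1}\ge np$, so $n/k\le d^*_{i-1}/(kp)$ and the triangle inequality gives $\mind(z,S)\ge d^*_{i-1}-n/k\ge (1-1/(kp))\,d^*_{i-1}$. Without this exhaustive-search branch, the fixed radius $n/k$ gives no nontrivial guarantee when $d^*_{i-1}$ is small, which is precisely the regime your variable-radius heuristic was trying (unsuccessfully) to cover.
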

Note that, in the above statement, the approximation factor is non-trivial ($>0$) only for $k \geq 5$. We note that we can also obtain \sch-type running time bounds for dispersion for $k \geq 2$. We achieve this by extending \Cref{thm:sch-for-dia-fixedapprox}. The statements of our results and their proofs appear in \Cref{sec:sch}. 

\medskip\noindent
\textbf{Approximating $\optm(\f,s)$  for heavy-weight solutions.} We now consider a heavy-weight variant of  $\optm(\f,s)$. Formally, for a $k$-CNF formula $\f$, we let $\Omega_{\f,\geq W}$ denote the set of satisfying assignments to $\f$ with Hamming weight at least $W$. We then define $$\optm(\f,s,\geq W) = \max_{\substack{S \subseteq \Omega_{\f,\geq W}\\|S|=s}} \PD(S), \optm(\f,s,\leq W) = \max_{\substack{S \subseteq \Omega_{\f,\leq W}\\|S|=s}} \PD(S) \; .$$
and let $\w(S)$ denote the minimum Hamming weight of assignments in $S$. We show that the approach developed for approximating \optm\ via \sch's algorithm can also be used to return dispersed satisfying assignments of heavy weight. 

\begin{theorem}[\sch for weighted dispersion] \label{thm:schheavyeasy}
    Let $\f$ be a $k$-CNF formula on $n$ variables, $W \in [n]$ and $s \in \IN$. Let $\delta = \frac{4(k-1)}{(k-2)^2}$. There exist algorithms that take $\f,s,W$ as input and output with probability $1-o(1)$ in time $O^*\left(s^3 \cdot 2^{n(1-\frac{1}{k \ln 2})}\right)$:
    \begin{enumerate}
        \item $S^* \subseteq \Omega_{\f, \geq (1-\delta) W}$ of size $s$ such that 
        $\PD(S^*) \geq \frac{1}{2}\left(1-\delta \right) \optm(\f,s, \geq W)$ if $\f$ is satisfiable and $|\Omega_{\f, \geq W}| \geq s$. 
        \item  $S^* \subseteq \Omega_{\f, \leq (1+\delta) W}$ of size $s$ such that 
        $\PD(S^*) \geq \frac{1}{2}\left(1-\delta \right) \optm(\f,s, \leq W)$ if $\f$ is satisfiable and $|\Omega_{\f, \leq W}| \geq s$,

    \end{enumerate}
\end{theorem}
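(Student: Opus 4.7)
The plan is to combine Gonzalez's farthest-insertion algorithm with a weight-constrained variant of the Schöning-based farthest-point oracle of Lemma~\ref{intro_schoning_brief}. The starting observation is that for every $z \in \Q{n}$, the Hamming weight of $z$ equals $d_H(0^n, z)$ and $n - d_H(1^n, z)$. Hence a ``heavy'' satisfying assignment is the same as a satisfying assignment far from $0^n$, and the weight constraint can therefore be viewed as an auxiliary ``anchor'' to be used inside farthest insertion, much like the anchors $z_j$ we already have in our set.

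For part~1, I would build $S^*$ one element at a time. First, invoke Lemma~\ref{intro_schoning_brief} on $\f$ with $z_0 = 0^n$; since $\Ome{\geq W}$ is nonempty, this returns some $z_1 \in \Om$ with $d_H(0^n, z_1) \geq (1-\delta)W$, i.e.\ a solution of weight at least $(1-\delta)W$. For $i = 2, \ldots, s$, invoke a set-variant of modified Schöning on the anchor set $\{0^n, z_1, \ldots, z_{i-1}\}$, asking for a satisfying assignment approximately farthest from every anchor simultaneously. The distance from $0^n$ enforces the weight bound $\text{wt}(z_i) \geq (1-\delta)W$, while the distance from $z_1, \ldots, z_{i-1}$ contributes to the pairwise dispersion of $S^*$.

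The core technical step is this set-variant of modified Schöning. Following the template of Lemma~\ref{intro_schoning_brief}, each coordinate of the starting point of the Schöning walk is sampled independently, biased towards $1$ (encouraging large Hamming weight) and against the values at that coordinate in $z_1, \ldots, z_{i-1}$ (encouraging dispersion), and the walk length is kept small relative to $n/k$. Assuming there exists an optimal $y^* \in \Ome{\geq W}$ with $\min_{j < i} d_H(y^*, z_j) \geq D_i^* := \max_{y \in \Ome{\geq W}} \min_{j < i} d_H(y, z_j)$, with non-trivial probability the starting point lies close to $y^*$, the walk finds some satisfying $z_i$ near $y^*$, and hence simultaneously $\text{wt}(z_i) \geq (1-\delta) W$ and $\min_{j < i} d_H(z_i, z_j) \geq (1-\delta) D_i^*$. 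The same trade-off between bias strength and walk length used in Lemma~\ref{intro_schoning_brief} keeps the approximation loss at $\delta = 4/k + \Theta_k(1/k^2)$ and the per-iteration runtime at $O^*((2(1-1/k))^n)$; boosting the success probability by $\poly(s)$ repetitions and iterating $s$ times yields the stated $O^*(s^3 (2(1-1/k))^n)$ bound.

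A standard analysis of Gonzalez's farthest insertion adapted to approximate oracles then gives $\PD(S^*) \geq \tfrac{1}{2}(1-\delta)\,\optm(\f, s, \geq W)$: the exact-oracle analysis produces the factor $1/2$ via a pigeonhole argument on the optimum, and the $(1-\delta)$ loss of our approximate oracle propagates multiplicatively. Part~2 follows by symmetry: anchor on $1^n$ instead of $0^n$, or equivalently apply part~1 to the formula $\f'$ obtained from $\f$ by negating every variable and replacing $W$ by $n-W$. The main obstacle is the analysis of the set-variant oracle: we must verify that biasing the starting distribution against $i-1$ anchors in addition to the weight anchor, with the corresponding entropy loss, remains compatible with the runtime budget $O^*((2(1-1/k))^n)$, and that the walk still finds $y^*$ with enough probability when the target is simultaneously heavy and far from all previously selected $z_j$.
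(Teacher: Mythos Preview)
Your high-level plan---run Gonzalez's farthest insertion with $0^n$ (resp.\ $1^n$) added as an auxiliary anchor so that ``heavy'' becomes ``far from $0^n$''---is exactly the paper's idea, and the reduction of part~2 to part~1 by negating variables is fine.

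Where you diverge is in the oracle. You propose a genuinely new subroutine that biases each coordinate of the starting point simultaneously against \emph{all} anchors $\{0^n,z_1,\dots,z_{i-1}\}$, and you correctly flag this as the main obstacle. The paper avoids this obstacle entirely with a simpler oracle: it runs the \emph{single-anchor} routine $\ALS$ once for every $z\in S\cup\{0^n\}$ and every guess $r$ of the target distance, and keeps the best output (see Algorithm~\ref{alg:schFarthestMinHeavy} and \Cref{lem:schFarthestHeavy}). The point is that if $z_0$ is the target (weight $W$, $\mind(z_0,S)=r$), then a single well-chosen anchor already suffices: when $W\ge r$ take the $z\in S$ closest to $z_0$, and when $W<r$ take $0^n$. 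In either case $\ALS$ returns some $z^*$ with $d_H(z^*,z_0)\le\delta\min(r,W)$, and then the triangle inequality gives both $\mind(z^*,S)\ge(1-\delta)r$ and $(1-\delta)W\le|z^*|\le(1+\delta)W$ simultaneously. No multi-anchor biasing is needed; the extra anchors contribute only a factor of $s$ to the runtime.

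Two smaller remarks. First, the paper's $\ALS$ does not sample coordinates independently with a bias; it samples the starting point uniformly from an annulus $A_{r-t,r+t}(z)$ around a single anchor $z$, which is what makes the ``probability of landing within $t$ of $z_0$'' calculation clean. Your coordinate-wise product distribution would require a different (and messier) analysis, especially for $\mind$ where the anchors can disagree on a coordinate and there is no canonical direction to bias. Second, once you have the single-anchor oracle, the $\tfrac12(1-\delta)$ bound follows from \Cref{lem:mindispersion} exactly as you say.
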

\begin{remark}
    We note that when $W=0$, this just reduces to an algorithm for approximating $\optm(\f,s)$. The approximation factors in \Cref{thm:schheavyeasy} are non-trivial only for $k \geq 7$. However, just like the case of \Cref{thm:sch-for-dia-fixedapprox}, \Cref{thm:schheavyeasy} can be generalized, obtaining running time bounds for any $k$ and for a larger range of approximation factors (\Cref{thm:sch-heavy-full}). Further, we can prove that an analogous result exists for the sum of distances dispersion measure. We refer the reader to \Cref{sec:sch} for the complete theorem statements and proofs. 
\end{remark}

\subsection{Generalizations and applications.} \label{intro_generalize}
\medskip\noindent
\textbf{1. Isometric Reductions. }Dispersion has also been studied when the space is induced by solutions to some NP-complete optimization problem~\cite{baste2019fpt,baste2022diversity}. To address this optimization aspect, we first generalize our techniques to give dispersed solutions of high (or low) Hamming weight\footnote{In a recent work, Gurumukhani, Paturi, Pudl\'{a}k, Saks, and Talebanfard~\cite{gurumukhani2024local} consider the problem of enumerating satisfying assignments with Hamming weight at least $W$ for a given $k$-CNF formula (assuming that satisfying assignments of smaller weight do not exist). They show that this problem has interesting connections to circuit lower bounds.}.
Namely, given $W \in [n]$, all of our solutions will have Hamming weight at least (or at most) approximately $W$, and their dispersion will be close to that of an optimally dispersed set wherein all solutions have weight at least (or at most) $W$. We then formalize a set of reductions, that preserve the size of the solution set and the distances between solutions. We call such reductions \emph{isometric}. As a result, we can approximate dispersion for problems such as \textsc{Maximum Independent Set}, \textsc{Minimum Vertex Cover} and \textsc{Minimum Hitting Set}. 

\medskip\noindent
\textbf{2. Using the monotone local search framework for diverse solutions.} Our second application allows us to compute diverse solutions to optimization problems that perhaps do not allow isometric reductions to SAT. In this case, we show how to use the \emph{monotone local search} framework by Fomin, Gaspers, Lokshtanov and Saurabh~\cite{ConicSearch}. This allows us to extend our results to a variety of problems, including \textsc{Feedback Vertex Set}, \textsc{Multicut on Trees}, and \textsc{Minimum $d$-Hitting Set} (see \Cref{table: table} for a sample of the results that can be obtained using this technique\footnote{The table provides the running time guarantees to obtain $3/2$-approx. optimal, $1/4$-approx. maximally diverse solutions, by plugging in $\delta=1/2$ into the run-time bounds in \Cref{thm:PLFS}}).

For all of these problems, any existing exact methods for finding a set of optimal, maximally diverse solutions has a runtime with at least an exponential dependence on the number of solutions $s$ \cite{baste2019fpt,baste2022diversity}. Our methods show that by relaxing to bi-approximations, this dependence on $s$ can be made polynomial.  

\begin{table}[h]
\centering
\begin{tabular}{|l|l|l|l|}
\hline
Optimization Problem  & One optimal solution & Multiple approximately optimal, \\
 & \cite{ConicSearch}  & approximately dispersed solutions \\
\hline
\textsc{$d$-Hitting Set $(d \geq 3)$} & $(2 - \frac{1}{d})^n$ & \Cref{thm:isometricreduction} \\
\textsc{Vertex cover } & $1.5^n$ & $s^3 \cdot 1.5486^n$ \\
\textsc{Maximum independent Set}  & $1.5^n$ & $s^3 \cdot1.5486^n$ \\
\hline
\textsc{Feedback Vertex Set}  & $1.7217^n$ &  $s^3 \cdot1.6420^n$\\
\textsc{Subset Feedback Vertex Set}  & $1.7500^n$ & $s^3 \cdot1.6598^n$ \\
\textsc{Feedback Vertex Set in Tournaments}  & $1.3820^n$ &  $s^3 \cdot1.5162^n$\\
\textsc{Group Feedback Vertex Set} &  $1.7500^n$ &  $s^3 \cdot1.6598^n$\\
\textsc{Node Unique Label Cover} & $(2 - \frac{1}{|\Sigma|^2})^n$ & \Cref{thm:PLFS} \\
\textsc{Vertex $(r,\ell)$-Partization $(r,\ell \leq 2)$}  & $1.6984^n$ &  $s^3 \cdot1.6289^n$ \\
\textsc{Interval Vertex Deletion}  & $1.8750^n$ &  $s^3 \cdot1.7789^n$\\
\textsc{Proper Interval Vertex Deletion} & $1.8334^n$ & $s^3 \cdot1.7284^n$ \\
\textsc{Block Graph Vertex Deletion} &  $1.7500^n$ & $s^3 \cdot1.6598^n$ \\
\textsc{Cluster Vertex Deletion}  & $1.4765^n$ & $s^3 \cdot1.5415^n$ \\
\textsc{Thread Graph Vertex Deletion}  & $1.8750^n$ &  $s^3 \cdot1.7789^n$\\
\textsc{Multicut on Trees}  & $1.3565^n$ & $s^3 \cdot1.51^n$ \\
\textsc{3-Hitting Set}  & $1.5182^n$ & $s^3 \cdot1.5544^n$ \\
\textsc{4-Hitting Set}  & $1.6750^n$ & $s^3 \cdot1.6167^n$ \\
\textsc{$d$-Hitting Set $(d \geq 3)$} & $(2 - \frac{1}{d-0.9245})^n$ & \Cref{thm:PLFS} \\
\textsc{Min-Ones 3-SAT} & $s^3 \cdot1.6097^n$ & \Cref{{thm:sch-heavy-full}} \\
\textsc{Min-Ones $d$-SAT $(d \geq 4)$}  & $(2 - \frac{1}{d})^n$ & \Cref{{thm:sch-heavy-full}} \\
\textsc{Weighted $d$-SAT $(d \geq 3)$}& $(2 - \frac{1}{d})^n$ & \Cref{{thm:sch-heavy-full}} \\
\textsc{Weighted Feedback Vertex Set}  & $1.7237^n$ & $s^3 \cdot1.6432^n$ \\
\textsc{Weighted 3-Hitting Set}  & $1.5388^n$ &  $s^3 \cdot1.5612^n$\\
\textsc{Weighted $d$-Hitting Set $(d \geq 4)$}  & $(2 - \frac{1}{d-0.832})^n$ & \Cref{thm:PLFS} \\ \hline
\end{tabular}
\caption{The second column contains the time taken to obtain one exact solution using methods in~\cite{ConicSearch}. The third column contains the time taken to obtain $3/2$-approx. optimal, $1/4$-approx. maximally diverse solutions (except for Maximum Independent Set, where we obtain $(1/2,1/4)$-bi-approx.)  
}

\label{table: table}
\end{table}

\medskip\noindent
\textbf{3. On faster SAT algorithms.} Another compelling reason to study diversity of the solution space of a $k$-CNF formula is that the existence of far apart solutions might be used to study the computational complexity of $k$-SAT and its variants. Indeed, the geometry of the  solution space has been studied extensively, both to obtain faster SAT solvers (parameterised by the number of solutions, such as in  Hirsch~\cite{hirsch1998} and Kane and Watanabe~\cite{kane2016short}) and in the random SAT setting, e.g., the diameter by Feige, Flaxman and Vilenchik~\cite{feige2011diameter} and the giant connected component by Chen, Mani, Moitra~\cite{moitra}). 

Consider a formula $\f$ with $|\Om|=2^{\delta n}$ for some $\delta>0$. For such a formula, it is known that PPZ scales optimally, i.e., it finds one solution in time $2^{(1-1/k)(1-\delta)n}$~\cite{calabro2008complexity}. Cardinal, Nummenpalo and Welzl~\cite{cardinal2017solving} proved a weaker result for Sch\"{o}ning, but nevertheless, both PPZ and Sch\"{o}ning run faster if the solution space is large. In fact, the same is true for PPSZ~\cite{PPZmoreisbetter}.

Taking this idea a step further, we investigate the runtime of PPZ and Sch\"{o}ning's algorithms when $\Om$ contains many well-dispersed solutions. For example, if $\Om$ contains a Hamming code that achieves the Gilbert Varshmov bound, we can show an exponential improvement in the runtime of Sch\"{o}ning's algorithm (\Cref{sec:fast}). Similarly, using the geometric sampling property of PPZ in Lemma~\ref{intro_ppz_brief}, we obtain an improved runtime in this setting. In this sense, \emph{if having more (solutions) is better~\cite{PPZmoreisbetter}, then our results formalize the intuition that more dispersed solutions are even better}.

\medskip\noindent
\textbf{4. Relation to coding theory.} We mention a connection that might be of independent interest. The dispersion problem can be restated in the language of coding theory, namely, we are looking for codewords that also satisfy a given $k$-CNF formula. If $\f(x) = 1$ for all $x \in \{0,1\}^n$, then it is known that a uniformly random code achieves the Gilbert-Varshamov bound~\cite{Roth_2006}. When $\f$ is not trivial, the algorithms presented in this work provide such a code. Moreover, our result says that the code can be found in time proportional to the running times of PPZ and \Sch (when the size of the code is small).  Additionally, in practice, one also wants codes that have succinct representations, e.g. linear codes~\cite{guruswami2010list,grigorescu2012succinct}. While our codes do not exhibit this property, it would indeed be interesting to extend our algorithms in this direction.

\medskip\noindent
\textbf{5. CSPs.} Finally, since \sch's algorithm for finding one solution generalizes to CSPs, we also give algorithms obtaining diverse solutions to CSPs (Section~\ref{sec:csp}).

\subsection{Technical Overview: Proof sketches for \Cref{intro_ppz_brief} and \Cref{intro_schoning_brief}} \label{sec:techniques}

In this section we outlines the main techniques behind \Cref{intro_ppz_brief} and~\Cref{intro_schoning_brief}, that show that PPZ and Sch\"{o}ning algorithms can be employed as approximate farthest point oracles. Because of this approximation, slightly more work needs to be done in order to bound the overall approximation factors for dispersion. We include the technical details for this part of our analysis in~\Cref{sec:approxfarthest}. There, we also show how to adapt Cevallos, Eisenbrand, and Zenklusen's local search algorithm~\cite{cevallos2019improved} for our setting.

\paragraph{Lemma~\ref{intro_ppz_brief}: PPZ samples geometrically} The PPZ algorithm consists of repeating the following procedure $O^*(2^{(1-1/k)n})$ times: sample an assignment $y \in \{0,1\}^n $ and a permutation $\pi \in S_n$ uniformly and independently at random. Then call a deterministic subroutine $\PPZMod(\f,y,\pi)$ that runs in $n^{O(1)}$ time and outputs another assignment $u$. The algorithm stops once $u \in \Om$.

The analysis is based on bounding the probability that, for a randomly chosen $y$ and $\pi$,  $\PPZMod(\f,y,\pi)$ leads to some satisfying assignment $z \in \Om$. For any $z \in \Om$, let $\tau(\f,z)$ denote the probability that an iteration outputs $z$ and for any set $A \subseteq \Om$, let $\tau(\f,A)=\sum_{z' \in A} \tau(\f, z')$ denote the probability that an iteration outputs a satisfying assignment in $A$. 

The lower bound that PPZ gives on  $\tau(\f,z)$ uses the the \emph{local} geometry of $\Om$ around $z$ in the following sense: we say that $z$ is $j$\emph{-isolated} if, out of the $n$ neighboring assignments to $z$ in the Boolean hypercube, at least $j$ of them are not satisfying. The key observation in the analysis of the PPZ algorithm, called the \emph{Satisfiability Coding Lemma}~\cite{PPZ} states that for every $j$-isolated satisfying assignment $z$, it holds that $\tau(\f,z) \geq 2^{-n+j/k}$. Intuitively, the more isolated a solution $z$ is, the more choices of $y$ and $\pi$ would lead to it through  $\PPZMod(\f,y,\pi)$.

Our renewed analysis of PPZ shows that, for any fixed assignment $z_0\in \{0,1\}^n$,  $\PPZMod(\f,y,\pi)$ is also likely to output satisfying assignments that are far away from it. We state Lemma~\ref{intro_ppz_brief} formally in \Cref{lem:anchor:diam} that shows that with probability at least $\frac{1}{2n}\cdot 2^{-n+n/k}$, each iteration of the PPZ algorithm outputs a satisfying assignment $z^*$, such that $$d_H(z_0,z^*) \geq \left(1- \frac{1}{k}\right) \cdot \max_{z' \in \Om} d_H(z_0,z') \;.$$

Thus, we get that PPZ is also an approximate farthest point oracle. More interestingly, the run of PPZ does not depend on $z_0$, and therefore we say that PPZ samples geometrically. We note that the original analysis does not take into account distances between solutions, i.e., the probability of finding a solution only depends on the number of its \emph{immediate} neighbors that are non-solutions. This in itself is a local feature that does not capture global properties like the diameter/dispersion of the solution space. Indeed, our analysis differs 
from the original PPZ analysis in precisely the fact that it exploits this global information (which is needed for diameter/diversity, but not needed if we just want to find one solution).

In order to exploit global geometric properties of the solution space, we view $\Om$ as a subgraph $G_{\f}$ of the $n$-dimensional Hypercube graph. We then divide the vertices in $G_\f$ into $n$ layers, where layer $V_{j}$ consists of all the vertices at distance $j$ from $z_0$ (in $G_\f$). We also define $U_{j}=\bigcup_{j' \geq j} V_{j'}$. Now, we want to show that assignments in higher layers will be reached by $\PPZMod(\f,y,\pi)$ with good probability. We do this by proving that for large enough $j$, either $|U_j|$ is large or the number of cut edges between $U_j$ and $\Om \setminus U_j$ is small in $G_\f$. 

We then use the original Satisfiability Coding Lemma and the fact that an assignment is $j$-isolated if and only if its degree in $G_\f$ is $n-j$, to  show that, for any subset $A$ of the vertices in $G_\f$, it holds that
 $$\tau(\f,A) \geq 2^{-n(1-1/k)}|A|2^{-\left(  \frac{2|E(A)|}{k|A|}+\frac{|S|}{k|A|}\right)}\;,$$ where $E(A)$ denotes the edges in $G_\f$ between vertices in $A$ and $S$ denotes the edges in $G_\f$ between $A$ and $\Om \setminus A$ (\Cref{lem:separator}). 
 We then use the edge isoperimetric lemma for subgraphs of the hypercube which upper bounds the number of edges in the subgraph by a function of the number of vertices in the subgraph. To complete the proof of \Cref{lem:anchor:diam}, we lower bound the probability $\tau(\f,A)$, where $A$ are the assignments in $\Om$ that are far away from $z_0$.

We also show that the above analysis can be extended to prove that for any subset $S \subseteq \Q{n}$, with probability at least $\frac{1}{2n}\cdot 2^{-n+n/k}$, each iteration of the PPZ algorithm outputs a satisfying assignment $z^*$, such that $$\sumd(S,z^*) \geq \left(1- \frac{2}{k+1}\right) \cdot \max_{z' \in \Om} \sumd(S,z') \;.$$ 

This directly implies the existence of a $\left(1-\frac{2}{k+1}\right)$-approximate farthest point oracle that runs in the same time as the PPZ algorithm (\Cref{lem:anchor:sum}). However, we were not able to show a similar lower bound with respect to the $\mind$ distance from $S$. Instead, we can use \Cref{lem:anchor:diam} to show that for every satisfying assignment $z \in \Om$, each iteration of the PPZ algorithm outputs a satisfying assignment within Hamming distance $\frac{n}{k}$ from $z$ (invoke \Cref{lem:anchor:diam} on the antipode of $z$). We can also assume that we have a lower bound on $\max_{z' \in \Om} \mind(S, z')$ on the order of $n/\Theta(1)$ (just exhaustively search all the balls around assignments in $S$ until you hit PPZ running time). Thus, we get an approximate farthest point oracle running in the same time as the PPZ algorithm for the min-dispersion problem as well.

\paragraph{Lemma~\ref{intro_schoning_brief}: Modified \sch's algorithm is a farthest point oracle.} Our second approach for designing farthest point oracles uses \sch's algorithm~\cite{schoning2002probabilistic}. At its core, \sch's algorithm is a local search algorithm that does a random walk from some starting assignment $z_0$. The main subroutine takes as input $z_0$ and, as long as there is a clause that is unsatisfied, picks one of its $k$ literals at random an flips its value. \sch showed that, if there exists a satisfying assignment within Hamming distance $t$ from $z_0$, then within $3t$ steps, the above random walk outputs a satisfying assignment with probability at least $1/(k-1)^t$. By picking the starting point $z_0$ uniformly at random from $\{0,1\}^n$ and letting the random walk go for $3n$ steps, one can then show that the subroutine suceeds with probability at least $((1/2\cdot (1 +1/(k-1)))^n$.

We modify \sch's algorithm by picking the starting point $z_0$ and then setting the length of the random walk more carefully. Suppose we are promised that there exists a satisfying assignment $z^*$ that is distance $r$ (in max-sum or max-min) from some set $S$ of assignments. We then restrict our starting points to be sampled such that they are also guaranteed to be approximately at distance $r$ from $S$. From there, we perform a random walk of small length such that any satisfying assignment we find is also guaranteed to be far away from $S$. The probability that we succeed depends on bounding the set of good starting points: those that are close to the promised $z^*$ (not just far from $S$), since these are the ones most likely to find a satisfying assignment within the length of the random walk. This is the most technically involved step of our analysis. We thus get a farthest point oracle for diameter and all versions of dispersion. Moreover, the \sch strategy can also find heavy-weight assignments. This is done by artificially adding $0^n$ as part of the set $S$ (thus, an assignment that is far from $S$ in Hamming distance will also have a large weight).

\subsection{Organization of the paper.}
In \Cref{sec:exact}, we present and analyse our algorithms for exact diameter and dispersion (\Cref{thm:exactdiam}, \Cref{thm:exactdisp}, and \Cref{thm:exactdisp faster}). In \Cref{sec:PPZ}, we present our PPZ-based algorithms for approximately computing diameter and dispersion (\Cref{thm:ppz-for-dia}, \Cref{thm:ppz-for-sumdisp} and \Cref{thm:ppz-for-mindisp}). In \Cref{sec:sch}, we present our \sch-based algorithms for diameter, dispersion and weighted dispersion (\Cref{thm:sch-for-dia}, \Cref{thm:sch-for-sumdisp}, \Cref{thm:sch-heavy-full}). In \Cref{sec:applications}, we present our results on diversity preserving reductions and applications of parameterized local feasibility search and prove the results presented in Table 2. 

\section{Exact algorithms for diameter and dispersion}
\label{sec:exact}
In this section, we present our algorithm for diameter (\Cref{thm:exactdiam}) and two algorithms for dispersion (\Cref{thm:exactdisp} and \Cref{thm:exactdisp faster}). The problem of computing $\D(\f)$ has been studied by Angelsmark and Thapper \cite{angelsmark2004algorithms}. They give an algorithm that runs in $O^*((2a_k)^n)$ time and $n^{O(1)}$ space, where $O^*(a_k^n)$ is the run-time of a $k$-SAT solver. Note that the strong exponential hypothesis implies that $\lim_{k \to \infty} a_k=2$. We observe that there exists an algorithm to compute $\D(\f)$ exactly, using $O^*(2^n)$ time and $O^*(2^n)$ space. Then, we give two algorithms that compute $\optm(\f,s)$ and $\opts(\f,s)$ in time $O^*(2^{(s-1)n})$ and $O^*(2^{n\omega\ceil{ s/3}})$, where $\omega \leq 2.38$ is the matrix multiplication exponent. In fact, these algorithms do not use the fact that $\f$ is a $k$-CNF formula. We formally define the setup below. 

\paragraph{Preliminaries.} Let $f: \{0,1\}^n \to \{0,1\}$ be a Boolean function computable by an oracle.  Our algorithms use Fourier analysis of Boolean functions, and we briefly recall some facts first. 
\begin{definition}[Fourier Transform]
    Given any function $f: \Q{n} \to \IR$, the Fourier transform of $f$ is defined as follows. 
    $$ \hat{f}(y):= \sum_{x \in \Q{n}} (-1)^{\inner{x,y}} f(x)  \; ,$$ where $\inner{x,y} = \sum_{i=1}^n x_i y_i$. 
\end{definition}
\begin{definition}[Convolution]
    Given two functions $f,g: \Q{n} \to \IR$, we define their convolution to be
    $$ (f*g)(y):=\sum_{x \in \Q{n}}f(x) g(x \oplus y) \; ,$$
\end{definition}
\noindent
where $\oplus$ represents bit-wise addition, modulo $2$. Any function $f :\Q{n} \to \IR$ can be represented as a column vector $f \in \IR^{2^n}$, by indexing the columns using $\{0, 1, \dots, 2^n-1\}$. It can be shown that $\hat{f}= H_{2^n} \cdot f$, where $H_{2^n}$ is the $2^n \times 2^n$ Walsh-Hadamard matrix, which is inductively defined as follows:
$$ H_1= \begin{bmatrix} 1 \end{bmatrix}, H_{2^{m+1}}=\begin{bmatrix} H_{2^m} & H_{2^m}\\
H_{2^m} & -H_{2^m}\end{bmatrix} \text{ for all } m \geq 1 \; .$$
Given the vector $f$, the vector $\hat{f}$ can be computed by a divide and conquer algorithm called the fast Walsh-Hadamard transform that uses  $O(n \cdot 2^n)$ operations. Also, note that $f(x)= \frac{1}{2^n} \sum_{y \in \Q{n}} (-1)^{\inner{x,y}} \hat{f}(y)$. Further, for any two functions $f,g : \Q{n} \to \IR$, $\widehat{f*g}(x)= \hat{f}(x) \hat{g}(x)$, for every $x \in \Q{n}$. This implies that given the vectors $f, g \in \IR^{2^n}$, the vector $f*g \in \IR^{2^n}$ can be computed in $O(n \cdot 2^n)$ time. For more details and proofs of the above facts, we refer the reader to~\cite{odonnell2021analysis}.

\subsection{Computing the diameter of Boolean functions: the proof of \Cref{thm:exactdiam}}
To define the exact diameter of $f$, we slightly abuse notation and define $$\D(f)= \max_{z_1, z_2 \in f^{-1}(1)} d_H(z_1, z_2) \;.$$
We relate computing $\D(f)$ to evaluating the vector $(f*f)$. 

\begin{lemma} \label{lem:convolutiontodiam}
    For any $y \in \Q{n}$, there exist $z_1, z_2 \in f^{-1}(1)$ with $z_1 \oplus z_2 =y$ if and only if $(f*f)(y) > 0$. 
\end{lemma}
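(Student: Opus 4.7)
The plan is to unfold the definition of convolution and observe that because $f$ is $\{0,1\}$-valued, the convolution $(f*f)(y)$ is literally a nonnegative integer counting a certain set of witnesses.

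First, I would write
\[
(f*f)(y) \;=\; \sum_{x \in \Q{n}} f(x)\, f(x \oplus y),
\]
and note that each summand $f(x) f(x \oplus y) \in \{0,1\}$ (since $f$ takes values in $\{0,1\}$), with equality to $1$ precisely when both $f(x)=1$ and $f(x \oplus y)=1$. Consequently,
\[
(f*f)(y) \;=\; \bigl|\{x \in \Q{n} : f(x) = 1 \text{ and } f(x \oplus y) = 1\}\bigr|,
\]
and in particular $(f*f)(y) \ge 0$ always.

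For the ``only if'' direction, suppose $z_1, z_2 \in f^{-1}(1)$ with $z_1 \oplus z_2 = y$. Setting $x := z_1$, we have $x \oplus y = z_1 \oplus (z_1 \oplus z_2) = z_2$, so $f(x) f(x \oplus y) = f(z_1) f(z_2) = 1$, which contributes at least $1$ to the sum and forces $(f*f)(y) > 0$. For the ``if'' direction, if $(f*f)(y) > 0$ then by the counting interpretation above there exists some $x \in \Q{n}$ with $f(x) = f(x \oplus y) = 1$; taking $z_1 := x$ and $z_2 := x \oplus y$ gives two elements of $f^{-1}(1)$ whose bitwise sum is $y$, as required.

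There is no real obstacle here: the statement is essentially a tautological rewrite of the convolution as a count of ordered pairs $(z_1, z_2) \in f^{-1}(1)^2$ with $z_1 \oplus z_2 = y$, using the involution $x \mapsto x \oplus y$ on $\Q{n}$. The only thing to be careful about is the Boolean (as opposed to real-valued) nature of $f$, which guarantees the summands are nonnegative so that no cancellation can occur; this is what allows us to read off existence directly from strict positivity of $(f*f)(y)$.
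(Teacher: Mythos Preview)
Your proof is correct and follows essentially the same approach as the paper: expand the convolution, observe that each summand $f(x)f(x\oplus y)\in\{0,1\}$, and read off both directions from the resulting counting interpretation. The paper's argument is a terser version of exactly what you wrote.
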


\begin{proof}
    Suppose there exist $z_1, z_2 \in f^{-1}(1)$ with $z_1 \oplus z_2 =y$. All the terms in the summation $\sum_{x \in \Q{n}} f(x) f(x \oplus y)$ are either $0$ or $1$, and $f(z_1)f(z_2)=1$ appears in it, implying that $(f*f)(y)>0$. On the other hand, if $(f*f)(y)>0$, this implies that at least one of the terms in the summation is $1$ which implies that there exist $z_1, z_2 \in f^{-1}(1)$ with $z_1 \oplus z_2 =y$.
\end{proof}

\medskip\noindent
The above lemma motivates the following algorithm:

\medskip
\begin{algorithm}[H]
 \label{alg:diam:exact}
 \KwIn{A black box computing a Boolean function $f:\Q{n} \to \{0,1\}$}
 \KwOut{$z_1, z_2 \in f^{-1}(1)$ such that $d_H(z_1, z_2)=\D(f)$ if $f^{-1}(1)\neq \emptyset$, $\perp$ if $f^{-1}(1)=\emptyset$}
 Compute the vector $f \in \IR^{2^n}$ of values of $f$. \\
    Using the fast Walsh-Hadamard transform, compute the vector $\hat{f}= H_{2^n} \cdot f \in \IR^{2^n}$. Multiply each element of this vector with itself to obtain the vector $\hat{f}^2 \in \IR^{2^n}$. \\
    Compute the vector $(f*f) = \frac{1}{2^n} H_{2^n} \cdot \hat{f}^2$ using the fast Walsh-Hadamard transform. Let $z \in \{0,1\}^n$ be any of the vectors with largest Hamming weight such that $(f*f)(z)>0$. Output $\perp$ if there is no such $z$, abort.\\
    Find any $x \in \Q{n}$ such that $f(x)=f(x \oplus z)=1$ and output $x, x \oplus z$.
 \caption{Exact diameter using Fourier transform}
 \end{algorithm}

\medskip\noindent
Each step of this algorithm uses $O^*(2^n)$ time and $O^*(2^n)$ space, which proves \Cref{thm:exactdiam}. 
\subsection{Exact algorithms for dispersion using Fourier transforms}

We now generalize the above algorithm for diameter to dispersion, where our objectives are defined over the $f^{-1}(1)$ (similarly as in the diameter case). In the following section we present another algorithm with faster running time, but that algorithm works for $s\geq 6$. Our algorithm presented below can be used for all values of $s$.

\begin{restatable}{theorem}{exactdispersion}\label{thm:exactdisp}
	Let $f: \Q{n} \to \{0,1\}$ be a function computable by a black box and let $s$ be a given parameter. Then, there exist deterministic algorithms $\mathcal{A}_1, \mathcal{A}_2, \mathcal{A}_3$ that make $2^n$ oracle calls to $f$ and in addition to that, use $O^*(2^{(s-1)n})$ time and $O^*(2^n)$ space provide the following guarantees.
 \sloppy
 \begin{enumerate}
     \item The output of $\mathcal{A}_1$ is a multiset $\{z_1, z_2, \dots, z_s\} \subseteq f^{-1}(1)$ such that $\SPD(z_1, z_2, \dots, z_s) = \opts(f,s)$. 
     \item The output of $\mathcal{A}_2$ is a set $\{z_1, z_2, \dots, z_s \} \subseteq f^{-1}(1)$ such that $\PD(z_1, z_2, \dots, z_s) = \optm(f,s)$. 
     \item If $|f^{-1}(1)| \geq s$, the output of $\mathcal{A}_3$ is a set $\{z_1, z_2, \dots, z_s \} \subseteq f^{-1}(1)$ such that $\SPD(z_1, z_2, \dots, z_s)$ $ =$ $\opts_{\neq}(f,s)$. 
 \end{enumerate}
\end{restatable}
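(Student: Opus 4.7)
The plan is to generalize the Fourier-based approach of \Cref{thm:exactdiam} from pairs to $s$-tuples by replacing $(f*f)(y)=\sum_x f(x)f(x\oplus y)$ with a multi-shift version. For $y_2,\dots,y_s\in\Q{n}$, define
$$g(y_2,\dots,y_s) \;=\; \sum_{x\in\Q{n}} f(x)\,f(x\oplus y_2)\cdots f(x\oplus y_s).$$
By the same argument as in \Cref{lem:convolutiontodiam}, $g(y_2,\dots,y_s)>0$ if and only if there exists $x\in f^{-1}(1)$ with $x\oplus y_i\in f^{-1}(1)$ for every $i$, equivalently if and only if there is an $s$-tuple $(z_1,\dots,z_s)\in (f^{-1}(1))^s$ with $z_1\oplus z_i = y_i$ (setting $y_1:=0^n$). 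Since $d_H(z_i,z_j)=|y_i\oplus y_j|$, both $\SPD$ and $\PD$ of the tuple are functions only of $(y_2,\dots,y_s)$, so each of the three problems reduces to optimizing the appropriate objective over $(y_2,\dots,y_s)$ subject to $g(y_2,\dots,y_s)>0$ (plus distinctness for $\mathcal{A}_2,\mathcal{A}_3$).

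To meet the $O^*(2^{(s-1)n})$ time and $O^*(2^n)$ space budget, rather than materializing $g$ on its entire $2^{(s-1)n}$-sized domain, I would batch the last coordinate through a single FFT. After $2^n$ oracle calls to tabulate $f\in\IR^{2^n}$, iterate over the outer tuple $(y_2,\dots,y_{s-1})\in(\Q{n})^{s-2}$. For each such outer tuple, form the pointwise product $F(x)=f(x)\prod_{i=2}^{s-1}f(x\oplus y_i)$ in time $O(s\cdot 2^n)$, and then compute the slice $g(y_2,\dots,y_{s-1},\cdot)=(F*f)$ by a single fast Walsh-Hadamard convolution in time $O(n\cdot 2^n)$, overwriting the previous slice. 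Then scan the $2^n$ entries, computing the objective on the completed tuple in $O(s^2)$ per entry and updating the running maximum. The total running time is $O(n s^2\cdot 2^{(s-1)n})$ and the working memory is $O(2^n)$ for the three arrays $f$, $F$, $F*f$.

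For $\mathcal{A}_2$ and $\mathcal{A}_3$, the tuple must encode a set rather than a multiset. For $\mathcal{A}_2$ (minimum pairwise distance), no extra bookkeeping is needed: any tuple with $y_i=y_j$ or $y_i=0^n$ automatically has $\PD=0$ and so can only be optimal when $|f^{-1}(1)|=1$, a degenerate case detected upfront. For $\mathcal{A}_3$ (sum dispersion over distinct sets), the scan step must additionally reject tuples in which $y_1,\dots,y_s$ are not pairwise distinct; this filter is $O(s^2)$ per candidate and hence does not affect the asymptotic runtime. Once an optimal $(y_2^*,\dots,y_s^*)$ is identified, a witness $x^*$ with $f(x^*)f(x^*\oplus y_2^*)\cdots f(x^*\oplus y_s^*)=1$ is recovered by one pass over $x\in\Q{n}$ in $O(s\cdot 2^n)$ time, and the remaining $z_i = x^*\oplus y_i^*$ are immediate.

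The main obstacle is the simultaneous time/space requirement: a naive tabulation of $g$ would use $O^*(2^{sn})$ operations and $2^{(s-1)n}$ space, both too large. The batched-FFT scheme above saves a factor of $2^n$ in time by using the convolution identity $\widehat{F*f}=\widehat{F}\cdot\widehat{f}$ in the last coordinate, and a factor of $2^{(s-2)n}$ in space by streaming through the outer slices without storing them. Verifying that the per-slice convolution dominates the per-slice work, so that the amortized cost matches the claimed bound, is the key calculation underlying all three algorithms; the distinctness filters and witness recovery then slot in as lower-order terms.
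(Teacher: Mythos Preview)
Your proposal is correct and takes essentially the same approach as the paper: both iterate over $(s-2)$ outer shift vectors and, for each, use one fast Walsh--Hadamard convolution to sweep the last shift, yielding $O^*(2^{(s-1)n})$ time and $O^*(2^n)$ space. The only cosmetic difference is the parameterization---the paper writes the outer shifts as $w_j$ relative to the inner shift $y$ (so that $y_j = w_{j-1}\oplus y$), while you keep the shifts $y_2,\dots,y_s$ absolute---but the algorithm and analysis are the same.
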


We now prove \Cref{thm:exactdisp}. We begin by observing that for every $(z_0, z_1, \dots, z_{s-1}) \in \Q{sn}$, $\SPD(z_0, z_1, \dots, z_{s-1})=\SPD(0, z_1 \oplus z_0, \dots, z_{s-1} \oplus z_0)$ and $\PD(z_0, z_1, \dots, z_{s-1})=\PD(0, z_1 \oplus z_0, \dots, z_{s-1} \oplus z_0)$. Hence, the value of $\SPD(z_0, z_1, \dots, z_{s-1})$ and $\PD(z_0, z_1, \dots, z_{s-1})$ are determined entirely by $y_1, y_2, \dots, y_{s-1}$, where $y_j=z_0 \oplus z_j$ for each $j \in \{1,2,\dots, s-1\}$. Next, we prove the following generalization of \Cref{lem:convolutiontodiam}. 

\begin{lemma} \label{lem:convolutiondiversity}
    For every $w_1, w_2, \dots, w_{s-2} \in \Q{n}$, define the function $g_{(w_1, w_2, \cdots, w_{s-2})}(x):=f(x) f(x \oplus w_1) f(x \oplus w_2) \dots f(x \oplus w_{s-2})$. For every $y, \in \Q{n}$ and $w_1, w_2, \dots, w_{s-2} \in \Q{n}$, there exist $z_0, z_1, \dots, z_{s-1} \in f^{-1}(1)$, such that $z_1=z_0 \oplus y$ and $z_j=z_0 \oplus w_{j-1} \oplus y$ for each $j \in \{2,3, \dots, s-1\}$ if and only if $f*g_{(w_1, w_2, \cdots, w_{s-2})}(y) >0$.
\end{lemma}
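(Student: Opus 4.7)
The plan is to prove this by a direct unpacking of the definition of convolution, which will reduce the equivalence to the elementary observation that a sum of $\{0,1\}$-valued terms is strictly positive if and only if at least one term equals $1$. This is entirely analogous to the proof of \Cref{lem:convolutiontodiam} and the main obstacle is simply keeping the index bookkeeping straight — there is no real analytic content beyond that.

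Concretely, the first step is to write
\[
(f * g_{(w_1,\ldots,w_{s-2})})(y) \;=\; \sum_{x \in \Q{n}} f(x)\, g_{(w_1,\ldots,w_{s-2})}(x \oplus y),
\]
and then expand $g_{(w_1,\ldots,w_{s-2})}(x \oplus y) = f(x \oplus y) \prod_{i=1}^{s-2} f(x \oplus y \oplus w_i)$ using the definition of $g$. The resulting expression is a sum, over $x \in \Q{n}$, of products of values of $f$; since $f$ is $\{0,1\}$-valued, every summand is in $\{0,1\}$, so the sum is strictly positive if and only if there exists some $x \in \Q{n}$ making every factor equal to $1$ simultaneously.

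The second step is to translate this "all factors equal $1$" condition back to the statement about the tuple $(z_0,\ldots,z_{s-1})$. Given such an $x$, set $z_0 := x$, $z_1 := x \oplus y$, and $z_j := x \oplus y \oplus w_{j-1}$ for $j \in \{2,\ldots,s-1\}$; then $f(z_0) = f(z_1) = \cdots = f(z_{s-1}) = 1$ and the relations $z_1 = z_0 \oplus y$ and $z_j = z_0 \oplus w_{j-1} \oplus y$ hold by construction. Conversely, given any tuple $(z_0,\ldots,z_{s-1}) \in f^{-1}(1)^s$ satisfying these relations, taking $x := z_0$ makes every factor in the corresponding summand equal to $1$, so $(f * g_{(w_1,\ldots,w_{s-2})})(y) > 0$. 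This establishes the desired "if and only if" and completes the proof.
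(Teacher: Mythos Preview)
Your proposal is correct and takes essentially the same approach as the paper's own proof: expand the convolution, note that each summand is a product of $\{0,1\}$-valued factors and hence itself in $\{0,1\}$, and then translate ``some summand equals $1$'' into the existence of the tuple $(z_0,\ldots,z_{s-1})$ via the substitution $x \leftrightarrow z_0$.
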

\begin{proof}
    Suppose that there exist $z_0, z_1, \dots, z_s \in f^{-1}(1)$, such that $z_1=z_0 \oplus y$ and $z_j=z_0 \oplus w_{j-1} \oplus y$ for $j \in \{2,3, \dots, s-1\}$. This implies that $f(z_0) f(z_1) \dots f(z_s)=f(z_0) f(z_0 \oplus y) f(z_0 \oplus w_1 \oplus y) \dots f(z_0 \oplus w_{s-2} \oplus y)=1$. Because $f*g_{(w_1, w_2, \cdots, w_{s-2})}(y)= \sum_{x \in \Q{n}} f(x) g_{(w_1, w_2, \cdots, w_{s-2})}(x \oplus y)=\sum_{x \in \Q{n}} f(x) f(x \oplus y) f(x \oplus y \oplus w_1) f(x \oplus y \oplus w_2) \dots f(x \oplus y \oplus w_{s-2})$, this summation has a non-zero term ($x=z_0$) which implies that $f*g_{(w_1, w_2, \cdots, w_{s-2})}(y) >0$. 
    
    On the other hand, if $f*g_{(w_1, w_2, \cdots, w_{s-2})}(y) >0$, this implies that at least one of the terms in this summation is $1$. This implies that there exists $x \in \Q{n}$ such that $f(x) = f(x \oplus y) = f(x \oplus y \oplus w_1) = f(x \oplus y \oplus w_2) \dots =f(x \oplus y \oplus w_{s-2})=1$. Now let $z_0=x$, $z_1=x \oplus y$, and $z_j=x \oplus y \oplus w_j-1$, thus proving that there exists $z_0, z_1, \dots, z_{s-1} \in f^{-1}(1)$, such that $z_1=z_0 \oplus y$ and $z_j=z_0 \oplus w_{j-1} \oplus y$ for each $j \in \{2,3, \dots, s-1\}$ if and only if $f*g_{(w_1, w_2, \cdots, w_{s-2})}(y) >0$.
\end{proof}

\medskip
Hence, for each $w_1, w_2, \dots, w_{s-2} \in \Q{n}$, we can run the following procedure to compute an array containing the values of $f*g_{(w_1, w_2, \cdots, w_{s-2})}(y)$ for every $y \in \Q{n}$. 

\medskip
\begin{algorithm}[H]
 \label{alg:disp:exact:intermediate}
 \KwIn{A black box computing $f:\Q{n} \to \{0,1\}$, $w_1, w_2, \dots, w_{s-2} \in \Q{n}$.}
 \KwOut{An array $f*g_{(w_1, w_2, \cdots, w_{s-2})} \in \IR^{2^n}$ containing the values of $f*g_{(w_1, w_2, \cdots, w_{s-2})}(y)$ for every $y \in \Q{n}$. }
 Compute the vectors $f,g_{(w_1, w_2, \cdots, w_{s-2})} \in \IR^{2^n}$ with the values of $f(x)$ and $g_{(w_1, w_2, \cdots, w_{s-2})}(x)$ for each $x \in \Q{n}$. \\
    Compute the vectors $\hat{f} = H_{2^n} \cdot f , \hat{g}_{(w_1, w_2, \cdots, w_{s-2})}= H_{2^n} \cdot g_{(w_1, w_2, \cdots, w_{s-2})}$ using the fast Walsh-Hadamard transform. \\
    Compute the vector $\hat{f} \cdot \hat{g}_{(w_1, w_2, \cdots, w_{s-2})} \in \IR^{2^n}$ by multiplying the elements of $\hat{f}$ and $\hat{g}_{(w_1, w_2, \cdots, w_{s-2})}$ element-wise. \\
    Compute the vector $f * g_{(w_1, w_2, \cdots, w_{s-2})} = \frac{1}{2^n} H_{2^n} \cdot \left( \hat{f} \cdot \hat{g}_{(w_1, w_2, \cdots, w_{s-2})} \right)$ using the fast Walsh-Hadamard transform. 
    \caption{Algorithm to compute convolution of $f$ and $g_{w_1, w_2, \dots, w_{s-2}}$.}
 \end{algorithm}
\vspace{2mm} \noindent 

This implies that by iterating over all $(w_1, w_2, \cdots, w_{s-2}) \in \Q{(s-2)n}$, we can compute $\opts(f,s)$ and $\optm(f,s)$ using $O^*(2^{(s-1)n})$ time and $O^*(2^n)$ space. We formally define the algorithm below. Note that we have defined it to compute $\opts(f,s)$, but the same algorithm with minor modifications can be used to compute $\optm(f,s)$ and $\opts_{\neq}(f,s)$.

\begin{algorithm}[H]
 \label{alg:disp:exact}
 \KwIn{A black box computing a Boolean function $f:\Q{n} \to \{0,1\}$}
 \KwOut{$z_1, z_2, \dots z_s \in f^{-1}(1)$ such that $\sumd(z_1, z_2, \dots, z_s)=\opts(f,s)$ if $f^{-1}(1)\neq \emptyset$, $\perp$ if $f^{-1}(1)=\emptyset$}
Initialize $\mathcal{M}=\perp, y_1, y_2, \dots, y_{s-1}=\perp$.\\
\For{$(w_1, w_2, \dots, w_{s-2}) \in \Q{(s-2)n}$}{
    Compute an array containing the values of $f * g_{(w_1, w_2, \cdots, w_{s-2})}(y)$ for each $y \in \Q{n}$ using Algorithm~\ref{alg:disp:exact:intermediate}.\\
    \For{$y \in \Q{n}$}{
    \If{$f * g_{(w_1, w_2, \cdots, w_{s-2})}(y)>0$ and $\SPD(0, y, y \oplus w_1, y \oplus w_2, \dots, y \oplus w_{s-2}) > \mathcal{M}$}{set $\mathcal{M}:=\SPD(0, y, y \oplus w_1, y \oplus w_2, \dots, y \oplus w_{s-2})$ , $y_1=y, y_2=y\oplus w_1, \dots, y_{s-1}=y \oplus w_{s-2}$.}
    
    }

 }
 \If{$\mathcal{M}=\perp$}{
  output $\perp$
 }
 \Else{
    If there exists $x \in \Q{n}$ such that $f(x)=f(x \oplus y_1)= \dots, f(x \oplus y_{s-1})=1$, output $z_0=x, z_1= x \oplus y_1, z_2= x \oplus y_1, \dots, z_{s-1}=x \oplus y_{s-1}$

 }
 
    \caption{Algorithm for exact dispersion using Fourier transforms}
 \end{algorithm}
\begin{remark}
    To design an algorithm for $\optm(f,s)$, we replace the comparison in line 5 of the algorithm with one using $\PD$ instead of $\SPD$. An algorithm to compute $\opts_{\neq}(f,s)$ would be identical, except that we would iterate over $w_1, w_2, \dots, w_{s-2}$ such that they are all different, and in the inner loop, we would iterate over all $y \neq \mathbf{0}$.
\end{remark}

\paragraph{Proof of correctness:} Define the $n$-dimensional subspace $V \subseteq \Q{(s-1)n}$ to be $\{(x,x,\dots, x) \mid x \in \Q{n}\}$, which partitions $\Q{(s-1)n}$ into the $2^{(s-2)n}$ cosets $V_{(w_1, w_2, \cdots, w_{s-2})}=\{(x, x\oplus w_1, x \oplus w_2, \dots, x \oplus w_{s-2}) \mid x \in \Q{n}\}$ for each $s$-tuple $(w_1, w_2, \cdots, w_{s-2}) \in \Q{(s-2)n}$. \Cref{lem:convolutiondiversity} implies that for each $(y_1, y_2, \dots, y_{s-1})=(y, y \oplus w_1, y \oplus w_2, \dots, y \oplus w_{s-2}) \in V_{(w_1, w_2, \cdots, w_{s-2})}$, there exists $z_0, z_1, \dots, z_s \in f^{-1}(1)$ with $z_j=z_0 \oplus y_j$ for $j \in \{1,2,\dots, s-1\}$ if and only if $f*g_{(w_1, w_2, \cdots, w_{s-2})}(y) >0$. This completes the proof of \Cref{thm:exactdisp}.

\subsection{Exact Algorithms for Dispersion Using Clique-Finding}\label{sec:exactdisp faster}

 In this section, we discuss an alternate technique for exactly computing dispersion. The running time and space of the algorithm depend on the size of the solution space $\Om$. For any $s\geq 6$, the algorithm runs faster than the one in~\Cref{sec:exact}, but at the cost of potentially higher space.

 We now formulate our results to work for dispersion over an arbitrary subset $X$ of the hypercube, of size $M$. We thus slightly abuse notation and define $\opts(X,s)$, $\optm(X,s)$ and $\opts_{\neq}(X,s)$. In what follows, $\omega \le 2.38$ denotes the matrix multiplication exponent~\cite{williams2024new}.

\begin{restatable}{theorem}{exactdispersion-faster}\label{thm:exactdisp faster}
	There exist deterministic algorithms $\mathcal{A}_1, \mathcal{A}_2, \mathcal{A}_3$ that given as input a non-empty set $X \subseteq \Q{n}$ of size $M$ and parameter $s$, runs in $O(\poly(n, s) \cdot M^{\omega \lceil s/3 \rceil})$ time, uses $O(M^{2 \lceil s/3 \rceil})$ space, and have the following behaviour.
 \sloppy
 \begin{enumerate}
     \item The output of $\mathcal{A}_1$ is $z_1, z_2, \dots, z_s \in X$ such that $\SPD(z_1, z_2, \dots, z_s) = \opts(X,s)$. 
     \item The output of $\mathcal{A}_2$ is $z_1, z_2, \dots, z_s \in X$ such that $\PD(z_1, z_2, \dots, z_s) = \optm(X,s)$. 
     \item And, as long as $|S| \geq s$, the output of $\mathcal{A}_3$ is a set $\{z_1, z_2, \dots, z_s\} \in X$ such that $\SPD(z_1, z_2, \dots, z_s)$ $ =$ $\opts_{\neq}(X,s)$. 
 \end{enumerate}
\end{restatable}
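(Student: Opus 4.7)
The approach adapts the Nesetril--Poljak reduction from $s$-clique detection to triangle detection in a meta-graph, combined with a polynomial encoding to handle the sum-weighted variants. Let $k = \lceil s/3 \rceil$ and partition the $s$ slots of an $s$-tuple into three blocks of sizes $s_1, s_2, s_3 \le k$ with $s_1 + s_2 + s_3 = s$. For each block $i$, let $V_i$ be the collection of all $s_i$-tuples of points from $X$ (multisets for $\mathcal{A}_1$, sets for $\mathcal{A}_2$ and $\mathcal{A}_3$), so $|V_i| \le M^k$, and every candidate $s$-tuple decomposes into a triple $(U_1, U_2, U_3) \in V_1 \times V_2 \times V_3$.

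For $\mathcal{A}_2$ (min-dispersion), I would iterate over each integer threshold $d \in \{0, 1, \ldots, n\}$ and construct a tripartite meta-graph $H_d$, placing an edge between $U_i \in V_i$ and $U_j \in V_j$ (with $i \ne j$) iff $U_i, U_j$ are disjoint and every pair of points in $U_i \cup U_j$ has Hamming distance at least $d$. A triangle in $H_d$ corresponds precisely to an $s$-subset of $X$ whose minimum pairwise Hamming distance is at least $d$, so the largest $d$ for which $H_d$ contains a triangle equals $\optm(X, s)$; the three vertices of the triangle give the witness. Triangle detection via Boolean matrix multiplication on $M^k \times M^k$ matrices costs $O(M^{k\omega})$ time and $O(M^{2k})$ space, and the $O(n)$ outer loop is absorbed into $\poly(n, s)$.

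For $\mathcal{A}_1$ and $\mathcal{A}_3$ (sum-dispersion), I would replace Boolean entries with univariate polynomials in a formal variable $z$. For $i \ne j$ and $(U_i, U_j)$ forming a valid edge (disjoint for $\mathcal{A}_3$, unrestricted for $\mathcal{A}_1$), set $\tilde{A}_{ij}[U_i, U_j] = z^{\,2 c(U_i, U_j) + w(U_i) + w(U_j)}$, where $c(U_i, U_j)$ is the total Hamming distance across $U_i$ and $U_j$ and $w(U)$ the total Hamming distance within $U$; set the entry to $0$ otherwise. For any meta-triangle $(U_1, U_2, U_3)$, the product $\tilde{A}_{12}[U_1, U_2] \cdot \tilde{A}_{23}[U_2, U_3] \cdot \tilde{A}_{13}[U_1, U_3]$ equals $z^{2T}$, where $T$ is the total sum-dispersion of $U_1 \cup U_2 \cup U_3$, since the intra-weights are each counted twice and the cross-weights each once, matching the $2T$ normalization. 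I would compute $P(z) := \sum_{U_1, U_3} (\tilde{A}_{12} \cdot \tilde{A}_{23})[U_1, U_3] \cdot \tilde{A}_{13}[U_1, U_3]$ via polynomial matrix multiplication; all coefficients are non-negative integers so no cancellation occurs, and the highest-degree nonzero monomial reveals $2 \cdot \opts(X, s)$ (resp.\ $\opts_{\neq}(X, s)$). Standard witness-recovery for matrix multiplication then returns the optimal $s$-tuple. Polynomial matrix multiplication with degrees bounded by $D = O(n s^2)$ takes $O(M^{k\omega} \cdot D)$ time and $O(M^{2k} \cdot D)$ space, with the $D$ factor absorbed into $\poly(n, s)$.

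The main obstacle is the sum-dispersion case, since general max-edge-weight $s$-clique is not known to admit $O(N^\omega)$ algorithms, and a naive search over one meta-vertex of the meta-triangle for fixed values of the other two would cost $\Omega(M^{3k})$. The polynomial encoding circumvents this because Hamming distances are bounded integers and the sum of $\binom{s}{2}$ pairwise distances is at most $O(n s^2)$, keeping polynomial degrees polynomially bounded in $n$ and $s$ so that fast matrix multiplication over the ring $\mathbb{Z}[z]/(z^{D+1})$ remains efficient.
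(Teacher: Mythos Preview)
Your algorithm $\mathcal{A}_2$ for $\optm$ is essentially identical to the paper's: both threshold on a distance $d$, build the meta-graph on $s/3$-tuples, and reduce to triangle detection via matrix multiplication.

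For $\mathcal{A}_1$ and $\mathcal{A}_3$ (the sum objectives), your approach is correct but takes a genuinely different route from the paper. The paper sidesteps weighted triangle finding by \emph{guessing six integer thresholds} $\vec{d} = (d_1, d_2, d_3, d_{12}, d_{23}, d_{13}) \in [0,sn]^6$ --- one for the internal sum in each block and one for each cross-block sum --- and for every choice of $\vec{d}$ runs ordinary Boolean triangle detection on the tripartite meta-graph whose vertices and edges meet the corresponding thresholds; any triangle then has total sum at least $d_1+d_2+d_3+d_{12}+d_{23}+d_{13}$, and the optimal $s$-tuple is captured by the right guess. You instead pack the integer sums into exponents of a formal variable and perform a single matrix product over $\mathbb{Z}[z]$, relying on the $O(ns^2)$ degree bound to keep the ring arithmetic within $\poly(n,s)$, and read off the maximum degree. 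Both approaches achieve the stated bounds. The paper's argument is more elementary (only Boolean matrix multiplication, at the cost of an $O((ns)^6)$ outer enumeration), whereas your polynomial encoding does the work in one shot and would extend more directly if one also wanted to count optimal tuples rather than just find one.
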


Note that when applied with $X$ being the set of satisfying assignments to a formula $\f$, the running time is at worst $O(2^{\omega \lceil s/3 \rceil n})$ but in general much faster depending on the number of satisfying assignments.  Furthermore, these algorithms do not rely on the underlying space being $\Q{n}$; they can be used on any $M$-point metric space.

The algorithms use the same idea as $O(n^{c_s s})$ time algorithms for finding a clique of size $s$ in a graph, where $c_s \approx \omega/3$ with variations depending on $s \bmod 3$ \cite{EisenbrandG2004Clique}. In particular the $\optm$ problem immediately reduces to the $s$-clique problem by creating a graph on $X$ where $x, y \in X$ are connected by an edge if their distance is at least $d$ (for some guess $d \in [0,n]$ for the value of $\optm(X, s)$, which we can then binary search over).  Similarly for the $\opts$ objective function, the problem reduces to finding an $s$-clique of maximum weight in an edge-weighted graph, which can be solved by similar methods. Similar ideas have been used before in for example~\cite{williams2005new}.

Let us describe the algorithms in more detail, starting with the case of $\optm$ since it is easier.  While in this case the reduction to $s$-clique described above could be used directly, let us still take a slightly longer route and reduce to triangle-finding, in order to provide a warm-up for the $\opts$ algorithm where this is needed.

To simplify notation we assume that $s$ is divisible by $3$.  Given a guess $d \in [0,n]$ for the value of $\optm(X, s)$, define a graph $G_d$ where the vertex set is
\[
V(G_d) = \left\{ (x_1, \ldots, x_{s/3}) \in X^{s/3} \,|\,d_H(x_i, x_j) \ge d \text{ for all } 1 \le i < j \le s/3 \right\}.
\]
Two vertices $(x_1, \ldots, x_{s/3})$ and $(y_1, \ldots, y_{s/3})$ are connected by an edge if $d_H(x_i, y_j) \ge d$ for all $i$ and $j$.  Note that $G_d$ has $O(M^{s/3})$ vertices and $O(M^{2s/3})$ edges, and can be constructed in $O(s^2 \cdot M^{2s/3})$ time.

\begin{claim}
    Three tuples $(x_1, \ldots, x_{s/3})$, $(y_1, \ldots, y_{s/3})$, and $(z_1, \ldots, z_{s/3})$ form a triangle in $G_d$ if and only if $\PD(x_1, \ldots, x_{s/3}, y_1, \ldots, y_{s/3}, z_1, \ldots, z_{s/3}) \ge d$.
\end{claim}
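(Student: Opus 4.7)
The statement is essentially an unpacking of definitions, so the plan is to verify the two directions of the biconditional by matching pairs of elements in the $s$-point multiset to the edge/vertex conditions defining $G_d$. The key bookkeeping is to check that the $\binom{s}{2}$ pairs of elements of $\{x_1,\ldots,x_{s/3}, y_1,\ldots,y_{s/3}, z_1,\ldots,z_{s/3}\}$ split cleanly into ``within-tuple'' pairs (governed by the vertex condition of $G_d$) and ``cross-tuple'' pairs (governed by the edge condition of $G_d$).

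For the forward direction, I would assume the three tuples form a triangle in $G_d$. Since each tuple is a vertex, the vertex condition gives $d_H(x_i, x_j) \ge d$ for all $1 \le i < j \le s/3$, and likewise for the $y$'s and $z$'s; this handles all within-tuple pairs. Since each of the three pairs of tuples is an edge, the edge condition gives $d_H(x_i, y_j) \ge d$, $d_H(x_i, z_j) \ge d$ and $d_H(y_i, z_j) \ge d$ for all $i, j \in [s/3]$ (note this quantifies over all $i, j$, including $i = j$, which is exactly what we need). This exhausts every pair in the multiset, so $\PD \ge d$.

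For the reverse direction, I would assume $\PD(x_1,\ldots,z_{s/3}) \ge d$, so every pair of the $s$ points is at Hamming distance at least $d$. Restricting attention to pairs within the $x$-tuple shows $(x_1,\ldots,x_{s/3}) \in V(G_d)$, and analogously for the $y$- and $z$-tuples. Restricting attention to pairs with one endpoint in the $x$-tuple and the other in the $y$-tuple shows the edge $\{(x_1,\ldots,x_{s/3}),(y_1,\ldots,y_{s/3})\}$ is in $G_d$, and analogously for the other two pairs of tuples. Hence the three tuples form a triangle in $G_d$.

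\textbf{Main obstacle.} There is no real obstacle: the argument is a bookkeeping check. The only potential subtlety is making sure the edge condition is applied at all index pairs (including $i=j$), so that cross-tuple pairs such as $d_H(x_1,y_1)$ are covered; this is automatic from how $G_d$'s edges are defined. The restriction $3 \mid s$ made earlier keeps notation clean, and the same argument works in general modulo small adjustments to tuple sizes.
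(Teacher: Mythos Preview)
Your proposal is correct and matches the paper's approach: the paper states the claim without proof, treating it as immediate from the definitions of the vertex and edge sets of $G_d$, and your bookkeeping argument is exactly the unpacking that makes this explicit.
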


This immediately gives us the algorithm $\mathcal{A}_2$ for $\optm(X, s)$: try all possible values of $d$, construct the graph $G_d$, and then search for a triangle in $G_d$, which can be done in $O(|V(G_d)|^{\omega}) = O(M^{\omega s / 3})$ time~\cite{itai1977finding}.

Moving on to the $\opts$ objective function, we change the above algorithm as follows.  Given six values $\vec{d} = (d_1, d_2, d_3, d_{12}, d_{23}, d_{13}) \in [0, sn]^6$, we define the tri-partite graph $G_{\vec{d}}$ with vertex sets $V_1, V_2, V_3$ defined by
\[
V_k(G_{\vec d}) = \left\{ (x_1, \ldots, x_{s/3}) \in X^{s/3} \,|\, \frac{1}{2} \sum_{i,j} d_H(x_i, x_j) \ge d_k \right\}.
\]
Two vertices $(x_1, \ldots, x_{s/3}) \in V_{k}$ and $(y_1, \ldots, y_{s/3}) \in V_{k'}$ are connected by an edge if 
\[
\sum_{i,j} d_H(x_i, y_j) \ge d_{k,k'}.
\]
We then have the following claim, which yields the algorithm $\mathcal{A}_1$ (by enumerating all $O((ns)^6)$ possible values of $\vec{d}$).\footnote{Note that, if we reduced $\opts$ to an $s$-clique problem instead of triangle finding, there would be ${s \choose 2}$ distances to guess, which would lead to an extra runtime factor of roughly $n^{s^2/2}$.  This is why we reduce to triangle-finding instead.}
\begin{claim}
    If three vertices $(x_1, \ldots, x_{s/3}) \in V_1$, $(y_1, \ldots, y_{s/3}) \in V_2$, and $(z_1, \ldots, z_{s/3})$ form a triangle in $G_{\vec{d}}$ then $\SPD(x_1, \ldots, x_{s/3}, y_1, \ldots, y_{s/3}, z_1, \ldots, z_{s/3}) \ge d_1 + d_2 + d_3 + d_{12} + d_{23} + d_{13}$.
    Conversely, there exists a $\vec{d}$ such that $d_1 + d_2 + d_3 + d_{12} + d_{23} + d_{13} \ge \opts(S, s)$ and $G_{\vec{d}}$ has a triangle.
\end{claim}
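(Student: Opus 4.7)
The plan is to prove both directions directly from the definitions, using the observation that $\SPD$ of the $s$-point union decomposes into exactly six summands matching the six coordinates of $\vec{d}$.

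For the forward direction, fix a triangle in $G_{\vec{d}}$ with vertices $(x_1,\ldots,x_{s/3}) \in V_1$, $(y_1,\ldots,y_{s/3}) \in V_2$, $(z_1,\ldots,z_{s/3}) \in V_3$. By splitting unordered pairs according to which group their endpoints lie in, one obtains the decomposition
\[
\SPD(x_1,\ldots,z_{s/3}) = \tfrac{1}{2}\!\!\sum_{i,j} d_H(x_i,x_j) + \tfrac{1}{2}\!\!\sum_{i,j} d_H(y_i,y_j) + \tfrac{1}{2}\!\!\sum_{i,j} d_H(z_i,z_j) + \sum_{i,j} d_H(x_i,y_j) + \sum_{i,j} d_H(x_i,z_j) + \sum_{i,j} d_H(y_i,z_j).
\]
The three intra-group sums are at least $d_1, d_2, d_3$ by the definitions of $V_1, V_2, V_3$, and the three inter-group sums are at least $d_{12}, d_{13}, d_{23}$ by the edge predicate of $G_{\vec{d}}$. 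Adding these six lower bounds yields the claimed inequality.

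For the converse, let $\{a_1,\ldots,a_s\} \subseteq X$ be an optimal set attaining $\opts(X,s)$, and partition it arbitrarily into three blocks $X^*, Y^*, Z^*$ of size $s/3$. Define $\vec{d}$ by setting each of the six coordinates to equal the exact value of the corresponding summand in the decomposition above (intra-block half-sums for $d_1,d_2,d_3$, full bipartite sums for $d_{12},d_{13},d_{23}$). Then $X^* \in V_1$, $Y^* \in V_2$, $Z^* \in V_3$ hold with equality, all three tripartite edges are present with equality, so $X^*, Y^*, Z^*$ form a triangle in $G_{\vec{d}}$, and $d_1 + d_2 + d_3 + d_{12} + d_{23} + d_{13} = \opts(X,s)$ by construction.

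To complete the algorithm $\mathcal{A}_1$, we enumerate $\vec{d}$ over all possible values. Since each pairwise Hamming distance is an integer in $[0,n]$ and each coordinate of $\vec{d}$ is bounded by the total weight $\binom{s}{2}n \le s^2 n$, there are at most $(s^2 n)^6 = \poly(n,s)$ choices. For each, constructing $G_{\vec{d}}$ takes $O(s^2 M^{2s/3})$ time and locating a triangle takes $O(M^{\omega s/3})$ time via fast matrix multiplication~\cite{itai1977finding}; taking the best $\vec{d}$ for which a triangle exists and reading off the three $s/3$-tuples recovers an optimal $s$-set, giving the stated bounds. The main (mild) subtlety is the index bookkeeping separating intra-group contributions (which require a $1/2$ factor because each unordered pair $\{i,j\}$ is double-counted in $\sum_{i,j}$) from inter-group contributions (which do not, since $(i,j)$ ranges over ordered pairs from distinct blocks and each unordered pair $\{x_i, y_j\}$ appears exactly once).
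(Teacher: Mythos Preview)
Your proof is correct and is precisely the natural argument; the paper itself states this claim without proof, and your decomposition of $\SPD$ into the six intra- and inter-block summands followed by bounding each via the vertex/edge predicates is exactly what is intended. One cosmetic point: for the converse you should say ``multiset'' rather than ``set'' when choosing the optimal $\{a_1,\ldots,a_s\}$, since $\opts$ is defined over multisets (this causes no issue because the vertices of $G_{\vec d}$ are tuples in $X^{s/3}$, which already allow repeats).
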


Finally, to get the algorithm $\mathcal{A}_3$ for $\opts_{\neq}(X, s)$, we simply change the definition of the vertices and edges of $G_{\vec{d}}$ to exclude any tuples with repeated strings.

\section{The PPZ algorithm performs geometry-based sampling}\label{sec:PPZ}
This section is devoted to proving \Cref{thm:ppz-for-dia}, \Cref{thm:ppz-for-sumdisp-easy} and \Cref{thm:ppz-for-mindisp}, which we restate below. In fact, we prove a slightly stronger version of \Cref{thm:ppz-for-sumdisp-easy}, which is stated here.

\ppzdiam*

We now state the full version of \Cref{thm:ppz-for-sumdisp-easy}.
\begin{restatable}[PPZ approximating $\opts(\f,s)$]{theorem}{ppzsumdisp} \label{thm:ppz-for-sumdisp}
    Let $\f$ be a $k$-CNF formula on $n$ variables. There exists a randomized algorithm running in time $O^*\left(s^4 \cdot 2^{n-n/k}\right)$ that takes $\f$ and an integer $s \geq 1$ as input and if $\f$ is satisfiable, with probability at least $1-o(1)$, outputs a multiset $S^* \subseteq \Om$ of size $s$ such that:
    \begin{enumerate}
        \item $\SPD(S^*) \geq \frac{1}{2}\cdot \left(1-\frac{2}{k+1}\right) \cdot \opts(\f,s)$ if $s \leq 3 + \floor{\frac{4}{k-1}}$.
        \item $\SPD(S^*) \geq \frac{k-1}{k+3}\left(\frac{1-\frac{1}{s}}{1+\frac{k-1}{(k+3)}\cdot \frac{1}{s} }\right) \cdot \opts(\f,s)$ if $s \geq 3 + \ceil{\frac{4}{k-1}}$. 
    \end{enumerate}
\end{restatable}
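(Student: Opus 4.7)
The approach has two parts: (i) use PPZ to implement an approximate farthest-point oracle for the $\sumd$ objective, and (ii) feed this oracle into a farthest-insertion algorithm (for small $s$) and then a Cevallos-style local-search algorithm \cite{cevallos2019improved} (for large $s$), and analyse the resulting approximation ratio.

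\emph{PPZ as an approximate farthest-point oracle.}  The $\sumd$ variant of \Cref{intro_ppz_brief} described in \Cref{sec:techniques} (formalised there as \Cref{lem:anchor:sum}) says that for every fixed multiset $S \subseteq \Q{n}$ a single iteration of PPZ returns, with probability at least $\tfrac{1}{2n}\,2^{-n+n/k}$, an assignment $z^\star \in \Om$ satisfying $\sumd(S,z^\star) \geq \alpha \cdot \max_{z'\in \Om} \sumd(S,z')$, where $\alpha := 1-\tfrac{2}{k+1} = \tfrac{k-1}{k+1}$.  Because each PPZ iteration is oblivious to $S$, we run PPZ $\Theta(s^2 n\log s \cdot 2^{n-n/k})$ times, evaluate $\sumd(S,\cdot)$ on each output, and keep the best candidate; this realises the oracle with failure probability $o(s^{-2})$, so a union bound over $O(s^2)$ oracle queries keeps the total failure probability $o(1)$ within the claimed $O^*(s^4\cdot 2^{n-n/k})$ budget.

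\emph{Construction and analysis, Case 1 ($s \leq 3 + \lfloor 4/(k-1) \rfloor$).}  Run plain farthest insertion: start with an arbitrary $z_1 \in \Om$ and, for $i=1,\dots,s-1$, query the oracle on the current $S_i = \{z_1,\dots,z_i\}$ to pick $z_{i+1}$.  Using $\SPD(S^\star) = \sum_{i=1}^{s-1}\sumd(S_i,z_{i+1})$ and the oracle guarantee, for any optimal set $T$ of size $s$ we have $\sumd(S_i,z_{i+1}) \geq \alpha \max_{z'\in\Om}\sumd(S_i,z') \geq \tfrac{\alpha}{s}\sum_{t\in T}\sumd(S_i,t)$.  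Summing $d_H(t_1,t_2)\leq d_H(t_1,z)+d_H(z,t_2)$ over ordered distinct pairs in $T$ gives $\sumd(T,z) \geq \opts(\f,s)/(s-1)$ for every $z$, hence $\sumd(S_i,z_{i+1}) \geq \tfrac{\alpha \cdot i \cdot \opts(\f,s)}{s(s-1)}$.  Summing over $i$ telescopes to $\SPD(S^\star) \geq \tfrac{\alpha}{2}\opts(\f,s)$, i.e.\ the Case 1 bound.

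\emph{Case 2 ($s\geq 3 + \lceil 4/(k-1)\rceil$).}  Following the farthest-insertion phase, run a Cevallos-style swap-based local search: for each $z\in S^\star$ query the oracle on $S^\star \setminus \{z\}$ and swap $z$ for the oracle's response whenever this strictly increases $\SPD$.  Each swap strictly improves an integer-valued quantity bounded by $sn$, so the search terminates in $O(sn)$ iterations and stays within the $s^4$ polynomial factor.  At termination one has the one-sided local-optimality condition $\sumd(S^\star\setminus\{z\},z)\geq \alpha\,\sumd(S^\star\setminus\{z\},z')$ for every $z\in S^\star$ and every $z'\in\Om$.  Adapting Cevallos's averaging argument \cite{cevallos2019improved}---pairing each $z\in S^\star$ with an element of an optimal $T$ via a bijection and invoking the triangle estimate above---yields a linear inequality between $\SPD(S^\star)$ and $\opts(\f,s)$ whose solution is $\SPD(S^\star)\geq \tfrac{\alpha'(1-1/s)}{1+\alpha'/s}\,\opts(\f,s)$ with the effective ratio $\alpha' = \alpha/(2-\alpha) = (k-1)/(k+3)$.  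A short algebraic check confirms that the two bounds coincide at the crossover $s = 3 + \lceil 4/(k-1)\rceil$, so using the better of the two (farthest-insertion for small $s$, local search for large $s$) is always valid.

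\emph{Main obstacle.}  The technically delicate step is the Cevallos adaptation: verifying that the one-sided approximate local-optimality condition still drives the full averaging argument, and that the approximation loss relative to the exact-oracle case is captured exactly by the transformation $\alpha \mapsto \alpha/(2-\alpha)$.  A secondary point requiring care is the uniform boosting of the oracle's success probability across all $O(s^2)$ queries so that the union bound costs only polynomial factors in $s$.
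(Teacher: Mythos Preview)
Your overall architecture matches the paper exactly: \Cref{lem:anchor:sum} gives a $(1-\delta)$-approximate $\sumd$-farthest-point oracle with $\delta=\tfrac{2}{k+1}$, and you then run farthest insertion followed by Cevallos-style local search. Your Case~1 analysis is correct and is essentially the paper's Observation~\ref{obs:multisettriangle} unrolled.

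The gap is in Case~2. You describe the Cevallos step as ``pairing each $z\in S^\star$ with an element of an optimal $T$ via a bijection and invoking the triangle estimate above.'' That argument does not yield the claimed factor. Concretely, summing the local-optimality inequality $\sumd(S^\star\setminus\{x\},x)\geq(1-\delta)\,\sumd(S^\star\setminus\{x\},y)$ over all $x\in S^\star$ and $y\in T$ gives
\[
2s\,\SPD(S^\star)\ \geq\ (1-\delta)(s-1)\,d_H(S^\star,T),
\]
where $d_H(S^\star,T)=\sum_{x\in S^\star,\,y\in T}d_H(x,y)$. If you now lower bound $d_H(S^\star,T)$ using only the triangle estimate from Case~1 (namely $\sumd(T,z)\geq \opts(\f,s)/(s-1)$ for every $z$), you get $d_H(S^\star,T)\geq \tfrac{s}{s-1}\opts(\f,s)$ and hence only $\SPD(S^\star)\geq \tfrac{1-\delta}{2}\opts(\f,s)$, i.e.\ no improvement over Case~1. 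What the paper (and Cevallos) actually use is the \emph{negative-type} inequality for the Hamming metric,
\[
d_H(S^\star,T)\ \geq\ \SPD(S^\star)+\SPD(T),
\]
which, after rearranging, is precisely what produces the transformation $\alpha\mapsto \alpha/(2-\alpha)=(k-1)/(k+3)$ that you correctly anticipate. So your target formula is right, but the mechanism you propose to reach it is not; you need to invoke negative type, not a bijection-plus-triangle argument.

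A minor count to fix: $\SPD(S^\star)$ is bounded by $\binom{s}{2}n$, not $sn$, so the number of strictly improving swaps is $O(s^2 n)$ (and the paper budgets $s^3 n$ oracle calls to cover the sweep over all $z\in S^\star$ between improvements). This does not affect the $O^*(s^4\cdot 2^{n-n/k})$ bound but your ``$O(sn)$ iterations'' and ``$O(s^2)$ oracle queries'' are undercounts.
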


\ppzmindisp*

\paragraph{Proof organization: } We prove the above three theorems in parallel using the following five step procedure. 
\begin{enumerate}
    \item In \Cref{sec:ppzdetails}, we summarize the PPZ algorithm and state the satisfiability coding lemma.
    \item In \Cref{sec:separator}, we prove the \emph{separator lemma}, that generalizes the satisfiablity coding lemma. 
    \item In \Cref{sec:geometric}, we prove \emph{geometric sampling properties} of PPZ, with respect to $\D$ and $\optm$ in \Cref{lem:anchor:diam}, and $\opts$ in \Cref{lem:anchor:sum}. 
    \item In \Cref{sec:alg}, we use these geometric properties to develop farthest point oracles for $\optm$ and $\opts$. 
    \item In \Cref{sec:finalppz}, we describe our algorithms for finding dispersed solutions with respect to $\opts$ and $\optm$. These algorithms use the farthest point oracles in the well known algorithms for dispersion studied by Gonzales~\cite{gonzalez1985clustering} and Cevallos, Eisenbrand and Zenklusen~\cite{cevallos2019improved}.  
\end{enumerate}

\paragraph{Notation.} We use a graph theoretical framework to analyze the PPZ algorithm. Let $G_\f$ be the subgraph of the $n$-dimensional boolean hypercube induced by the set of satisfying assignments of $\f$. That is, the vertex set of $G_\f$ is $\Om$, and $z,z' \in \Om$ are connected in $G_\f$ if $d_H(z,z')=1$. For any $(z,z')$ connected in $G_\f$, $z' = z \oplus e_k$ for some $k \in [n]$, where $e_k \in \{0,1\}^n$ is the $k$-th standard basis vector. For any $z \in \Om$, we use $\deg(z)$ to denote its degree in the graph $G_\f$. 

\subsection{The PPZ algorithm} \label{sec:ppzdetails}

In this section, we formally define the subroutine used in the PPZ algorithm and recall its analysis.

\noindent

\PPZMod. This subroutine takes as input a $k$-CNF formula $\f$, a string $y \in \{0,1\}^n$, and a permutation $\pi \in \mathcal{S}_n$ of length $n$. It iteratively computes a string $u \in \{0,1\}^n$ in $n$ steps. 

\noindent
Let $\f_0 = \f$. In each step $i$, the algorithm computes $u_{\pi(i)}$ and updates the formula $\f_{i-1}$ to $\f_i$ as follows: if $\f_{i-1}$ has a clause $C = (x_{\pi(i)})$, then it sets $u_{\pi(i)}$ to $1$; if it has a clause $C = (\overline{x_{\pi(i)}})$ then it sets $u_{\pi(i)}$ to $0$, and if there is no such clause, i.e., any clause containing the variable $x_{\pi(i)}$ has two or more variables, then it sets $u_{\pi(i)}$ equal to $y_{\pi(i)}$. It updates $\f_{i-1}$ to $\f_i$ by setting all instances of the variable $x_{\pi(i)}$ as per $u_{\pi(i)}$ and simplifying the formula as needed (i.e., removing satisfied clauses and eliminating $0$-valued literals from all clauses).  After $n$ steps, the algorithm outputs $u \in \{0,1\}^n$ as computed above.

For any $z \in \{0,1\}^n$, let $\tau(\f,z)$ denote the probability that $\PPZMod(\f,y,\pi)$ outputs $z$ when $y$ and $\pi$ are chosen independently and uniformly at random from $\{0,1\}^n$ and $\mathcal{S}_n$, respectively. For any subset $A \subseteq \{0,1\}^n$, we use $\tau(\f,A)$ to denote the probability that $\PPZMod(\f,y,\pi)$ outputs an assignment in $A$ over $y$ and $\pi$ chosen independently and uniformly at random. For any fixed $\pi,y$, the procedure $\PPZMod$ outputs a fixed assignment that only depends on $\pi$ and $y$, which implies that
\[\tau(\f, A)  =\sum_{z \in A} \tau(\f, z) \;.\]
In their paper~\cite{PPZ}, Paturi, Pudl{\'a}k and Zane proved the satisfiability coding lemma, which states that for a satisfying assignment $z$, $\tau(\f,z)$ depends on how \emph{isolated} $z$ is (i.e, its degree in $G_\f$).
\begin{lemma}[Satisfiability Coding Lemma (Paturi, Pudl{\'a}k, Zane~\cite{PPZ})] 
\label{lem:ppz-j-isolated}
Let $\f$ be a $k$-CNF formula on $n$ variables. Let $y$ be chosen uniformly at random from $\{0,1\}^n$ and $\pi$ be chosen uniformly at random from $\mathcal{S}_n$. Let $z$ be a satisfying assignment of $\f$ such that $\deg(z) = n-j$ for some $j \in [n]$. Then, the probability that \PPZMod$(\f,y,\pi)$ outputs $z$ is at least $2^{-n+j/k}$. 
\end{lemma}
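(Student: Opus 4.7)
\textbf{Proof plan for Lemma~\ref{lem:ppz-j-isolated}.}  The plan is to follow the classical ``Satisfiability Coding Lemma'' argument, organized around the notion of \emph{critical clauses}.  Because $\deg(z) = n-j$ in $G_\f$, there is a set $I \subseteq [n]$ of $j$ coordinates such that for each $i \in I$, the assignment $z \oplus e_i$ is not satisfying.  For every such $i$, pick (arbitrarily but deterministically) a clause $C_i$ of $\f$ that is falsified by $z \oplus e_i$: in such a clause, the literal on $x_i$ must be the unique satisfied literal under $z$, and all other variables of $C_i$ appear only in the literals falsified by $z$.  Thus $C_i$ contains at most $k$ variables, one of which is $x_i$.

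Now fix a permutation $\pi$ and call coordinate $i \in I$ \emph{forced by $\pi$} if every variable of $C_i$ other than $x_i$ precedes $x_i$ in $\pi$.  The key structural claim I would then prove by induction on the step index is: if $i$ is forced by $\pi$, then at the moment \PPZMod\ processes position $\pi^{-1}(i)$, the simplified formula $\f_{i-1}$ already contains a unit clause on $x_{\pi(i)}$ that determines $u_i = z_i$ (regardless of $y$); in particular $u_i$ is set correctly without consulting $y_i$.  This follows because every other variable of $C_i$ has been set to the value it takes under $z$ (otherwise either $\f$ was already satisfied in an earlier round or $u \ne z$ already), so by the time we reach $x_i$ the residual clause is exactly $(x_i)$ or $(\overline{x_i})$ according to the sign of the literal, matching $z_i$.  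For every coordinate $i$ that is \emph{not} forced by $\pi$, it suffices that $y_i = z_i$ for the bit to come out right.

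Let $F(\pi) \subseteq I$ denote the forced coordinates.  From the previous paragraph, conditional on the permutation being $\pi$, \PPZMod\ outputs $z$ whenever $y$ agrees with $z$ on all $n - |F(\pi)|$ unforced coordinates, an event of probability exactly $2^{-(n-|F(\pi)|)}$ over the uniform choice of $y$.  Hence
\[
\tau(\f,z) \;\geq\; \E_{\pi}\!\left[2^{-(n-|F(\pi)|)}\right] \;=\; 2^{-n}\,\E_\pi\!\left[2^{|F(\pi)|}\right].
\]
Since each $i \in I$ is forced with probability at least $1/k$ (the event that $x_i$ is the last among the at most $k$ variables of $C_i$ under a uniformly random $\pi$), linearity of expectation gives $\E[|F(\pi)|] \geq j/k$.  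Applying Jensen's inequality to the convex function $t \mapsto 2^t$ yields $\E_\pi[2^{|F(\pi)|}] \geq 2^{\E[|F(\pi)|]} \geq 2^{j/k}$, and therefore $\tau(\f,z) \geq 2^{-n + j/k}$, as claimed.

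\textbf{Expected obstacle.}  The only subtle step is the inductive structural claim that a forced coordinate is actually resolved by a unit clause at the right moment: one must carefully argue that no earlier position can corrupt the other variables of $C_i$ (they all get set to their $z$-values because they appear in $C_i$ only with the sign falsified by $z$, and \PPZMod\ either guesses or forces them).  Once that bookkeeping is in place, the probability lower bound and the Jensen step are short.  Everything else is routine.
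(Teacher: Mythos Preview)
Your proof plan is correct and follows exactly the classical argument of Paturi, Pudl\'ak, and Zane: critical clauses for the $j$ isolated directions, the ``last in its clause under $\pi$'' notion of a forced variable, and Jensen applied to $2^{|F(\pi)|}$. The paper does not supply its own proof of this lemma---it simply cites the original PPZ paper---so your write-up is precisely the intended argument; the only point to make airtight is the inductive bookkeeping you flagged (any unit clause present when all earlier bits have been set to $z$-values must agree with $z$, since the restricted formula is still satisfied by $z$).
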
 
\noindent
If $\Om$ is non-empty (i.e, $\f$ is satisfiable), they show that $\sum_{z \in \Om} 2^{-\deg(z)/k} \geq 1$, which implies the following lower bound on the probability that $\PPZMod$ outputs any satisfying assignment to $\f$.
$$\tau(\f,\Om)=\sum_{z \in \Om}\tau(\f,z)=2^{-n+n/k} \sum_{z \in \Om} 2^{-\deg(z)/k} \geq 2^{-n+n/k}\;.$$
This implies that repeating $\PPZMod$ $O^*\left(2^{n(1-1/k)}\right)$ times is enough to output a satisfying assignment to $\f$ with probability $1-o(1)$, if one exists.

\subsection{The separator lemma}
\label{sec:separator}

We first generalize \Cref{lem:ppz-j-isolated} to lower bound $\tau(\f,A)$ for arbitrary sets $A$ of satisfying assignments.
\begin{lemma}[Separator Lemma] \label{lem:separator}
    Let $A \subseteq \Om$, let $S$ be the set of edges of $G_\f$ with one endpoint in $A$ and the other endpoint in $\Om \setminus A$. Further, let $E(A)$ be the  edges of $G_\f$ with both endpoints in $A$. Then, 
\begin{align}
    \tau(\f,A) & \geq 2^{-n(1-1/k)}|A|2^{-\left(  \frac{2|E(A)|}{k|A|}+\frac{|S|}{k|A|}\right)} \label{eq:sep1} \\
    & \geq 2^{-n(1-1/k)}|A|^{1-1/k}2^{-\frac{|S|}{k|A|}} \label{eq:sep2}
\end{align}    
    
\end{lemma}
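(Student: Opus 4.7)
The plan is to derive inequality (\ref{eq:sep1}) from the Satisfiability Coding Lemma combined with Jensen's inequality on the convex function $x \mapsto 2^{-x/k}$, and then deduce (\ref{eq:sep2}) from (\ref{eq:sep1}) using the edge-isoperimetric inequality for subgraphs of the Boolean hypercube.

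In more detail, for each $z \in A \subseteq \Om$, write $\deg(z) = \deg_A(z) + \deg_{\bar A}(z)$, where $\deg_A(z)$ counts neighbors of $z$ lying in $A$ and $\deg_{\bar A}(z)$ counts neighbors lying in $\Om \setminus A$. Then handshake-counting gives $\sum_{z\in A}\deg_A(z) = 2|E(A)|$ and $\sum_{z\in A}\deg_{\bar A}(z) = |S|$, so the average degree of vertices in $A$ (as vertices of $G_\f$) is $\bar d := (2|E(A)| + |S|)/|A|$. Lemma~\ref{lem:ppz-j-isolated} rephrased says $\tau(\f,z) \geq 2^{-n+(n-\deg(z))/k} = 2^{-n(1-1/k)} \cdot 2^{-\deg(z)/k}$. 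Summing over $z \in A$ and applying Jensen's inequality to the convex function $f(x)=2^{-x/k}$ yields
\[
\tau(\f,A) \;\geq\; 2^{-n(1-1/k)} \sum_{z \in A} 2^{-\deg(z)/k} \;\geq\; 2^{-n(1-1/k)} \,|A|\, 2^{-\bar d/k},
\]
which is exactly (\ref{eq:sep1}).

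To obtain (\ref{eq:sep2}) from (\ref{eq:sep1}), it suffices to show that $|A| \cdot 2^{-2|E(A)|/(k|A|)} \geq |A|^{1-1/k}$, i.e., $2|E(A)|/|A| \leq \log_2 |A|$. Since $G_\f$ is an induced subgraph of the $n$-dimensional Boolean hypercube, the edges counted by $E(A)$ are exactly the edges of the hypercube with both endpoints in $A$. The classical edge-isoperimetric inequality for the hypercube (Harper--Bernstein--Lindsey) states that any $A \subseteq \{0,1\}^n$ induces at most $\tfrac{|A|}{2}\log_2|A|$ hypercube edges, which rearranges to precisely the bound $2|E(A)|/|A| \leq \log_2|A|$ that we need. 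Plugging this into (\ref{eq:sep1}) gives (\ref{eq:sep2}).

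The only subtle step is invoking the convexity/Jensen argument in the right direction and checking that the edge-isoperimetric inequality applies to $G_\f$ (which it does, since restricting a subgraph of the hypercube only removes edges, and $E(A)$ is defined inside $G_\f$). Everything else is a direct calculation from \Cref{lem:ppz-j-isolated} and the handshake identities, so I do not anticipate any serious obstacles.
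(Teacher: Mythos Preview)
Your proof is correct and follows essentially the same route as the paper: apply \Cref{lem:ppz-j-isolated}, use convexity (the paper phrases it as AM--GM, you as Jensen, but these are the same step for $x\mapsto 2^{-x/k}$) together with the handshake identity $\sum_{z\in A}\deg(z)=2|E(A)|+|S|$ to obtain \eqref{eq:sep1}, and then invoke the edge-isoperimetric inequality $|E(A)|\le \tfrac{|A|}{2}\log_2|A|$ to pass to \eqref{eq:sep2}. Your remark that $E(A)$, being counted inside $G_\f$, is only smaller than the hypercube-induced edge count is a correct justification that the paper leaves implicit.
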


\begin{proof}
\begin{align}
\tau(\f,A) & = \sum_{z \in A} \tau(\f, z) \geq \sum_{z \in A} 2^{-n + (n - \deg(z) / k} & \text{by~\Cref{lem:ppz-j-isolated}} \notag \\
& = \sum_{z \in A} 2^{-n(1-1/k)  - \deg(z) / k} & \notag \\
& = 2^{-n(1-1/k)} \cdot \sum_{z \in A}   2^{- \deg(z) / k} & \notag \\
& \geq 2^{-n(1-1/k)} \cdot |A| \cdot  2^{-\frac{\sum_{z \in A} \deg(z)}{k \cdot |A|}} & \text{By AM-GM inequality} \notag \\
 & \geq 2^{-n(1-1/k)} \cdot |A| \cdot  2^{- \frac{2|E(A)| + |S|}{k \cdot |A|}} & \text{By the handshake lemma} \notag \\
 & = 2^{-n(1-1/k)} \cdot 2^{-\frac{|S|}{k \cdot |A|}} \cdot \left(|A| \cdot 2^{-\frac{2 |E(A)|}{k \cdot |A|}}\right) & \notag \\
 & \geq 2^{-n(1-1/k)} \cdot 2^{-\frac{|S|}{k \cdot |A|}} \cdot \left(|A| \cdot 2^{-\frac{\log (|A|)}{k}}\right) & \text{By the edge isoperimetric inequality}\notag \\
 & \geq 2^{-n(1-1/k)} \cdot 2^{-\frac{|S|}{k \cdot |A|}} \cdot \left(|A|^{1-1/k} \right) &\notag 
\end{align}
For completeness, recall the edge-isoperimetric inequality for subgraphs of a hypercube~\cite{bollobas1986combinatorics}, that states that for any subset $A \subseteq \{0,1\}^n$, $|E(A)| \leq (|A| \log(|A|))/2$. 
\end{proof}

\subsection{Geometric sampling properties of $\PPZMod$} \label{sec:geometric}
In this section, we prove the dispersion properties of the $\PPZMod$ subroutine. The goal is to show that $\PPZMod$ is acts like an approximate farthest oracle: if a satisfying assignment exists that is ``far away'' from a set of already chosen solutions, then $\PPZMod$ will output an approximately ``far away'' satisfying assignment with good probability. 

\medskip\noindent
In particular, let $z_0 \in \Q{n}$ be any (not necessarily satisfying) assignment to $\f$. Let $r$ denote the maximum distance from $z_0$ to any satisfying assignment in $\Omega_\f$. We show that $\PPZMod$ will output, with probability at least $n^{-O(1)} \cdot 2^{-n+n/k}$, a satisfying assignment $z$ such that $d_H(z,z_0) \geq \left(1-\frac{1}{k}\right)r$. As a corollary, this implies that for any satisfying assignment $z$, $\PPZMod$ outputs a satisfying assignment to $\f$ within distance $n/k$ of $z$ with probability at least $n^{-O(1)} \cdot 2^{-n+n/k}$. Formally, we show that:
\begin{lemma}\label{lem:anchor:diam}
    Let $\f$ be a satisfiable $k$-CNF formula, $z_0 \in \{0,1\}^n$, and $r= \max_{z \in \Om} d_H (z, z_0)$. Let $y$ and $\pi$ be chosen uniformly at random and independently from $\{0,1\}^n$ and $\mathcal{S}_n$ respectively. The probability that $\PPZMod(\f, y, \pi)$ outputs $z^* \in \Om$ with $d_H(z^*,z_0) \geq \left(1-1/k\right) \cdot r$ is at least $\frac{1}{2n} \cdot 2^{-n+n/k}$
\end{lemma}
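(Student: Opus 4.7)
The plan is to prove the slightly stronger statement $\tau(\f, U_{r^*}) \geq \tfrac{1}{2n} \cdot 2^{-n+n/k}$, where $r^* := \lceil (1-1/k)\, r \rceil$ and $U_{r^*} = \bigcup_{j \geq r^*} V_j$ with $V_j = \{z \in \Om : d_H(z, z_0) = j\}$. Any $z^* \in U_{r^*}$ automatically satisfies $d_H(z^*, z_0) \geq r^* \geq (1-1/k)\,r$, so this reduction implies the lemma. Note that $\PPZMod$ does not access $z_0$: the layering $\{V_j\}$ is a purely analytic device imposed on $\Om$.

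The main tool is the separator lemma (\Cref{lem:separator}) applied to sets $A = U_j$ for a well-chosen $j \in [r^*, r]$. Because a hypercube edge changes the distance from $z_0$ by exactly $\pm 1$, the boundary $|S_j|$ of $U_j$ in $G_\f$ consists only of edges from $V_j$ down to $V_{j-1}$, so $|S_j| \leq j\cdot |V_j| \leq n\cdot |V_j|$. Substituting into inequality~\eqref{eq:sep2}, the goal reduces to finding some $j \in [r^*, r]$ for which
\[
  |U_j|^{1-1/k} \cdot 2^{-n |V_j|/(k |U_j|)} \;\geq\; \frac{1}{2n}.
\]

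The plan to produce such a $j$ is a case split on the size of $|U_{r^*}|$, bridged by a telescoping/pigeonhole argument. The easy subcases are: \emph{(i) small $r$}, say $r \leq k \log_2(2n)$: any $z \in V_r$ satisfies $\deg_{G_\f}(z) \leq r$ since $V_{r+1} = \emptyset$, so \Cref{lem:ppz-j-isolated} already gives $\tau(\f, V_r) \geq 2^{-n(1-1/k) - r/k}$, beating the target. \emph{(ii) Large $|U_{r^*}|$}, roughly $|U_{r^*}| \geq 2^{n/(k-1)}/\poly(n)$: the factor $|U_{r^*}|^{1-1/k}$ at $j = r^*$ already absorbs even the trivial bound $|S_{r^*}|/|U_{r^*}| \leq n$. \emph{(iii) Small $|U_{r^*}|$}, say polynomial in $n$: for $r \geq k$ one has $r^* + 1 \leq r$, and for any $z \in V_j$ with $j \geq r^*+1$ both neighboring layers $V_{j-1}, V_{j+1}$ lie in $U_{r^*}$, yielding $\deg_{G_\f}(z) \leq 2 |U_{r^*}|$, after which \Cref{lem:ppz-j-isolated} applied to a single $z \in V_r$ finishes the case. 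The intermediate regime is handled via the telescoping identity
\[
  \prod_{j=r^*}^{r-1} \!\Big(1 - \tfrac{|V_j|}{|U_j|}\Big) \;=\; \tfrac{|V_r|}{|U_{r^*}|} \;\geq\; \tfrac{1}{|U_{r^*}|},
\]
which combined with $r - r^* = \lfloor r/k \rfloor$ yields, by pigeonhole on $\sum_j |V_j|/|U_j| \leq \ln|U_{r^*}|$, a layer $j$ with $|V_j|/|U_j| \leq (k/r)\ln|U_{r^*}|$; plugging this into the separator lemma at this $j$ gives the claimed estimate.

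The main obstacle will be the arithmetic of the intermediate regime: one needs to simultaneously control $|U_j|^{1-1/k}$ (which wants $|U_j|$ not too small) and $n|V_j|/(k|U_j|)$ (which wants $|V_j|/|U_j|$ small), and the bare pigeonhole step does not directly secure both. A refined choice of $j$ --- for instance, taking the smallest index achieving the pigeonhole bound, so that $|U_j|$ remains close to $|U_{r^*}|$ --- together with a careful tuning of the threshold delimiting the ``small'' versus ``large'' $|U_{r^*}|$ cases, should suffice to absorb every polynomial-in-$n$ correction factor into the $1/(2n)$ slack in the statement. The edge-isoperimetric inequality $|E(U_j)| \leq \tfrac{1}{2}|U_j| \log_2 |U_j|$ already built into \Cref{lem:separator} is precisely the ingredient that lets the two factors be controlled simultaneously in this balancing step.
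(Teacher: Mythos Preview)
Your overall strategy of layering $\Om$ by distance from $z_0$ and applying the separator lemma to the top layers $U_j$ matches the paper. The easy cases (i) and (ii) are fine. But the remaining cases do not cover the range you claim, and the pigeonhole argument you propose for the intermediate regime does not close the gap.

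First, case (iii) only handles $|U_{r^*}| \le k\log_2(2n)$, not ``polynomial in $n$'': the Coding Lemma applied to a single $z\in V_r$ gives $\tau(\f,z)\ge 2^{-n+n/k}\cdot 2^{-\deg(z)/k}$, so you need $\deg(z)\le k\log_2(2n)$, and your degree bound is $\deg(z)\le |U_{r^*}|$, not $\log|U_{r^*}|$. Second, even with the sharper bound $|S_j|\le j|V_j|$ (relaxing $j$ to $n$ is an unnecessary loss), at $j=r^*$ you still need $|U_{r^*}|\gtrsim 2^{r^*/(k-1)}=2^{r/k}$, so the ``large'' threshold in (ii) is exponential, not $2^{n/(k-1)}$. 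This leaves an exponential gap between (ii) and (iii). Your telescoping identity only yields $\sum_j |V_j|/|U_j|\le \ln|U_{r^*}|$, which is a \emph{constraint on the ratios}, not a lower bound on any $|U_j|$. The ``smallest index'' refinement fails for the same reason: a single early layer with $|V_i|/|U_i|$ close to $1$ makes $|U_j|$ collapse for all $j>i$ while still leaving many later ratios equal to zero (so the pigeonhole threshold is met there), and you get no handle on $|U_j|$. No choice of $j$ rescues this.

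The paper's argument is structurally different from pigeonhole. It works only at the single layer $j=i^*$, using the clean bound $|S_{i^*}|/|U_{i^*}|\le i^*$, and then establishes the needed lower bound $|U_{i^*}|\gtrsim 2^{r-i^*}$ by a \emph{recursive growth} argument: assuming $\tau(\f,U_j)<2^{-n+n/k}$ for every $j>i^*$, inequality~\eqref{eq:sep1} of the separator lemma combined with the containment $E(U_j)\cup S_j\subseteq E(U_{j-1})$ and edge-isoperimetry on $U_{j-1}$ gives
\[
|U_{j-1}|\log_2|U_{j-1}|\;\ge\;k\,|U_j|\log_2|U_j|\qquad\text{for all }i^*<j\le r,
\]
which compounds over the $r-i^*\approx r/k$ layers to force $|U_{i^*}|\ge 2^{r-i^*}$ (for $k\ge 3$). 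The key point you are missing is this second application of isoperimetry to the enlarged set $U_{j-1}$, which converts ``$\tau$ is small at layer $j$'' into ``$|U_{j-1}|$ is a factor $k$ larger than $|U_j|$''; this is what produces exponential growth and cannot be replicated by a single-layer pigeonhole step.
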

\begin{proof}
    We partition the vertices of $G_\f$ based on the value of $d_H(\cdot, z_0)$. For $ 0\leq i \leq n$, we define $V_i = \{ z \in \Om \mid d_H(z,z_0)= i\}$. We define $U_i = \bigcup_{j \geq i} V_j$ for $0 \leq i \leq n$. For any $z \in V_i$, the neighbours of $z$ are either in $V_{i+1}$ or $V_{i-1}$. For each $V_i$, let $S_i$ denote the set of edges between $V_i$ and $V_{i-1}$. 

    \medskip \noindent
    Let $i^*= \lceil \alpha \cdot r \rceil$, where $\alpha = \left(1-1/k\right)$. We will show that $\tau(\f, U_{i^*}) \geq \frac{1}{2n} \cdot 2^{-n+n/k}$. Note that for any $i$, the edges that have one end point in $U_{i}$ and the other in $\Om \setminus U_{i}$, is the set of edges between $V_{i}$ and $V_{i-1}$, i.e. $S_i$. Hence, from Lemma~\ref{lem:separator} inequality~\ref{eq:sep2}, we get that \[\tau(\f, U_{i^*}) \geq 2^{-n+n/k} \cdot |U_{i^*}|^{1-1/k} \cdot 2^{-\frac{|S_{i^*}|}{k|U_{i^*}|}}.\]
    \paragraph{Upper bounding $|S_{i^*}|/|U_{i^*}|$:} For any $z \in V_{i^*}$, consider any vertex $z' \in V_{i^*-1}$ that is connected to $z$. Because $z$ and $z'$ are connected, we have that there exists $m \in [n]$ such that $z'=z\oplus e_m$. Also, because $d_H(z', z_0)=d_H(z, z_0)-1$,  $m$ must be in the support of the vector amongst $z_0 \oplus z$. Hence, there are at most $i^*$ possible choices for $m$ to take. Therefore, $z$ is connected to at most $i^*$ vertices in $V_{i^*-1}$, $\frac{|S_{i^*}|}{|U_{i^*}|}$ is upper bounded by $i^*$, and
    \[ \tau(\f, U_{i^*}) \geq 2^{-n+n/k} \cdot |U_{i^*}|^{1-1/k} \cdot 2^{ -i^*/k} \;.\]
    Now, the task is to lower bound $|U_{i^*}|^{1-1/k} \cdot 2^{ -i^*/k}$ by $\frac{1}{2n}$.

    \paragraph{Lower bounding $|U_{i^*}|$:} In what follows, we will show that either $|U_{i^*}|^{1-1/k} \cdot 2^{ -i^*/k} \geq \frac{1}{2n}$, or $\tau(\f, U_j) \geq 2^{-n+n/k}$ for some $j > i^*$. As $U_j \subseteq U_{i^*}$ for any $j \geq i^*$, this would imply that $\tau(\f, U_{i^*}) \geq 2^{-n+n/k}$. 
    
    \medskip\noindent
    Assume that $\tau(\f, U_j) < 2^{-n+n/k}$, for every $i^*<j \leq r$. Lemma~\ref{lem:separator} inequality~\ref{eq:sep1} implies that
    \begin{equation}
          2^{-n+n/k}\cdot |U_{j}| \cdot 2^{-\left(  \frac{2|E(U_{j})|}{k|U_{j}|}+\frac{|S_{j}|}{k|U_{j}|}\right)} \leq \tau(\f, U_j) < 2^{-n+n/k} \text{ for all } i^*<j \leq r
    \end{equation}
    This implies that
    \begin{equation}
        |U_j| \leq 2^{\left(  \frac{2|E(U_{j})|}{k|U_{j}|}+\frac{|S_{j}|}{k|U_{j}|}\right)} \text{ for all } i^*<j \leq r
    \end{equation}
    Further, note that $\frac{2|E(U_{j})|}{k|U_{j}|}+\frac{|S_{j}|}{k|U_{j}|} \leq \frac{2|E(U_{j})|+ 2|S_{j}|}{k|U_{j}|}=\frac{2|E(U_{j-1})|}{k|U_{j}|}$. We now use the edge isoperimetric inequality in the hypercube which implies that $|E(U_{j-1})| \leq  \frac{|U_{j-1}| \log(|U_{j-1}|)}{2}$. This implies that
    \begin{equation}
        |U_{j-1}| \log(U_{j-1}) \geq k \cdot |U_j| \log(|U_j|) \text{ for all } i^*<j \leq r
    \end{equation}
    The set $U_{r}$ is non-empty, and because $\tau( \f, U_{r}) < 2^{-n+n/k}$, $S_{r}$ is non-empty by the satisfiability coding lemma. This implies that $|V_{r-1}| \geq 1$, $|U_{r-1}| \geq 2$, and $|U_{r-1}|\log(|U_{r-1}|) \geq 2$. This in turn implies $|U_{i^*}| \log(|U_{i^*}|) \geq 2k^{r-i^*-1}$ by combining the inequalities for all $i^*+1\leq j \leq r-1$. 
    
    \medskip \noindent
    For $k \geq 3$, this implies that $|U_{i^*}| \geq 2^{r-i^*}$. Because $i^*=\ceil { \frac{(k-1)r}{k} }$, this implies that $r \leq \frac{k i ^*}{k-1}+1$, implying that $|U_{i^*}| \geq \frac{1}{2} \cdot 2^{\frac{i^*}{k-1}}$. Because $\tau(\f, i^*) \geq 2^{-n+n/k} |U_{i^*}|^{1-1/k}2^{-i^*/k}$, this implies that $\tau(\f, i^*) \geq \frac{1}{2}\cdot 2^{-n+n/k}$. 
    
    \medskip \noindent
    We now consider the case that $k=2$. As $\log(|U_{i^*}|) \leq n$, 
    $$ |U_{i^*}| \geq \frac{1}{n} \cdot 2^{r-i^*} \geq \frac{1}{2n} \cdot 2^{\frac{i^*}{k-1}} \; ,$$
    which proves that $\tau(\f, U_{i^*}) \geq \frac{1}{2n} \cdot 2^{-n+n/k}$.
\end{proof}

The above lemma proves geometric sampling properties of PPZ for $\D$ and $\optm$. Now we consider $\opts$: there exists a multi-set of assignments $T$, and our goal is to find a satisfying assignment $z^*$ that maximises the sum of distances from the assignments in $T$, denoted as $\sumd(z^*,T)$. We show that with probability at least $\frac{1}{2n} \cdot 2^{-n+n/k}$, $\PPZMod$ outputs such a satisfying assignment, with an approximation factor of $\left(1-\frac{2}{k+1}\right)$. We employ the same strategy as in the proof of Lemma~\ref{lem:anchor:diam}, dividing the vertex set of $G_\f$ into levels based on $\sumd(\cdot, T)$. However, in this case, we can no longer argue that a vertex $z \in V_{i^*}$ neighbors in only $V_{i^*-1}$ and $V_{i^*+1}$. This is because changing one coordinate in $z$ does not necessarily decrease the objective function $\sumd(z, T)$ by just one. Hence, bounding the size of the separator $S_{i^*}$, where $S_{i^*}$ is the set of edges between $U_{i^*}$ and $G_{\f} \setminus U_{i^*}$ is more involved.

\begin{lemma}\label{lem:anchor:sum}
    Let $\f$ be a satisfiable $k$-CNF formula, $T \subseteq \{0,1\}^n$ be a multiset of size $t$, and $r_{\text{sum}}= \max_{z \in \Om}\sumd (z,T)$. Let $y$ and $\pi$ be chosen uniformly at random from $\{0,1\}^n$ and $\mathcal{S}_n$ respectively. The probability that $\PPZMod(\f, y, \pi)$ outputs $z^* \in \Om$ with $\sumd(z^*,T) \geq \frac{k-1}{k+1} \cdot r_{\text{sum}}$ is at least $ \frac{1}{2n} \cdot 2^{-n+n/k}$. 
\end{lemma}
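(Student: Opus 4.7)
The plan is to mimic the proof of \Cref{lem:anchor:diam}: partition $\Om$ via $V_j = \{z \in \Om : \sumd(z, T) = j\}$ and $U_j = \bigcup_{j' \geq j} V_{j'}$, pick $i^* = \lceil \alpha\, r_{\text{sum}}\rceil$ with $\alpha = (k-1)/(k+1)$, and show $\tau(\f, U_{i^*}) \geq \tfrac{1}{2n} \cdot 2^{-n+n/k}$ by combining the Separator Lemma with a suitable lower bound on $|U_{i^*}|$. As the authors highlight, the new difficulty is that flipping a single coordinate $m$ can shift $\sumd(\cdot, T)$ by up to $t$ rather than just $1$.

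To control this, I would exploit the affine form $\sumd(z, T) = \sum_m \mu_m + \sum_m c_m z_m$, where $\mu_m := |\{w \in T : w_m = 1\}|$ and $c_m := t - 2\mu_m \in [-t, t]$. Then the drops $d_m(z) := \sumd(z, T) - \sumd(z \oplus e_m, T) = (2z_m-1) c_m$ satisfy $|d_m(z)| = |c_m| \leq t$, and combining $\sum_m d_m(z) = 2\sumd(z, T) - tn$ with $\sum_m |d_m(z)| \leq tn$ yields the crucial inequality $\sum_m \max(d_m(z), 0) \leq \sumd(z, T)$. Consequently, for any $z \in V_j$ with $j \geq i^*$, the number of neighbors of $z$ lying in $V_{<i^*}$ (namely those $m$ with $d_m(z) \geq j - i^* + 1$) is at most $j/(j - i^* + 1) \leq i^*$. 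Summing over $z \in U_{i^*}$ gives $|S_{i^*}|/|U_{i^*}| \leq i^*$, and plugging this into inequality~\eqref{eq:sep2} yields
$$\tau(\f, U_{i^*}) \;\geq\; 2^{-n+n/k} \cdot |U_{i^*}|^{1-1/k} \cdot 2^{-i^*/k},$$
which beats $\tfrac{1}{2n} \cdot 2^{-n+n/k}$ as soon as $\log|U_{i^*}| \gtrsim r_{\text{sum}}/(k+1)$.

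To establish this lower bound on $|U_{i^*}|$, I would argue by contradiction as in \Cref{lem:anchor:diam}: suppose $\tau(\f, U_j) < \tfrac{1}{2n} \cdot 2^{-n+n/k}$ for every $j \in (i^*, r_{\text{sum}}]$. Since each edge changes $\sumd$ by at most $t$, every cross-edge out of $U_j$ still lies in $E(U_{j-t})$, so $|E(U_{j-t})| \geq |E(U_j)| + |S_j|$. Combined with separator inequality~\eqref{eq:sep1} and the edge-isoperimetric inequality $2|E(U_{j-t})| \leq |U_{j-t}|\log|U_{j-t}|$, this yields a step-$t$ recurrence $k \cdot |U_j|\log|U_j| < |U_{j-t}|\log|U_{j-t}|$. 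Iterating it from $j = r_{\text{sum}}$ down to $j = i^*$, with the base case $|U_{r_{\text{sum}}}| \geq 1$ upgraded to $|U_{r_{\text{sum}} - t}| \geq 2$ using that $S_{r_{\text{sum}}} \neq \emptyset$ (as in the diameter proof), delivers the required bound on $|U_{i^*}|$.

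The main obstacle is the quantitative calibration: the step-$t$ recurrence only iterates about $(r_{\text{sum}} - i^*)/t$ times, so one must check that the factor-$k$ gain per iteration is enough to overcome the $2^{-i^*/k}$ penalty. The choice $\alpha = (k-1)/(k+1)$, rather than $1 - 1/k$ as in the diameter case, is tuned precisely for this trade-off. In regimes where the coarse containment $S_j \subseteq E(U_{j-t})$ appears too weak (e.g., for large $t$), a likely refinement is to decompose $S_j = \bigsqcup_{l=1}^{t} S_j^{(l)}$ by the exact destination level $V_{j-l}$ and use the tighter $S_j^{(l)} \subseteq E(U_{j-l})$ in a weighted telescoping argument. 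The boundary cases ($k=2$, and the regime of small $r_{\text{sum}}$ where $|U_{r_{\text{sum}}}| \geq 1$ already contradicts the assumption) can be handled separately, exactly as at the end of the proof of \Cref{lem:anchor:diam}.
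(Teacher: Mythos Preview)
Your overall strategy matches the paper's, but your separator bound $|S_{i^*}|/|U_{i^*}| \leq i^*$ is too weak by a factor of roughly $t/2$, and this is fatal for the final calibration once $t \geq 3$. Concretely, with your bound the Separator Lemma gives $\tau(\f,U_{i^*}) \geq 2^{-n+n/k}\,|U_{i^*}|^{1-1/k}\,2^{-i^*/k}$, so you need $\log|U_{i^*}| \gtrsim i^*/(k-1) \approx r_{\text{sum}}/(k+1)$. But the step-$t$ recurrence you (correctly) derive only yields $\log|U_{i^*}| \gtrsim (r_{\text{sum}}-i^*)/t \approx 2r_{\text{sum}}/\bigl((k+1)t\bigr)$, which falls short whenever $t>2$. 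Your proposed refinement of the recurrence (decomposing $S_j$ by destination level) can gain at most a constant factor on that side, not a factor of $t$, so it does not close the gap.

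The missing idea is on the separator side. You apply a Markov bound to $\sum_m \max(d_m(z),0) \leq \sumd(z,T)$; the paper applies it to the strictly larger quantity $\sum_m N_m = \sumd(z,T)$, where $N_m := |\{y \in T : (z\oplus y)_m = 1\}|$. Since $d_m(z) = 2N_m - t$, a drop of at least $l$ forces $N_m \geq (t+l)/2$, so for $z \in V_{i^*+l-1}$ the number of crossing directions is at most $2(i^*+l-1)/(t+l)$ rather than your $(i^*+l-1)/l$. Taking the worst $l \in \{1,\dots,t\}$ gives $|S_{i^*}|/|U_{i^*}| \leq 2i^*/(t+1)$ in the main regime $i^* > t+1$ (and $\leq 2$ otherwise, which is handled trivially). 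This sharper bound reduces the requirement to $\log|U_{i^*}| \gtrsim 2i^*/\bigl((k-1)(t+1)\bigr) \approx 2r_{\text{sum}}/\bigl((k+1)(t+1)\bigr)$, which the step-$t$ recurrence does deliver, and the choice $\alpha=(k-1)/(k+1)$ then balances exactly as you anticipated.
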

\begin{proof}
    We partition the vertices of $G_\f$ based on the value of $\sumd(\cdot, T)$. For $ 0\leq i \leq tn$, we define $V_i = \{ z \in \Om \mid \sumd(z,T) = i\}$. We define $U_i = \bigcup_{j \geq i} V_j$ and $\overline{U}_{i}  = \Om \setminus U_i$ for $0 \leq i \leq n$. It is easy to see that for any vertex $z \in V_i$, it's neighbors are in $V_j$ for $i-t \leq j \leq i+t$. This is because adding a unit vector $e_k$ to any vector in $\{0,1\}^n$ can increase or decrease its Hamming distance to any other vector by at most $1$, and can hence adding it to $z$ can increase or decrease the value of $\sumd(z, T)$ by at most $t$. We will use $S_i$ to denote the set of edges with exactly one endpoint in $U_i$ (with the other endpoint being in $U_j$ for $i-t \leq j \leq i$). 

    Let $i^* = \lceil \alpha \cdot r_{\text{sum}} \rceil$, where $\alpha=(k-1)/(k+1)$. We will show that $\tau(\f, U_{i^*}) \geq \frac{1}{n} \cdot 2^{-n+n/k}$. From Lemma~\ref{lem:separator} inequality~\ref{eq:sep2}, we get that \[\tau(\f, U_{i^*}) \geq 2^{-n+n/k} \cdot |U_{i^*}|^{1-1/k} \cdot 2^{-\frac{|S_{i^*}|}{k|U_{i^*}|}}.\]
    
    \paragraph{Upper bounding $|S_{i^*}|/|U_{i^*}|$: }The next step is upper bound $|S_{i^*}|$. To do so, for any vertex in $U_{i^*}$, we upper bound the number of vertices in $\overline{U_{i^*}}$ it is adjacent to. We need to only consider vertices in the sets $V_{i^*}, V_{i^*+1}, \cdots, V_{i^*+t-1}$. Consider a vertex $z \in V_{i^*+l-1}$, for $1 \leq l \leq t$. For each $z' \in  \overline{U}_{i^*}$ that $z$ is adjacent to, there exists $m \in [n]$ such that $z' = z \oplus e_m$. Because $\sum_{y \in T} |e_m \oplus z \oplus y|=\sumd(z', T) \leq i^*-1$, and $\sum_{y \in T} |z \oplus y|=\sumd(z,T)=i^*+l-1$, as $z \in V_{i^*+l-1}$, we obtain the following condition on $m$. 

\begin{equation} \label{eqn:lowerbound}
    \sum_{y \in T}|e_m \oplus z \oplus y| \leq \left(\sum_{y \in T} |z \oplus y|\right) -l
\end{equation}

\noindent
Let $T' \subseteq T$ be the subset of $T$ of size $t'$ defined to be $\{y \in T \mid |e_m \oplus z \oplus y| = |z \oplus y| -1\}= \{y \in T \mid (z \oplus y)_m=1\}$. Hence, 
\begin{equation*}
    \sum_{y \in T}|e_m \oplus z \oplus y|= i^*+l-1+t-2t'\leq i^*-1 \; ,
\end{equation*}
which implies that $t' \geq (t+l)/2$. As $(z \oplus y)_m = 1$ for every $y \in T'$, and $\sum_{y \in T'}|z \oplus y| \leq \sum_{y \in T}|z \oplus y|=i^*+l-1$, this implies that there are at most $\frac{2(i^*+l-1)}{t+l}$ possible values $m$ can take in $[n]$, which implies that the number of vertices in $\overline{U_{i^*}}$ that $z$ is adjacent to is at most $\frac{2(i^*+l-1)}{t+l}$, which means we can upper bound $S_{i^*}$. 
\begin{equation*}
    |S_{i^*}| \leq \sum_{l=1}^{t} \frac{2(i^*+l-1)}{t+l} |V_{i^*+l-1}|
\end{equation*}
This implies the following upper bound on $|S_{i^*}|/|U_{i^*}|$. 
\begin{equation}
    \frac{|S_{i^*}|}{|U_{i^*}|} \leq \frac{\sum_{l=1}^{t}\frac{2(i^*+l-1)}{t+l} |V_{i^*+l-1}|}{|U_{i^*}|} \leq \frac{\sum_{l=1}^{t}\frac{2(i^*+l-1)}{t+l} |V_{i^*+l-1}|}{\sum_{l=1}^{t}|V_{i^*+l-1}|} \leq 2 \cdot \max\left\{ \frac{i^*}{t+1}, \frac{i^*+t-1}{2t} \right\}
\end{equation}
\noindent
If the maximum is achieved by $\frac{i^*+t-1}{2t}$, this implies that $i^* \leq t+1$, and $2\cdot \max\left\{ \frac{i^*}{t+1}, \frac{i^*+t-1}{2t} \right\} \leq 2$. This in turn, implies that $$\tau(\f, U_{i^*}) \geq 2^{-n+n/k}|U_{i^*}|^{1-1/k}2^{-\frac{|S_{i^*}|}{k |U_{i^*}|}} \geq 2^{-n+n/k}|U_{i^*}|^{1-1/k} 2^{-2/k} \geq \frac{1}{2} \cdot 2^{-n+n/k}\;.$$  On the other hand, if the max is achieved by $\frac{i^*}{t+1}$, this implies that 

\begin{equation}
    \tau(\f, U_{i^*}) \geq 2^{-n+n/k} \cdot |U_{i^*}|^{1-1/k}2^{-\frac{2 i^*}{k(t+1)}}
\end{equation}
\paragraph{Lower bounding $|U_{i^*}|$: } We now show that either $|U_{i^*}|^{1-1/k}2^{-\frac{2 i^*}{k(t+1)}} \geq \frac{1}{2n}$, or $\tau(\f, U_j) \geq 2^{-n+n/k}$, for some $i^* < j \leq r_{\text{sum}}$. This implies that $\tau(\f, U_{i^*}) \geq 2^{-n+n/k}$. Assume that
\begin{equation}
    \tau(\f, U_{j}) < 2^{-n(1-1/k)} \text{ for all } i^* < j \leq r_{\text{sum}}
\end{equation}
We show that this implies that $|U_{i^*}|^{1-1/k}2^{-\frac{2 i^*}{k(t+1)}} \geq \frac{1}{2n}$. Using Lemma~\ref{lem:separator} inequality~(\ref{eq:sep1}), 
\begin{equation} \label{eqn:sum:contradiction}
    2^{-n(1-1/k)}\cdot |U_{j}| \cdot 2^{-\left(  \frac{2|E(U_{j})|}{k|U_{j}|}+\frac{|S_{j}|}{k|U_{j}|}\right)}\leq \tau(\f,U_{j}) <2^{-n(1-1/k)}
\end{equation}
\noindent
This implies that
\[|U_{j}| < 2^{ \frac{2|E(U_{j})| + |S_{j}|}{k|U_{j}|}} \]

\noindent
Now note that the set $S_j$ consists of edges with one edge in the set $U_j$. Hence, and because the edges cross at most $t$ levels, this implies that $\frac{2|E(U_{j})| + |S_{j}|}{k|U_{j}|} \leq \frac{2|E(U_{j-t})|}{k|U_{j-t}|}$. Further, we can use the edge isoperimetric inequality in hypercubes which states that $|E(U_{j-t})| \leq \frac{1}{2} \cdot |U_{j-t}|\log(|U_{j-t}|)$ to show that
\begin{equation*}
    |U_{j-t}| \log(|U_{j-t}|) > k \cdot |U_{j}| \log(|U_{j}|) \text{ for all }i^* < j \leq r_{\text{sum}}
\end{equation*}
As $U_{r_{\text{sum}}}$ is non-empty, and $S_{r_{\text{sum}}}$ is also non-empty (if not, $\tau(\f, U_{r_{\text{sum}}}) \geq 2^{n-n/k}$), this implies that $|U_{r_{\text{sum}}-t}| \log(|U_{r_{\text{sum}}-t}|) \geq 2$. As $i^* \geq r_{\text{sum}} - \lfloor \frac{r_{\text{sum}}-i^*}{t} \rfloor \cdot t$, this implies that
\[ |U_{i^*}| \log(|U_{i^*}|) > 2 k^{\lfloor\frac{r_{\text{sum}}-i^*}{t}\rfloor-1} \]
\noindent
This implies that for $k\geq 3$, $|U_{i^*}| \geq 2^{\lfloor\frac{r_{\text{sum}}-i^*}{t}\rfloor}$ and because $\log(|U_{i^*}|) \leq n$, for $k=2$, $|U_{i^*}| \geq \frac{1}{n} \cdot 2^{\lfloor\frac{r_{\text{sum}}-i^*}{t}\rfloor}$. This implies that
\[ |U_{i^*}|^{1-1/k}2^{-\frac{2 i^*}{k(t+1)}}\geq \frac{1}{n} \cdot 2^{\left(1-1/k\right)\lfloor\frac{r_{\text{sum}}-i^*}{t}\rfloor-\frac{2 i^*}{k(t+1)}} \geq 2^{-\left(1-1/k\right)} \cdot \frac{1}{n} \cdot 2^{\frac{(k-1)r_{\text{sum}}-(k+1)i^*}{kt}} \geq \frac{1}{2n}\]
\end{proof}
\noindent

\subsection{Algorithmic Implications: farthest point oracles} \label{sec:alg}

We now use these dispersion properties to define farthest point oracles for the $\D$, $\PD$ and $\SPD$ problems. To begin with, we show that we can use the PPZ algorithm to design an \emph{approximate farthest point oracle}. An approximate farthest point oracle takes as input a $k$-CNF formula $\f$, an assignment $z$, and outputs a satisfying assignment $z^*$ that is approximately the farthest satisfying assignment for $\f$ from $z$. 
\begin{lemma} \label{lem:PPZFarthest}
     Let $\f$ be a $k$-CNF formula over $n$ variables and $n^{O(1)}$ clauses and $z \in \Q{n}$ be any assignment to $\f$. If $\f$ is satisfiable, there exists an algorithm that in time $O^*(2^{n-n/k})$ that outputs $z^* \in \Om$, with $d_H(z,z^*) \geq \left(1-\frac{1}{k}\right) \max_{z' \in \Om} d_H(z, z')$ with probability at least $1-2^{-2n}$.
 \end{lemma}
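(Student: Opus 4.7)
The plan is to obtain the statement as a direct boosting of Lemma~\ref{lem:anchor:diam}. That lemma already says that a single invocation of $\PPZMod(\f, y, \pi)$, with $y$ drawn uniformly from $\{0,1\}^n$ and $\pi$ drawn uniformly from $\mathcal{S}_n$, returns a satisfying assignment $z^*$ with $d_H(z, z^*) \geq (1 - 1/k) \max_{z' \in \Om} d_H(z, z')$ with probability at least $p := \frac{1}{2n} \cdot 2^{-n + n/k}$, where the lemma is applied with $z_0 = z$. Crucially, the distribution of PPZ iterations does not depend on $z$, so we use the very same PPZ subroutine as the $k$-SAT algorithm.

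The algorithm I would propose is: run $T$ independent iterations of $\PPZMod$, keep every output that is a satisfying assignment of $\f$ (checkable in polynomial time), and return the one maximizing $d_H(z, \cdot)$. I would pick $T = \lceil c \cdot n \cdot 2^{n - n/k} \rceil$ for a sufficiently large constant $c$, so that the probability no iteration produces a far-enough assignment is at most $(1 - p)^T \leq \exp(-pT) \leq 2^{-2n}$. Each iteration runs in $\poly(n)$ time, so the total runtime is $T \cdot \poly(n) = O^*(2^{n - n/k})$.

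For correctness, condition on the event that at least one iteration produces a satisfying assignment $z^*$ with $d_H(z, z^*) \geq (1 - 1/k) \max_{z' \in \Om} d_H(z, z')$; this event has probability at least $1 - 2^{-2n}$. On this event, the maintained champion has Hamming distance from $z$ at least as large as $z^*$'s, hence also satisfies the bound.

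There is no real obstacle here beyond invoking Lemma~\ref{lem:anchor:diam}: the only thing to verify is that the polynomial overhead from testing satisfaction and from the $\frac{1}{2n}$ factor in $p$ is absorbed into the $O^*$ notation, which it is. The same template (repeat PPZ $O^*(2^{n-n/k})$ times, keep the best iterate) will also be the workhorse for the algorithmic corollaries for $\optm$ and $\opts$ in the subsequent section, using Lemma~\ref{lem:anchor:sum} in place of Lemma~\ref{lem:anchor:diam}.
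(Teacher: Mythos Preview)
Your approach is exactly the paper's: repeat $\PPZMod$ independently, invoke Lemma~\ref{lem:anchor:diam} for the per-iteration success probability, and return the best satisfying assignment found. The paper writes this out as an explicit algorithm (\textsf{PPZ-Farthest}) with $4n^2 \cdot 2^{n-n/k}$ repetitions.

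There is one quantitative slip in your choice of $T$. With $p = \tfrac{1}{2n}\cdot 2^{-n+n/k}$ and $T = c\cdot n \cdot 2^{n-n/k}$ for a \emph{constant} $c$, you get $pT = c/2$, so $(1-p)^T \le e^{-c/2}$ is only a constant, not $2^{-2n}$. You need $T$ on the order of $n^2 \cdot 2^{n-n/k}$ (as the paper takes) so that $pT = \Theta(n)$ and $e^{-pT} \le 2^{-2n}$. This extra factor of $n$ is of course still absorbed by $O^*$, so the runtime claim is unaffected.
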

 \begin{proof}
    Consider the following algorithm:

    \begin{algorithm}[H] \label{alg:PPZFarthest}
 \KwIn{A $k$-CNF formula $\f$,$z \in \Q{n}$}
 \KwOut{$z^* \in \Om^s$ with $d_H(z,z^*) \geq \left(1 - \frac{1}{k}\right) \max_{z' \in \Om} d_H(z, z')$ if $\f$ is satisfiable, $\perp$ otherwise}
 Set $z^*=\perp, D=0$. \\
 
 \SetKwFor{RepeatTimes}{repeat}{times:}{endfor}
    \RepeatTimes{$4n^2\cdot 2^{n-n/k}$}{
    Sample $y \in \Q{n}, \pi \in \mathcal{S}_n$ independently and uniformly at random\; 
    $u:=\PPZMod(\f, y, \pi)$ \;
    \If{$u$ satisfies $\f$ and $d_H(z,u) > D$}{
        $z^* \gets u, D \gets d_H(z,u)$.
    }
    }
    Output $z^*$
 \caption{\textsf{PPZ-Farthest}}
 \end{algorithm}
 \Cref{lem:anchor:sum} implies that with probability at least $\frac{1}{2n} 2^{n-n/k}$, $\PPZMod(\f, y,\pi)$ outputs $z^* \in \Om$ with $d_H(z, z^*) \geq \left(1-\frac{1}{k}\right) \max_{z' \in \Om}d_H(z,z')$. Hence, the probability that in $2n^2\cdot 2^{n-n/k}$ iterations of $\PPZMod$, the algorithm outputs such a $z^*$ is at least $1-\left(1- \frac{1}{2n} 2^{n-n/k} \right)^{4n^2 \cdot 2^{-n+n/k} }\geq 1-e^{-2n}$

 \end{proof}
 
Next, we can define a farthest point oracle for $\sumd$. 
 \begin{lemma} \label{lem:PPZFarthestSum}
     Let $\f$ be a $k$-CNF formula over $n$ variables and $n^{O(1)}$ clauses and $S\subseteq \{0,1\}^n$ be a multiset of size $s$. There exists an algorithm running in time $O^*(s \cdot 2^{n-n/k})$ that, if $\f$ is satisfiable, outputs $z^* \in \Om$, with $\sumd(S,z^*) \geq \left(\frac{k-1}{k+1}\right) \max_{z \in \Om} \sumd(z, S)$ with probability $1-2^{-2n}$.
 \end{lemma}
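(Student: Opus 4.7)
The plan is to mimic \textsf{PPZ-Farthest} (Algorithm~\ref{alg:PPZFarthest}), replacing the $d_H(z, \cdot)$ objective with the $\sumd(S, \cdot)$ objective, and invoking \Cref{lem:anchor:sum} in place of \Cref{lem:anchor:diam}. Concretely, repeat the following $T = 4n^2 \cdot 2^{n - n/k}$ times: sample $y \in \Q{n}$ and $\pi \in \mathcal{S}_n$ independently and uniformly, compute $u := \PPZMod(\f, y, \pi)$, check whether $u \in \Om$, and if so update the running best $z^*$ whenever $\sumd(S, u)$ exceeds the current best. Output the final $z^*$.

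For correctness, \Cref{lem:anchor:sum} applied to the multiset $T := S$ (with $r_{\text{sum}} = \max_{z' \in \Om} \sumd(S, z')$) guarantees that in any single iteration, the probability that $\PPZMod(\f, y, \pi)$ returns a satisfying assignment $u$ with $\sumd(S, u) \geq \tfrac{k-1}{k+1} r_{\text{sum}}$ is at least $p := \tfrac{1}{2n} \cdot 2^{-n + n/k}$. Since the $T$ iterations are independent, the probability that no iteration produces such a $u$ is at most
\[
(1 - p)^T \leq \exp(-pT) = \exp(-2n) \leq 2^{-2n},
\]
so with probability at least $1 - 2^{-2n}$ the algorithm outputs a $z^*$ meeting the required bound.

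For the running time, each iteration consists of one invocation of $\PPZMod$, which runs in $n^{O(1)}$ time, a single check that $u \in \Om$ (which takes $\poly(n)$ time since $\f$ has $n^{O(1)}$ clauses), and the evaluation of $\sumd(S, u) = \sum_{y \in S} d_H(y, u)$, which takes $O(sn)$ time because $|S| = s$. Multiplying these per-iteration costs by $T = 4n^2 \cdot 2^{n - n/k}$ yields a total running time of $O^*(s \cdot 2^{n - n/k})$, as claimed. There is no real obstacle here beyond book-keeping: the whole statement is essentially a black-box amplification of \Cref{lem:anchor:sum}, and the only place where $s$ enters the runtime (rather than being absorbed into the $O^*$) is the per-iteration evaluation of $\sumd(S, u)$.
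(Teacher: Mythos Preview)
Your proposal is correct and essentially identical to the paper's own proof: both repeat $\PPZMod$ for $4n^2 \cdot 2^{n-n/k}$ iterations, track the best $\sumd(S,u)$ seen, invoke \Cref{lem:anchor:sum} for the per-iteration success probability, and amplify via independence to get failure probability at most $e^{-2n} \le 2^{-2n}$; the runtime accounting (with the factor of $s$ coming from evaluating $\sumd(S,u)$) is also the same.
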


 \begin{proof}
    Consider the following algorithm:
    
\begin{algorithm}[H] \label{alg:PPZFarthestSum}
 \KwIn{A $k$-CNF formula $\f$,$S \subseteq \{0,1\}^n, |S|=s$}
 \KwOut{$z^* \in \Om^s$ with $\sumd(S,z^*) \geq \left(\frac{k-1}{k+1}\right) \max_{z \in \Om} \sumd(z, S)$ if $\f$ is satisfiable, $\perp$ otherwise}
 Set $z^*=\perp, D=0$. \\
 \SetKwFor{RepeatTimes}{repeat}{times:}{endfor}
    \RepeatTimes{$4n^2\cdot 2^{n-n/k}$}{
    Sample $y \in \Q{n}, \pi \in \mathcal{S}_n$ independently and uniformly at random\; 
    $u:=\PPZMod(\f, y, \pi)$ \;
    \If{$u$ satisfies $\f$ and $\sumd(S,u) \geq D$}{
        $z^* \gets u, D \gets \sumd(S,u)$.
    }
    }
    Output $z^*$
 \caption{\textsf{PPZ-Farthest-Sum}}
 \end{algorithm}

    In the $i$-th iteration in the loop of the algorithm, let $y_i, \pi_i$ be the sampled assignment and permutation respectively and let $u_i:= \PPZMod(\f, y_i, \pi_i)$. By \Cref{lem:anchor:sum}, for each $i$, $u_i \in \Om$, and $\sumd(u_i, S) \geq \frac{k-1}{k+1} \cdot \max_{z \in \Om} \sumd(z, S)$ with probability at least $ \frac{1}{2n} \cdot 2^{-n+n/k}$. Because $y, \pi$ in each iteration are sampled independently, the probability that there exists $i \in [4n^2\cdot 2^{n-n/k}]$ such that $\sumd(u_i, S) \geq \frac{k-1}{k+1} \cdot \max_{z \in \Om} \sumd(z, S)$ is at least $1-\left(1-\frac{1}{2n} \cdot 2^{-n+n/k}\right)^{4n^2\cdot 2^{n-n/k}} \geq 1- e^{-4n^2\cdot 2^{n-n/k} \cdot \frac{1}{2n} \cdot 2^{-n+n/k}} = 1-e^{-2n}$. Hence, with probability at least $1-e^{-2n}$, \textsf{PPZ-Farthest-Sum}$(\f,S)$ outputs $z^* \in \Om$, with $\sumd(S,z^*) \geq \left(\frac{k-1}{k+1}\right) \max_{z \in \Om} \sumd(z, S)$. The running time bound follows from the fact that the algorithm contains $4n^2 \cdot 2^{n-n/k}$ iterations, and each iteration takes $s \cdot n^{O(1)}$ time (to compute $\sumd$ and to run \PPZMod).   
 \end{proof}
 \medskip 
 Next, we give a farthest point oracle for $\mind$.
 \begin{lemma} \label{lem:PPZFarthestMin}
     Let $\f$ be a $k$-CNF formula over $n$ variables and $n^{O(1)}$ clauses and $S \subseteq \set{0,1}^n$ be a set of size $s$. There exists an algorithm running in time $O^*(s^2 \cdot 2^{n-n/k})$ that, if $\f$ is satisfiable, there $\textsf{PPZ-Farthest-Min}(\f,s)$ outputs $z^* \in \Om$, with $\mind(S,z^*) \geq \left(1-\frac{1}{kH^{-1}(1-1/k)}\right) \max_{z \in \Om} \mind(z, S)$ with probability at least $1-2^{-2n}$. 
 \end{lemma}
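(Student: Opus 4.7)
My plan is to combine a brute-force enumeration around $S$ with the geometric sampling guarantee of PPZ from Lemma~\ref{lem:anchor:diam}, using the antipode trick sketched in Section~\ref{sec:techniques}. Let $\rho := H^{-1}(1 - 1/k)$, so that a Hamming ball of radius $\rho n$ in $\{0,1\}^n$ has volume at most $2^{n(1-1/k)}$; this is the key quantity that matches the PPZ runtime exponent and drives the approximation factor.

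First I would enumerate, for each $x \in S$, every $y \in \{0,1\}^n$ with $d_H(x,y) \leq \rho n$, check if $y \in \Om$ in polynomial time, and track the one with largest $\mind(S, \cdot)$. By the entropy-volume bound this phase runs in $O^*(s^2 \cdot 2^{n(1-1/k)})$ time. In addition I would run $\Theta(n^2 \cdot 2^{n - n/k})$ independent iterations of $\PPZMod(\f, y, \pi)$, compute $\mind(S, \cdot)$ on each satisfying output, and again track the best. This second phase takes $O^*(s \cdot n^2 \cdot 2^{n-n/k})$ time, and the final output is the better of the two.

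For correctness, let $z^{\text{opt}} \in \Om$ realize $r := \max_{z' \in \Om} \mind(S, z')$. If $r \leq \rho n$, then $z^{\text{opt}}$ lies inside a Hamming ball of radius $\rho n$ around some $x \in S$, so the enumeration phase recovers $z^{\text{opt}}$ exactly. Otherwise $r > \rho n$, and I would invoke Lemma~\ref{lem:anchor:diam} with $z_0$ equal to the bitwise complement of $z^{\text{opt}}$; since $d_H(z_0, z^{\text{opt}}) = n$, the maximum distance from $z_0$ to any satisfying assignment is $n$. The essential point, emphasized in the discussion of Lemma~\ref{intro_ppz_brief}, is that the distribution of each $\PPZMod$ iteration does \emph{not} depend on $z_0$, so even though $z^{\text{opt}}$ is unknown, a single iteration outputs $u \in \Om$ with $d_H(z_0, u) \geq (1 - 1/k)n$, equivalently $d_H(z^{\text{opt}}, u) \leq n/k$, with probability at least $\tfrac{1}{2n} \cdot 2^{-n+n/k}$. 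The triangle inequality then gives $\mind(S, u) \geq r - n/k$, and a short algebraic check shows $r > \rho n$ implies $r - n/k \geq \left(1 - \tfrac{1}{k\rho}\right) r$, which is the claimed approximation factor.

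Amplifying over the $\Theta(n^2 \cdot 2^{n-n/k})$ independent PPZ trials drives the failure probability below $2^{-2n}$ via the standard $\left(1 - \tfrac{1}{2n} \cdot 2^{-n+n/k}\right)^{\Theta(n^2 \cdot 2^{n-n/k})} \leq e^{-\Omega(n)}$ bound. The main conceptual obstacle, and the reason this proof cannot just mirror the $\D$ or $\opts$ arguments, is that $\mind(S, \cdot)$ is a minimum over Hamming distances, so its level sets in $G_\f$ do not admit a clean application of the separator lemma (Lemma~\ref{lem:separator}) the way the sublevel sets for $\D$ and $\sumd$ do. The two-regime split is what lets us sidestep that: either the optimum is so close to $S$ that volume-based brute force wins within PPZ's budget, or it is so far that calling PPZ and relying on its $z_0$-oblivious concentration around $z^{\text{opt}}$ (through the antipode) delivers an assignment within $n/k$ of the optimum, which suffices.
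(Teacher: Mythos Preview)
Your proposal is correct and matches the paper's own proof essentially step for step: the paper likewise splits into the case where the optimum lies within a ball of radius roughly $H^{-1}(1-1/k)\cdot n$ around some point of $S$ (handled by brute force within the $2^{n-n/k}$ budget) and the complementary case (handled by invoking \Cref{lem:anchor:diam} via the antipode of the optimum to land within $n/k$ of it, then applying the triangle inequality). The only cosmetic difference is that the paper defines the enumeration radius $R$ as the largest integer with $\sum_{i\le R}\binom{n}{i}\le 2^{n-n/k}$ and then argues $R+1\ge H^{-1}(1-1/k)\,n$, whereas you take $\rho n$ directly; the arguments are otherwise identical.
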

 
 \begin{proof}
    Consider the following algorithm: 
    
\begin{algorithm}[H]
 \label{alg:PPZFarthestMin}
 \KwIn{A $k$-CNF formula $\f$,$S \subseteq \{0,1\}^n, |S|=s, r \in [n]$}
 \KwOut{$z^* \in \Om$, with $\mind(S,z^*) \geq \left(1-\frac{1}{kH^{-1}(1-1/k)}\right) \max_{z \in \Om} \mind(z, S)$ if $\f$ is satisfiable, $\perp$ otherwise.}
 \SetKwFor{RepeatTimes}{repeat}{times:}{endfor}
 Set $z^*=\perp, D=0$. \\
 Let $R$ be the largest $r \in [n]$ such that $\sum_{i = 0}^r\binom{n}{i} \leq 2^{n-n/k}$. \\
 \For{$z \in S$}{
    \For{$u \in \{0,1\}^n: d_H(u,z) \leq R$}{
    \If{$\mind(u,S) > D$ and $u$ satisfies $\f$}{
        $z^* \gets u, D \gets \mind(u,S)$
    }
    }
 }
    \RepeatTimes{$4n^2\cdot 2^{n-n/k}$}{
    Sample $y \in \Q{n}, \pi \in \mathcal{S}_n$ independently and uniformly at random\; 
    $u:=\PPZMod(\f, y, \pi)$ \;
    \If{$u$ satisfies $\f$ and $\mind(S,u) > D$}{
        $z^* \gets u, D \gets \mind(u,S)$
    }
    }
 \caption{\textsf{PPZ-Farthest-Min}}
 \end{algorithm}
For $0 \leq x \leq \frac{1}{2}$, let $H(x):= - x \log(x) - (1-x) \log(x)$. And for $0 \leq y \leq 1$, we define $H^{-1}(y)$ to be the unique $0\leq x \leq \frac{1}{2}$ such that $H(x)=y$. It is known that for any $r$, $\sum_{j=0}^r \binom{n}{j} \leq 2^{n H(r/n)}$.

     Suppose that there exists $z_0 \in \Om$ such that  $\mind(z_0,S)=r$. This implies that there exists $z \in S$ such that $d_H(z,z_0)=r$. If $r \leq R$, this implies that the exhaustive search in the hamming sphere of radius $r$ around each $z$ will find $z_0$. 

     Next, we consider the case that $r \geq R+1$. Firstly, because $R$ is the largest $r \in [n]$ such that $\sum_{i = 0}^r\binom{n}{i} \leq 2^{n-n/k}$, this implies that $\sum_{i = 0}^{R+1}\binom{n}{i} > 2^{n-n/k}$. Using the fact that $\sum_{i = 0}^{R+1}\binom{n}{i} \leq 2^{n H((R+1)/n)}$, and the definition of $H^{-1}$, this implies that $R+1 \geq n \cdot H^{-1}(1-1/k)$. \Cref{lem:anchor:diam} implies that with $y, \pi$ chosen uniformly at random and independently from $\{0,1\}^n$ and $\mathcal{S}_n$, $\PPZMod(\f, y, \pi)$ outputs $z^* \in \Om$ with $d_H(z_0,z^*) \leq \frac{n}{k}$ with probability at least $\frac{1}{2n} \cdot 2^{-n+n/k}$. The triangle inequality then implies that $\mind(z^*,S) \geq r-n/k$. Further, because $r \geq R+1 \geq H^{-1}(1-1/k) \cdot n$, this implies that $n \leq \frac{r}{H^{-1}(1-1/k)}$. Hence, $\mind(z^*,S) \geq \left(1-\frac{1}{kH^{-1}(1-1/k)}\right) r$ . Hence, repeating this $4n^2 \cdot 2^{n-n/k}$ times ensures that with probability $1-2^{-2n}$, the algorithm outputs $z^* \in \Om$ such that $\mind(z^*, S) \geq \left(1-\frac{1}{kH^{-1}(1-1/k)}\right) \cdot \max_{z \in \Om} \mind(z, S)$. The running time bound follows from the fact that the algorithm uses the $\PPZMod$ subroutine $4n^2 \cdot 2^{n-n/k}$ times, and computes the function $\mind(\cdot, \cdot)$ at most $O^*\left(s \cdot 2^{n-n/k}\right)$ times.
 \end{proof}
 \subsection{PPZ-based algorithms for dispersion: Proofs of \Cref{thm:ppz-for-dia}, \Cref{thm:ppz-for-sumdisp} and \Cref{thm:ppz-for-mindisp}} \label{sec:finalppz}  
 
 \subsubsection*{Proof of \Cref{thm:ppz-for-dia}}\Cref{lem:PPZFarthest} implies that the algorithm $\textsf{PPZ-Farthest}$ behaves like a $(1-1/k)$-approximate farthest point oracle for $k$-SAT that runs in time $O^*(2^{n-n/k})$. That is, it takes as input a $k$-CNF formula $\f$ and $z \in \Q{n}$, and with probability $1-2^{-2n}$, outputs $z^* \in \Om$ such that $d_H(z, z^*) \geq (1-1/k) \cdot \max_{z' \in \Om} d_H(z,z')$. Hence, we can use the following procedure to output a $\frac{1}{2} \left(1-1/k\right)$ approximation to $\f$: Use the PPZ algorithm to find one satisfying assignment $z_1^*$ to $\f$, and then output $z_2^*=\textsf{PPZ-Farthest}(\f, z_1^*)$. The triangle inequality then implies that $z_1^*$ and $z_2^*$, will satisfy $d_H(z_1^*, z_2^*) \geq \frac{1}{2} (1-1/k) \cdot \D(\f)$.
 \subsubsection*{Proof of \Cref{thm:ppz-for-sumdisp}}
\begin{restatable}{lemma}{sumdispersion} \label{lem:sumdispersion}
    Suppose there exists a $1-\delta$-approximate farthest point oracle, $\mathcal{O}$ that takes a $k$-CNF formula $\f$ and a multi-set $S \subseteq \set{0,1}^n$ and with probability $1-2^{-2n}$, outputs $z^* \in \Om$ such that $\sumd(S, z^*) \geq (1-\delta) \cdot \max_{z' \in \Om} \sumd(S, z')$. Then, there exists an algorithm taking $\f$ and $s$ as input that uses $s^3 n$ calls to $\mathcal{O}$ (and an additional $s^4 n^{O(1)}$ overhead) that outputs a multi-set $S^* \subseteq \Om$ with $\SPD(S^*) \geq \max \{\frac{1}{2}(1-\delta) ,\frac{(1-\delta)(s-1)}{(1+\delta)s+(1-\delta)} \}\cdot \opts(\f,s)$ with probability $1-o(1)$.
\end{restatable}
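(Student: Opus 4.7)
The plan is to have the algorithm run two sub-procedures on $\f$ and output whichever resulting multiset achieves the larger $\SPD$ value. The first sub-procedure is a Gonzalez-style greedy farthest-insertion: initialize $S_0 = \{z_0\}$ for some $z_0 \in \Om$, and for $i = 1, \ldots, s-1$, set $z_i := \mathcal{O}(\f, S_{i-1})$ and $S_i := S_{i-1} \cup \{z_i\}$. The second sub-procedure is single-swap local search starting from the greedy output $S$: while there exists $i \in [s]$ for which $z^\star := \mathcal{O}(\f, S \setminus \{z_i\})$ satisfies $\sumd(S \setminus \{z_i\}, z^\star) > \sumd(S \setminus \{z_i\}, z_i)$, swap $z_i$ for $z^\star$. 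Since $\SPD$ takes integer values in $[0, n\binom{s}{2}]$ and each successful swap strictly increases it by at least one, there are at most $O(s^2 n)$ swaps; scanning $s$ indices per iteration with one oracle query each gives $s^3 n$ oracle calls total.

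For the greedy branch, I would adapt the standard farthest-insertion analysis for max-sum dispersion: the oracle guarantee $\sumd(S_{i-1}, z_i) \geq (1-\delta)\sumd(S_{i-1}, t)$ for every $t \in \Om$ (in particular any element of an optimal multiset $T^\star = \{t_1, \ldots, t_s\}$), combined with a telescoping argument across a matching $\pi : [s] \to [s]$ between $S$ and $T^\star$ and repeated applications of the Hamming-metric triangle inequality, yields $\SPD(S_{\text{greedy}}) \geq \tfrac{1}{2}(1-\delta)\,\opts(\f,s)$.

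For the local-search branch, at termination we have, for every $i \in [s]$ and $t \in \Om$, the local-optimality condition $\sumd(S \setminus \{z_i\}, z_i) \geq (1-\delta)\,\sumd(S \setminus \{z_i\}, t)$. Following Cevallos' analysis~\cite{cevallos2019improved} (adapted to the presence of the approximation factor $\delta$), I would fix the bijection $\pi : [s] \to [s]$ minimizing the matching cost $\sum_i c_i := \sum_i d_H(z_i, t_{\pi(i)})$, sum the local-optimality inequality over $i$ with $t = t_{\pi(i)}$, and invoke the triangle inequality $d_H(z_j, t_{\pi(i)}) \geq d_H(t_{\pi(j)}, t_{\pi(i)}) - c_j$ to obtain $2\,\SPD(S) \geq (1-\delta)\bigl[\,2\,\SPD(T^\star) - (s-1)\sum_j c_j\,\bigr]$. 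Combining this with the averaging bound $\sum_j c_j \leq W/s$ (where $W := \sum_{z \in S,\, t \in T^\star} d_H(z,t)$) and with the upper bound on $W$ that comes from summing local-optimality over all pairs $(i,t) \in [s]\times T^\star$, and then balancing these inequalities carefully, should deliver the claimed constant $(1-\delta)(s-1)/((1+\delta)s+(1-\delta))$. Pushing the chain all the way to this sharper $s$-dependent constant, rather than the weaker $(1-\delta)/(2-\delta)$ that falls out of a naive substitution, is the main technical obstacle, and is where Cevallos' tight counting is essential.

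Finally, the greedy sub-procedure uses $s-1$ oracle calls, the local search uses $s^3 n$ oracle calls, and the $s^4 n^{O(1)}$ overhead accounts for computing and maintaining $\sumd$ values across all $s$ candidate swaps per iteration. A union bound over the $O(s^3 n)$ oracle calls---each failing with probability at most $2^{-2n}$---gives overall success probability $1 - o(1)$; conditioned on success, the local-optimality condition at termination is as stated, and outputting the better of the greedy and local-search sets yields $\SPD(S^\star)$ at least the maximum of the two bounds.
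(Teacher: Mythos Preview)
Your overall algorithm and oracle-call accounting match the paper, and your greedy analysis---though vague---would go through via a standard induction (the paper does it cleanly via the multiset bound $\exists\,b\in B:\ \sumd(A,b)\ge \tfrac{|A|}{|B|(|B|-1)}\SPD(B)$). The problem is in the local-search branch.

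The chain you outline---sum the local-optimality inequalities along a matching, apply the triangle inequality $d_H(z_j,t_{\pi(i)})\ge d_H(t_{\pi(j)},t_{\pi(i)})-c_j$, bound $\sum_j c_j\le W/s$ by averaging, and bound $W$ from above via the all-pairs local-optimality sum---collapses to $\SPD(S)\ge\tfrac{1-\delta}{2}\,\opts(\f,s)$, which is no better than greedy. (Just substitute: the $W$-upper-bound gives $\sum_j c_j \le \tfrac{2}{(1-\delta)(s-1)}\SPD(S)$, and plugging this into your inequality $(A)$ yields $4\SPD(S)\ge 2(1-\delta)\SPD(T^\star)$.) Triangle inequality alone cannot do better: any bound of the form $W\ge c\cdot\SPD(T^\star)$ obtainable from triangle gives at best $c=\tfrac{s}{s-1}$, which again only recovers $\tfrac{1-\delta}{2}$.

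What you are missing is not ``tight counting'' but a structural property of the Hamming metric: it is of \emph{negative type}, which gives the inequality
\[
W \;=\; \sum_{z\in S,\;t\in T^\star} d_H(z,t) \;\ge\; \SPD(S)+\SPD(T^\star).
\]
The paper's proof sums local optimality over \emph{all} pairs $(x,y)\in S\times T^\star$ (not a matching) to get $2s\,\SPD(S)\ge (1-\delta)(s-1)W$, then substitutes the negative-type lower bound on $W$. Rearranging $2s\,\SPD(S)\ge (1-\delta)(s-1)(\SPD(S)+\SPD(T^\star))$ gives exactly $\tfrac{(1-\delta)(s-1)}{(1+\delta)s+(1-\delta)}$ in two lines. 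Without negative type (or an equivalent $\ell_1$-embedding argument) your matching route will stay stuck at $\tfrac{1-\delta}{2}$.
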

\begin{proof}
    We defer the proof to \Cref{sec:dispersion}. 
\end{proof}

We note that \Cref{lem:PPZFarthestSum} implies that the algorithm $\textsf{PPZ-Farthest-Sum}$ is a $1-\delta$ approximate farthest point oracle, as defined in \Cref{lem:sumdispersion}, for $\delta=\frac{2}{k+1}$. Hence, we can use $\textsf{PPZ-Farthest-Sum}$ as a black box in the algorithm defined by \Cref{lem:sumdispersion}. This completes the proof of \Cref{thm:ppz-for-sumdisp}.
\subsubsection*{Proof of \Cref{thm:ppz-for-mindisp}} 
\begin{restatable}{lemma}{mindispersion}
\label{lem:mindispersion}
    Suppose there exists a $1-\delta$-approximate farthest point oracle, $\mathcal{O}$ that takes a $k$-CNF formula $\f$ and a set $S \subseteq \set{0,1}^n$ as input and with probability $1-2^{-2n}$, outputs $z^* \in \Om$ such that $\mind(S, z^*) \geq (1-\delta) \cdot \max_{z' \in \Om} \mind(S, z')$. Then, there exists an algorithm taking $\f$ and $s$ as input that uses $s$ calls to $\mathcal{O}$ (and an additional $sn^{O(1)}$ overhead) that outputs a set $S^* \subseteq \Om$ with $\PD(S^*) \geq \frac{1}{2} (1-\delta)\cdot \optm(\f,s)$ with probability $1-o(1)$.
\end{restatable}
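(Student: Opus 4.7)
The plan is to adapt the classical Gonzalez farthest-insertion procedure for min-dispersion, replacing the exact farthest-point computation with the approximate oracle $\mathcal{O}$. First I would produce an initial satisfying assignment $z_1 \in \Om$ by calling $\mathcal{O}(\f, \{0^n\})$, since the oracle's guarantee already places the output in $\Om$ regardless of the anchor set. Then, for $i = 2, 3, \ldots, s$, I would set $z_i \gets \mathcal{O}(\f, \{z_1, \ldots, z_{i-1}\})$, and finally output $S^* = \{z_1, \ldots, z_s\}$. This uses exactly $s$ oracle calls plus $s n^{O(1)}$ bookkeeping overhead, matching the claimed budget.

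For correctness, let $D^* = \optm(\f, s)$ and fix an optimal $T^* \subseteq \Om$ with $|T^*| = s$ and $\PD(T^*) = D^*$. The main claim is that for every $i \in \{2, \ldots, s\}$ one has $\mind(\{z_1, \ldots, z_{i-1}\}, z_i) \geq \tfrac{1}{2}(1-\delta) D^*$, from which $\PD(S^*) \geq \tfrac{1}{2}(1-\delta)\, \optm(\f, s)$ follows by taking the minimum over all pairs $z_j, z_i$ with $j < i$. To prove the inner claim I would map each $t \in T^*$ to its nearest point in the anchor set $\{z_1, \ldots, z_{i-1}\}$. Since $|T^*| = s \geq i > i-1$, pigeonhole yields two distinct points $t_a, t_b \in T^*$ that share the same nearest anchor $z_\ell$. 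The triangle inequality then gives $D^* \leq d_H(t_a, t_b) \leq d_H(t_a, z_\ell) + d_H(z_\ell, t_b)$, so at least one of the two summands is $\geq D^*/2$; say $d_H(t_a, z_\ell) \geq D^*/2$. Because $z_\ell$ is the nearest anchor to $t_a$, this gives $\mind(\{z_1, \ldots, z_{i-1}\}, t_a) \geq D^*/2$, and hence $\max_{z' \in \Om} \mind(\{z_1, \ldots, z_{i-1}\}, z') \geq D^*/2$. Invoking the oracle's $(1-\delta)$-approximation guarantee then delivers $\mind(\{z_1, \ldots, z_{i-1}\}, z_i) \geq (1-\delta)\, D^*/2$, as required.

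For the success probability, each of the $s$ oracle calls succeeds with probability at least $1 - 2^{-2n}$, so a union bound bounds the total failure probability by $s \cdot 2^{-2n} \leq 2^{-n} = o(1)$ (using $s \leq |\Om| \leq 2^n$). There is no genuine obstacle here beyond the familiar Gonzalez pigeonhole-plus-triangle-inequality argument; the one detail worth verifying is that the pigeonhole step remains valid at $i = 2$, where the anchor set is a singleton — this is automatic, since any two points of $T^*$ are forced to share that lone anchor as their nearest point. It is worth emphasizing that the $(1-\delta)$ factor does not compound across iterations, because each iteration's lower bound is proven against the fixed optimum $D^*$ rather than against the dispersion achieved in previous iterations.
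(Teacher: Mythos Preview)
Your proposal is correct and follows essentially the same approach as the paper: Gonzalez farthest-insertion combined with the pigeonhole-plus-triangle-inequality observation (which the paper states separately as \Cref{obs:farthest-first}). The only cosmetic differences are that the paper obtains the first point via a separate SAT solver call rather than via $\mathcal{O}(\f,\{0^n\})$, and that the paper phrases the inductive invariant against $\optm(\f,i)$ at step $i$ whereas you compare directly against $\optm(\f,s)$; neither affects the argument or the final bound.
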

\begin{proof}
    We defer the proof to \Cref{sec:dispersion}.
\end{proof}
We note that \Cref{lem:PPZFarthestMin} implies that the algorithm $\textsf{PPZ-Farthest-Min}$ is a $1-\delta$ approximate farthest point oracle as defined in~\Cref{lem:mindispersion}, for $\delta=\frac{1}{k H^{-1}(1-1/k)}$. Hence, we can use $\textsf{PPZ-Farthest-Min}$ as a black box in the algorithm defined by~\Cref{lem:mindispersion}. This completes the proof of \Cref{thm:ppz-for-mindisp}. 
\section{From approximate local search to dispersion -- \sch's algorithm}\label{sec:sch}

In this section we prove a generalization of \Cref{thm:sch-for-dia-fixedapprox} and we state and prove theorems with the same running time guarantees (up to a factor polynomial in $s$) to approximate $\optm(\f,s),\optm(\f,s, \geq W), \optm(\f,s, \leq W)$ as well as $\opts(\f,s)$. We note that the algorithm for $\optm(\f,s)$ follows as special cases of the algorithms for $\optm(\f,s, \geq W), \optm(\f,s, \leq W)$. 

\medskip \noindent
To start with, we define the quantity $\tau(\delta,k, n)$ to be $\frac{2^n (k-1)^{R}}{\binom{n}{R}} \text{ where } R=\floor{\frac{\delta n}{2 \left(2+\delta +\frac{2}{k-2}\right)}}$, for each $\delta \in \left( 0, \min \left\{1, \frac{4(k-1)}{(k-2)^2} \right\}\right]$. From now on, we assume that $k \geq 3$ unless stated otherwise. 

\begin{restatable}[\sch for \D: Generalization of \Cref{thm:sch-for-dia-fixedapprox}]{theorem}{schdiam}
    \label{thm:sch-for-dia}
    Let $\f$ be a $k$-CNF formula on $n$ variables. For each $0 < \delta \leq \min\{1, \frac{4(k-1)}{(k-2)^2} \}$, there exists an algorithm taking $\f$ as input and running in time $O^*\left( \tau(\delta, k, n)\right)$ that outputs $z_1^*, z_2^* \in \Om$ such that $d_H(z_1^*, z_2^*) \geq \frac{1}{2}\cdot \left(1-\delta\right) \D(\f)$, if $\f$ is satisfiable.

\end{restatable}

To make the above result more concrete, we first observe that we can define $a_{k,\delta}$, such that $\tau(\delta,k,n)=O^*(a_{k,\delta}^n)$. Now, for $k=7$ and $k=4$, we plot $a_{k,\delta}$ as a function of $\delta$ and compare it with what the PPZ algorithm achieves. Hence, this algorithm provides a smooth trade-off between the approximation factor (i.e., $(1-\delta)$) and running time. We note that for $k=7$, we can achieve the \sch running time for a non-trivial value of $\delta$, but for $k=4$, we cannot do so, even for $\delta$ very close to $1$. Note that this algorithm still achieves non-trivial savings over a brute force search for all values of $\delta$, and in particular, it can be faster than the PPZ algorithm (albeit with a worse approximation factor). 
\begin{figure}[ht]
    \centering
    \includegraphics[scale=0.5]{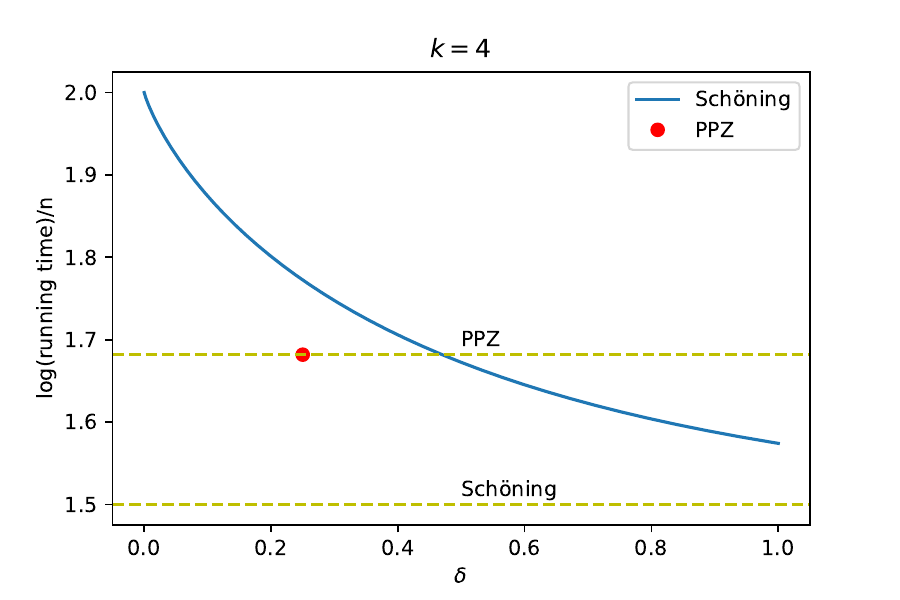}
    \includegraphics[scale=0.5]{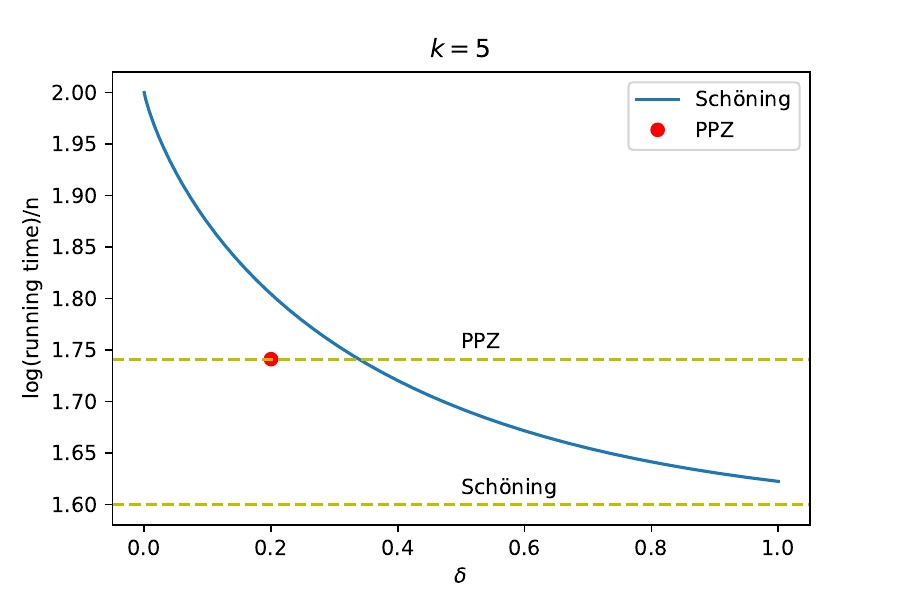}
    \includegraphics[scale=0.5]{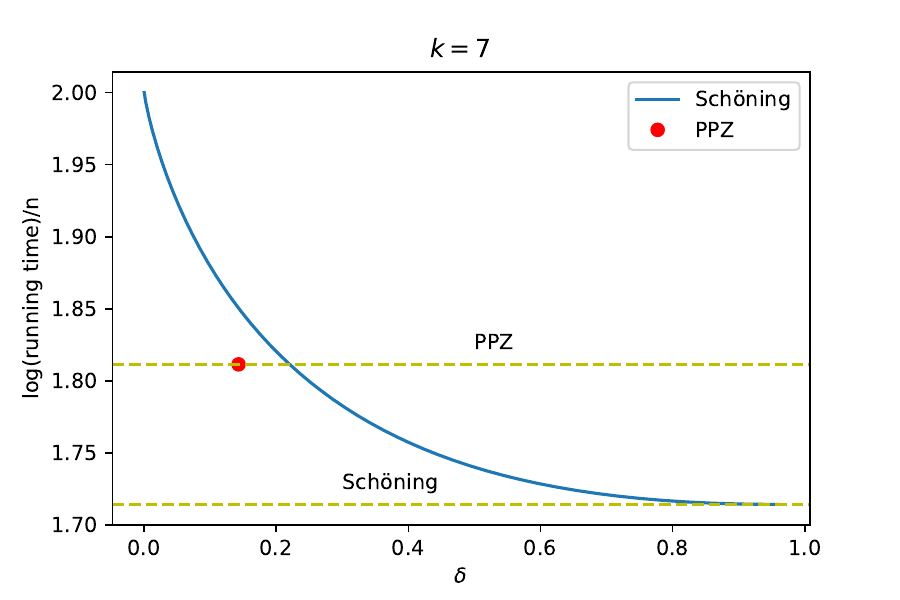}
    \includegraphics[scale=0.5]{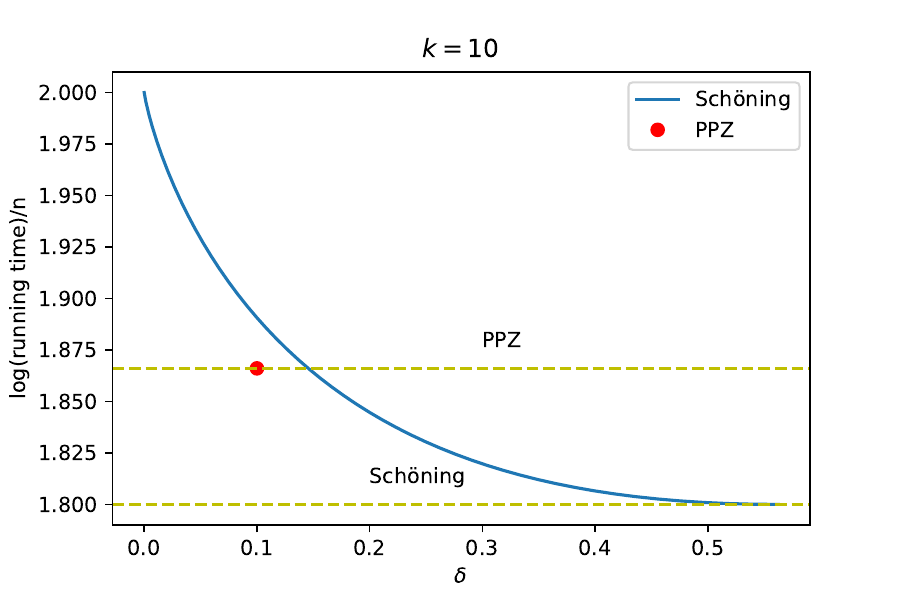}
    \caption{Plot of $a_{k,\delta}$ with respect to $\delta$, with the PPZ running time and approximation factor and \sch running time for comparison, for different values of $k$.}
    \label{fig:enter-label}
\end{figure}

\begin{remark}\label{rem:sch:2}
    When $k \geq 7$, we can use $\delta=\frac{4(k-1)}{(k-2)^2}$ in this algorithm to get a running time of $O^*\left(\left( 2 - \frac{2}{k} \right)^n\right)$ that matches the run-time of \sch's algorithm for finding one satisfying assignment. Thus, \Cref{thm:sch-for-dia} is a generalization of \Cref{thm:sch-for-dia-fixedapprox}. For smaller $k$, while we cannot match the running time of \sch's algorithm, we can still get better than brute force algorithms for the diameter and dispersion problems. 
\end{remark}

\paragraph{Weighted dispersion:} For a $k$-CNF formula $\f$, let $\Omega_{\f,= W}$,$\Omega_{\f,\geq W}$, $\Omega_{\f,\leq W}$ denote the set of satisfying assignments to $\f$ with Hamming weight $W$, at least $W$ and at most $W$ respectively. Let $\optm(\f,s,\geq W)=\max_{S \subseteq \Omega_{\f,\geq W}, |S|=s} \PD(S)$, and \\ $\optm(\f,s,\leq W)=\max_{S \subseteq \Omega_{\f,\leq W}, |S|=s} \PD(S)$.

\begin{theorem}[Weighted dispersion- Full version of \Cref{thm:schheavyeasy}] \label{thm:sch-heavy-full}
    Let $\f$ be a $k$-CNF formula on $n$ variables, $W \in [n]$ and $s \in \IN$. 
    \begin{enumerate}

        \item For each $0 < \delta \leq \min \left \{1, \frac{4(k-1)}{(k-2)^2} \right \}$, there exists an algorithm that takes $\f,W,s$ as input and runs in time $O^*\left(s^3 \cdot \tau(\delta, k, n)\right)$ and outputs a set $S^* \subseteq \Omega_{\f, \geq (1-\delta) W}$ of size $s$ such that $\PD(S^*) \geq \frac{1}{2}\left(1-\delta \right) \optm(\f,s, \geq W)$ with probability $1-o(1)$.

        \item For each $0 < \delta \leq \min \left \{1, \frac{4(k-1)}{(k-2)^2} \right \}$, there exists an algorithm that takes $\f,W,s$ as input and runs in time $O^*\left(s^3 \cdot \tau(\delta, k, n)\right)$ and outputs a set $S^* \subseteq \Omega_{\f, \geq (1+\delta) W}$ of size $s$ such that $\PD(S^*) \geq \frac{1}{2}\left(1-\delta \right) \optm(\f,s, \geq W)$ with probability $1-o(1)$.

    \end{enumerate}
    
\end{theorem}

Note that as a special case, this theorem leads to an algorithm for $\optm(\f,s)$ with the same time bounds and approximation factors. In addition, we show that a slight modification of this algorithm can also be used for $\opts(\f,s)$. 
\begin{restatable}{theorem} {schopts}[\sch approximating $\opts(\f,s)$]
\label{thm:sch-for-sumdisp}
    Let $\f$ be a $k$-CNF formula on $n$ variables and $s \in \IN$. For each $0 < \delta \leq \min \left \{1, \frac{4(k-1)}{(k-2)^2} \right \}$, there exists an algorithm that takes $\f,s$ as input and runs in time $O^*\left(s^3 \cdot \tau(\delta, k, n)\right)$ that outputs, with probability $1-o(1)$, a multi-set $S^* \subseteq \Omega_{\f, \geq (1-\delta) W}$ of size $s$ such that $$\SPD(S^*) \geq \begin{cases}
            \frac{1}{2}\left(1-\delta \right) \opts(\f,s) \text{ if }  s \leq 3+ \floor{ \frac{2 \delta}{1-\delta} } \\
            \frac{1-\delta}{1+\delta}\left(\frac{1-\frac{1}{s}}{1+\frac{1-\delta}{1+\delta} \cdot \frac{1}{s}}\right) \opts(\f,s) \text{ if } s > 3+ \floor{ \frac{2 \delta}{1-\delta} }
        \end{cases}$$ 
    
\end{restatable}
\paragraph{The case of $2$-SAT and other small $k$:} We design different algorithms to handle the case of $2$-SAT, which also outperform the algorithms presented here in some regimes of $\delta$ for larger $k$. For example, for $k=3$, it outperforms the algorithm in~\Cref{thm:sch-for-dia} for all values of $\delta$, and for $k \geq 4$, it outperforms~\Cref{thm:sch-for-dia} for smaller values of $\delta$. This is presented in~\Cref{app:more-sch}.

\paragraph{Proof organization:} We prove the above three theorems in parallel using the following three step procedure. 
\begin{enumerate}
    \item In \Cref{sec:schprelims}, we recall \sch's algorithm and the key observations used to analyse it. 
    \item In \Cref{sec:diamfarthest}, we develop and analyze farthest point oracles for $\D$, $\sumd$ and $\mind$ using \sch's algorithm.
    \item In \Cref{sec:schfinal}, we describe and analyse our algorithms for finding dispersed solutions with respect to $\optm$, completing the proofs of \Cref{thm:sch-for-dia} and \Cref{thm:sch-heavy-full}. Just like for PPZ, these algorithms use the farthest point oracles in the algorithms for dispersion studied by Gonzales~\cite{gonzalez1985clustering}. In \Cref{app:schsum}, we describe and analyse an algorithm for finding dispersed solutions with respect to $\opts$, completing the proof of \Cref{thm:sch-for-sumdisp}. 
    \item In \Cref{app:more-sch}, we describe another algorithm that handles the case of $2$-SAT and $3$-SAT and also outperforms the algorithms described in this section for some regimes of $\delta$ for larger values of $k$.
\end{enumerate}

\subsection{Parameterized local search} \label{sec:schprelims}
\textbf{The Sch\"oning walk.} Sch\"oning's algorithm consists of repeatedly invoking the following procedure, which we call a \emph{Sch\"oning walk}. Formally, a Sch\"oning walk of length one, denoted by $\SW_1(\f,z)$, takes as input a formula $\f$ and an assignment $z\in\set{0,1}^n$, and returns another assignment $z'\in\set{0,1}^n$ constructed as follows: if $z$ is a satisfying assignment, then $z'=z$. Otherwise, let $C$ be a clause in $\f$ that is not satisfied by $z$. Pick one of its $k$ literals uniformly at random and flip its value in $z$, thus obtaining $z'$. For $t\geq 2$, a Sch\"oning walk of length $t$ can be recursively defined as $\SW_t(\f,z) = \SW_1(\f, 
\SW_{t-1}(\f,z))$. We refer to $z$ as the starting point of the Sch\"oning walk of length $t$. 

\medskip \noindent
We note the following key observation about the \sch walk. We refer the reader to \sch's original paper for a proof~\cite{schoning1999probabilistic}. 
\begin{restatable}{lemma}{obssch} \label{obs:schmain}
    For any starting assignment $z \in \set{0,1}^n$, if there exists a satisfying assignment $z^* \in \set{0,1}^n$ such that $d_H(z,z^*) \leq t$, then $\SW_t(\f,z)$ outputs a satisfying assignment with probability at least $k^{-t}$. Furthermore, $\SW_{\ceil{\left(1+2/(k-2)\right)t}}(\f,z)$ outputs a satisfying assignment with probability at least $(k-1)^{-t}$.
\end{restatable}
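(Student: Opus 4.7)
The plan is to analyze a Sch\"oning walk by coupling it with a biased random walk on the Hamming distance to $z^*$. The geometric core is: whenever the current assignment $z'$ is not satisfying, $\SW_1$ picks an unsatisfied clause $C$; since $z^*$ satisfies $C$, at least one of its $k$ literals evaluates differently in $z^*$ than in $z'$, so flipping any such ``corrective'' literal decreases $d_H(z',z^*)$ by exactly one, while flipping any of the remaining (at most $k-1$) literals increases $d_H(z',z^*)$ by exactly one. Hence each step moves closer to $z^*$ with probability at least $1/k$ and farther with probability at most $(k-1)/k$, until the walk hits some satisfying assignment, at which point it is absorbed by definition of $\SW_1$. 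Throughout what follows I work with the ``unstopped'' walk that ignores the absorbing condition; stopping at a satisfying assignment only turns a good coin-flip sequence into another good one, so a lower bound on the unstopped walk's probability of hitting $0$ is also a lower bound on the stopped walk's probability of success.

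For the first bound I lower bound success by the probability of the single most restricted good event: the first $d := d_H(z,z^*)$ steps each move the walk closer to $z^*$. By the observation above this event has probability at least $(1/k)^d \geq k^{-t}$ since $d \leq t$, and once it occurs the walk has reached $z^*$ so the remaining $t-d$ steps leave the satisfying set invariant.

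For the second bound I count walks of length $T := \ceil{(1+2/(k-2))t}$ directly. A walk with exactly $a$ closer steps and $T-a$ farther steps ends at distance $d + T - 2a$ and so visits $0$ whenever $a \geq (T+d)/2$. Fixing $d \leq t$ and $a := \ceil{(T+d)/2}$, the probability of attaining exactly $a$ closer steps is $\binom{T}{a}(1/k)^a((k-1)/k)^{T-a}$. The slack $T-t = 2t/(k-2)$ is chosen precisely so that $a/T = (k-1)/k$ at $d = t$, i.e.\ so that the threshold $a$ sits at the entropic mode of the biased binomial: invoking $\binom{T}{a}\geq 2^{T H(a/T)}/(T+1)$ with $H((k-1)/k) = \tfrac{k-1}{k}\log\tfrac{k}{k-1} + \tfrac{1}{k}\log k$ gives $\binom{T}{a} \geq (k/(k-1))^a k^{T-a}/(T+1)$, so the expression collapses to $(k-1)^{T-2a}/(T+1) = (k-1)^{-d}/(T+1) \geq (k-1)^{-t}/\poly(n)$, with the polynomial factor absorbed into the $O^*$ notation used throughout the paper.

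The chief technical delicacy will be confirming that $T = kt/(k-2)$ is the correct ``entropy-matching'' choice --- any smaller $T$ forces $a/T > (k-1)/k$ and the binomial mass at $a$ becomes exponentially small in $t$, whereas any larger $T$ merely wastes steps --- and that halting at arbitrary satisfying assignments does not interfere with the counting, which is handled by the unstopped-walk reduction in the first paragraph.
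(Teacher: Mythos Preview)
Your argument is the standard Sch\"oning analysis, which is exactly what the paper defers to (it gives no proof of its own and simply cites \cite{schoning1999probabilistic}); the coupling with a biased $\pm1$ walk, the direct $(1/k)^d$ bound for the first claim, and the single-term binomial estimate at $a=(T+d)/2$ with $T=kt/(k-2)$ for the second are all as in Sch\"oning's paper.

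Two small remarks. First, your phrase ``entropic mode of the biased binomial'' is misleading: the mode of $\mathrm{Bin}(T,1/k)$ sits at $T/k$, not at $a=T(k-1)/k$, so $a$ is deep in the upper tail. What is actually happening (and what your computation correctly exploits) is that $T=kt/(k-2)$ is the unique choice for which the tangent line $\log k-\gamma\log(k-1)$ touches the entropy curve $H(\gamma)$ at $\gamma=a/T=(k-1)/k$, so the single-term estimate $\binom{T}{a}(1/k)^a((k-1)/k)^{T-a}$ attains the exponent $-t\log(k-1)$ exactly; for any other $T$ the exponent is strictly worse. Second, the $1/(T+1)$ loss you flag is genuine and in fact unavoidable for a finite walk (the infinite-walk hitting probability is exactly $(k-1)^{-d}$, so no finite $T$ can match it when $d=t$); the lemma as stated is therefore slightly imprecise, but as you correctly observe --- and as the paper's own use of $n^{O(1)}\cdot(k-1)^t$ repetitions in the definition of $\LS_{1+2/(k-2),\,k-1}$ confirms --- the polynomial slack is harmless downstream.
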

\begin{remark}
    In \sch's original paper, the statement proved is that $\SW_{3t}(\f,z)$ outputs a satisfying assignment with probability at least $(k-1)^{-t}$. However, looking at the analysis more carefully, we can prove that a shorter \sch walk of length $\left(1+2/(k-2)\right)t$ suffices (for $k=3$, these two quantities are equal). This fact is irrelevant to the performance of the original algorithm, but is helpful for our purpose of finding dispersed satisfying assignments to $\f$. 
\end{remark}

\paragraph{\sch's local search: }\Cref{obs:schmain} gives a \emph{parameterized local search} algorithm for $k$-SAT. Formally for some values $\alpha \geq 1,c > 1$, a local search procedure $\LS_{\alpha,c}$ takes as input a $k$-CNF formula $\f$, a starting assignment $z \in \Q{n}$, and $t \in [n]$, such that if there exists a satisfying assignment $z_0$, with $d_H(z,z_0) \leq t$, then, in time $n^{O(1)} c^t$, $\LS_{\alpha,c}$ outputs a satisfying assignment $z^* \in \Om$, with $d_H(z, z^*) \leq \ceil{\alpha t}$. \footnote{We have defined an ``approximate'' version of local search. The traditional definition does not use $\alpha$}. Hence, there exist two versions of parameterized local search for $k$-SAT. 

\begin{enumerate}
    \item $\LS_{1,k}$: This involves repeating the \sch walk starting at $z$ for $t$ steps $n^{O(1)} \cdot k^t$ times.
    \item $\LS_{\left(1+2/(k-2)\right),k-1}$: This involves repeating the \sch walk of $\ceil{\left(1+2/(k-2)\right)t}$ steps starting at $z$ $n^{O(1)} \cdot (k-1)^t$ times.
\end{enumerate}

Consider the following algorithm for solving $k$-SAT. Given a local search procedure $\LS_{\alpha, c}$, set $t=\floor{\frac{n}{c+1}}$, sample $z \in \Q{n}$ uniformly at random, and run $\LS_{\alpha, c}$ with $z$ and $t$ as input. If there exists a satisfying assignment $z_0$, $z$ will be within distance $t$ of $z_0$ with probability at least $\frac{\binom{n}{t}}{2^n}$. To succeed in finding a satisfying assignment with probability $1-o(1)$, it is sufficient to repeat this procedure $n^{O(1)} \cdot \frac{2^n}{\binom{n}{t}}$ times. The entire algorithm runs in time $n^{O(1)} \cdot \frac{2^n}{\binom{n}{t}c^{-t}}= O^*\left( \left( \frac{2}{1+1/c}\right)^n \right)$. \sch uses $\LS_{\left(1+2/(k-1)\right)t, k-1}$, which gives a running time of $O^*\left( \left(  2\left(1-\frac{1}{k} \right) \right)^n\right)$. We refer the reader to \Cref{schcalc} for a proof of this statement. 

\paragraph{The case of $2$-SAT and other small $k$:} Our algorithms for approximating dispersion use the procedure $\LS_{\left(1+2/(k-2)\right),k-1}$. For the case of small $k$ and small $\delta$, it is useful to use the local search procedure $\LS_{1,k}$ instead. It turns out that this algorithm gives a better trade-off with $\delta$. We present more details in~\Cref{app:more-sch}.\footnote{Before \sch's algorithm for $k$-SAT was discovered, a very similar (polynomial time) algorithm was developed for $2$-SAT by Papadimitriou~\cite{papadimitriou1991selecting}. It picks a starting assignment $z \in \{0,1\}^n$ at random and performs a \sch walk for $O(n^2)$ steps starting at $z$. If the $2$-CNF formula is indeed satisfiabile, this algorithm finds a satisfying assignment with probability $1-o(1)$. \sch's main innovation in extending this algorithm to get a better than brute force algorithm was in restarting the local search process with a new randomly chosen starting assignment after $3n$ steps. However, computing the diameter of a $2$-CNF formula is an NP-complete problem, and \sch's paradigm is useful here as well.}

\subsection{Anchored local search and farthest point oracles}
\label{sec:diamfarthest}

Next, we show that we can carefully control the length of the \sch walk to come up with farthest point oracles. We call this procedure ``anchoring''. This technique is general and can be used with any $\LS_{\alpha, c}$ procedure for a ``subset problem''. We will see more examples in 
\Cref{sec:applications}. 

\begin{lemma} \label{lem:anchor:sch}
     Consider a local search algorithm $\LS_{\alpha, c}$. Then, for every $0 < \delta \leq \frac{2(1+\alpha)}{c-1}$, there exists an algorithm  running in time $\frac{2^n c^{R}}{\binom{n}{R} }$, where $R=\floor{\frac{\delta n}{2(1+\alpha+\delta)}}$, that takes as input $\f$ and $z \in \{0,1\}^n$, and if $\f$ is satisfiable, outputs $z^* \in \Om$ such that $d_H(z^*, z) \geq \left(1-\delta\right) \cdot \max_{z' \in \Om} d_H(z,z')$ with probability at least $1-2^{-n}$. 
\end{lemma}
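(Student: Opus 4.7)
Let $z^* \in \argmax_{z' \in \Om} d_H(z,z')$ and $r^* := d_H(z,z^*)$, and set $r^*_0 := (1+\alpha) R/\delta = n(1+\alpha)/(2(1+\alpha+\delta))$. The identity $R \le \delta r^*/(1+\alpha) \iff r^* \ge r^*_0$ naturally splits the argument into two regimes: one handled by a ``baseline'' sampling and the other by a ``restricted'' one. In both regimes the algorithm is an anchored variant of \sch's procedure: repeatedly sample a starting point $x$, invoke $\LS_{\alpha,c}(\f,x,R)$, and retain the satisfying assignment $z''$ that maximises $d_H(z,z'')$.

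I would first dispatch the regime $r^* \ge r^*_0$ with the baseline choice in which $x$ is drawn uniformly from $\{0,1\}^n$: the probability that $x$ lands in $B(z^*,R)$ is at least $\binom{n}{R}/2^n$, so $O^*(2^n c^R/\binom{n}{R})$ repetitions drive the success probability to $1-2^{-n}$ within the claimed runtime. Conditioned on $x\in B(z^*,R)$, the definition of $\LS_{\alpha,c}$ yields $z'' \in \Om$ with $d_H(x,z'') \le \alpha R$ in time $O^*(c^R)$; two applications of the triangle inequality then give $d_H(z,z'') \ge r^* - (1+\alpha)R$, which is at least $(1-\delta)r^*$ by the choice of $R$. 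For the complementary regime $r^* < r^*_0$, I would exploit the containment $\Om \subseteq B(z,r^*_0)$ by running a second phase in which $x$ is drawn uniformly from $B(z,r^*_0)$ and one iterates over guesses $r \in \{1,\dots,\lceil r^*_0 \rceil\}$ for $r^*$, each time running $\LS_{\alpha,c}(\f,x,R'_r)$ with the shorter walk length $R'_r := \lfloor \delta r/(1+\alpha)\rfloor \le R$. For the correct guess $r=r^*$ the same triangle argument now yields $d_H(z,z'') \ge r^*-(1+\alpha)R'_{r^*} \ge (1-\delta)r^*$, the hitting probability improves to $\Omega(\binom{n}{R'_r}/|B(z,r^*_0)|)$, and the sample space $|B(z,r^*_0)| \le n\binom{n}{r^*_0} \le O^*(2^n)$ keeps the runtime within budget.

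The principal obstacle I expect is the binomial-coefficient bookkeeping needed to verify that both phases jointly respect $O^*(2^n c^R/\binom{n}{R})$: specifically, one must check $n\,c^{R'_r}\binom{n}{r^*_0}/\binom{n}{R'_r} \le O^*(2^n c^R/\binom{n}{R})$ uniformly over $r \in [1,r^*_0]$, which reduces to the monotonicity of $R \mapsto c^R/\binom{n}{R}$ in the relevant range together with the sub-exponential gap between $\binom{n}{r^*_0}$ and $2^n$. A minor secondary item is the degenerate sub-case $(1-\delta)r^* \le 1$, which is handled by a direct $O(n)$-time scan of the Hamming neighbourhood of $z$.
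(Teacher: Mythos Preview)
Your handling of the regime $r^* \ge r^*_0$ is correct and in fact slightly cleaner than the paper's: uniform sampling over $\{0,1\}^n$ with walk length $R$ works there without iterating over a guess for $r^*$. The problem is entirely in the complementary regime.

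Your phase-2 runtime bound is wrong. You need
\[
c^{R'_r}\binom{n}{r^*_0}\Big/\binom{n}{R'_r} \;\le\; O^*\!\left(2^n c^R\Big/\binom{n}{R}\right)
\]
uniformly over $r \le r^*_0$. But the map $t \mapsto c^t/\binom{n}{t}$ is \emph{decreasing} on $[0,n/(c+1)]$, and $R'_r \le R \le n/(c+1)$, so the worst case is $R'_r = 0$, where your claim reduces to $\binom{n}{r^*_0} \le 2^n c^R/\binom{n}{R}$. Since $r^*_0/n + R/n = 1/2$, this is (writing $\rho = R/n$) the inequality $H(\tfrac12-\rho)+H(\rho) \le 1 + \rho\log_2 c$, which fails throughout the allowed range $\rho \in (0,1/(c+1)]$. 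Concretely, for $c=3$, $\alpha=1$, $\delta=1$ one has $\rho = 1/6$, giving left side $H(1/3)+H(1/6) \approx 1.57$ versus right side $\approx 1.26$; your phase~2 then costs $\approx 2^{0.92n}$ against a budget of $\approx 2^{0.61n}$. The ``sub-exponential gap between $\binom{n}{r^*_0}$ and $2^n$'' you invoke does not help: $r^*_0/n$ is a constant strictly below $1/2$, so the gap is exponential, but not large enough to absorb the deficit in $c^R/\binom{n}{R}$. (A secondary issue: your stated hitting probability assumes $B(z^*,R'_r) \subseteq B(z,r^*_0)$, but you only get $r^* + R'_{r^*} \le n/2$, not $\le r^*_0$.)

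The paper's fix is to let the sample space depend on the guess: for each $r$ it samples from the \emph{annulus} $A_{r-t,r+t}(z)=\{x : r-t \le d_H(x,z) \le r+t\}$ with $t = \min\{\lfloor \delta r/(1+\alpha)\rfloor, R\}$. This annulus shrinks with $r$, so the required number of trials shrinks too. The bottleneck becomes showing $\binom{n}{r+t}c^t/\binom{n}{t} \le O^*(2^n c^R/\binom{n}{R})$ for all $r+t \le n/2$, which the paper handles with a dedicated monotonicity lemma (a calculus argument on the entropy form of this ratio). Your scheme can be repaired by replacing the fixed ball $B(z,r^*_0)$ with $B(z,r+R'_r)$ (or the annulus) per guess, but you will still need that lemma; the current fixed-ball sampling does not stay within budget.
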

\begin{proof}
    Consider the following procedure. 
    
\begin{algorithm}[H] \label{alg:ALS}
 \KwIn{A $k$-CNF formula $\f$ over $n$ variables, $z \in \{0,1\}^n$, $r \in [n]$}
 Let $t:=\min\left\{ \floor{\frac{\delta r}{1+\alpha}}, R\right\}$. \\
 Sample a starting point $y$ uniformly~\footnote{We note that it is possible to uniformly sample from $A_{r-t , r+t} (z)$ in polynomial time. First, we pick a radius $x \in \set{r-t,\ldots, \min \set{{r+t,n}}}$ proportional to  the ratio ${n \choose x} / \size{A_{r-t , r+t} (z)}$. We then choose a random permutation in $\pi \in \mathcal{S}_n$ and let $A \subseteq [n]$ be the first $x$ elements of $\pi$. $y$ is obtained by setting $y_i=1$ if and only if $i \in A$. } 
 at random from $ A_{r-t,r+t} (z)$, where $A_{r-t, r+t}(z):=\{ x \in \{0,1\}^n \mid r-t \leq d_H(x,z) \leq r+t \}$. \\
Output $\LS_{\alpha,c}(\f,y,t)$ 
 \caption{$\ALS_{\alpha, c , \delta}$}
\end{algorithm}
\vspace{2mm}
Suppose there exists a satisfying assignment $z_0 \in \Om$ such that $d_H(z_0, z)=r$. Let $y \in A_{r-t , r+t} (z)$ be the starting point sampled by $\ALS_{\alpha,\delta}(\f,z,r)$. Consider any starting point $y$ that is within distance $t$ from $z_0$. Because $\LS_{\alpha}(\f,y,t)$ outputs a satisfying assignment $z^*$ whose distance is at most $\alpha t $ from $y$, the triangle inequality implies that $d_H(z^*, z_0) \leq (1+ \alpha)t \leq \delta r$. Because $y$ is sampled uniformly at random from $A_{r-t , r+t} (z)$ and $t \leq \frac{\delta r}{1+\alpha}$, the probability that $y$ is within distance $\leq t$ from $z_0$ is at least $\frac{\binom{n}{t}}{|A_{r-t , r+t}(z)|}$, which implies that with probability at least $\frac{\binom{n}{t} (1-2^{-n})}{|A_{r-t , r+t}(z)|}$, $\ALS_{\alpha, c ,\delta}(z,r)$ outputs $z^* \in \Om$ such that $d_H(z_0, z^*) \leq \delta r$, and using the triangle inequality again,  implies that $d_H(z, z^*) \geq \left(1-\delta \right) \cdot  r$

This implies that repeating the procedure $\ALS_{\alpha, c ,\delta}(\f,z,r)$ $n^{O(1)} \cdot  \frac{|A_{r-t , r+t}(z)|}{\binom{n}{t} }$ times is enough to output a satisfying assignment $z^*$ such that $d_H(z, z^*) \geq \left(1-\delta \right) \cdot  r$ with probability at least $1-2^{-n}$. Iterating over all $r \in [n]$ (and returning the $z^* \in \Om$ found with maximum Hamming distance from $z$) implies the existence of an algorithm that outputs $z^* \in \Om$ with $d_H(z,z^*) \geq (1-\delta) \max_{z' \in \Om} d_H(z,z')$ in time $$n^{O(1)} \cdot \sum_{r \in [n]} \frac{|A_{r-t , r+t}(z)| c^t}{\binom{n}{t}  } = n^{O(1)} \max_{r \in [n]} \frac{|A_{r-t , r+t}(z)| c^t}{\binom{n}{t}  } \; .$$

\medskip \noindent
     The next step is to upper bound the quantity $\tau(r,n):=\frac{|A_{r-t , r+t}(z)| c^t}{\binom{n}{t}}$. We start by upper bounding $|A_{r-t , r+t}(z)|$. Because $A_{r-t , r+t}(z)$ is a union of Hamming spheres around $z$, we can upper bound it as follows. 
     \begin{equation*}
         |A_{r-t , r+t}(z)| \leq \begin{cases}
             n \cdot \binom{n}{r+t}, \text{ if } r+t < \frac{n}{2} \\
             2^n, \text{ if } r-t \leq \frac{n}{2} \leq r+t \\
             n \cdot \binom{n}{r-t}, \text{ if } r-t > \frac{n}{2}
         \end{cases}
     \end{equation*}
    Recall that $t=\min\left\{ \floor{\frac{\delta r}{1+\alpha}}, R\right\}$. This implies that when $r \leq \frac{n}{2\left(1+\frac{\delta}{1+\alpha}\right)}$, the corresponding value of $t$ is $\floor{\frac{\delta r}{1+\alpha}}$, and when $r \geq \frac{n}{2\left(1+\frac{\delta}{1+\alpha}\right)}$, the corresponding value of $t$ is $R$. We can now upper bound $\tau(r,n)$ in each regime as follows:
     \begin{equation*}
         \tau(r,n) \leq \begin{cases}
             \frac{\binom{n}{r+\floor{\frac{\delta r}{1+\alpha}}}c^{\floor{\frac{\delta r}{1+\alpha}}}}{\binom{n}{\floor{\frac{\delta r}{1+\alpha}}} }  \text{ if } r < \frac{n}{2\left(1+\frac{\delta}{1+\alpha}\right)} \\
             \noalign{\vskip9pt}
             \frac{2^n c^{R}}{\binom{n}{R} } \text{ if }\frac{(1+\alpha)n}{2(1+\alpha +\delta)} \leq  r \leq \frac{n}{2}+R\\
             \noalign{\vskip9pt}
             \frac{\binom{n}{r-R}c^{R}}{\binom{n}{R} } \text{ if } r > \frac{n}{2}+R
         \end{cases}
     \end{equation*}
     Notice that when $r \geq \frac{(1+\alpha)n}{2(1+\alpha +\delta)}$, $\tau(r,n)$ is at most $\frac{2^n c^{R}}{\binom{n}{R} }$.

    \medskip \noindent
     We can now substitute $\beta=\frac{\delta}{1+\alpha}$ in \Cref{lem:derivative} (stated and proved below) to show that when $0 \leq r \leq \frac{n}{2\left(1+\frac{\delta}{1+\alpha}\right)}$, $\tau(r,n)$ is upper bounded by $\frac{2^n c^{R}}{\binom{n}{R} }$, completing the proof.
\end{proof} 
\begin{lemma} \label{lem:derivative}
    Let $0 < \beta \leq \frac{2}{c-1}$. Then,
    \[ \max_{r \in \{0,1,\dots, \floor{\frac{n}{2(1+\beta)}}\}} \frac{ \binom{n}{r+\floor{\beta r}} c^{\floor{\beta r}}}{    \binom{n}{\floor{\beta r}}} \leq n^{O(1)} \cdot  \frac{2^n}{\binom{n}{\floor{ \frac{\beta n}{2(1+1 \beta)}}} c^{-\floor{\frac{\beta n}{2+2 \beta}}}}\]
\end{lemma}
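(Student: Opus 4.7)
The plan is to pass from the discrete maximum on the LHS to a continuous one-variable optimization via Stirling's approximation, then show that the resulting function is non-decreasing on $[0, 1/(2(1+\beta))]$, so its maximum is attained at the right endpoint.

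First, I would apply Stirling's formula, $\binom{n}{\alpha n} = n^{O(1)} \cdot 2^{n H(\alpha)}$ for $\alpha$ bounded away from $\{0,1\}$, and introduce $\gamma := r/n$. Each term in the LHS maximum then becomes $n^{O(1)} \cdot 2^{n \phi(\gamma)}$, where
\[
\phi(\gamma) := H((1+\beta)\gamma) + \beta\gamma \log_2 c - H(\beta\gamma),
\]
with the floor functions contributing only $O(1)$ shifts in the binomial arguments, absorbed into $n^{O(1)}$. The RHS likewise equals $n^{O(1)} \cdot 2^{n\phi(\gamma^\ast)}$ for $\gamma^\ast := 1/(2(1+\beta))$ (since $(1+\beta)\gamma^\ast = 1/2$ and $\beta\gamma^\ast = \beta/(2(1+\beta))$). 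Thus it suffices to show $\max_{\gamma \in [0, \gamma^\ast]} \phi(\gamma) \leq \phi(\gamma^\ast)$.

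Second, I would establish this by proving $\phi$ is non-decreasing on $[0, \gamma^\ast]$ via the derivative
\[
\phi'(\gamma) = (1+\beta)\log_2\tfrac{1-(1+\beta)\gamma}{(1+\beta)\gamma} + \beta\log_2 c - \beta\log_2\tfrac{1-\beta\gamma}{\beta\gamma} = \log_2 \Psi(\gamma),
\]
where $\Psi(\gamma) := c^{\beta} \bigl(\tfrac{1-(1+\beta)\gamma}{(1+\beta)\gamma}\bigr)^{1+\beta}\bigl(\tfrac{\beta\gamma}{1-\beta\gamma}\bigr)^{\beta}$. As $\gamma \to 0^+$, $\Psi(\gamma) \to +\infty$, so $\phi$ is increasing near the origin. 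At the right endpoint, $(1+\beta)\gamma^\ast = 1/2$ forces the first factor of $\Psi$ to equal $1$, giving $\Psi(\gamma^\ast) = (c\beta/(2+\beta))^{\beta}$, whose comparison with $1$ is governed exactly by the identity $c\beta = 2+\beta \iff \beta = 2/(c-1)$, which is how the constraint $\beta \leq 2/(c-1)$ enters.

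The main obstacle, and the bulk of the technical work, will be verifying $\Psi(\gamma) \geq 1$ throughout $(0, \gamma^\ast]$ under the hypothesis $\beta \leq 2/(c-1)$. My approach would be the substitution $u = (1+\beta)\gamma \in (0, 1/2]$, $v = \beta\gamma = u\beta/(1+\beta)$, which reduces $\log\Psi$ to a function of $u$ alone; I would then differentiate once more to argue that $\log\Psi$ is monotone (or has a single sign change) on $(0, 1/2]$, and combine this with the endpoint evaluation $\log\Psi(\gamma^\ast) = \beta \log_2(c\beta/(2+\beta))$ and the divergence at $\gamma = 0$ to pin down the sign throughout. The concluding step inverts the Stirling reduction to transfer the continuous inequality back to the discrete statement, with all polynomial corrections from floors and Stirling's remainder absorbed into the $n^{O(1)}$ slack on the RHS.
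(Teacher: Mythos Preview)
Your approach is essentially the paper's: pass to the continuous exponent $\phi(\gamma)=H((1{+}\beta)\gamma)-H(\beta\gamma)+\beta\gamma\log_2 c$, then try to show $\phi$ is non-decreasing on $[0,\gamma^\ast]$ by differentiating twice. The paper writes $g=2^{\phi}$, sets $h=g'/g$, proves $h'<0$, and asserts this yields $h\ge 0$ throughout; your $\log_2\Psi$ is exactly $h/\ln 2$.

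The gap is in your final step, and it is the same gap the paper glosses over. You correctly compute $\Psi(\gamma^\ast)=(c\beta/(2{+}\beta))^{\beta}$ and note this is where the constraint ``enters,'' but observe the direction: $c\beta\le 2{+}\beta$ is \emph{equivalent} to $\beta\le 2/(c-1)$, so under the stated hypothesis you get $\Psi(\gamma^\ast)\le 1$, i.e.\ $\phi'(\gamma^\ast)\le 0$, with strict inequality whenever $\beta<2/(c-1)$. Since your second-derivative step shows $\log\Psi$ is strictly decreasing and $\Psi(0^+)=+\infty$, this forces a unique sign change of $\phi'$ \emph{inside} $(0,\gamma^\ast)$: $\phi$ is unimodal with an interior maximizer $\mu_0<\gamma^\ast$, not monotone. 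Concretely, for $c=3,\ \beta=1/2$ one computes $\phi(0.30)\approx 0.6207>\phi(1/3)\approx 0.6141$, so $\max_r \text{LHS}(r)\big/\text{RHS}\approx 2^{0.006n}$, which the $n^{O(1)}$ slack cannot absorb. Thus the plan ``verify $\Psi\ge 1$ throughout $(0,\gamma^\ast]$ under $\beta\le 2/(c-1)$'' cannot succeed as written; the monotonicity claim (and the inequality in the lemma as stated) only holds at the boundary $\beta=2/(c-1)$, or more generally for $\beta\ge 2/(c-1)$. The paper's sentence ``if we show that $h$ is decreasing \ldots\ that is enough to show $h\ge 0$'' omits exactly this endpoint check.
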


In order to prove \Cref{lem:derivative}, we need some observations.

\begin{observation} \label{obs:binomone}
    For integers $n$ and $m \leq n/2$, $\frac{2}{n} \cdot \binom{n}{m+1} \leq \binom{n}{m} \leq \binom{n}{m+1} \; .$
\end{observation}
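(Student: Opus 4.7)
The plan is to reduce both inequalities to the elementary ratio identity
\[
\frac{\binom{n}{m+1}}{\binom{n}{m}} \;=\; \frac{n-m}{m+1},
\]
which follows directly from the factorial definition of the binomial coefficient. Once this identity is in hand, the two inequalities in the observation are just upper and lower bounds on the ratio $(n-m)/(m+1)$ under the constraint $m \le n/2$.

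For the upper bound $\binom{n}{m} \le \binom{n}{m+1}$, I would argue that the hypothesis $m \le n/2$ (interpreted, for the purpose of monotonicity of binomials in the ascending range, as $m+1 \le n-m$, equivalently $m \le (n-1)/2$) yields $n-m \ge m+1$, so the ratio is at least $1$. This is just the standard fact that $\binom{n}{\cdot}$ is nondecreasing on $\{0,1,\ldots,\lfloor n/2\rfloor\}$; the single boundary case $m = n/2$ (for even $n$) only loses a factor $\le (n/2+1)/(n/2) = 1 + 2/n$, which is harmless in every place the observation is invoked.

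For the lower bound $\binom{n}{m} \ge \tfrac{2}{n}\binom{n}{m+1}$, the plan is to bound the ratio $(n-m)/(m+1)$ from above. Using $n-m \le n$ and $m+1 \ge 2$ (assuming $m\ge 1$, which is the relevant regime since the case $m=0$ appears nowhere in the applications), we get $(n-m)/(m+1) \le n/2$, which rearranges to exactly $\binom{n}{m} \ge (2/n)\binom{n}{m+1}$.

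I do not anticipate a real obstacle: the only care needed is in the two boundary cases (\emph{i)} $m=n/2$ for even $n$ in the upper bound, and \emph{ii)} $m=0$ in the lower bound), which can either be excluded as vacuous for the downstream uses or absorbed into a harmless $\mathrm{poly}(n)$ slack. The whole argument is one line of algebra after invoking the ratio identity.
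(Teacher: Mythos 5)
The paper states Observation~\ref{obs:binomone} without proof, so there is no argument in the source to compare against; your proposal supplies the natural (and essentially only) proof, via the ratio identity $\binom{n}{m+1}/\binom{n}{m} = (n-m)/(m+1)$, and it is correct. You have also rightly flagged that the observation as literally written has two boundary failures — $m = n/2$ with $n$ even (e.g.\ $\binom{4}{2}=6 > 4=\binom{4}{3}$) breaks the right-hand inequality, and $m=0$ (where $\frac{2}{n}\binom{n}{1}=2 > 1=\binom{n}{0}$) breaks the left-hand one — which the paper glosses over but which do not affect any downstream use since the observation is absorbed into $n^{O(1)}$ slack in the surrounding estimates.
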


\begin{observation}[\cite{macwilliams1977theory}] \label{obs:binomapprox}
\begin{equation*}
   \frac{1}{n^{O(1)}} \cdot \left(\mu^{-\mu} (1-\mu)^{\mu-1} \right)^n \leq \binom{n}{\mu n} \leq \left(\mu^{-\mu}  (1-\mu)^{\mu-1} \right)^n
\end{equation*}
\end{observation}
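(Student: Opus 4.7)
Proof plan. The strategy is to reduce the discrete maximization to a one-variable continuous optimization by the entropy approximation of binomials, and then to analyze the resulting function via concavity plus an endpoint calculation.

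First, applying Observation \ref{obs:binomapprox} to replace each binomial $\binom{n}{m}$ by $2^{nH(m/n)}/n^{O(1)}$, and setting $\rho := r/n$ (so that $t = \floor{\beta r}$ satisfies $t/n = \beta\rho + O(1/n)$), the quantity inside the maximum on the left-hand side becomes $n^{O(1)} \cdot 2^{n g(\rho)}$ with
\[
g(\rho) \;:=\; H((1+\beta)\rho) - H(\beta\rho) + \beta\rho \log_2 c.
\]
Setting $\rho^* := 1/(2(1+\beta))$, we have $(1+\beta)\rho^* = 1/2$ so that $H((1+\beta)\rho^*) = 1$; from this one reads off that the right-hand side of the lemma equals $n^{O(1)} \cdot 2^{n g(\rho^*)}$ (the $2^n$ factor coming precisely from $\binom{n}{n/2}$). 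The lemma therefore reduces to the one-variable inequality $g(\rho) \leq g(\rho^*)$ for every $\rho \in [0,\rho^*]$, up to an additive $O((\log n)/n)$ absorbed by the $n^{O(1)}$ slack.

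Second, I would establish that $g$ is strictly concave on $(0, 1/(1+\beta))$. Using the standard identities $\tfrac{d}{d\mu}H(\mu) = \log_2((1-\mu)/\mu)$ and $\tfrac{d^2}{d\mu^2}H(\mu) = -1/(\mu(1-\mu)\ln 2)$, a direct calculation gives
\[
g''(\rho) \;=\; \frac{1}{\rho \ln 2}\left(\frac{\beta}{1-\beta\rho} - \frac{1+\beta}{1-(1+\beta)\rho}\right) \;=\; -\frac{1}{\rho(1-\beta\rho)(1-(1+\beta)\rho)\ln 2} \;<\; 0,
\]
so $g'$ is strictly decreasing on this interval.

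Third, I would evaluate the derivative at the endpoint. Since $(1+\beta)\rho^* = 1/2$ kills the first term of $g'$, one obtains
\[
g'(\rho^*) \;=\; \beta \log_2 \frac{c\beta}{2+\beta},
\]
whose sign is dictated by the hypothesis $\beta \leq 2/(c-1)$, equivalently $c\beta \leq 2+\beta$. Combining this sign information with the behavior $g'(\rho) \to +\infty$ as $\rho \to 0^+$ and the monotonicity of $g'$ from concavity, one locates $\mathrm{argmax}_{[0,\rho^*]} g$ and deduces the one-variable inequality. Converting back to binomial form and absorbing the polynomial slack from Stirling's approximation and from the discretization $t = \floor{\beta r}$ then yields the claim. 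The main obstacle I expect is this final endpoint analysis: concavity alone only shows that the maximum is attained at some stationary point, and invoking the hypothesis on $\beta$ in just the right way -- so that the maximizer pairs up with the target expression $2^n c^R/\binom{n}{R}$ on the right-hand side -- is the crux. The remaining bookkeeping of $n^{O(1)}$ factors from Stirling and from the floors $\floor{\beta r}$ is routine but needs care.
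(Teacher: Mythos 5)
What you have written is not a proof of Observation~\ref{obs:binomapprox} at all: it is a proof plan for Lemma~\ref{lem:derivative}, a different statement. The target here is the entropy approximation of the binomial coefficient, namely that $\binom{n}{\mu n}$ is sandwiched, up to $n^{O(1)}$ factors, by $\bigl(\mu^{-\mu}(1-\mu)^{\mu-1}\bigr)^n = 2^{nH(\mu)}$; the paper cites it from MacWilliams and Sloane without proof. Your proposal instead analyzes the maximization $\max_r \binom{n}{r+\floor{\beta r}}c^{\floor{\beta r}}/\binom{n}{\floor{\beta r}}$ over $r$ — there is no such maximization in the observation — and, more tellingly, its very first step \emph{applies} Observation~\ref{obs:binomapprox} to replace each binomial by $2^{nH(\cdot)}/n^{O(1)}$. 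You cannot prove a statement by invoking it as a given.

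For completeness, here is what a proof of the observation actually looks like, and it needs none of the concavity machinery you develop. Write $m = \mu n$. From $\sum_{j=0}^{n}\binom{n}{j}\mu^j(1-\mu)^{n-j} = 1$, the single term $j=m$ yields the upper bound $\binom{n}{m}\mu^{m}(1-\mu)^{n-m} \le 1$, i.e.\ $\binom{n}{\mu n} \le \mu^{-\mu n}(1-\mu)^{-(1-\mu)n} = \bigl(\mu^{-\mu}(1-\mu)^{\mu-1}\bigr)^n$. For the lower bound, the ratio of consecutive terms $T_{j+1}/T_j = \tfrac{n-j}{j+1}\cdot\tfrac{\mu}{1-\mu}$ shows that $T_j$ is maximized at $j=m$ when $\mu n$ is an integer, so $T_m \ge \tfrac{1}{n+1}$ and hence $\binom{n}{\mu n} \ge \tfrac{1}{n+1}\bigl(\mu^{-\mu}(1-\mu)^{\mu-1}\bigr)^n$. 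Alternatively, Stirling's formula $n! = \Theta(\sqrt{n})(n/e)^n$ applied to $\binom{n}{\mu n} = n!\,/\,\bigl((\mu n)!\,((1-\mu)n)!\bigr)$ gives both bounds at once with the polynomial correction $\Theta\bigl(1/\sqrt{\mu(1-\mu)n}\bigr)$. Either way, the observation is a two-line Stirling or binomial-sum argument, not a calculus-of-one-variable argument; the concavity and endpoint analysis you sketch belong to Lemma~\ref{lem:derivative}, where the observation is used as an ingredient.
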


\begin{observation} \label{obs:binomderivative}
    The derivative of the function $f(\mu)=\mu^{-\mu} (1-\mu)^{\mu-1} $ with respect to $\mu$ is $f'(\mu)=f(\mu) \left( \ln{(1-\mu)} - \ln{\mu} \right)$.
\end{observation}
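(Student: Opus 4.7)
\medskip\noindent
\textbf{Proof proposal.} The plan is a straightforward logarithmic differentiation. Since $f(\mu) = \mu^{-\mu} (1-\mu)^{\mu-1}$ is strictly positive on the open interval $(0,1)$, I would take logarithms on both sides and obtain
\[
\ln f(\mu) = -\mu \ln \mu + (\mu - 1) \ln(1-\mu),
\]
which replaces an awkward product of variable-exponent powers with a sum of two elementary functions that are easy to differentiate term by term.

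Next, I would differentiate both sides with respect to $\mu$, using the product rule on each of the two terms. The derivative of $-\mu \ln \mu$ is $-\ln \mu - 1$ (since $\mu \cdot (1/\mu) = 1$), and the derivative of $(\mu-1)\ln(1-\mu)$ is $\ln(1-\mu) + (\mu-1) \cdot \tfrac{-1}{1-\mu} = \ln(1-\mu) + 1$. Adding these, the constant terms $-1$ and $+1$ cancel, so
\[
\frac{f'(\mu)}{f(\mu)} \;=\; \ln(1-\mu) - \ln \mu.
\]
Multiplying through by $f(\mu)$ yields the claimed identity $f'(\mu) = f(\mu)\bigl(\ln(1-\mu) - \ln \mu\bigr)$.

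The proof is essentially mechanical; the only ``obstacle'' is keeping track of the signs when differentiating $(\mu-1)\ln(1-\mu)$, which is where the two $\pm 1$ contributions appear and cancel. Since $f$ is smooth and positive on $(0,1)$, logarithmic differentiation is fully justified there, and the identity extends by continuity to the endpoints in the natural limiting sense.
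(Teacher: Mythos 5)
Your proof is correct. The paper states this observation without proof, and logarithmic differentiation is the standard (and essentially only reasonable) way to verify it; your calculation—including the cancellation of the $\pm 1$ terms—checks out.
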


\noindent \textbf{Proof of \Cref{lem:derivative}}
    Let $r= \mu n$. let $f(\mu)=\mu^{-\mu} (1-\mu)^{\mu-1}$ We use \Cref{obs:binomapprox} to show that 
    \[\frac{\binom{n}{r+\floor{\beta r}}}{\binom{n}{\floor{\beta r}} c^{-\floor{\beta r}}} = n^{O(1)} \cdot \left(g(\mu)\right)^n \; , \]
    where $g(\mu)=\frac{f\left((1+\beta) \mu)\right)}{c^{-\beta \mu} f\left(\beta \mu\right)}$. We next show that $g$ is an increasing function of $\mu$, which means that the maximum value of $g(\mu)$ is obtained at $\mu=\frac{1}{2(1+\beta)}$. Using the quotient, product and chain rules for differentiation and \Cref{obs:binomderivative}, we can show that
    \begin{align*}
        g'(\mu)=& \frac{\beta \ln(c) c^{\beta \mu} f(\beta \mu) f((1+\beta) \mu) + (1+\beta) c^{- \beta \mu } f(\beta \mu) f'((1+\beta) \mu)-\beta c^{\beta \mu} f'(\beta \mu) f((1+\beta) \mu)}{f(\beta \mu)^2} \\
        \noalign{\vskip9pt}
        =& \frac{ \splitfrac{\beta \ln(c) c^{\beta \mu} f(\beta \mu) f((1+\beta) \mu) - \beta \ln\left(\frac{1-\beta \mu}{\beta \mu}\right)c^{\beta \mu } f(\beta \mu)f((1+\beta) \mu)}{
        +(1+\beta) \left( \ln \left(\frac{1-(1+\beta)\mu}{(1+\beta)\mu}\right) \right)c^{-\beta \mu} f(\beta \mu) f\left( (1+\beta) \mu \right)} }{f(\beta\mu)^2}\\
        \noalign{\vskip9pt}
        =& g(\mu) \left(\beta \ln(c) - \beta \ln\left(\frac{1-\beta \mu}{\beta \mu}\right) +(1+\beta) \ln \left(\frac{1-(1+\beta)\mu}{(1+\beta)\mu}\right)\right)
    \end{align*}
    Let $h(\mu)=\frac{g'(\mu)}{g(\mu)}$. If we show that the $h(\mu)$ is a decreasing function of $\mu$, when $0 \leq \mu \leq \frac{1}{2(1+\beta)}$, that is enough to show that $h(\mu)\geq 0$ for all $0 \leq \mu \leq \frac{1}{2(1+\beta)}$. We now compute $h'(\mu)$.
    \begin{equation*}
        h'(\mu)= \frac{\beta}{(1-\beta \mu) \mu} - \frac{1+\beta}{(1-(1+\beta)\mu)\mu} \; ,
    \end{equation*}
    which is negative for all $0 < \mu \leq \frac{1}{2(1+\beta)}$. Hence, $g\left(\frac{1}{2(1+\beta)}\right)$ is an upper bound for all $g(\mu)$ for $0 \leq \mu \leq \frac{1}{2(1+\beta)}$.
    
We now generalize this approach to come up with a farthest point oracle for the $\mind$ dispersion measure.
 
\paragraph{Heavy and low weight dispersion:} We now show that this approach can also be used to return dispersed satisfying assignments of large or small Hamming weight. For a $k$-CNF formula $\f$, recall that $\Omega_{\f,= W}$,$\Omega_{\f,\geq W}$, $\Omega_{\f,\leq W}$ denote the set of satisfying assignments to $\f$ with Hamming weight $W$, at least $W$ and at most $W$ respectively. Let $\optm(\f,s,\geq W)=\max_{S \subseteq \Omega_{\f,\geq W}, |S|=s} \PD(S)$, and $\optm(\f,s,\leq W)=\max_{S \subseteq \Omega_{\f,\leq W}, |S|=s} \PD(S)$. 

 \begin{lemma} [Farthest Point Oracle]
 \label{lem:schFarthestHeavy}
    Consider a local search algorithm $\LS_{\alpha, c}$. Then, for every $0 < \delta \leq \frac{2(1+\alpha)}{c-1}$, there exists an algorithm that takes as input a $k$-CNF formula $\f$, a set $S \subseteq \{0,1\}^n$ of size $s$ and $W \in [n]$. If $\Omega_{\f, =W}$ is non-empty, with probability at least $1-2^{-2n}$, it outputs $z^* \in \Om$ such that $(1-\delta) W\leq |z^*| \leq (1+\delta) W$ and $\mind(z^*,S) \geq \max_{z' \in \Omega_{\f, =W}} \mind(z',S)$. The algorithm runs in time $s^2 \cdot n^{O(1)} \cdot \frac{2^n c^{R}}{\binom{n}{R} }$, where $R=\floor{\frac{\delta n}{2(1+\alpha+\delta)}}$.
 \end{lemma}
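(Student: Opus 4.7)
The plan is to augment the anchored local search of Lemma~\ref{lem:anchor:sch} with a \emph{second} anchor that enforces the weight constraint. Specifically, besides anchoring around a point $z_j \in S$ (to control $d_H(\cdot, z_j)$) we simultaneously anchor around $0^n$ (to control Hamming weight). The algorithm iterates over all $j \in [s]$ and all $r \in [n]$, and in each iteration samples a starting point uniformly from the \emph{intersection} of the two shells
\[
B_{j,r,t} := A_{r-t,r+t}(z_j) \cap A_{W-t,W+t}(0^n),
\]
runs $\LS_{\alpha,c}(\f,y,t)$ with $t=\floor{\delta \min\{r,W\}/(1+\alpha)}$ capped at $R$, and records the output $z^*$ if both $(1-\delta)W\le|z^*|\le(1+\delta)W$ and $\mind(z^*,S)\ge r$ hold. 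At the end it returns the recorded $z^*$ with the largest $\mind(z^*,S)$.

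To analyze correctness, let $z_0^* \in \Omega_{\f,=W}$ be a maximizer of $\mind(\cdot,S)$ over $\Omega_{\f,=W}$, let $r^* = \mind(z_0^*,S)$, and let $j^*$ be an index with $d_H(z_0^*,z_{j^*}) = r^*$. Consider the iteration with $(j,r) = (j^*, r^*)$. Observe that every point within Hamming distance $t$ of $z_0^*$ lies in $B_{j^*,r^*,t}$ (flipping at most $t$ bits of $z_0^*$ shifts both $|z_0^*|$ and $d_H(z_0^*, z_{j^*})$ by at most $t$), so the $t$-ball around $z_0^*$ contributes $\binom{n}{t}$ ``good'' starting points. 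Therefore a uniform $y \in B_{j^*,r^*,t}$ satisfies $d_H(y,z_0^*) \le t$ with probability at least $\binom{n}{t}/|B_{j^*,r^*,t}|$, and conditioned on this the \sch-walk analysis (which tracks the specific target $z_0^*$, as in \Cref{obs:schmain}) outputs $z_0^*$ itself with probability $\ge c^{-t}$. Repeating this iteration $O(n^2) \cdot (|B_{j^*,r^*,t}|/\binom{n}{t}) \cdot c^{t}$ times amplifies the success probability to at least $1-2^{-2n}$. Since $z_0^*$ itself has $|z_0^*| = W$ and $\mind(z_0^*,S) = r^*$, it passes the acceptance checks exactly, giving the \emph{exact} guarantee $\mind(z^*,S) \ge r^*$.

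For the runtime, the intersection $B_{j,r,t}$ is at most the smaller of the two shells, so the quantity $|B_{j,r,t}|/\binom{n}{t} \cdot c^t$ is bounded by the single-shell bound in the proof of Lemma~\ref{lem:anchor:sch}, giving $n^{O(1)} \cdot \tfrac{2^n c^R}{\binom{n}{R}}$ via Lemma~\ref{lem:derivative}. Multiplying by the $n\cdot s$ choices of $(r,j)$, and by the extra factor of $s$ needed to compute $\mind(z^*,S)$ against all of $S$ at each acceptance check, yields the claimed bound $n^{O(1)} \cdot s^2 \cdot \tfrac{2^n c^R}{\binom{n}{R}}$.

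The principal obstacle is the \emph{exactness} of the $\mind$ guarantee: a naive triangle-inequality argument only gives $\mind(z^*,S) \ge r^* - (1{+}\alpha)t$. To obtain equality we crucially need the \sch-walk to terminate at $z_0^*$ \emph{itself}, not merely some nearby satisfying assignment -- this uses the finer form of the walk analysis that tracks a designated target and gives probability $\ge c^{-d_H(y,z_0^*)}$ of reaching that specific assignment. A secondary technical point is verifying that all $t$-bit perturbations of $z_0^*$ really do lie inside both shells simultaneously, which is why the single cap $t = \floor{\delta\min\{r,W\}/(1+\alpha)}$ (rather than two independent caps) is chosen.
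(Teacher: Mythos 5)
Your intersection-of-shells anchoring is a reasonable alternative to the paper's approach, which anchors a \emph{single} shell at a time: the paper loops over $z\in S\cup\{\mathbf 0\}$ and samples $y$ from $A_{r-t,r+t}(z)$, handling the weight constraint via a case split on $W\geq r$ versus $W<r$ (anchoring at the nearest $z\in S$ in the first case and at $\mathbf 0$ with radius parameter $W$ in the second). Your intersection $A_{r-t,r+t}(z_j)\cap A_{W-t,W+t}(\mathbf 0)$ merges the two cases at the cost of slightly trickier sampling, and your running-time bookkeeping via the "smaller shell'' bound is sound.

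The critical flaw is the exactness claim. You assert that the \sch walk reaches the designated target $z_0^*$ \emph{itself} with probability $\geq c^{-t}$, but neither the abstract guarantee of $\LS_{\alpha,c}$ nor \Cref{obs:schmain} says this. By definition, $\LS_{\alpha,c}(\f,y,t)$ merely promises to output \emph{some} satisfying assignment $z^*\in\Om$ with $d_H(y,z^*)\leq\alpha t$, and the \sch walk terminates at the \emph{first} satisfying assignment it encounters — which may well be an assignment $z'\neq z_0^*$ sitting at distance $<\alpha t$ from $y$. With a black-box $\LS_{\alpha,c}$, the only distance guarantee you can extract is the triangle-inequality bound $\mind(z^*,S)\geq r^*-(1+\alpha)t\geq(1-\delta)r^*$, which you yourself label the "naive'' bound. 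Similarly the weight guarantee comes out to $(1\pm\delta)W$ rather than $|z^*|=W$. In fact, this is also all the paper proves: the algorithm's output specification and the proof both carry the factor $(1-\delta)$, as does the downstream use in \Cref{thm:sch-heavy-full}. The lemma statement's $\mind(z^*,S)\geq\max_{z'\in\Omega_{\f,=W}}\mind(z',S)$ is missing the $(1-\delta)$ factor, and the device you introduce to try to recover exactness is unsound; if you drop it and simply invoke the triangle inequality, your intersection-of-shells argument correctly recovers the $(1-\delta)$-approximate guarantee that the paper actually needs and proves.
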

 \begin{proof}
     Consider the following algorithm.
     
     \begin{algorithm}[H] \label{alg:schFarthestMinHeavy}
 \KwIn{A $k$-CNF formula $\f$, $S \subseteq \{0,1\}^n, |S|=s, W \in [n]$}
 \KwOut{$z^* \in \Om$ with $(1-\delta) W \leq |z^*| \leq (1+\delta) W$ $\mind(S,z^*) \geq \left(1-\delta \right) \max_{z \in \Omega_{\f,=W}} \mind(z, S)$ if $\Omega_{\f,=W}$ is non-empty, $\perp$ otherwise.}
 \SetKwFor{RepeatTimes}{repeat}{times:}{endfor}
 Set $z^*=\perp, D=0$. \\
 \For{$r \in [n]$}{
    \For{$z \in S \bigcup \{\mathbf{0}\}$}{
        Let $t:=\min\left\{ \floor{\frac{\delta r}{1+\alpha}}, R\right\}$ \\
        \RepeatTimes{$n^{O(1)} \cdot |A_{r-t,r+t} (z)|$}{
        $u:=\ALS_{\alpha, c ,\delta}(\f,z,r)$\\
    \If{$u$ satisfies $\f$, $\mind(S,u) > D$, and $(1-\delta) W\leq |u| \leq (1+\delta) W$}{
        $z^* \gets u,D \gets \mind(S,u)$.
    }
    
    }}}
 \caption{\textsf{\sch-Farthest-Weighted}}
\end{algorithm}
Suppose there exists $z_0 \in \Om$ such that $\mind(z_0, S) \geq r$ and $|z_0|= W$. This implies that there exists $z \in S$, such that $d_H(z,z_0)=r$ and $d_H(z',z_0) \geq r$, for all $z' \in S \setminus \{z\}$. First, consider the case that $W \geq r$, and that $\ALS_{\alpha, c ,\delta}(\f, z,r)$ outputs $z^*$, such that $d_H(z^*, z_0) \leq \delta r$. Then, we can use the triangle inequality to show that for all $z' \in S$,
$$d_H(z^*,z') \geq d_H(z',z_0)  - d_H(z*,z_0) \geq r  - \delta r\;,$$ which implies that $\mind(z^*, S) \geq  \left(1  - \delta \right)r$. Further, $W-\delta r \leq |z^*| \leq W+\delta r$, and because $W \geq r$, this implies that $(1-\delta) W\leq |z^*| \leq (1+\delta) W$. Now, note that $\ALS_{\alpha,\delta}(\f, z, r)$ outputs such a $z^*$ if $y$, the starting assignment it samples, is within distance $t$ of $z_0$. Note that $t$, chosen in line 4 of $\ALS_{\alpha, c ,\delta}$ depends on $r$. This event occurs with probability at least $\frac{\binom{n}{t}}{|A_{r-t , r+t}(z)|}$. 

Now, suppose that $r > W$, and $\ALS_{\alpha, c ,\delta}(\f, \mathbf{0},W)$ outputs $z^*$, such that $d_H(z^*, z_0) \leq \delta W$. As in the previous case, the triangle inequality will imply that $d_H(z^*,z') \geq (1-\delta)r$, for all $z' \in S$, and $(1+\delta) W \leq |z^*| \leq (1+\delta) W$, and just as in the previous case, $\ALS_{\alpha,\delta}(\f, \mathbf{0}, r')$ outputs such a $z^*$ if $y$, the starting assignment it samples, $y$ is within distance $t'$ (where $t'$ is the value chosen corresponding to $r'$ in $\ALS_{\alpha, c ,\delta}$) of $z_0$, which happens with probability at least $\frac{\binom{n}{t}}{|A_{r-t , r+t}(z)|}$.

The rest of the proof (bounding the running time) is identical the proof of \cref{lem:anchor:sch}, with the dependence on $s$ coming from the number of nested loops.
\end{proof}

\subsection{\sch-based algorithms for dispersion: Proofs of \Cref{thm:sch-for-dia} and \Cref{thm:sch-heavy-full}} \label{sec:schfinal}
\subsubsection*{Proof of \Cref{thm:sch-for-dia}: Diameter}
The proof of \Cref{thm:sch-for-dia} is similar to that of \Cref{thm:ppz-for-dia}. \Cref{lem:anchor:sch} implies  that there exists a $1-\delta$-approximate farthest point oracle that takes as input a $k$-CNF formula $\f$ and $z \in \Q{n}$, and with probability $1-2^{-2n}$, outputs $z^* \in \Om$ such that $d_H(z,z^*) \geq (1-\delta) \max_{z' \in \Om} d_H(z,z')$. We first use \sch's algorithm for $k$-SAT to find one satisfying assignment $z_1^*$ to $\f$. Let $z_2^*$ be the satisfying assignment output by the $1-\delta$-approximate farthest point oracle with $z^*_1$ and $\f$ as input. The triangle inequality then implies that $z_1^*$ and $z_2^*$, will satisfy $d_H(z_1^*, z_2^*) \geq \frac{1}{2} (1-\delta)$. The running time guarantee for the first and second algorithms come from using $c=k-1, \alpha=1+\frac{2}{k-2}$ that we described in \Cref{sec:schprelims}. 

\subsubsection*{Proof of \Cref{thm:sch-heavy-full}: Weighted min-dispersion}
Firstly, it is easy to observe that for any set $S \subseteq \Q{n}$ and $W \in [n]$, we can use the algorithm in \Cref{lem:schFarthestHeavy} to output $z^* \in \Omega_{\f, \geq (1-\delta )W}$ such that $\mind( z^*, S) \geq (1-\delta) \max_{z \in \Omega_{\f, \geq W}} \mind(z, S)$. We do so by iterating over all $W' \in \{W, W+1, \dots, n\}$, using $\textsf{\sch-Farthest-Weighted}(\f, S, W')$, and returning $z^*$ with maximum value of $\mind(z^*, S)$. This can be used along with \Cref{lem:mindispersion} to prove \Cref{thm:sch-heavy-full}.
\subsubsection*{Proof of \Cref{thm:sch-for-sumdisp}: Sum-dispersion} We refer the reader to \Cref{app:schsum} for the proof. 
\section{Applications and generalisations} \label{sec:applications}

In Sections~\ref{sec:isometric} and~\ref{sec:lfs} we demonstrate that the techniques we developed in \Cref{sec:sch} are fairly general and can be also used to obtain diverse solutions to several NP-complete optimisation problems. Following this, Section~\ref{sec:fast} shows how an improvement in runtime of Sch\"{o}ning's and PPZ algorithms (for finding one solution) can be obtained if $\Om$ has many dispersed solutions. Finally, Section~\ref{sec:csp} shows how to extend our Sch\"{o}ning result to finding diverse solutions to CSPs.

For simplicity, we focus on the $\optm$ diversity measure in this section. It is easy to generalize the results to the $\opts$ diversity measure as well. 

\subsection*{Optimization Problems and Bi-Approximations}
\label{sec:app2}

We show that our techniques can be used in a black-box as well as white-box manner for a broad class of optimization problems called \emph{subset problems}. A subset problem consists of an implicitly defined family $\calf$ of subsets of $[n]$, and the problem is to find $A \in \calf$ of minimum (or maximum) size. We start with describing a framework that captures all these problems. This framework will also help us to abstract the notion of \emph{an isometric reduction}, which we will define formally in \Cref{sec:isometric}.

\paragraph{Implicit set systems: }An implicit set system $\Phi$ is a function that takes a string $I \in \{0,1\}^*$ (called the instance) and outputs an integer $n \in \IN$ and $\mathcal{F}_I \subseteq \Q{n}$, called the feasible set of $\Phi$. Elements in $\{0,1\}^n$ outside $\calf_I$ are called infeasible. Many natural computational problems we consider can be defined using implicit set systems. For an implicit set system $\Phi$, we define the computational problem $\Phi\textsc{-Subset}$.
\begin{problem}[$\Phi\textsc{-Subset}$]
    \textbf{Input:} An instance $I \in \{0,1\}^*$ to $\Phi$. 
    
    \textbf{Output:} $A \in \mathcal{F}_I$, if $\mathcal{F}_I$ is non-empty.
\end{problem}
An example of an implicit set system is one generated by $k$-CNF formulas. If the input instance $I$ is a $k$-CNF formula $\f$ over $n$ variables (using some canonical encoding of formulas as strings), then $\Phi(\f)=\left(n, \Om \right)$ ($\mathcal{F}_I$ is defined to be empty for all other strings for consistency). In this case, the problem $\Phi\textsc{-Subset}$ is $\np$-complete. Other examples of implicit set systems are those generated by graphs, where the input string $I$ encodes a graph $G$, $n_I=|V(G)|$, and $\mathcal{F}_I$ is the set of all independent sets of $G$, or the set of all vertex covers of $G$, etc. For such problems, sets are identified with the corresponding bit-vectors. Throughout this section, we will interchangeably use strings in $\Q{n}$ to denote subsets of $[n]$ and vice versa.

For the graph problems posed above, the problem $\Phi\textsc{-Subset}$ is in $\p$, and for an instance $I$ we are interested in finding the element (set) in $\mathcal{F}_I$ that has minimum (or maximum) weight (size). 

\begin{problem}[$\Phi\textsc{-Min}$] \textbf{Input:} An instance $I \in \{0,1\}^*$ to $\Phi$. 

    \textbf{Output:} $A \in \mathcal{F}_I$, of minimum weight if $\mathcal{F}_I$ is non-empty.
\end{problem}
\begin{problem}[$\Phi\textsc{-Max}$]
    \textbf{Input:} An instance $I \in \{0,1\}^*$ to $\Phi$. 
    
    \textbf{Output:} $A \in \mathcal{F}_I$, of maximum weight if $\mathcal{F}_I$ is non-empty.
\end{problem}

An example of $\Phi\textsc{-Min}$ is Minimum Vertex Cover and an example of $\Phi\textsc{-Max}$ is Maximum Independent Set. For an instance $I$ for these problems, we use $\OPT_{\phimax}(I)$ and $\OPT_{\phimin}(I)$ to denote the size of the sizes of the largest and smallest sets in $\mathcal{F}_I$ respectively (if $\mathcal{F}_I$ is non-empty). We also use $\calf_{I, \text{min}}$ and $\calf_{I, \text{max}}$ to denote the subsets of $\calf_{I}$ consisting of the elements of smallest and largest weight respectively.

Now, we are interested in finding approximately maximally diverse solutions to the $\Phi\textsc{-Min}$ and $\Phi\textsc{-Max}$ problems, that are also approximately optimal. In the following definition of bi-approximation, let $C_1 \geq 1$ and $C_2 \leq 1$.
\begin{problem}[$(C_1, C_2)\text{-}\divmin$]
    \textbf{Input:} An instance $I$ to the implicit set system $\Phi$, $s \in \IN$ \\
    \textbf{Output:} $S^* \subseteq \mathcal{F}_I$ with $s$ elements such that every $z \in S^*$ has weight at most $C_1 \cdot \OPT_{\phimin} (I)$, and $\PD(S^*)  \geq C_2 \cdot \max_{S \subseteq \mathcal{F_{I, \text{min}}}, |S|=s} \PD(S)$
\end{problem}

\noindent
For the next definition, let $C_1 \leq 1$ and $C_2 \leq 1$.

\begin{problem}[$(C_1, C_2)\text{-}\divmax$]
    \textbf{Input:} An instance $I$ to the implicit set system $\Phi$, $s \in \IN$ \\
    \textbf{Output:} $S^* \subseteq \mathcal{F}_I$ of $s$ elements such that every $z \in S^*$ has weight at least $C_1 \cdot  \OPT_{\phimax} (I)$, and $\PD(S^*)  \geq C_2 \cdot \max_{S \subseteq \mathcal{F_{I, \text{max}}}, |S|=s} \PD(S)$
\end{problem} 

\subsection{Isometric reductions}\label{sec:isometric}

Our first set of applications results from \Cref{thm:sch-heavy-full} on finding diverse satisfying assignments for a $k$-CNF formula that has Hamming weight at least (or at most) a prescribed value $W \in [n]$. Using ``isometric'' reductions between popular NP-complete optimization problems and SAT, we obtain bi-criteria approximation algorithms for diverse solutions of many NP-complete optimization problems. We formally define such reductions first.
\begin{definition}[Isometric Reduction]
    Consider two implicit set systems $\Phi_1$ and $\Phi_2$. A isometric reduction from $\Phi_1$ to $\Phi_2$ is given by a computable function $f$ and a family of computable functions $\{g_I \}$ for every instance $I$ of $\Phi_2$. The function $f$ takes as input an instance $I_1 \in \{0,1\}^*$ of $\Phi_1$ with $\Phi(I_1)=\left(n_1, \mathcal{F}_1\right)$ and outputs an instance $I_2$ of $\Phi_2$ with $\Phi(I_2)=\left(n_2,\calf_2\right)$ such that $n_2=n_1$ and $|\calf_2|=|\calf_1|$. The function $g_{I_2}$ is a bijective function $g_{I_2}: \calf_2 \to \calf_1$, that has the following properties.
    \begin{itemize}
        \item For each $A \in \mathcal{F}_2$, $|A|=|g_{I_2}(A)|$.
        \item For any $A_1, A_2 \in \mathcal{F}_2$, $d_H(A_1, A_2)= d_H(g_{I_2}(A_1), g_{I_2}(A_2))$.
    \end{itemize}
\end{definition}

An isometric reduction preserves the geometry of the solution space. This implies the following theorem.

\begin{theorem} \label{thm:isometricreduction}
    Consider two implicit set systems $\Phi_1$ and $\Phi_2$ such that there exists an isometric reduction $\left( f, \{g_I\}\right)$ from $\Phi_1$ to $\Phi_2$. Suppose there exists an algorithm that solves the $(C_1, C_2)\textsc{-Diverse-}\Phi_2\textsc{-Min}$ problem with input instance $I$ and $s \in \IN$, running in time $\tau(n,s,|I|)$. Then, given an instance $I_1$ for $\Phi_1$, and $s \in \IN$, there exists an algorithm for $(C_1, C_2)\textsc{-Diverse-}\Phi_1\textsc{-Min}$ running in time $\tau_f+ \tau(n, s, |f(I_1)|)+\tau_{g_{I_{2}}}$. Here, $\tau_f$ and $\tau_{g_{I}}$ denote the time it takes to compute the functions $f$ and $g_I$.     
\end{theorem}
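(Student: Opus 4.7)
\medskip\noindent\textbf{Proof proposal for \Cref{thm:isometricreduction}.} The plan is to use the isometric reduction as a black-box pre- and post-processing wrapper around the assumed algorithm for $\Phi_2$. Given an instance $I_1$ of $\Phi_1$ and a parameter $s$, I would first compute $I_2 := f(I_1)$ in time $\tau_f$, then invoke the assumed $(C_1,C_2)$-$\divmin$ algorithm for $\Phi_2$ on $(I_2, s)$, obtaining in time $\tau(n, s, |f(I_1)|)$ a set $S_2^* = \{A_1,\ldots, A_s\} \subseteq \calf_{I_2}$ satisfying $|A_j| \leq C_1 \cdot \OPT_{\phimin}(I_2)$ for all $j$ and $\PD(S_2^*) \geq C_2 \cdot \max_{S \subseteq \calf_{I_2,\text{min}}, |S|=s} \PD(S)$. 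Finally, I would return $S_1^* := \{g_{I_2}(A_1), \ldots, g_{I_2}(A_s)\} \subseteq \calf_{I_1}$, which takes $\tau_{g_{I_2}}$ time. This yields the claimed total runtime $\tau_f + \tau(n,s,|f(I_1)|) + \tau_{g_{I_2}}$.

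The correctness follows from the two defining properties of an isometric reduction. First, since $g_{I_2}$ is a bijection between $\calf_{I_2}$ and $\calf_{I_1}$ that preserves weights ($|A|=|g_{I_2}(A)|$), we have the identity $\OPT_{\phimin}(I_1) = \OPT_{\phimin}(I_2)$ and more strongly $g_{I_2}$ maps $\calf_{I_2,\text{min}}$ bijectively onto $\calf_{I_1,\text{min}}$. Consequently, each $g_{I_2}(A_j) \in S_1^*$ has weight $|g_{I_2}(A_j)| = |A_j| \leq C_1 \cdot \OPT_{\phimin}(I_2) = C_1 \cdot \OPT_{\phimin}(I_1)$, establishing the weight approximation guarantee. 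Second, because $g_{I_2}$ preserves Hamming distances, we get $\PD(S_1^*) = \PD(S_2^*)$, and moreover $\max_{S\subseteq \calf_{I_1,\text{min}},|S|=s}\PD(S) = \max_{S\subseteq \calf_{I_2,\text{min}},|S|=s}\PD(S)$ since $g_{I_2}$ provides a size-, weight-, and distance-preserving bijection between the two collections of $s$-subsets. Chaining these equalities with the dispersion guarantee on $S_2^*$ gives the required $\PD(S_1^*) \geq C_2 \cdot \max_{S \subseteq \calf_{I_1,\text{min}}, |S|=s} \PD(S)$.

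There is essentially no technical obstacle here: once the definition of isometric reduction is in hand, the proof is a direct bookkeeping exercise. The only point warranting care is making sure the image of the feasible set of $I_2$ under $g_{I_2}$ really is the full feasible set of $I_1$ (guaranteed by $|\calf_2|=|\calf_1|$ together with bijectivity of $g_{I_2}$), so that the $\max$ over $s$-subsets of $\calf_{I_1,\text{min}}$ on the right-hand side of the dispersion bound matches the corresponding $\max$ over $\calf_{I_2,\text{min}}$. An analogous statement and proof for $\divmax$ follows by swapping ``min'' for ``max'' and reversing the inequality direction on $C_1$ throughout.
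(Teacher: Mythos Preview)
Your proof is correct and is exactly the argument the paper has in mind. In fact, the paper does not give a formal proof at all: it simply states that ``an isometric reduction preserves the geometry of the solution space'' and declares the theorem as an immediate consequence, so your write-up supplies the bookkeeping the paper leaves implicit.
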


\noindent
Clearly, an analogous theorem holds for $(C_1, C_2)\textsc{-Diverse-}\Phi_2\textsc{-Max}$ also. We now demonstrate some simple examples of isometric reductions, which imply the results in the first three rows of \Cref{table: table}. We leave the task of finding more interesting isometric reductions to future work. 

\paragraph{Maximum Independent Set: }We begin by noting that an independent set instance can be written as a $2$-CNF formula $\f_{IS}$: for every $v\in V$, we let $x_v \in\set{0,1}$ such that $x_v=1$ if and only if $v$ is chosen in the independent set. For every edge $e = (u,v) \in E$, we define the constraint $\neg x_u \vee \neg x_v $. Note that this constraint is satisfied if and only if at most one vertex participating in the edge is chosen in the independent set. The formula $\f_{IS}$ is a conjunction of all the constraints corresponding to the edges in $E$. Then an independent set of $G$ corresponds to a satisfying assignment of $\f_{IS}$ and vice versa. Moreover, the Hamming weight of a satisfying assignment of $\f_{IS}$ is equal to the size of the corresponding independent set. Finding an independent set of maximum size is therefore equivalent to finding a satisfying assignment of $\f_{IS}$ of maximum Hamming weight. Moreover, the Hamming distance between two satisfying assignments $z_1,z_2$ corresponding to two independent sets $I_1$ and $I_2$ are preserved, in the sense that $d_H(z_1,z_2) = |I_1\Delta I_2|$, where $\Delta$ denotes the symmetric difference between sets. 

\paragraph{Minimum Vertex Cover: }Every vertex cover instance can be written as a $2$-CNF formula $\f_{VC}$: For every edge $e = (u,v) \in E$, we define the constraint $ x_u \vee x_v $. Note that this constraint is satisfied if and only if at least one vertex participating in the edge is chosen in the vertex cover. The formula $\f_{VC}$ is a conjunction of all the constraints corresponding to the edges in $E$, which implies that a vertex cover of $G$ corresponds to a satisfying assignment of $\f_{VC}$ and vice versa, and the Hamming weight of a satisfying assignment of $\f_{IS}$ is equal to the size of the vertex cover. Finding a vertex cover of minimum size  is therefore equivalent to finding a satisfying assignment of $\f_{VC}$ of minimum Hamming weight, and it can be seen that this reduction is isometric.

\paragraph{Minimum $d$-hitting set:} Recall that an instance of the $d$-hitting set problem consists of a family $\mathcal{S}$ of subsets of $[n]$ of size $d$, with the output being a subset of $[n]$ of minimum size that has a non-empty intersection with each subset in $\mathcal{S}$. This can easily be written as a $d$-CNF formula $\f$ as follows. For every set $S \in \mathcal{S}$, we define a clause $C_S$ which is a disjunction of all the non-negated literals corresponding to the elements in $S$, with the formula $\f$ being the conjunction of the clauses corresponding to each $S \in \mathcal{S}$. Finding hitting set of minimum size corresponds to finding a satisfying assignment to this formula of minimum Hamming weight and it can be seen that this reduction is isometric as well.  

\begin{remark}
    We note that the problems of diverse vertex cover and diverse hitting set have been studied in the setting of parameterized complexity by~\cite{baste2019fpt, baste2022diversity}. However, in these works the focus is on obtaining optimal solutions with optimal diversity and their results are not directly comparable to ours. A typical runtime from the existing results is of the type $2^{s\ell}$ where $s$ is the number of solutions required and $\ell$ is the size of a solution (e.g., the size of the minimum vertex cover). Note that in some settings, $s \ell = \Omega(n^{\alpha})$, for some $\alpha >1$, rendering the above running time of $2^{n^{\alpha}}$. Our results in \Cref{thm:isometricreduction} state that at the cost of relaxing both the quality of the solutions obtained and for approximating the maximum dispersion, the running time can be reduced to $\text{poly(s)} \cdot o\left(2^n\right)$.
\end{remark}

\subsection{Local feasibility search }\label{sec:lfs}
What about problems for which we cannot define an isometric reduction to $k$-SAT? For several of those problems, we point out that the techniques developed in \Cref{sec:sch} are very general and can be adapted to deal with several optimization problems. For the applications in this section, we restrict our attention to minimization problems. We start with defining a version of local search for subset problems similar to \sch's local search for $k$-SAT. 

\begin{definition}(Parameterized approximately-local feasibility search - $(\alpha,c)$-PLFS) An $(\alpha,c)$-PLFS algorithm for an implicit set system $\Phi$ takes as input an instance $I$ for $\Phi$, $A \in \Q{n}$, and $t \in \IN$, and if there exists a feasible solution $A' \in \mathcal{F}_I$ such that $d_H(A,A') \leq t$, outputs an $A^* \in \mathcal{F}_I$ such that $d_H(A, A^*) \leq \alpha t$ in time $c^t \cdot n^{O(1)}$. 
\end{definition}
When $\alpha=1$ we just call the algorithm a PLFS algorithm. Note that there are several examples of problems admitting PLFS algorithms. For example, the algorithms $\LS_{1,k}$ and $\LS_{3, k-1}$ described in \Cref{sec:sch} for $k$-SAT. We also note that this is the exact same definition of a local search used in \Cref{sec:sch}, generalized to subset problems. 

\begin{remark}
    Notice that a PLFS algorithm only searches for any feasible solution in $B(A,t)$, where $B(A,t)$ is the Hamming ball of radius $t$ around $A$. We note that this is potentially easier than searching for a solution of minimum weight in $B(A,t)$. Indeed, for several graph problems, the existence of an algorithm running in time $f(t) \cdot n^{O(1)}$ that finds a solution of minimum weight in $B(A,t)$ is unlikely~\cite{fellows2012local}.
\end{remark}
 
\begin{theorem}[From PLFS to Dispersion] \label{thm:PLFS}
    Let $\Phi$ be an implicit set system that admits an $(\alpha,c)$-PLFS. 
    Then, for every $0 < \delta \leq \frac{2(1+\alpha)}{c-1}$, there exists an algorithm that takes as input an instance $I$ to $\Phi$, $s \in \IN$, and, if $|\calf_{I, \text{min}}| \geq s$, outputs $S^* \subseteq \mathcal{F}_I$ of size $s$ such that $|A| \leq (1+\delta) \OPT_{\phimin}(I)$ for all $A \in S^*$, and $\PD(S^*) \geq \frac{1}{2}(1-\delta) \max_{S \subseteq \mathcal{F_{I, \textsc{min}}}, |S|=s} \PD(S)$. This algorithm runs in time $s^3 \cdot n^{O(1)} \cdot \frac{2^n c^{R}}{\binom{n}{R} }$, where $R=\floor{\frac{\delta n}{2(1+\alpha+\delta)}}$.
    In particular, when $\delta = \frac{2(1+\alpha)}{c-1}$, this algorithm runs in time $O^*\left( s^3 \cdot \left( \frac{2}{1+1/c} \right)^n \right)$.
\end{theorem}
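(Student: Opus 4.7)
The plan is to port the three-step strategy behind \Cref{thm:sch-heavy-full} from $k$-SAT to an arbitrary subset problem, treating the given $(\alpha,c)$-PLFS exactly as Sch\"oning's bounded-length walk $\LS_{\alpha,c}$ was treated in \Cref{sec:sch}. The ingredients are: an anchored PLFS, a weight-controlled farthest-point oracle, and a Gonzalez-style farthest-insertion loop with an outer sweep over the unknown value of $\OPT_{\phimin}(I)$.

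First, I would define an anchored PLFS, call it $\textsf{A-PLFS}_{\alpha,c,\delta}$, analogous to $\ALS_{\alpha,c,\delta}$: given an instance $I$, anchor $z \in \{0,1\}^n$, and radius $r$, set $t = \min\{\lfloor \delta r/(1+\alpha)\rfloor, R\}$, sample $y$ uniformly from the annular shell $A_{r-t,r+t}(z)$, and return the output of PLFS on $(I, y, t)$. The argument of \Cref{lem:anchor:sch} carries over verbatim because it uses only the PLFS contract and the triangle inequality: if some $A_0 \in \calf_I$ satisfies $d_H(z, A_0) = r$, then $y$ falls within Hamming distance $t$ of $A_0$ with probability $\binom{n}{t}/|A_{r-t,r+t}(z)|$, in which case the PLFS returns an $A^* \in \calf_I$ with $d_H(A^*, A_0) \leq \alpha t$ and hence $d_H(z, A^*) \geq (1-\delta) r$. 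Bounding $|A_{r-t,r+t}(z)|/\binom{n}{t}$ via \Cref{lem:derivative} yields an approximate farthest-point oracle running in time $n^{O(1)} \cdot 2^n c^R/\binom{n}{R}$.

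Next, I would upgrade this to a weight-controlled oracle by imitating \Cref{lem:schFarthestHeavy}. For a set $S$ of already-chosen solutions and a weight guess $W$, the oracle iterates over $r \in [n]$ and anchors $z \in S \cup \{\mathbf{0}\}$, invokes $\textsf{A-PLFS}$ on each pair, and retains any output $A^*$ that is feasible, satisfies $|A^*| \leq (1+\delta) W$, and improves the current best min-distance to $S$. The same case split applies: when there exists a feasible $A_0$ with $|A_0| \leq W$ and $\mind(S, A_0) = r$, either $r \leq W$ (anchor at the nearest $z \in S$, yielding $|A^*| \leq |A_0| + d_H(A^*, A_0) \leq W + \delta r \leq (1+\delta) W$), or $r > W$ (anchor at $\mathbf{0}$ with effective radius $W$). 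In either branch the triangle inequality delivers both the weight bound and $\mind(S, A^*) \geq (1-\delta) r$. Finally, I would plug the oracle into the Gonzalez farthest-insertion framework of \Cref{lem:mindispersion}: seed $S$ with a feasible solution of weight $\leq (1+\delta) W$ obtained by a single oracle call with $S = \emptyset$, then iteratively append oracle outputs for $s - 1$ more rounds; compose the oracle's $(1-\delta)$ factor with Gonzalez's $1/2$ factor to obtain the claimed diversity guarantee. Since $\OPT_{\phimin}(I)$ is unknown, wrap everything in an outer sweep over $W \in \{0, 1, \ldots, n\}$ and return the set obtained for the smallest $W$ that yields $s$ distinct feasible solutions. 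The overall overhead is $n$ (outer sweep) $\cdot$ $s$ (Gonzalez rounds) $\cdot$ $s$ (anchors per oracle call) $\cdot$ $s$ (amplification for independence), giving the stated $s^3 \cdot n^{O(1)} \cdot 2^n c^R/\binom{n}{R}$ bound.

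The main obstacle I expect is reconciling the weight constraint with the oracle's approximation guarantee: the case split above requires the existence of a feasible $A_0$ with $|A_0| \leq W$ and the right min-distance to $S$, which is only guaranteed once the outer sweep reaches $W = \OPT_{\phimin}(I)$. One must therefore verify that (i) for $W < \OPT_{\phimin}(I)$ the procedure correctly fails to produce $s$ feasible witnesses (by definition of $\OPT_{\phimin}$), and (ii) at $W = \OPT_{\phimin}(I)$ the hypothesis $|\calf_{I, \text{min}}| \geq s$ supplies, at every Gonzalez iteration, a witness $A_0$ of weight exactly $\OPT_{\phimin}(I)$ whose min-distance to the current $S$ meets the benchmark $\max_{S \subseteq \calf_{I, \text{min}}, |S| = s} \PD(S)$. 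This reduces to the standard Gonzalez analysis, but tracking the weight bound through the triangle-inequality step and through the amplification argument requires care, as does the union bound over all $O(ns)$ oracle invocations to drive the failure probability down to $o(1)$.
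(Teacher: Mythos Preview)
Your proposal is correct and is exactly the paper's approach: the paper's proof is the one-line observation that an $(\alpha,c)$-PLFS satisfies the same contract as $\LS_{\alpha,c}$, so \Cref{lem:anchor:sch}, \Cref{lem:schFarthestHeavy}, and \Cref{lem:mindispersion} (and hence \Cref{thm:sch-heavy-full}) carry over verbatim to implicit set systems, which is precisely the port you have spelled out in detail. Your flagged worry about step (i) is slightly off---for $\OPT_{\phimin}/(1+\delta) \le W < \OPT_{\phimin}$ the weight filter $|A^*|\le(1+\delta)W$ can admit $s$ feasible solutions with no diversity guarantee---but the fix is routine (run the full Gonzalez loop for every $W\in\{0,\dots,n\}$ and keep the output with largest $\PD$, or first pin down $\OPT_{\phimin}$ via anchored PLFS at $\mathbf{0}$ before fixing $W$).
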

\begin{proof} 
    We note that the $(\alpha,c)$-PLFS for $\Phi$ has the same guarantees that $\LS_{\alpha,c}$ has for $k$-SAT. This implies that all the theorems in \Cref{sec:sch}, and in particular \Cref{{thm:sch-heavy-full}} can be carried over to implicit set systems.
\end{proof}

The question now is, which problems admit a PLFS algorithm? In the field of parametrized complexity, there is a huge body of work on FPT algorithms parametrized by the solution size. While this does not directly imply PLFS algorithms, the framework of monotone local search by Fomin, Gaspers, Lokshtanov, and Saurabh~\cite{ConicSearch} provide a bridge connecting PLFS to FPT algorithms. 

\paragraph{Monotone local search:} For an implicit set system $\Phi$, the cone of length $t$ starting at a set $A \in \Q{n}$ is defined to be $C(A,t):= \{A' \in \Q{n}: A \subseteq A' \text{  and } |A \Delta A'| \leq t\}$. $\Phi$ admits a parameterized local monotone search algorithm if there exists an algorithm taking an instance $I$ of $\Phi$, a set $A \in \Q{n}$ and $t \in [n]$ as input, and if $C(A,t) \bigcap \calf_I$ is non-empty, outputs some $A^* \in C(A,t) \bigcap \calf_I$ in time $c^t\cdot n^{O(1)}$ for some constant $c>1$. 

Now, we prove that for the class of \emph{hereditary} problems, the concepts of parameterized local feasibility search and parameterized local monotone search are in fact, equivalent. We remind the reader that we are dealing with minimization problems only. 
\begin{definition}
    An implicit set system $\Phi$ is called hereditary if for all instances $I$ of $\Phi$ such that $\Phi(I)=\left(n ,\calf\right)$, $\calf$ satisfies the property that for any $A \subseteq B \subseteq [n]$, $A \in \mathcal{F}$ implies that $B \in \mathcal{F}$.
\end{definition}

\begin{lemma} \label{lem:heriditary}
    If a hereditary implicit set system $\Phi$ admits a parameterized monotone local search algorithm, then it also admits a PLFS algorithm that runs in the same time, and vice versa. 
\end{lemma}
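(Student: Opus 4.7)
The plan is to prove both directions using the following key observation enabled by the hereditary property: for any $A \in \{0,1\}^n$ and any feasible $A' \in \calf_I$, the union $B = A \cup A' \supseteq A'$ is also feasible, and moreover $B \supseteq A$ with $|B \setminus A| = |A' \setminus A| \leq d_H(A,A')$. Thus, whenever a feasible point lies near $A$, a feasible \emph{superset} of $A$ lies equally near. This lets us convert between searching in a Hamming ball around $A$ (PLFS) and searching in a cone above $A$ (monotone local search) at no cost.

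For the forward direction (monotone local search $\Rightarrow$ PLFS), suppose $\Phi$ admits a parameterized monotone local search running in time $c^t \cdot n^{O(1)}$. Given a PLFS query $(I, A, t)$, I would simply invoke the monotone local search on the same input $(I, A, t)$. If there exists some $A' \in \calf_I$ with $d_H(A, A') \leq t$, then by heredity $A \cup A' \in \calf_I$, and $A \cup A' \in C(A, t)$ since $|(A \cup A') \setminus A| \leq |A' \setminus A| \leq t$. So $C(A, t) \cap \calf_I$ is nonempty, and the monotone search returns some $B \in C(A, t) \cap \calf_I$. Since $A \subseteq B$, we have $d_H(A, B) = |B \setminus A| \leq t$, which establishes the $(1,c)$-PLFS guarantee within the same time bound.

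For the reverse direction (PLFS $\Rightarrow$ monotone local search), suppose $\Phi$ admits a $(1,c)$-PLFS running in time $c^t \cdot n^{O(1)}$. Given a monotone local search query $(I, A, t)$, I would call the PLFS on $(I, A, t)$, obtaining some $B^* \in \calf_I$, and then return $B^{**} = A \cup B^*$. For correctness: if $C(A, t) \cap \calf_I$ is nonempty, pick any $B$ in it; then $d_H(A, B) \leq t$, so the PLFS is guaranteed to return a $B^* \in \calf_I$ with $d_H(A, B^*) \leq t$. By heredity $B^{**} = A \cup B^* \in \calf_I$, and $B^{**} \supseteq A$ with $|B^{**} \setminus A| = |B^* \setminus A| \leq d_H(A, B^*) \leq t$, so $B^{**} \in C(A, t) \cap \calf_I$ as required. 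The running time remains $c^t \cdot n^{O(1)}$.

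I do not foresee any significant obstacles — the heredity hypothesis is essentially tailor-made to equate these two search problems, and both reductions are black-box invocations with no overhead beyond taking a single union. The only subtlety worth flagging is that the reverse direction genuinely uses $\alpha = 1$ in the PLFS guarantee; an $(\alpha, c)$-PLFS with $\alpha > 1$ would only yield a monotone local search in time $c^{\alpha t} \cdot n^{O(1)}$ via the same argument, since $B^{**}$ would then lie in $C(A, \alpha t)$ rather than $C(A, t)$.
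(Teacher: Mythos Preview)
Your proposal is correct and follows essentially the same approach as the paper: both directions use the heredity property to replace a feasible $A'$ near $A$ by the feasible superset $A \cup A'$, which lies in both the ball $B(A,t)$ and the cone $C(A,t)$, allowing a black-box call to the other oracle. Your observation that the reverse direction relies on $\alpha=1$ is a nice addition that the paper leaves implicit.
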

\begin{proof}
    For any $A \in \Q{n}$, Suppose there exists $B \in \mathcal{F}_I \bigcap B(A,t)$. Now, consider the set $A \bigcup B$. Because $\Phi$ is a hereditary set system, $A \bigcup B \in \mathcal{F}_I$. Further, because $A \bigcup B=A \bigcup (A \Delta B)$, $d_H(A,A \bigcup B) \leq t$ and $A \bigcup B \in C(A,t)$. Hence, we can use the monotone local search algorithm to output some $A^* \in \mathcal{F}_I \bigcap C(A,t) \subseteq \mathcal{F}_I \bigcap B(A,t)$, which implies the existence of a PLFS for $\Phi$. On the other hand, Suppose $\Phi$ admits a PLFS algorithm. Suppose there exists some $B \in C(A,t) \bigcap \calf_I$. Because $C(A,t) \subseteq B(A,t)$, the PLFS algorithm is guaranteed to output $A^* \in \calf_I \bigcap B(A,t)$. Now consider the set $A \bigcup A^*$. Because $A \bigcup A^*=A \bigcup (A \Delta A^*)$, this implies that $A \bigcup A^* \in C(A,t) \bigcap \calf_I$, which implies that $\Phi$ admits a monotone local search algorithm. 
\end{proof}
\Cref{lem:heriditary}, along with \Cref{thm:PLFS} implies the existence of a $c$-PLFS for many combinatorial problems that were studied in~\cite{ConicSearch}. We select the same problems and present them in the table below (instantiated at $C_1=3/2, C_2=1/4$), along with our results on isometric reductions.

\begin{table}[htt]
\centering
\begin{tabular}{|l|l|l|l|}
\hline
Problem & Extension  & MinOnes \cite{ConicSearch} & $s$-Dispersion \\
 & \cite{ConicSearch} & One exact solution & Bi-approx \\
\hline
\textsc{$d$-Hitting Set $(d \geq 3)$} & $d^k$ & $(2 - \frac{1}{d})^n$ & \Cref{thm:isometricreduction} \\
\textsc{Vertex cover } & $2^k$ & $1.5^n$ & $s^3 \cdot 1.5486^n$ \\
\textsc{Maximum independent Set} & $2^k$ & $1.5^n$ & $s^3 \cdot1.5486^n$ \\
\hline
\textsc{Feedback Vertex Set} & $3.592^k$ & $1.7217^n$ &  $s^3 \cdot1.6420^n$\\
\textsc{Subset Feedback Vertex Set} & $4^k$ & $1.7500^n$ & $s^3 \cdot1.6598^n$ \\
\textsc{Feedback Vertex Set in Tournaments} & $1.6181^k$ & $1.3820^n$ &  $s^3 \cdot1.5162^n$\\
\textsc{Group Feedback Vertex Set} & $4^k$ & $1.7500^n$ &  $s^3 \cdot1.6598^n$\\
\textsc{Node Unique Label Cover} & $|\Sigma|^{2k}$ & $(2 - \frac{1}{|\Sigma|^2})^n$ & \Cref{thm:PLFS} \\
\textsc{Vertex $(r,\ell)$-Partization $(r,\ell \leq 2)$} & $3.3146^k$ & $1.6984^n$ &  $s^3 \cdot1.6289^n$ \\
\textsc{Interval Vertex Deletion} & $8^k$ & $1.8750^n$ &  $s^3 \cdot1.7789^n$\\
\textsc{Proper Interval Vertex Deletion} & $6^k$ & $1.8334^n$ & $s^3 \cdot1.7284^n$ \\
\textsc{Block Graph Vertex Deletion} & $4^k$ & $1.7500^n$ & $s^3 \cdot1.6598^n$ \\
\textsc{Cluster Vertex Deletion} & $1.9102^k$ & $1.4765^n$ & $s^3 \cdot1.5415^n$ \\
\textsc{Thread Graph Vertex Deletion} & $8^k$ & $1.8750^n$ &  $s^3 \cdot1.7789^n$\\
\textsc{Multicut on Trees} & $1.5538^k$ & $1.3565^n$ & $s^3 \cdot1.51^n$ \\
\textsc{3-Hitting Set} & $2.0755^k$ & $1.5182^n$ & $s^3 \cdot1.5544^n$ \\
\textsc{4-Hitting Set} & $3.0755^k$ & $1.6750^n$ & $s^3 \cdot1.6167^n$ \\
\textsc{$d$-Hitting Set $(d \geq 3)$} & $(d - 0.9245)^k$ & $(2 - \frac{1}{d-0.9245})^n$ & \Cref{thm:PLFS} \\
\textsc{Min-Ones 3-SAT} & $2.562^k$ & $s^3 \cdot1.6097^n$ & \Cref{thm:sch-heavy-full} \\
\textsc{Min-Ones $d$-SAT $(d \geq 4)$} & $d^k$ & $(2 - \frac{1}{d})^n$ & \Cref{thm:sch-heavy-full} \\
\textsc{Weighted $d$-SAT $(d \geq 3)$} & $d^k$ & $(2 - \frac{1}{d})^n$ & \Cref{thm:sch-heavy-full} \\
\textsc{Weighted Feedback Vertex Set} & $3.6181^k$ & $1.7237^n$ & $s^3 \cdot1.6432^n$ \\
\textsc{Weighted 3-Hitting Set} & $2.168^k$ & $1.5388^n$ &  $s^3 \cdot1.5612^n$\\
\textsc{Weighted $d$-Hitting Set $(d \geq 4)$} & $(d - 0.832)^k$ & $(2 - \frac{1}{d-0.832})^n$ & \Cref{thm:PLFS} \\ \hline
\end{tabular}
\caption{\scriptsize{The second column contains the time taken to obtain one exact solution using methods in~\cite{ConicSearch}. The third Column contains the time taken to solve $(3/2, 1/4)\text{-}\divmin$ (except for Maximum Independent Set, where $(1/2, 1/4)\text{-}\divmax$ is solved) 
}
}
\label{table: table2}
\end{table}
We remark that both the isometric reduction and the PLFS approaches give $s$-dispersion algorithms for $d$-Hitting Set. However, the second approach yields an algorithm with better guarantees because the monotone search for $d$-hitting set is faster than the local search for $d$-SAT~\cite{fomin2010iterative}.

\subsection{\sch's and PPZ algorithms run faster if $\Om$ contains \\dispersed solutions}\label{sec:fast}
In this section, we show that if $\Om$ contains a dispersed subset, then \sch's algorithm as well as the PPZ algorithm find a satisfying assignment to $\f$ faster. Let $\Om$ denote the set of satisfying assignments to $\f$. 
 
For every $r \in [n]$, we define 
\[ N_{r}:= \max \{|S| : S \subseteq \Om, \PD(S) \geq r \} \]
Note that from the definition of $N_r$, for every $r \in [n]$, there exists a set $S_r \subseteq \Om$ of size $N_r$, such that the balls of radius $\lfloor \frac{r}{2} \rfloor$ around each $z^* \in S_r$ are disjoint. We also note that $N_{0,\f} \geq N_{1,\f} \geq \cdots \geq N_{n,\f}$.
\begin{theorem}\label{thm:schfaster}
Let $\f$ be a $k$-CNF formula. If $\f$ is satisfiable, \sch's algorithm succeeds in finding a satisfying assignment within $O^*\left( \frac{2^n \left(1-1/k\right)^n}{N_{\floor{2n/k}}} \right)$ iterations.
    
\end{theorem}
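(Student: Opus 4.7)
I will analyze a single iteration of \sch's algorithm in the presence of a dispersed set of satisfying assignments. Fix $r = \lfloor 2n/k \rfloor$ and let $S_r = \{z_1^*, \ldots, z_{N_r}^*\} \subseteq \Om$ witness the definition of $N_r$, so the Hamming balls $B_i := \{z \in \{0,1\}^n : d_H(z, z_i^*) \leq \lfloor r/2 \rfloor\}$ are pairwise disjoint. Each iteration of \sch's algorithm samples a uniformly random starting point $z \in \{0,1\}^n$ and runs a \sch walk of length $\lceil (1+2/(k-2))n \rceil$, which dominates the walk length required by \Cref{obs:schmain} for every $t \leq n$.

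The key observation is that disjointness of the balls $B_i$ carries over to disjointness of the corresponding success events. For each $i \in [N_r]$, let $E_i$ be the event that $z \in B_i$ and that the walk from $z$ outputs some satisfying assignment of $\f$. Conditioning on $d_H(z, z_i^*) = t$ and applying \Cref{obs:schmain} with the $(k-1)^{-t}$ bound yields
\[
\Pr[E_i] \;\geq\; \frac{1}{2^n}\sum_{t=0}^{\lfloor r/2 \rfloor}\binom{n}{t}(k-1)^{-t}.
\]
Since the $E_i$ are pairwise disjoint, the per-iteration success probability is at least $N_r$ times this truncated sum.

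All that remains is to lower bound the truncated sum by $\Omega^*((k/(k-1))^n)$, which will give per-iteration success probability $\Omega^*\bigl(N_r \cdot (k/(2(k-1)))^n\bigr)$ and hence the claimed iteration bound $O^*\bigl(2^n(1-1/k)^n / N_{\lfloor 2n/k \rfloor}\bigr)$. The full identity $\sum_{t=0}^n \binom{n}{t}(k-1)^{-t} = (k/(k-1))^n$ has its dominant term near $t^{\ast} \approx n/k$, and a short Stirling calculation shows that $\binom{n}{\lfloor n/k \rfloor}(k-1)^{-\lfloor n/k \rfloor} = (k/(k-1))^n / \poly(n)$; since $\lfloor r/2 \rfloor$ differs from $n/k$ by at most an additive constant, the truncated sum already contains a term of this size. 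The only conceptually non-routine step is the disjoint-events observation that allows $N_r$ to appear linearly rather than through a union bound; the remaining estimates are standard binomial asymptotics.
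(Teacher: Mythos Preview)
Your proposal is correct and follows essentially the same argument as the paper: both exploit that the balls of radius $\lfloor r/2\rfloor$ around the $N_r$ dispersed solutions are disjoint, so the event ``the random start lands in one of these balls'' has probability $N_r\binom{n}{\lfloor r/2\rfloor}/2^n$, and then \Cref{obs:schmain} converts this into the same per-iteration success bound. The paper phrases it slightly more tersely by fixing the single radius $\lfloor n/k\rfloor$ rather than summing over $t$ and extracting the dominant term, but this is a cosmetic difference.
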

If the solution space $\Om$ contains a code of minimum distance $2r=2n/k$, with $N_{2r} \geq 2^{n(1-H((2r-1)/n))}$ (using the Gilbert Varshamov bound), which is equal to $2^{n(1-H(2/k-1/n))}$. When $k \geq 6$, this gives an exponential improvement. 

\medskip \noindent

To prove this, recall \Cref{obs:schmain} and \sch's algorithm as described in \Cref{sec:schprelims}. 
\obssch*

\noindent
It consists of sampling $z$ uniformly at random from $\{0,1\}^n$ and performing a \sch walk for $3n$ steps starting from $z$. If $\f$ is satisfiable, for each $r \in [n]$, with probability $\frac{1}{2^n} \cdot \binom{n}{r}$, $z$ is at Hamming distance $\leq r$ from a satisfying assignment, and we can calculate the probability that the \sch walk ends in a satisfying assignment to be at least $\frac{1}{2^n} \binom{n}{r}\frac{1}{(k-1)^r}$. Hence, setting $r=\floor{n/k}$, we can lower bound this probability by $\left(\frac{1}{2}\left(1+\frac{1}{k-1}\right)\right)^n$, using~\Cref{schcalc}.  

\medskip \noindent
However, we now note that due to the definition of $N_r$, for each $0 \leq r \leq \floor{n/2}$, there exist $N_{2r}$ satisfying assignments to $\f$, with the Hamming balls of radius $r$ around then being disjoint. Hence, for each $r \in [\floor{n/2}]$, with probability at least $\frac{N_{2r}\binom{n}{r}}{2^n}$, $z$ is at distance $r$ from a satisfying assignment, when chosen uniformly at random from $\{0,1\}^n$. This means that the success probability of the \sch walk can be calculated to be at least $N_{\floor{2n/k}} \cdot \left(\frac{1}{2}\left(1+\frac{1}{k-1}\right)\right)^n$. This probability is clearly better than the probability of success for \sch's algorithm. Hence, we obtain that the running time of \sch's algorithm with a dispersion guarantee equals 

\[  \frac{2^n \left(1-1/k\right)^n}{N_{\floor{2n/k}}} \;.\]

Now we note that we can prove a similar statement for the PPZ algorithm. 
 
\begin{theorem}
    \label{thm:PPZfaster}
    Let $\f$ be a $k$-CNF formula. If $\f$ is satisfiable, the PPZ algorithm succeeds in finding a satisfiable assignment to $\f$ within $O^*\left(  \frac{ 2^{n-n/k}}{N_{\floor{2n/k}}}\right)$ iterations.
\end{theorem}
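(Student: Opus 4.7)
The plan is to adapt the argument behind Theorem~\ref{thm:schfaster} to the PPZ algorithm, replacing the random-walk analysis of \sch with the geometric sampling guarantee of PPZ from Lemma~\ref{lem:anchor:diam}. Just as in the \sch case, a single PPZ iteration succeeds in finding a solution near every satisfying assignment $z \in \Om$ with non-trivial probability, and if $\Om$ contains many well-separated such $z$'s then these ``near-$z$'' events contribute essentially disjointly to the per-iteration success probability.

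First, I would extract the following corollary of Lemma~\ref{lem:anchor:diam}, already hinted at in the remarks preceding its statement: for every $z \in \Om$, one PPZ iteration outputs some $z^* \in \Om$ with $d_H(z^*, z) \leq \lfloor n/k \rfloor$ with probability at least $\frac{1}{2n} \cdot 2^{-n + n/k}$. The derivation applies Lemma~\ref{lem:anchor:diam} with $z_0 := \bar z$, the bitwise complement of $z$; since $z \in \Om$ we have $r := \max_{z' \in \Om} d_H(z', \bar z) = n$, so the lemma produces $z^* \in \Om$ with $d_H(z^*, \bar z) \geq (1 - 1/k) n$, which is equivalent to $d_H(z^*, z) \leq \lfloor n/k \rfloor$.

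Next, I would fix a subset $S = \{z_1, \dots, z_N\} \subseteq \Om$ of size $N := N_{\lfloor 2n/k \rfloor}$ with pairwise Hamming distance at least $\lfloor 2n/k \rfloor$, which exists by definition of $N_{\lfloor 2n/k \rfloor}$. For each $i$, let $E_i$ denote the event that a single PPZ iteration produces some $z^* \in \Om$ with $z^* \in B(z_i, \lfloor n/k \rfloor)$; the corollary yields $\Pr[E_i] \geq \frac{1}{2n} \cdot 2^{-n + n/k}$. The main step is then a near-disjointness bound on the $E_i$'s: if some $z^*$ lies in both $B(z_i, \lfloor n/k \rfloor)$ and $B(z_j, \lfloor n/k \rfloor)$ for $i \neq j$, the triangle inequality forces $d_H(z_i, z_j) \leq 2\lfloor n/k \rfloor$, and combined with $d_H(z_i, z_j) \geq \lfloor 2n/k \rfloor$ this can only occur when $\lfloor 2n/k \rfloor = 2\lfloor n/k \rfloor$. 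In that tight case both triangle inequalities are saturated, so $d_H(z^*, z_i) = d_H(z^*, z_j) = \lfloor n/k \rfloor$ with the supports of $z^* \oplus z_i$ and $z^* \oplus z_j$ forming disjoint subsets of $[n]$. The number of $z_i$'s whose ball contains any fixed $z^*$ is therefore at most the maximum number of pairwise-disjoint $\lfloor n/k \rfloor$-subsets of $[n]$, which is $O(k)$, and hence
\[
\Pr\!\left[\, \bigcup_i E_i \,\right] \;\geq\; \frac{1}{O(k)} \sum_i \Pr[E_i] \;\geq\; \Omega\!\left( \frac{N_{\lfloor 2n/k \rfloor}}{n k} \cdot 2^{-n + n/k} \right).
\]

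Since $\bigcup_i E_i$ implies PPZ outputs some satisfying assignment, standard amplification shows that $O^*(2^{n-n/k} / N_{\lfloor 2n/k \rfloor})$ iterations suffice to succeed with probability $1 - o(1)$, proving the theorem. The only nontrivial step is the multiplicity bound in the third paragraph, which rests on the short combinatorial observation about disjoint subsets; the rest is a direct adaptation of the proof of Theorem~\ref{thm:schfaster}.
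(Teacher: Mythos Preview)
Your proposal is correct and follows essentially the same route as the paper: apply Lemma~\ref{lem:anchor:diam} at the antipode of each $z\in\Om$ to show a single PPZ iteration lands within distance $\lfloor n/k\rfloor$ of $z$ with probability $\Omega(\tfrac{1}{n})\,2^{-n+n/k}$, then use the $N_{\lfloor 2n/k\rfloor}$ well-separated solutions to make these events (near-)disjoint and boost the per-iteration success probability. The paper simply asserts disjointness of the radius-$\lfloor n/k\rfloor$ balls and absorbs the boundary case into $O^*$; your multiplicity argument (at most $O(k)$ overlap via disjoint supports) handles this more carefully but is not needed for the stated bound.
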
 
\begin{proof}
 \Cref{lem:anchor:diam} implies that for any satisfying assignment $z \in \Om$, $\PPZMod$ outputs a satisfying assignment to $\f$ within distance $n/k$ of $z$ with probability at least $n^{-O(1)} \cdot 2^{-n+n/k}$. Now, due to the definition of $N_r$, there exists a set $S \subseteq \Om$ of size $N_{\floor{2n/k}}$ such that the balls of radius $\floor{n/k}$ around them being disjoint. Hence, the running time of the PPZ algorithm with the dispersion guarantee is $O^*\left(\frac{1}{N_{\floor{2n/k}}} \cdot 2^{n-n/k}\right)$. As before, if the solution space $\Om$ contains a code of minimum distance $2r=2n/k$, this leads to an exponential improvement. 
 \end{proof}

\subsection{Approximating $\optm$ for CSPs}
\label{sec:csp}
It is not hard to see that \sch's parametrized local search algorithm can be used to find diverse solutions for $k$-ary CSP's as well, that is, \Cref{obs:schmain} generalizes to CSPs~\cite{schoning1999probabilistic}[Section 3]. Formally, we prove the following theorem. 
\begin{theorem}[\sch approximating $\optm$ for CSPs] \label{thm:sch-for-mindisp-CSP}
    Let $\Psi$ be a any constraint satisfaction problem over the alphabet $\{0,1\}$, and $s \in \IN$. with the maximum arity of the constraints being $k$. For $0 < \delta \leq \min\{1, \frac{4(k-1)}{(k-2)^2} \}$, there exists an algorithm taking $\Psi$ and $s$ as input and, if $\Psi$ has at least $s$ distinct satisfying assignments, outputs a set $S^*$ of $s$ of satisfying assignments to $\Psi$ such that $\PD(S^*) \geq \frac{1}{2}\left(1-\delta \right) \optm(\Psi,s)$. It runs in time $O^*\left(s^3 \cdot  \frac{2^n (k-1)^{R}}{\binom{n}{R} }\right)$, where $R=\floor{\frac{\delta n}{2(2+\delta+ \frac{2}{k-2})}}$.  
    
\end{theorem}
\bibliographystyle{alpha}
\bibliography{ref.bib}
\newpage
\appendix
\section{More \sch-type algorithms}

\subsection{The case of small $k$ and small $\delta$.}
\label{app:more-sch}
In this section, we start by showing that by using $\LS_{1, k}$ instead of $\LS_{\left(1+2/(k-2)\right),k-1}$, we can handle the case of $k=2$, and beat \Cref{thm:sch-for-dia} for some $\delta$. We state the theorem to begin with. We define $\tau_1(\delta,k, n)$ to be $\frac{2^n k^{R_1}}{\binom{n}{R_1}} \text{ where } R_1=\floor{\frac{\delta n}{2(2+\delta)}}$ for each $\delta \in \left( 0, \min\{1, \frac{4}{k-1} \}\right]$
\begin{restatable}[\sch for \D]{theorem}{schdiam}
    \label{thm:sch-for-dia-two}
    Let $\f$ be a $k$-CNF formula on $n$ variables, for any $k \geq 2$. For each $0 < \delta \leq \min\{1, \frac{4}{k-1} \}$, there exists an algorithm taking $\f$ as input and running in time $O^*\left( \tau_1(\delta, k, n)\right)$ that outputs $z_1^*, z_2^* \in \Om$ such that $d_H(z_1^*, z_2^*) \geq \frac{1}{2}\cdot \left(1-\delta \right) \D(\f)$, if $\f$ is satisfiable.
\end{restatable}
\begin{proof}
    The proof is identical to the proof of~\Cref{thm:sch-for-dia} by using $\alpha=1, c=k$.  
\end{proof}
\begin{remark}
    We can also use $\LS_{1, k}$ instead of $\LS_{\left(1+2/(k-2)\right),k-1}$ in all the algorithms for dispersion as well, to get identical theorems, with the running times using $\tau_1(\delta,k, n)$ instead of $\tau(\delta, k, n)$ and $\delta$ can be tuned between $0$ and $\frac{4}{k-1}$ instead of $\frac{4}{k-1}\left(1+\frac{2}{k-2}\right)^2$. We do not restate all of them for the sake of brevity. 
\end{remark}
\paragraph{Comparison between \Cref{thm:sch-for-dia} and \Cref{thm:sch-for-dia-two}.} Not only does \Cref{thm:sch-for-dia-two} handle the case of $k=2$, it also outperforms \Cref{thm:sch-for-dia} for some cases of $\delta$. As before, we define $a_{k, \delta}$ such that $a_{k,\delta}^n=\tau(\delta,k,n)$ and $b_{k,\delta}^n=\tau_1(\delta,k,n)$. We plot $a_{k,\delta}$ and $b_{k,\delta}$ together, for different values of $k$ to illustrate the comparison. It can be seen that for $k=3$, the algorithm in \Cref{thm:sch-for-dia-two} always outperforms the algorithm in \Cref{thm:sch-for-dia}, and for larger values of $k$, it outperforms for smaller valuse of $\delta$. 
\begin{figure}[ht]
    \centering
    \includegraphics[scale=0.5]{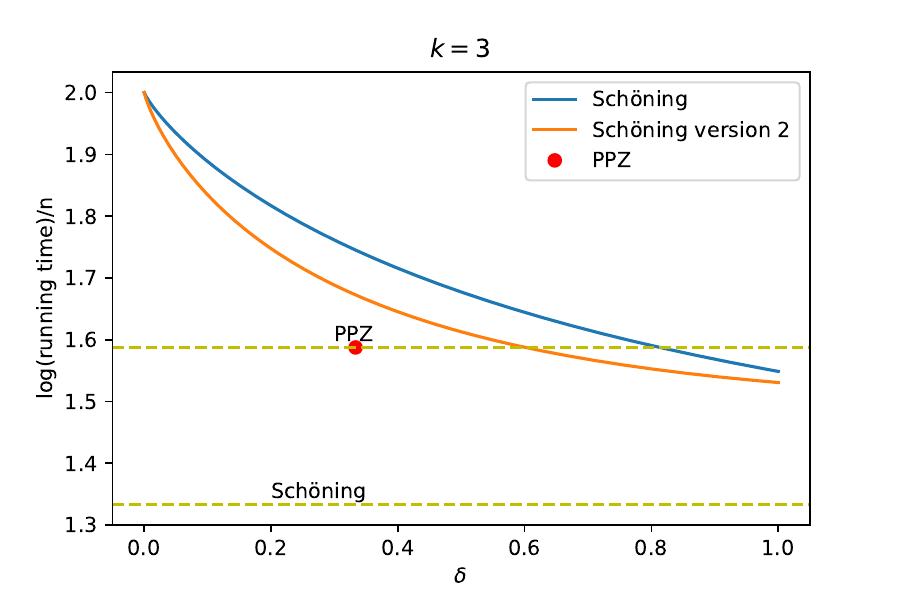}
    \includegraphics[scale=0.5]{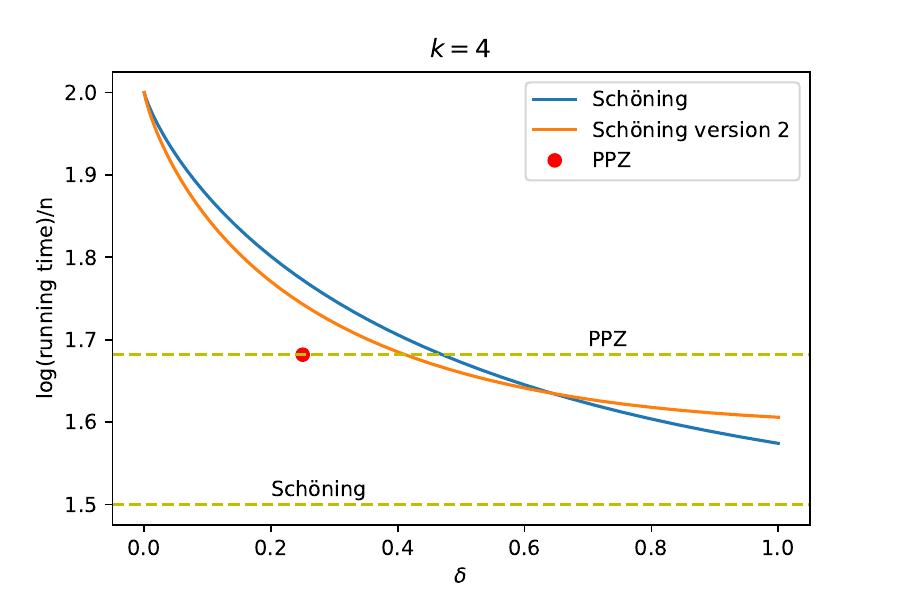}

    \includegraphics[scale=0.5]{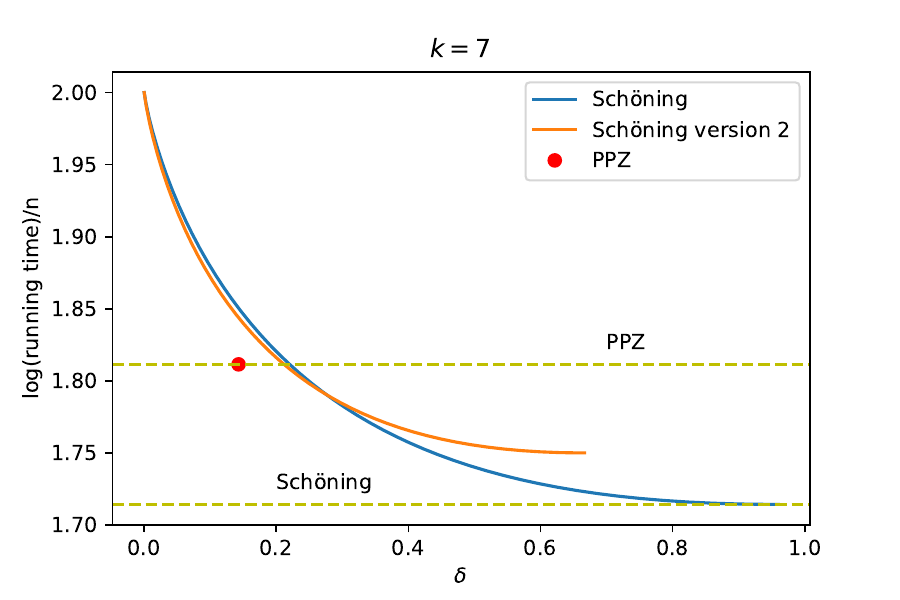}
    \includegraphics[scale=0.5]{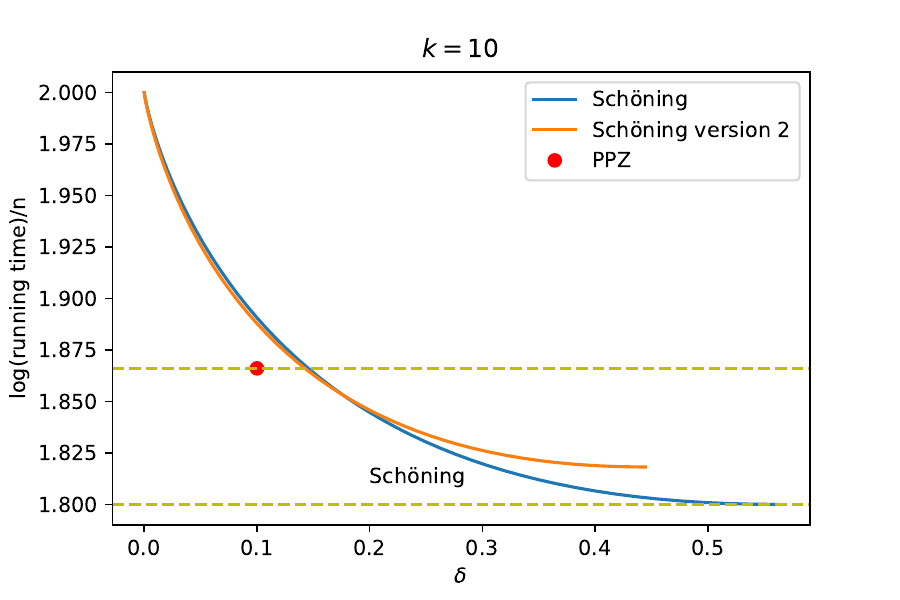}
    \caption{Plots of $a_{k,\delta}$ (labeled \sch) and $b_{k,\delta}$ (labeled \sch version 2)with respect to $\delta$ (on x-axis) for $k=3,4,5$.}
    \label{fig:enter-label}
\end{figure}

\subsection{\sch-based algorithm for the sum dispersion measure: Proof of \Cref{thm:sch-for-sumdisp}.}
\label{app:schsum}
To start with, we restate \Cref{thm:sch-for-sumdisp}. 
\schopts*
To prove this theorem, we show that \sch's algorithm can be modified to be a farthest point oracle for $\sumd$. 
\begin{lemma} \label{lem:schFarthestSum}
    Consider a local search algorithm $\LS_{\alpha, c}$. Then, for every $0 < \delta \leq \frac{2(1+\alpha)}{c-1}$, there exists an algorithm  running in time $s^2 \cdot n^{O(1)} \cdot \frac{2^n c^{R}}{\binom{n}{R} }$, where $R=\floor{\frac{\delta n}{2(1+\alpha+\delta)}}$ that takes as input $\f$ and a multi-set $S \subseteq \{0,1\}^n$ of size $s$ and if $\f$ is satisfiable, outputs $z^* \in \Om$ such that $\sumd(z^*, S) \geq \left(1-\delta\right) \cdot \max_{z' \in \Om} \sumd(z', S)$ with probability at least $1-2^{-2n}$. 
 \end{lemma}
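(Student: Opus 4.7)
The plan is to adapt the algorithm from \Cref{lem:schFarthestHeavy} by replacing the $\mind$-based comparison with a $\sumd$-based one, and by restricting the ``anchor'' $z$ to range over $S$ only (the auxiliary anchor $\mathbf{0}$ used there is needed to enforce a weight constraint that is absent here). Concretely, for every $r \in \{0,1,\ldots,n\}$ and every $y \in S$, I would run $\ALS_{\alpha, c, \delta}(\f, y, r)$ approximately $n^{O(1)} \cdot |A_{r-t, r+t}(y)| / \binom{n}{t}$ times (with $t = \min\{\floor{\delta r / (1+\alpha)}, R\}$ as in $\ALS$), and keep the best satisfying assignment found so far as measured by $\sumd(\cdot, S)$.

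For correctness, let $z_0 \in \Om$ attain $\sumd(z_0, S) = r_{\mathrm{sum}} := \max_{z' \in \Om} \sumd(z', S)$. By an averaging argument, some $y^* \in S$ satisfies $d_H(y^*, z_0) \le r_{\mathrm{sum}}/s$; set $r^* = d_H(y^*, z_0)$. When the outer loops reach the pair $(y^*, r^*)$, the analysis in the proof of \Cref{lem:anchor:sch} shows that, with probability at least $1 - 2^{-2n}$ after the stated number of repetitions, $\ALS_{\alpha, c, \delta}(\f, y^*, r^*)$ returns some $u \in \Om$ with $d_H(u, z_0) \le \delta r^*$. Applying the triangle inequality term by term then yields
\[
\sumd(u, S) \;\ge\; \sumd(z_0, S) - s \cdot d_H(u, z_0) \;\ge\; r_{\mathrm{sum}} - s \delta r^* \;\ge\; (1 - \delta)\, r_{\mathrm{sum}},
\]
where the last step uses $s r^* \le r_{\mathrm{sum}}$, which is exactly the reason for choosing $y^*$ by averaging.

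For the running time, the inner loops contribute the same factor $n^{O(1)} \cdot \frac{2^n c^R}{\binom{n}{R}}$ as in \Cref{lem:anchor:sch}, uniformly over all $r$, by the case analysis in \Cref{lem:derivative}. The outer loop over $y \in S$ adds one factor of $s$, and evaluating $\sumd(u, S)$ after each $\ALS$ call costs $O(sn)$, contributing a second factor of $s$; together these give the claimed bound $s^2 \cdot n^{O(1)} \cdot \frac{2^n c^R}{\binom{n}{R}}$. The heavy technical lifting has already been done in \Cref{lem:anchor:sch} and \Cref{lem:derivative}; the only genuinely new ingredient is the averaging-plus-triangle-inequality reduction from the multi-point sum objective to a single-point closeness requirement, and this is the step I would expect to be the only real conceptual point to verify (in particular, making sure that anchoring at $y^* \in S$ rather than, say, at the coordinate-wise anti-majority of $S$ is already sufficient, which it is because the averaging bound $s r^* \le r_{\mathrm{sum}}$ is tight enough to absorb the $s$ that appears when passing distances through $\sumd$).
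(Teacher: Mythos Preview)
Your proposal is correct and matches the paper's own proof essentially line for line: iterate over anchors $y \in S$ and radii $r$, pick the closest anchor $y^*$ to an optimizer $z_0$ via averaging so that $s r^* \le r_{\mathrm{sum}}$, and then combine the $\ALS$ guarantee $d_H(u,z_0)\le \delta r^*$ with a termwise triangle inequality to get $\sumd(u,S)\ge (1-\delta)r_{\mathrm{sum}}$. The running-time accounting (one $s$ from the loop over $S$, one $s$ from evaluating $\sumd$, and the $\frac{2^n c^R}{\binom{n}{R}}$ factor from \Cref{lem:anchor:sch}/\Cref{lem:derivative}) is also exactly what the paper does.
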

 
 \begin{proof}
    Consider the following algorithm, that in line 6, uses the $\ALS_{\alpha, c, \delta}$ subroutine defined in the proof of \Cref{lem:anchor:sch}. 
    
    \begin{algorithm}[H] \label{alg:schFarthestSum}
 \KwIn{A $k$-CNF formula $\f$, $S \subseteq \{0,1\}^n, |S|=s$}
 \KwOut{$z^* \in \Om$ with $\sumd(S,z^*) > \left(1-\delta\right) \max_{z \in \Om}\sumd(S,z)$ if $\f$ is satisfiable, $\perp$ otherwise.}
 Set $z^*=\perp, D=0$. \\
 \SetKwFor{RepeatTimes}{repeat}{times:}{endfor}
 \For{$r \in \{0,1,2,\dots, n\}$}{
    \For{$z \in S$}{
        Let $t:=\min\left\{ \floor{\frac{\delta r}{1+\alpha}}, R\right\}$ \\
        \RepeatTimes{$n^{O(1)} \cdot |A_{r-t,r+t} (z)|$}{
        $u:=\ALS_{\alpha, c ,\delta}(\f,z,r)$\\
    \If{$u$ satisfies $\f$ and $\sumd(S,z^*) > D$}{
        $z^* \gets u, D \gets \sumd(S,u)$.
    }
    
    }
    }
    }
    Output $z^*$
 \caption{\textsf{\sch-Farthest-Sum}}
\end{algorithm}
    
     Suppose that there exists $z_0 \in \Om$ with $\sumd(z_0, S) = r$. Then, there must exist $z \in S$ such that $r':=d_H(z, z_0) \leq r/s$. Now, suppose that $\ALS_{\alpha, c ,\delta}(\f,r',z)$ outputs $z^* \in \Om$ such that $d_H(z^*, z_0) \leq \delta r'$. The triangle inequality then implies that $$\sumd(z^*, S)= \sum_{z \in S} d_H(z^*,z) \geq \sum_{z \in S} \left( d_H(z_0,z)-d_H(z^*, z_0)\right) \geq r-\delta s r' \geq \left(1-\delta \right)r \; .$$
    
    $\ALS_{\alpha,\delta}(\f, z, r')$ outputs such a $z^*$ if $y$, the starting assignment it samples, $y$ is within distance $t$ of $z_0$, This happens with probability at least $\frac{\binom{n}{t}}{|A_{r-t , r+t}(z)|}$
    Hence, it is sufficient to call $\LS_{\alpha, \delta}(\f, z, r')$ $n^{O(1)} \cdot \frac{|A_{r'-t , r'+t}(z)| c^t}{\binom{n}{t}  }$ (where $t$ is the value chosen corresponding to $r'$ by $\ALS_{\alpha, c ,\delta}$) times for each $r' \in \{0,1,2,\dots, n\}$ and $z \in S$, to ensure that the algorithm outputs $z^* \in \Om$ with $\sumd(S,z^*) \geq \left(1-\delta\right) \cdot \max_{z \in \Om} \sumd(z,S)$ with probability $1-2^{-2n}$, and the reminder of this proof (bounding the running time) is identical to the proof of \Cref{lem:anchor:sch}. The factor of $s^2$ in the running time bound arises from the fact that computing $\sumd(\cdot, \cdot)$ takes at most $s n$ time and due to the fact that we iterate over all $s \in S$ in line 3.  
 \end{proof}

\subsubsection*{Proof of \Cref{thm:sch-for-sumdisp}}
The proof of \Cref{thm:sch-for-sumdisp} is similar to that of \Cref{thm:ppz-for-sumdisp}. We use the algorithm $\textsf{\sch-Farthest-Sum}$ (\Cref{lem:schFarthestSum}) as a $(1-\delta)$-farthest point oracle, in the algorithm defined by \Cref{lem:sumdispersion}. The approximation, running time guarantees and the range of $\delta$ that the two algorithms handle follows from the bounds stated in~\Cref{lem:schFarthestSum} for $c=k, \alpha=1$ and $c=k-1, \alpha=1+\frac{2}{k-2}$. 
\section{Technical lemmas using approximate farthest point oracles}\label{sec:approxfarthest}
\label{sec:dispersion}
In this section, we design approximation algorithms for computing $\opts(\f,s)$ and $\optm(\f,s)$, proving \Cref{thm:ppz-for-sumdisp}, \Cref{thm:ppz-for-mindisp}, \Cref{thm:sch-for-sumdisp} and \Cref{thm:sch-heavy-full}. 

\subsection{Sum dispersion: the proof of \Cref{lem:sumdispersion}}

\sumdispersion*
To prove this lemma, consider the following algorithm, which is the same as the algorithm studied in~\cite{cevallos2019improved}, with the small difference being that we deal with multi-sets instead of sets.

\begin{algorithm}[H] \label{alg:sumdispersion}
 \KwIn{A $k$-CNF formula $\f$, a number $s$, the oracle $\mathcal{O}$}
 \KwOut{$S \in \Om^s$ with $\SPD(S) \geq \max \{\frac{1}{2}(1-\delta),\frac{(1-\delta)(s-1)}{(1+\delta)s+(1-\delta)} \}\cdot \opts(\f,s)$ if $\f$ is satisfiable, $\perp$ otherwise.}
 \SetKwFor{RepeatTimes}{repeat}{times:}{endfor}
 Use the PPZ algorithm (or \sch's algorithm) to find a satisfying assignment $z_1^*$ to $\f$. \;
 Set $S \gets \{z_1^*\}$ \;
 \For{$i \in \{2,3,\dots, s\}$}{
    $z^*:=\mathcal{O}(\f,S)$\\
    $S \gets S \bigcup \{z^*\}$
    
 }
 \RepeatTimes{$s^2n$}{
    \For{$z \in S$}{
    $z^*:=\mathcal{O}(\f,S \setminus \{z\})$ \\
    \If{$\sumd(S \setminus \{z\},z^*) > \sumd( S \setminus \{z\},z)$}{
        $S \gets S \setminus \{z\} \bigcup \{z^*\}$
        }
    
    }
 }
 Output $S$
 \caption{Algorithm for Sum Dispersion}
 \end{algorithm}
Because $\mathcal{O}$ is invoked at most $s^2 n$ times during the whole duration of the algorithm, the union bound implies that with probability at most $1-o(1)$, $\mathcal{O}$ behaves as a $1-\delta$-approximate farthest point oracle in every iteration (because $\mathcal{O}$ behaves as a $1-\delta$-approximate farthest point oracle in every iteration with probability $1-2^{-2n}$).

\medskip \noindent
The algorithm described above combines $\mathcal{O}$ with the well-known farthest point insertion algorithm~\cite{ravi1994heuristic} for dispersion in metric spaces to get an algorithm that outputs a multiset $S \subseteq \Om$ with $|S|=s$ with the property that $\SPD(S) \geq \frac{1 - \delta}{2} \cdot \opts(\f,s)$.

\medskip \noindent
If $s$ is large, we can further improve the approximation factor by repeatedly employing the following natural local search procedure on the set $S$. For each $z \in S$, we use the farthest point oracle with $\f$ and $S \setminus \{z\}$ as input. If $z^*$, the satisfying assignment output by the farthest point oracle satisfies $\sumd(z^*, S\setminus\{z\}) > \sumd(z, S\setminus\{z\})$ (which is equivalent to the condition that $\SPD(S \setminus \{z\} \bigcup \{z^*\})) > \SPD(S)$), we replace $z$ by $z^*$ in $S$. We show that at the end of $s^2n$ iterations, $\SPD(S) \geq \frac{(1-\delta)(s-1)}{(1+\delta)s+(1-\delta)} \cdot \opts(\f,s)$. Because this local search procedure only increases the value of $\SPD(S)$, this would complete the proof of \Cref{lem:sumdispersion}.

 \medskip \noindent
 We start with lower bounding $\SPD(S)$ at the end of the farthest point insertion procedure. We start with proving the following lemma. For a multiset $S$, denote $|S|$ to be its cardinality counting multiplicities, and for two multisets $A$ and $B$, we use $d_H(A,B)=\sum_{a \in A, b \in B} d_H(a,b)$
 \begin{observation} \label{obs:multisettriangle}
     Let $A,B \subseteq \{0,1\}^n$ be two multisets. There exists $b \in B$ such that $\sumd(A,b) \geq \frac{|A|}{|B|(|B|-1)} \cdot \SPD(B)$.
 \end{observation}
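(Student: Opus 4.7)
\textbf{Proof proposal for \Cref{obs:multisettriangle}.}  The plan is a short triangle-inequality-plus-averaging argument.  The point is that the triangle inequality ``$d_H(a,b_1)+d_H(a,b_2)\ge d_H(b_1,b_2)$'' holds for \emph{every} $a\in\{0,1\}^n$, so when we sum it over all pairs in $B$, the quantity $\sumd(B,a)$ automatically controls $\SPD(B)$, uniformly in $a$.  The rest is a double-counting swap followed by averaging over $b\in B$.

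Concretely, I would proceed in three steps.  First, fix an arbitrary $a\in\{0,1\}^n$ and sum the triangle inequality over all unordered pairs $\{b_1,b_2\}$ of (labeled) elements of $B$:
\[
\sum_{\{b_1,b_2\}\subseteq B} \bigl(d_H(a,b_1)+d_H(a,b_2)\bigr) \;\ge\; \sum_{\{b_1,b_2\}\subseteq B} d_H(b_1,b_2) \;=\; \SPD(B).
\]
Each $b\in B$ is paired with $|B|-1$ other elements, so the left-hand side equals $(|B|-1)\sumd(B,a)$.  This yields the key uniform bound
\[
\sumd(B,a) \;\ge\; \frac{\SPD(B)}{|B|-1} \qquad \text{for every } a\in\{0,1\}^n.
\]

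Second, I would sum this inequality over $a\in A$ (treating multiplicities as labels, so no issue arises for multisets), obtaining
\[
\sum_{a\in A}\sumd(B,a) \;\ge\; \frac{|A|\cdot \SPD(B)}{|B|-1}.
\]
Swapping the order of summation gives $\sum_{a\in A}\sumd(B,a)=\sum_{b\in B}\sumd(A,b)$, so
\[
\sum_{b\in B}\sumd(A,b) \;\ge\; \frac{|A|\cdot \SPD(B)}{|B|-1}.
\]
Third, by averaging over the $|B|$ elements of $B$, there exists $b\in B$ with $\sumd(A,b)\ge \frac{|A|}{|B|(|B|-1)}\SPD(B)$, which is the claimed bound.

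There is no real obstacle here: the only thing to be careful about is the multiset bookkeeping (labeling copies as distinct so that the unordered-pair sum recovers $\SPD(B)$ exactly, and so that the double-counting swap is valid), and the trivial edge case $|B|=1$ where $\SPD(B)=0$ and the inequality holds vacuously.
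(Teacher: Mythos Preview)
Your proof is correct and uses essentially the same approach as the paper: sum the triangle inequality $d_H(a,b_1)+d_H(a,b_2)\ge d_H(b_1,b_2)$ over all pairs in $B$ and over $a\in A$, then swap the order of summation and average. The only cosmetic difference is that the paper wraps this computation in a contradiction, whereas you give the direct version (and in doing so, you isolate the clean uniform bound $\sumd(B,a)\ge \SPD(B)/(|B|-1)$ for every $a$, which is arguably more informative).
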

 \begin{proof}
     Suppose not. This implies that for every $b \in B$,  $\frac{1}{|A|} \cdot \sumd(A,b) < \frac{1}{|B|(|B|-1)}\SPD(B)$. We now use the triangle inequality and the definition of $\SPD(B)$ to claim that
     \begin{align*}
         \SPD(B) = \frac{1}{2}\sum_{b_1, b_2 \in B} d_H(b_1, b_2) & \leq \frac{1}{2|A|} \cdot \sum_{a \in A} \sum_{b_1, b_2 \in B, b_1 \neq b_2} d_H(b_1, a) + d_H(b_2, a) \\
         &=  \sum_{b_1, b_2 \in B, b_1 \neq b_2}\frac{1}{2|A|} \cdot \sumd(A, b_1) +\frac{1}{2|A|} \cdot \sumd(A, b_2) \\
         &< |B|(|B|-1) \cdot \left( \frac{1}{2|B|(|B|-1)}\SPD(B) +\frac{1}{2|B|(|B|-1)}\SPD(B)\right) \\ &=\SPD(B) \; ,
     \end{align*}
     which is a contradiction.
 \end{proof}
 \noindent
 Now, let $\Sopt \subseteq \Om$ be a multiset of size $s$ with $\SPD(\Sopt)= \opts(\f,s)$. \Cref{obs:multisettriangle} implies that the step when $|S|=i$, there exists $z \in \Sopt$ with $\sumd(S, z) \geq \frac{i}{s(s-1)} \cdot \opts(\f,s)$. Hence, the point $z^*$ added to $S$ at step $i$ by $\mathcal{O}$ satisfies $\sumd(S,z^*) \geq \frac{i(1-\delta)}{s(s-1)} \cdot \opts(\f,s)$. We now show by induction that once the $i$-th point $z^*$ is added by algorithm, $\SPD(S) \geq \frac{i(i-1)(1-\delta)}{2s(s-1)} \cdot \opts(\f,s)$. This is trivially true when $|S|=1$. Assume that when $|S|=i-1$, $\SPD(S) \geq \frac{(i-1)(i-2)(1-\delta)}{2s(s-1)} \cdot \opts(\f,s)$. Because the point $z^*$ added to $S$ next satisfies $\sumd(S,z^*) \geq \frac{(i-1)(1-\delta)}{s(s-1)} \cdot \opts(\f,s)$, the value of $\SPD(S)$ at the end of round $i$ is at least $\left(\frac{(i-1)(i-2)(1-\delta)}{2s(s-1)} +  \frac{(i-1)(1-\delta)}{s(s-1)} \right) \cdot \opts(\f,s) = \frac{i(i-1)(1-\delta)}{2s(s-1)} \cdot \opts(\f,s)$. Since $i=s$, at the end of the farthest point insertion procedure, $\SPD(S) \geq \frac{(1-\delta)}{2} \cdot \opts(\f,s)$. 
 
 \medskip \noindent
 We now show that at the end of the local search procedure, $\SPD(S) \geq \frac{(1-\delta)(s-1)}{(1+\delta)s+(1-\delta)}\cdot \opts(\f,s)$. At each step of the procedure, either $\SPD(S)$ increases by at least $1$, or $\SPD(S)$ remains unchanged (such an $S$ is called a `local optimum'). Observe that at any iteration, if the value of $\SPD(S)$ is unchanged at the end of it, it also does not change during any of the later iterations. Because $\SPD(A) \leq s^2n$ for any multiset $A \subseteq \{0,1\}^n$ of size $s$, the algorithm reaches a local optimum within $s^2 n$ iterations.

 \medskip \noindent
 Consider any set $S$ which is a local optimum, and a set $\Sopt$, such that $\SPD(\Sopt)=\opts(\f,s)$. Because the local search employed on $S$ does not improve $\SPD(S)$, the property of $\mathcal{O}$ implies that
\[ \sumd(S \setminus \{x\},x) \geq (1 - \delta) \cdot \sumd(S \setminus \{x\},y) \text{ for all } x \in S, y \in \Om \; .\]
Specifically, this holds for all $y \in \Sopt$. Hence, we can sum over all $x \in S, y \in \Sopt$ to obtain that
\begin{equation} \label{eqn:swapping:termination}
     s \cdot \SPD(S) \geq\frac{(1-\delta)(s-1)}{2}  \cdot  d_H(S, \Sopt) \;,
\end{equation}
where $d(S, \Sopt)= \sum_{x \in S, y \in \Sopt}d_H(x,y)$. We now use the inequality that
\begin{equation}
    d_H(S, \Sopt) \geq \SPD(S) +  \SPD(\Sopt) \;.
\end{equation}
This follows from the fact that the Hamming metric is of negative type~\cite[Lemma 1]{cevallos2019improved} We now use this in \Cref{eqn:swapping:termination} to obtain that
\begin{equation*}
    s \cdot \SPD(S) \geq \frac{(1-\delta)(s-1)}{2} \left(\SPD(S) +\SPD(\Sopt)\right)
\end{equation*}
Rearranging, this implies that $\SPD(S) \geq \frac{(1-\delta)(s-1)}{(1+\delta)s+(1-\delta)}\cdot \opts(\f,s)$. 
\subsection{Min Dispersion: the proof of \Cref{lem:mindispersion}}

\mindispersion*
To prove this lemma, consider the following farthest point insertion algorithm, originally studied by Gonzales~\cite{gonzalez1985clustering}. 

\begin{algorithm}[H]
 \KwIn{A $k$-CNF formula $\f$, a number $s$}
 \KwOut{$S \in \Om^s$ with $\PD(S) \geq \frac{1}{2}\left(1-\delta \right) \cdot \optm(\f,s)$ if $\f$ is satisfiable, $\perp$ otherwise.}
 \SetKwFor{RepeatTimes}{repeat}{times:}{endfor}

 Use the PPZ algorithm (or \sch's algorithm) to find a satisfying assignment $z_1^*$ to $\f$. \;
 Set $S \gets \{z_1^*\}$ \;
 \For{$i \in \{2,3,\dots, s\}$}{
    $z^*:=\mathcal{O}(\f,S)$\\
    $S \gets S \bigcup \{z^*\}$
    
 }

\caption{Min Dispersion}
\end{algorithm}
\noindent
Because $\mathcal{O}$ is invoked at most $s$ times during the whole duration of the algorithm, the union bound implies that with probability at most $1-o(1)$, behaves as approximate farthest point oracle each time it is invoked. 
\medskip\noindent
Next, we show that at the end of the algorithm, $\PD(S) \geq \frac{1}{2}\left(1-\delta \right) \cdot \optm(\f,s)$ using induction. First, observe that $\PD(\{z_1^*, z_2^*\}) \geq \frac{1}{2} (1-\delta) \optm(\f,2)$ using the triangle inequality. Suppose that before the $i$-th iteration of the algorithm, $|S|=i-1$ and $\PD(S) \geq \frac{1}{2}\left(1-\delta\right) \cdot \optm(\f,i-1)$. Let $\Sopt \subseteq \Om$ be a set of size $i$ with $\PD(\Sopt)=\optm(\f,i)$. \Cref{obs:farthest-first} (stated and proved below) implies that there exists $x \in \Sopt$ such that $\mind(x,S) \geq 1/2 \cdot \optm(\f,i)$. Hence, the assignment added to $S$ at step $i$, $z^*$ satisfies $\mind(S,z^*) \geq \frac{1}{2}\left(1-\delta\right) \cdot \optm(\f,i)$, which implies that $\PD(S)\geq \frac{1}{2}\left(1-\delta\right) \cdot \optm(\f,i)$ at the end of the $i$-th iteration.

\begin{restatable}{observation}{gonzales}[Farthest Point insertion]
\label{obs:farthest-first} 
Let $A, B \subseteq \set{0,1}^n$ be two subsets with $\size{A} < \size{B}$. Then there exists $b\in B$ such that $\mind(b,A) \geq 1/2 \cdot \PD(B)$. 
\end{restatable}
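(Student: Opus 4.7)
The plan is a standard pigeonhole argument combined with the triangle inequality for the Hamming metric. Since $|B| > |A|$, if I map each point $b \in B$ to its nearest neighbour in $A$ (breaking ties arbitrarily), the pigeonhole principle guarantees the existence of two distinct points $b_1, b_2 \in B$ that are both mapped to the same point $a \in A$. By the defining property of this map, $d_H(b_1,a) = \mind(b_1, A)$ and $d_H(b_2, a) = \mind(b_2, A)$.

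Next I apply the triangle inequality (which holds for the Hamming metric) to the triple $(b_1, a, b_2)$, obtaining
\[
\PD(B) \;\leq\; d_H(b_1, b_2) \;\leq\; d_H(b_1, a) + d_H(a, b_2) \;=\; \mind(b_1, A) + \mind(b_2, A),
\]
where the first inequality uses the definition of $\PD(B)$ as the minimum pairwise Hamming distance within $B$ (and the fact that $b_1 \neq b_2$). Hence at least one of the two summands is at least $\PD(B)/2$, and the corresponding $b_i$ is the $b \in B$ whose existence we wanted to establish.

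There is essentially no obstacle here: the only two ingredients are pigeonhole (which crucially uses $|A| < |B|$ to force a collision among nearest-neighbour assignments) and the triangle inequality for $d_H$. I should make sure the nearest-neighbour map is well defined (break ties by any fixed ordering on $A$), and note that the argument is metric-agnostic, so it would go through for any metric space — in particular it is exactly the Gonzalez analysis \cite{gonzalez1985clustering} specialised to the Hamming cube.
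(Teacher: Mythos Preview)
Your proof is correct and follows essentially the same approach as the paper: both use pigeonhole on the nearest-neighbour map from $B$ to $A$ to find two distinct $b_1,b_2\in B$ sharing a common nearest point $a\in A$, then apply the triangle inequality to bound $\PD(B)\le d_H(b_1,b_2)\le \mind(b_1,A)+\mind(b_2,A)$. The only cosmetic difference is that the paper phrases it as a proof by contradiction while you argue directly.
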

\begin{proof}
    The proof is by contradiction.
We assume that $d_H(b,A) < 1/2 \cdot \PD(B)$ for all $b\in B$. Since  $\size{A} < \size{B}$ then, by pigeonhole principle, it must mean that there are exists an assignment $a \in A$ and two distinct assignments $b, b'\in B$ such that $d_H(b,A) = d_H(b,a)$ and $d_H(b',A) = d_H(b', a)$. Then, by triangle inequality and our assumption,we have that:
$$d_H(b,b') \leq d_H(b,a) + d_H(b',a) < \PD(B) \;.$$
However, by definition, we have that $\PD(B) \leq d_H(b,b')$ and so we obtain a contradiction.
\end{proof}

\section{On returning sets instead of multisets}
\label{app:distinctsum}
In this section, we extend our results for $\opts$ to $\opts_{\neq}$. Recall that the algorithm for $\opts$ returned a multiset of size $s$ that is an approximation of $\opts(\f,s)$. 

We showed that given any multi-set $T \subseteq \{0,1\}^n$, the sequence obtained from repeatedly sampling from $\{0,1\}^n \times S_n$ and running $\PPZMod$ contains $z^*$ such that $\sumd(z^*, T) \geq \frac{k-1}{k+1} \cdot r_{sum}$, where $r_{sum}=\max_{z \in \Om} \sumd(z, T)$. In this section, we extend that result to the $\opts_{\neq}$ problem. 

\begin{lemma}\label{lem:anchor:sumdistinct}
    Let $\f$ be a satisfiable $k$-SAT formula, $T \subseteq \{0,1\}^n$ be a set of size $t = o \left(\frac{n}{\log(n)}\right)$, and $r_{\text{sum}}= \max_{z \in \Om \setminus T}\sumd (z,T)$. Let $y$ and $\pi$ be chosen uniformly at random from $\{0,1\}^n$ and $S_n$ respectively. The probability that $\PPZMod(\f, y, \pi)$ outputs $z^* \in \Om \setminus T$ with $\sumd(z^*,T) \geq \frac{k-1}{k+1} \cdot r_{\text{sum}}$ is at least $ 2^{-n+n/k-o(1)}$
\end{lemma}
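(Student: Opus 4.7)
The plan is to mimic the proof of Lemma~\ref{lem:anchor:sum} almost verbatim, but partition $\Om \setminus T$ (instead of $\Om$) into layers based on $\sumd(\cdot, T)$, and then bound the extra contribution to the separator coming from edges between the top layer and $T \cap \Om$. Concretely, I would define $V_i = \{z \in \Om \setminus T : \sumd(z,T) = i\}$, $U_i = \bigcup_{j \ge i} V_j \subseteq \Om \setminus T$, and $i^* = \lceil \tfrac{k-1}{k+1} r_{\text{sum}} \rceil$. The conclusion $\sumd(z^*,T) \ge \tfrac{k-1}{k+1} r_{\text{sum}}$ together with $z^* \in \Om \setminus T$ then reduces exactly to showing $\tau(\f, U_{i^*}) \ge 2^{-n + n/k - o(1)}$.

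I would then apply the Separator Lemma (Lemma~\ref{lem:separator}) with $A = U_{i^*}$, viewing $A$ as a subset of $\Om$. The separator $S_{i^*}$ now consists of (a) edges going from $U_{i^*}$ to lower layers $V_j$ with $j < i^*$ inside $\Om \setminus T$, and (b) edges from $U_{i^*}$ to $T \cap \Om$. For (a), the same counting argument as in the proof of Lemma~\ref{lem:anchor:sum} (the inequality on $m$ coming from $\sum_{y \in T}|e_m \oplus z \oplus y| \le \sum_{y \in T}|z \oplus y|-\ell$) still applies since nothing in that argument uses that neighbors lie in $\Om$. For (b), the trivial bound of at most $|T| = t$ neighbors in $T$ per vertex of $U_{i^*}$ suffices. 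Together this gives $|S_{i^*}|/|U_{i^*}| \le 2\max\{i^*/(t+1), (i^*+t-1)/(2t)\} + t$, so the additional loss over Lemma~\ref{lem:anchor:sum} in the exponent is $t/k$, which is $o(n/\log n)$ by hypothesis and hence $o(1) \cdot n$ (how I interpret the stated ``$-o(1)$'' in the exponent).

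The only genuinely new work is in the contradiction-with-isoperimetry step that lower bounds $|U_{i^*}|$. In Lemma~\ref{lem:anchor:sum} the recursion $|U_{j-t}| \log|U_{j-t}| > k \cdot |U_j| \log|U_j|$ came from combining $2|E(U_j)| + |S_j| \le 2|E(U_{j-t})|$ with the hypercube edge isoperimetric inequality. Here one has to replace $U_{j-t}$ by $B_{j-t} := U_{j-t} \cup (T \cap \Om)$, so that every edge from $U_j$ stays inside $B_{j-t}$; this gives $2|E(U_j)| + |S_j| \le 2|E(B_{j-t})| \le |B_{j-t}| \log |B_{j-t}|$ with $|B_{j-t}| \le |U_{j-t}| + t$. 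Propagating this modified recursion produces exactly the same exponential lower bound on $|U_{i^*}|$ as in Lemma~\ref{lem:anchor:sum}, up to a multiplicative factor of the form $(1 + t/|U_{j}|)^{O(1)}$ accumulated across the $O((r_{\text{sum}} - i^*)/t)$ steps of the recursion.

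The main obstacle is precisely this last point: verifying that the $t$ extra vertices do not destroy the exponential doubling in the recursion. The condition $t = o(n/\log n)$ is exactly what is needed so that the multiplicative correction $\prod_{j}(1+t/|U_j|)$ contributes only $2^{o(n)}$ to the final bound, in which case, combining the additive $-t/k$ loss in the separator estimate with this $-o(n)$ loss in the size lower bound, one obtains $\tau(\f, U_{i^*}) \ge 2^{-n(1-1/k) - o(n)}$, matching the claim.
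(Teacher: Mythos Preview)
Your approach differs from the paper's. Rather than re-layering $\Om \setminus T$ and redoing the recursion of Lemma~\ref{lem:anchor:sum} with an extra $+t$ correction, the paper keeps the original layering inside all of $\Om$, invokes (the proof of) Lemma~\ref{lem:anchor:sum} to obtain the intermediate inequality
\[
|U_{i^*}|\, 2^{-\frac{1}{k|U_{i^*}|}\sum_{z\in U_{i^*}}\deg(z)} \;\ge\; \frac{1}{2n},
\]
and then performs a purely algebraic transfer from $U := U_{i^*}$ to $U_1 := U \setminus T$. Writing $\alpha = |U|/|U_1|$ and using $|U \cap T| \le t$ (so that $\alpha \le 1+t$ and $|U_1| \le t/(\alpha-1)$), one obtains $|U_1|\,2^{-(\cdots)/(k|U_1|)} \ge |U_1|^{1-\alpha}(2n\alpha)^{-\alpha} \ge (2tn)^{-O(t)} = 2^{-o(n)}$. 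This completely sidesteps any modification of the isoperimetric recursion.

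Your recursion step, which you correctly identify as the main obstacle, is not closed by your sketch. The modified inequality $k\,|U_j|\log|U_j| < (|U_{j-t}|+t)\log(|U_{j-t}|+t)$ is \emph{consistent} with $|U_j|$ staying bounded: for instance $|U_j|\equiv 2$ across all $\Theta((r_{\text{sum}}-i^*)/t) = \Theta(n)$ steps is perfectly compatible with it once $t$ is moderately large, since the $+t$ on the right-hand side absorbs all the growth forced by the factor $k$. In that regime your ``multiplicative correction $\prod_j(1+t/|U_j|)$'' is $(1+t/2)^{\Theta(n)} = 2^{\Theta(n\log t)}$, not $2^{o(n)}$; the hypothesis $t=o(n/\log n)$ does not rescue this. (In such stalled scenarios the desired $\tau$-bound \emph{does} hold, but via a separate direct argument that the vertices of the small $U_{i^*}$ have degree $O(t)$---not via the recursion you outline.) The paper's algebraic transfer avoids this difficulty entirely and is both shorter and more robust.
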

\begin{proof}
        Let $i^*= \lfloor \frac{k-1}{k+1} \cdot r_{sum}\rfloor$. In \Cref{lem:anchor:sum}, we showed that $\tau(\f,U_{i^*}) \geq \frac{1}{2n} \cdot 2^{-n(1-1/k)}$ outputs a satisfying assignment in $U_{i*}$. What we need to prove however is a lower bound on $\tau(\f, U_{i^*} \setminus T)$.

        \noindent
        Note that we can expand $\tau(\f, U_{i^*})$ to 
     \[ \tau(\f, U_{i^*}) = 2^{-n(1-1/k)} \sum_{z \in U_{i^*}} 2^{-\deg(z)/k} \geq \frac{1}{2n} \cdot 2^{-n(1-1/k)} \; . \]
     But notice that in the proof of \Cref{lem:anchor:sum}, we actually proved something stronger. We proved that
     \[ |U_{i^*}| 2^{- \frac{1}{k|U_{i^*}|}\sum_{z \in U_{i^*}} \deg(z) } \geq \frac{1}{2n} \]
     Now, we need to lower bound $\tau(\f, U_{i^*} \setminus T)$. From now on, we use $U$ to refer to $U_{i^*}$, $U_1$ to refer to $U_{i^*} \setminus T$ and $U_2$ to refer to $U_{i^*} \bigcap T$. We use $S_1$ to denote the set of edges between $U_1$ and $\Om \setminus U$, $S_2$ to refer to the set of edges between $U_2$ and $\Om \setminus U$, $S_3$ to denote the edges between $U_1$ and $U_2$ and $E(U_1)$ and $E(U_2)$ to refer to edges between with both endpoints in $U_1$ and $U_2$ respectively. What we need to lower bound is the quantity 
     \[ |U_1| 2^{-  \frac{2|E(U_1)|+|S_1|+|S_3|}{k|U_1|}} \;,  \]
     assuming the lower bound
     \[  |U| 2^{-  \frac{2|E(U_1)|+2|E(U_2)|+2|S_3|+|S_1|+|S_2|}{k|U|}} \geq \frac{1}{2n}\]
     Now, we let $|U|= \alpha |U_1|$, with $\alpha$ being well defined because $U_1$ is non-empty. We note that
     \begin{align*}
         |U_1| 2^{-  \frac{2|E(U_1)|+|S_1|+|S_3|}{k|U_1|}} &= \frac{|U|}{\alpha} \cdot2^{-  \frac{2|E(U_1)|+|S_1|+|S_3|}{k|U_1|}} \\
         & \geq |U_1| \cdot \left(2^{- \frac{2|E(U_1)|+2|E(U_2)|+2|S_3|+|S_1|+|S_2|}{k |U|}}\right)^{\alpha}\\
         & \geq |U_1| \cdot \left(\frac{1}{2n|U|}\right)^\alpha = |U_1| \cdot \left(\frac{1}{2n\alpha |U_1|}\right)^\alpha\ = |U_1|^{1-\alpha} (2n \alpha)^{-\alpha} 
     \end{align*}
     We now note that $\alpha |U_1|=|U| \leq |U_1| + t$, which implies that $|U_1| \leq \frac{t}{\alpha -1}$. Hence (because $1-\alpha$ is negative),
    \begin{equation*}
         |U_1| 2^{-  \frac{2|E(U_1)|+|S_1|+|S_3|}{k|U_1|}}\geq \left(\frac{t}{\alpha -1}\right)^{1-\alpha} (2n \alpha )^{-\alpha} \geq t^{1-\alpha} (2n)^{-\alpha} \alpha^{-1}= (2tn)^{-\alpha} (\alpha t)^{-1}
     \end{equation*}
     Further, $\alpha \leq t-1$, which means that this quantity can be lower bounded by $(2tn)^{-5t}$. Now, using the fact that $t=o\left( \frac{n}{\log(n)}\right)$, we get that this quantity is bounded below by $2^{-o(n)}$ which implies that $\tau(\f, U_1) \geq 2^{-n(1-1/k) - o(1)}$. 
\end{proof}

Now, we have shown that there exists an approximate farthest point oracle for computing 
     $\max_{z \in \Om \setminus T}\sumd (z,T)$, as long as $|T|=o\left(\frac{n}{\log(n)}\right)$. Hence, we can now use this approximate farthest point oracle in the algorithms by Gonzales and Cevallos, Eisenbrand, and Zenklusen, proving the following theorem.  
     \begin{restatable}{theorem}{ppzsumdispneq}[PPZ approximating $\opts_{\neq}(\f,s)$] \label{thm:ppz-for-sumdispneq}
    Let $\f$ be a $k$-CNF formula on $n$ variables. There exists a randomized algorithm  that takes $\f$ and an integer $s \geq 1$ as input and if $\f$ is satisfiable and has at least $s$ satisfying assignments, with probability at least $1-o(1)$, outputs a set $S^* \subseteq \Om$ of size $s$ such that:
    \begin{enumerate}
        \item $\SPD(S^*) \geq \frac{1}{2}\cdot \left(1-\frac{2}{k+1}\right) \cdot \opts(\f,s)$ if $s \leq \floor{\frac{3k+1}{k-1}}$.
        \item $\SPD(S^*) \geq \frac{k-1}{k+3}\left(\frac{1-\frac{1}{s}}{1+\frac{k-1}{(k+3)}\cdot \frac{1}{s} }\right) \cdot \opts(\f,s)$ if $s \geq \ceil{\frac{3k+1}{k-1}}$. 
    \end{enumerate}
    The algorithm runs in time $O^*\left(2^{n-n/k + o(n)}\right)$, as long as $s=o\left(\frac{n}{\log(n)}\right)$
\end{restatable}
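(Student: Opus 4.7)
\textbf{Proof proposal for \Cref{thm:ppz-for-sumdispneq}.} The plan is to follow the exact same two-phase (farthest-insertion followed by local search) structure used for the multiset case in \Cref{lem:sumdispersion}, but plugging in the \emph{distinct} approximate farthest point oracle supplied by \Cref{lem:anchor:sumdistinct} in place of the one from \Cref{lem:anchor:sum}. Specifically, first I would build an algorithm $\textsf{PPZ-Farthest-Sum-Distinct}(\f, T)$ that repeats $\PPZMod$ for $O^*(2^{n-n/k + o(n)})$ iterations and, whenever the output $u$ is a satisfying assignment not in $T$, compares $\sumd(T, u)$ to the best value seen so far. By \Cref{lem:anchor:sumdistinct}, each iteration independently succeeds with probability at least $2^{-n+n/k-o(1)}$ at producing some $z^* \in \Om \setminus T$ with $\sumd(T, z^*) \geq \tfrac{k-1}{k+1}\cdot \max_{z' \in \Om \setminus T} \sumd(T, z')$, provided $|T| = o(n/\log n)$; standard amplification makes the oracle succeed with probability $1-2^{-2n}$.

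Given this oracle, I would then run the algorithm from the proof of \Cref{lem:sumdispersion} verbatim, but with all multisets replaced by sets: seed $S$ with a single satisfying assignment found by PPZ, invoke the oracle $s-1$ times using $T = S$ to grow $S$ to size $s$, then perform $s^2 n$ iterations of the swap-based local search, each time calling the oracle with $T = S \setminus \{z\}$. Because each oracle call is made with a set of size at most $s = o(n/\log n)$, the hypothesis of \Cref{lem:anchor:sumdistinct} is always satisfied, and a union bound over the at most $O(s^2 n)$ calls still yields overall success probability $1-o(1)$. The analysis of the approximation factors — both the $\tfrac{1-\delta}{2}$ bound from greedy insertion and the $\tfrac{(1-\delta)(s-1)}{(1+\delta)s+(1-\delta)}$ bound from local search — is completely metric and depends only on the oracle's guarantee, the Hamming metric being of negative type, and \Cref{obs:multisettriangle} (which works equally well for sets). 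Substituting $\delta = 2/(k+1)$ and solving the crossover $\tfrac{1-\delta}{2} = \tfrac{(1-\delta)(s-1)}{(1+\delta)s+(1-\delta)}$ gives $s = (3k+1)/(k-1)$, matching the two cases in the theorem statement.

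The only genuinely new thing, and what I expect to be the main (modest) obstacle, is tracking the runtime carefully. Each oracle call now costs $O^*(2^{n-n/k + o(n)})$ rather than $O^*(2^{n-n/k})$ because the per-iteration success probability in \Cref{lem:anchor:sumdistinct} degrades by a $2^{-o(n)}$ factor coming from the $(2tn)^{-\alpha}$ term in that proof. Multiplied by the $O(s^2 n)$ oracle calls, the total runtime becomes $O^*(s^2 \cdot 2^{n-n/k+o(n)}) = O^*(2^{n-n/k+o(n)})$ under the assumption $s = o(n/\log n)$, which is exactly the runtime claimed in the theorem. The constraint $s = o(n/\log n)$ is inherited from \Cref{lem:anchor:sumdistinct} and is the reason the theorem does not state a clean $O^*(2^{n-n/k})$ bound with a $\poly(s)$ factor as in \Cref{thm:ppz-for-sumdisp}; I would explicitly remark on this in the write-up so the reader is not surprised by the weaker runtime guarantee compared to the multiset version.
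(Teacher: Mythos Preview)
Your proposal is correct and takes essentially the same approach as the paper: the paper's proof of \Cref{thm:ppz-for-sumdispneq} consists of exactly one sentence, namely that the approximate farthest point oracle from \Cref{lem:anchor:sumdistinct} can be plugged into the Gonzalez/Cevallos framework (i.e., \Cref{lem:sumdispersion}), which is precisely what you spell out. Your discussion of the runtime degradation to $O^*(2^{n-n/k+o(n)})$ and of the origin of the constraint $s=o(n/\log n)$ is more explicit than the paper's treatment and would only improve the exposition.
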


\section{Relationship between \textsc{Min-Ones} and \textsc{Farthest-Point}}
\label{app:minones}
In this section, we point out that a farthest point oracle can be derived from an algorithm that outputs a satisfying assignment to $\f$ with minimum weight. This problem, formally called $\textsc{Min-Ones}-k-\text{SAT}$ has an exact algorithm that runs in time $O^*\left( \left(2-\frac{1}{k}\right)^n\right)$. For simplicity, we define the decision versions of these problems. 
\begin{problem}[\textsc{Min-Ones}]
    \textbf{Input: }A $k$-CNF formula $\f$, $r \in [n]$. \\
\textbf{Output: }Yes, if there exists $z^* \in \Om$ such that $|z^*| \leq r$, No otherwise.
\end{problem}
\begin{problem}[\textsc{Farthest-Point}]
    \textbf{Input: }A $k$-CNF formula $\f$, $z \in \Q{n}$, $r \in [n]$. \\
\textbf{Output: }Yes, if there exists $z^* \in \Om$ such that $d_H(z^*, z) \geq r$, No otherwise.
\end{problem}

 We now show that the problems \textsc{Min-Ones} and \textsc{Farthest-Point} are equivalent to each other. 
 \begin{lemma}
     There exists a reduction, running in $n^{O(1)}$ time, from \textsc{Min-Ones} to \textsc{Farthest-Point} and vice versa
 \end{lemma}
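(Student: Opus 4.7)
\medskip\noindent\textbf{Proof plan.} The plan is to observe that both problems are really asking the same question up to a suitable relabeling of variables, since the Hamming weight $|z^*|$ is nothing other than $d_H(z^*, 0^n)$, and complementing variables in a $k$-CNF formula preserves $k$-CNF-ness and acts as an isometry of the Boolean cube. Concretely, I will give one direction by a ``reference-point swap'' and the other direction by a ``variable-flip'' transformation on the formula.

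\medskip\noindent\textbf{From \textsc{Min-Ones} to \textsc{Farthest-Point}.} Given a \textsc{Min-Ones} instance $(\f, r)$, the key identity is $|z^*| = n - d_H(z^*, 1^n)$, so there exists $z^* \in \Om$ with $|z^*| \le r$ iff there exists $z^* \in \Om$ with $d_H(z^*, 1^n) \ge n - r$. The reduction therefore outputs the \textsc{Farthest-Point} instance $(\f, 1^n, n-r)$, which is computable in $O(n)$ time.

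\medskip\noindent\textbf{From \textsc{Farthest-Point} to \textsc{Min-Ones}.} Given $(\f, z, r)$, construct $\f'$ from $\f$ by replacing each literal over variable $x_i$ with its negation whenever $z_i = 0$; this keeps $\f'$ a $k$-CNF formula and can be done in $O(|\f|)$ time. Then $u \in \Omega_{\f'}$ iff $u \oplus \bar z \in \Om$, so $\Omega_{\f'} = \{\,z^* \oplus \bar z \,:\, z^* \in \Om\,\}$ and $|u| = |z^* \oplus \bar z| = n - d_H(z^*, z)$ for the corresponding $z^*$. Hence there exists $z^* \in \Om$ with $d_H(z^*, z) \ge r$ iff $\f'$ has a satisfying assignment of weight at most $n-r$, and we output the \textsc{Min-Ones} instance $(\f', n-r)$.

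\medskip\noindent\textbf{Obstacles.} There is essentially no obstacle here: both reductions are straightforward affine transformations of the cube (translating the ``anchor point'' from $0^n$ to an arbitrary $z \in \Q{n}$, and flipping the sense of the inequality by complementation). The only mild care needed is to verify that the transformation preserves the $k$-CNF structure and runs in polynomial time, both of which are immediate from the construction: negating a literal leaves clause width unchanged, and the transformation touches each literal once.
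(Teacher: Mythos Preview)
Your proposal is correct and follows essentially the same argument as the paper: the first direction reduces $(\f,r)$ to the \textsc{Farthest-Point} instance $(\f,1^n,n-r)$ via $|z^*|=n-d_H(z^*,1^n)$, and the second direction constructs the ``rotated'' formula $\f'$ by negating the literals indexed by the zero coordinates of $z$, so that $\Omega_{\f'}=\{z^*\oplus\bar z:z^*\in\Om\}$ and one can output $(\f',n-r)$. The only cosmetic difference is notation.
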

 \begin{proof}
     We first show that there exists a polynomial time reduction from \textsc{Min-Ones} to \textsc{Farthest-Point}. Let $(\f, r)$ be an instance of $\textsc{Min-Ones}$. For any satisfying assignment $z^* \in \Om$, $|z^*| \leq r$ if and only if $d_H(z^*, \mathbf{1}) \geq n-r$. Hence, the instance $(\f, r)$ of $\textsc{Min-Ones}$ can be reduced to the instance $(\f,\mathbf{1}, n-r)$ of \textsc{Farthest point} (where $\mathbf{1}$ is the all $1$'s vector).

     Now, consider any instance $(\f, z, r)$ of \textsc{Farthest point}. Now, we create a new $k$-CNF formula $\f_{z}$, by ``rotating'' the formula $\f$. To be precise, we define $\f_{z}$ as follows. For any $j \in [n]$ such that $z_j=0$, we replace every occurrence of the literal $z_j$ in $\f$ with $\Bar{z}_j$ and every occurrence of $\Bar{z}_j$ in $\f$ with $z_j$. Hence, if $z^*$ is a satisfying assignment to $\f$, the assignment $z^* \oplus \Bar{z}$ is a satisfying assignment to $\f_{z}$, where $\Bar{z}$ is the antipode of $z$. Hence, there exists $z^* \in \Om$ with $d_H(z,z^*) \geq r$, if and only if $z^* \oplus \Bar{z} \in \Omega_{\f_z}$, and $|z^* \oplus z|\leq n-r$, i.e. if $\textsc{Min-Ones}(\f_z, n-r)$ returns yes. 
 \end{proof}
\section{Using Uniform Sampling to generate diverse satisfying assignments}\label{app:uniformv2}

Let $\alg$ be an algorithm that takes in $\f$ as input, and in $O^*(a^n)$ running time, outputs a satisfying assignment to $\Om$ such that each $z \in \Om$ is output with probability $1/|\Om|$ (in other words, it uniformly samples over the space of satisfying assignments). Note that because $k$-SAT is a self reducible problem\footnote{For the class of problems that are `self reducible', counting and sampling are equivalent, and approximate counting and approximate sampling are equivalent as well~\cite{sinclair1989approximate}}, an algorithm for $\# k$-SAT, that counts the number of satisfying assignments can be used to also uniformly sample from the space of satisfying assignments. We define the following algorithm that approximates the diameter of $\Om$, using the uniform sampler $\alg$ as a black box. It runs in time $O^*(b^n)$, where $b^n $ is some time budget that we choose.

\begin{theorem}
    Let $\f$ be a $k$-SAT formula with at least $2$ satisfying assignments. Let $\alg$ be an algorithm that uniformly samples satisfying assignments to $k$-SAT instances that runs in time $O^*(a^n)$. Consider any $b > a$. There exists an algorithm that runs in time $O^*(b^n)$ and with probability $1-o(1)$, and outputs two satisfying assignments $z_1, z_2 \in \Om$, with $d_H(z_1, z_2) \geq \min\{ \frac{1}{2}, H^{-1}(\log(b/a)) \} \cdot \D(\f)$.
\end{theorem}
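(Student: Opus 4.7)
The plan is simple: draw $N$ independent samples from $\Om$ using $\alg$ and return the pair with the largest pairwise Hamming distance. Set $\alpha := \min\{1/2, H^{-1}(\log(b/a))\}$, $L := \lceil \alpha n \rceil$, and write $V(n,L) := \sum_{i=0}^{L} \binom{n}{i}$ for the Hamming-ball volume; the standard entropic bound gives $V(n,L) \leq 2^{n H(\alpha)} \leq 2^{n \log(b/a)} = (b/a)^n$.  I would take $N := c \cdot n^2 \cdot V(n,L)$ for a sufficiently large constant $c$, which makes the total running time $O^*(N \cdot a^n) = O^*(V(n,L) \cdot a^n) = O^*(b^n)$, matching the claim.

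The correctness argument splits on the (unknown) size of $\Om$.  First, if $|\Om| \leq 2V(n,L)$, then $N$ comfortably exceeds $|\Om| \log |\Om|$, so a standard coupon-collector concentration bound shows that with probability $1-o(1)$ the $N$ samples hit every element of $\Om$; in that event the algorithm outputs an exact diameter pair.  Second, if $|\Om| > 2V(n,L)$, the key observation is that whenever the maximum pairwise distance among the samples is strictly less than $L$, every sample must lie in the single Hamming ball $B(z^{(1)}, L)$ centred at the first sample.  Conditioned on $z^{(1)}$, each subsequent sample is independently uniform on $\Om$ and therefore falls in this ball with probability at most $V(n,L)/|\Om| \leq 1/2$, so the event that all samples lie in a common ball of radius $L$ has probability at most $(1/2)^{N-1}$, which is $o(1)$.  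When this bad event fails, the reported pair achieves distance at least $L \geq \alpha \cdot \D(\f)$, using $\D(\f) \leq n$.

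There is no serious obstacle here; everything reduces to the above dichotomy between enumerating $\Om$ outright (when it is small) and ruling out the ``all samples in one ball'' configuration (when it is large).  The definition of $\alpha$ through $H^{-1}(\log(b/a))$ is tailored precisely to make the bound $V(n,L) \leq (b/a)^n$ tight enough so that $N \cdot a^n$ fits in the $O^*(b^n)$ budget, so the pieces lock together without slack.  Note also that the argument is entirely generic: it uses nothing about $\f$ being a $k$-CNF formula beyond the existence of the uniform sampler $\alg$, so in particular it works for any self-reducible problem admitting an approximate uniform sampler over its solution set.
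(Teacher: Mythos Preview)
Your dichotomy — enumerate $\Om$ by coupon collector when it is small, and otherwise observe that a constant fraction of $\Om$ lies outside any fixed Hamming ball of radius $L=\lceil\alpha n\rceil$ — is exactly the paper's argument. The paper fixes an anchor $z_1$ first (found by any SAT solver) and compares each subsequent sample only to $z_1$; you instead take the best pair among all $N$ samples.

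There is one running-time oversight: you account for the $N$ calls to $\alg$ but not for computing the $\binom{N}{2}$ pairwise distances, which costs $\Theta(N^2 n)$. Since $N=\Theta^*((b/a)^n)$, this is $\Theta^*((b/a)^{2n})$, and that exceeds the $O^*(b^n)$ budget whenever $b>a^2$ (e.g.\ $a=1.3$, $b=1.9$). The fix is precisely what the paper does: compare only to the first sample $z^{(1)}$. In your large-$|\Om|$ case the argument already hinges solely on distances to $z^{(1)}$, so nothing changes; in the small-$|\Om|$ case you then recover only a $\tfrac{1}{2}$-approximation (via the triangle inequality, once you hit the true farthest point from $z^{(1)}$) rather than the exact diameter, but that is all the theorem claims. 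With this amendment your proof and the paper's coincide.
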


\begin{proof}
Consider the following algorithm. 

\begin{algorithm}[H]
 \label{alg:uniformsampling}
 \KwIn{A $k$-CNF formula $\f$}
 \KwOut{$z_1, z_2 \in \Om$, with $d_H(z_1, z_2) \geq \min\{ \frac{1}{2}, H^{-1}(\log(b/a)) \} \cdot \D(\f)$}
 \SetKwFor{RepeatTimes}{repeat}{times:}{endfor}
    Find a satisfying assignment $z_1 \in \Om$ using any $k$-SAT solver.\\
    Let $z_2 \gets z_1$, $D \gets 0$\\
    \RepeatTimes{$n^{O(1)} \cdot (b/a)^n$}{
    Run $\alg$ to output $z' \in \Om$.
    \If{$d_H(z',z_1) > D$}{
        Set $z_2 \gets z'$, $D \gets d_H(z',z_1)$
    
    }
    
    }
 \caption{Using uniform samplers to approximate $\D(\f)$}
 \end{algorithm}
We consider two cases, based on the size of $\Om$. The first case is when $(b/a)^n \geq |\Om| \log(|\Om|)$. Let $z^*$ be a satisfying assignment that maximizes the hamming distance from $z_1$. In each iteration of the loop, $z^*$ is sampled with probability $\frac{1}{|\Om|}$, and hence the probability that the algorithm never encounters $z^*$ is upper bounded by 
$$(1-1/|\Om|)^{|\Om| \log(|\Om|)} \leq e^{-\log(|\Om|)} \leq \frac{1}{|\Om|}.$$ 
Hence, with probability $1-o(1)$, the algorithm finds $z^*$ and outputs $1/2$-approximation for $\D(\f)$. 

The second case is when $|\Om| \log(|\Om|) > (b/a)^n$. In this case, consider the ball of radius $r=n \cdot  H^{-1}\big(\log\big(\frac{b(1-2\log(n)/n)}{a}\big)\big)$ around $z_1$. We show that, in each iteration, $\alg$ finds a point $z_2$ outside the ball with probability at least $1/2$. This is because if $|\Om| > \frac{1}{n} \cdot (b/a)^n$, and the volume of the ball of radius $r$ around $z_1$ is at most $2^{H(r/n)n}=\big( (b/a) (1-2 \log(n)/n)\big)^n$. The ratio of these quantities is at most $n(1-2\log(n)/n)^n \approx n \cdot e^{2 \log(n)} <1/2 $, for sufficiently large $n$. Hence, for sufficiently large $n$, at least half of the points in $\Om$ have to be located outside this ball, and in each iteration of the loop, a satisfying assignment at distance at least $r$ from $z_1$ is found with probability at least $1/2$, and hence the loop finds at least one of these assignments with probability $1-o(1)$. This proves that the algorithm finds, with probability at least $1-o(1)$, two satisfying assignments at distance at least $r$ from each other, and since the diameter of $\f$ is at most $n$, the approximation factor achieved is at least $r/n \geq H^{-1}((b/a) (1-2 \log(n)/n)) = H^{-1}((b/a))-o(1)$.
\end{proof}

\paragraph{Comparison to our results:} We now perform some calculations assuming an approximation guarantee of $1/H^{-1}(\log(b/a))$ for the above algorithm. We use the state of the art existing algorithms for $ \#k $ -SAT to come up with bounds for the run-time and approximation factors and compare them with our more 'geometry-based' sampling algorithms we propose. 

For $3$-SAT, the best known approximate counting algorithms are by Schmitt and Wanka~\cite{schmitt2013exploiting}, running in time $O^*(1.51426^n)$. 

Hence, we can calculate the approximation factor this algorithm achieves for $k=3$, where the budget $b=2^{2/3}$. To do that, we plug in $b=2^{2/3}$, and $a=1.51426$ in $\frac{1}{H^{-1}(b/a)}$, which is $1/123$. This means that the sampling algorithm gives a $1/123$-approximation factor for the diameter of $3$-SAT. On the other hand, our \Cref{thm:ppz-for-dia} gives a $1/3$-approximation ratio in the same running time. We remark that this gap widens as $k$ increases.

\section{\sch run time calculation} \label{schcalc}
\begin{lemma} \label{lem:binommax}
    For every $t \in [n]$, $\frac{2^n}{\binom{n}{t} c^{-t}} \geq \frac{1}{n^{O(1)}} \cdot \frac{2^n}{ \binom{n }{ \floor{\frac{n}{c+1}} } c^{-\floor{\frac{n}{c+1}}}} = \frac{1}{n^{O(1)}}\cdot \left(\frac{2}{1+\frac{1}{c}}\right)^n$.
\end{lemma}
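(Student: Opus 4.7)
The plan is to show that the function $f(t) = \binom{n}{t} c^{-t}$ is unimodal in $t$ and attains its maximum (up to polynomial factors) at $t^\star = \lfloor n/(c+1) \rfloor$, and then evaluate that maximum using the standard entropy estimate for binomials.

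First, I would compute the ratio
\[
\frac{f(t+1)}{f(t)} \;=\; \frac{n-t}{(t+1)\,c} \; .
\]
This ratio equals $1$ at $t = (n-c)/(c+1)$, is $>1$ for smaller $t$ and $<1$ for larger $t$, so $f$ is strictly unimodal: it increases up to roughly $t = n/(c+1)$ and decreases thereafter. Consequently its maximum over integers is attained at some $t_{\max}$ with $|t_{\max} - n/(c+1)| \le c+2$, a constant depending only on $c$ but not on $n$. Because each single ratio $f(t+1)/f(t)$ lies in $(0, n]$, shifting the argument by any $O(1)$ amount changes $f$ by at most $n^{O(1)}$. Hence $f(\lfloor n/(c+1)\rfloor) \ge f(t_{\max})/n^{O(1)}$, which gives the inequality stated in the lemma.

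Second, I would compute the value $f(\lfloor n/(c+1)\rfloor)$ via the entropy estimate $\binom{n}{pn} = \Theta(n^{-1/2})\cdot 2^{nH(p)}$ (an immediate consequence of Stirling). Setting $p = 1/(c+1)$, a direct calculation gives
\[
H(p) \;=\; \log_2(c+1) - \frac{c}{c+1}\log_2 c,
\qquad
nH(p) - \frac{n}{c+1}\log_2 c \;=\; n\log_2\frac{c+1}{c}.
\]
Therefore
\[
\binom{n}{\lfloor n/(c+1)\rfloor}\, c^{-\lfloor n/(c+1)\rfloor} \;=\; \frac{1}{n^{O(1)}}\cdot\left(\frac{c+1}{c}\right)^{n},
\]
and dividing $2^n$ by this yields
\[
\frac{2^n}{\binom{n}{\lfloor n/(c+1)\rfloor}\,c^{-\lfloor n/(c+1)\rfloor}}
\;=\; n^{O(1)}\cdot\left(\frac{2c}{c+1}\right)^n \;=\; n^{O(1)}\cdot\left(\frac{2}{1+1/c}\right)^{n},
\]
establishing the claimed equality.

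There is no real obstacle here — this is the standard computation of the mode of a binomially weighted geometric distribution. The only minor care needed is checking that rounding $n/(c+1)$ down to an integer, and that the difference between $(n-c)/(c+1)$ and $n/(c+1)$, cost only polynomial factors; both follow from the bounded-ratio observation in the first step.
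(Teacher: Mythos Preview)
Your proof is correct and follows essentially the same approach as the paper: both locate the maximizer of $\binom{n}{t}c^{-t}$ at $t\approx n/(c+1)$ and evaluate it via the entropy approximation for binomial coefficients. The only cosmetic difference is that you use the discrete ratio test $f(t+1)/f(t)=(n-t)/((t+1)c)$ to find the mode, whereas the paper passes to the continuous surrogate $g(\mu)=c^{-\mu}\mu^{-\mu}(1-\mu)^{\mu-1}$ and differentiates; both computations land at the same critical point $1/(c+1)$ and the same final value $1+1/c$.
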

\begin{proof}
    Let $t= \mu n$. let $f(\mu)=\mu^{-\mu} (1-\mu)^{\mu-1}$ We use \Cref{obs:binomapprox} to show that 
    \[ \frac{2^n}{\binom{n}{r} c^{-r}} \geq \frac{1}{n^{O(1)}} \cdot \left( \frac{2}{c^{-\mu} f(\mu)} \right)^n \; .\]
    Using \Cref{obs:binomderivative}, we can see that the derivative of $g(\mu)=c^{-\mu} f(\mu)$ is $ g'(\mu)= \left( - \ln(c) + \ln(1-\mu)- \ln(\mu)\right)g(\mu)$. Because $g(\mu)$ is always positive, we can see that the derivative is a decreasing function of $\mu$, with $g'\left(\frac{1}{c+1}\right)=0$. Hence, the minimum value of $g(\mu)$ is attained when $\mu=\frac{1}{c+1}$. Substituting $\mu=\frac{1}{c+1}$ in $g(\mu)$, we get $1+\frac{1}{c}$. This implies that out of all $t \in [n]$, $t=\floor{\frac{n}{c+1}}$ (up to a $n^{O(1)}$ factor) minimizes the value of $\frac{2^n}{\binom{n}{t} c^{-t}}$. 
\end{proof}

\end{document}